\newtheorem{theorem}{Theorem}[section]
\newtheorem{lemma}[theorem]{Lemma}
\newtheorem{proposition}[theorem]{Proposition}
\newtheorem{remark}[theorem]{Remark}
\renewcommand{\chaptermark}[1]{\markboth{\MakeUppercase{\chaptername\ \thechapter.\ #1}}{}}
\renewcommand{\sectionmark}[1]{\markright{\thesection.\ #1}}
\begin{document}

\pagenumbering{roman}
\pagestyle{empty}

\renewcommand{\chaptermark}[1] {\markboth{{\chaptername\ \thechapter.\ #1}}{}}
\renewcommand{\sectionmark}[1] {\markright{{\thesection.\ #1}}}
\lhead[\small \thepage]{\small \rightmark}
\rhead[\small \leftmark]{\small \thepage}
\lhead[\small \thepage]{\small \rightmark}
\rhead[\small \leftmark]{\small \thepage}

\renewcommand{\baselinestretch}{1}\Large\normalsize
\thispagestyle{empty}

\newcommand{\vs}{\vspace{1ex}}

\newcommand{\hos}{\hspace{5 in}}

\vskip 0.3in

\begin{center}
{\bf \LARGE 
On Wireless Link Scheduling and \\
\vspace{1ex}
Flow Control
} \\ 
\vspace{10ex}
\large
A thesis submitted in partial fulfillment of \\
the requirements for the degree of \\
\vspace{5ex}
Doctor of Philosophy \\
\vspace{5ex}
by \\
\vspace{5ex}
{\bf Ashutosh Deepak Gore} \\ 
\vspace{1ex}
(Roll number: 02407007) \\
\vspace{10ex} 
Advisor: Prof. Abhay Karandikar \\ 
\vspace{10ex}
\includegraphics{iitblogo.epsi} \\
\vspace{5ex}
Department of Electrical Engineering,\\ 
Indian Institute of Technology Bombay,\\
Powai, Mumbai, 400076.\\ 
December 2008

\end{center}

\renewcommand{\baselinestretch}{1.5}\Large\normalsize

\chapter*{ }
\begin{center}
{\Large Indian Institute of Technology Bombay\\\vs Certificate of Course
Work}\vs
\end{center}
This is to certify that {\bf Ashutosh Deepak Gore} was admitted to the
candidacy of the Ph.D. degree in January 2003 after successfully
completing all the courses required for the Ph.D. degree
programme. The details of the course work done are given below:\\\\\\
\bigskip
\begin{center}
\begin{table}[h]
\centering
\begin{tabular}{|l|l|l|l|} \hline
Sr. No. & Course code & Course name & Credits \\ \hline
1 & EE708 & Information Theory and Coding & 6.00 \\ \hline
2 & MA402 & Algebra & 8.00 \\ \hline
3 & EES801 & Seminar & 4.00 \\ \hline
4 & EE659 & A First Course in Optimization & 6.00 \\ \hline
5 & EE621 & Markov Chains and Queueing Systems & 6.00 \\ \hline
6 & MA403 & Real Analysis & 8.00 \\ \hline
7 & HS699 & Communication Skills & 4.00 \\ \hline
\end{tabular}
\end{table}
\end{center}
\bigskip
IIT Bombay\\\\Date:........  \hspace{3 in} Deputy Registrar (Academic)

\chapter*{Acknowledgments}

I joined the Ph.D. programme at my alma mater with the intent of
honing my knowledge in networking and wireless communications.  In
retrospect, I feel that I have gained knowledge in many other domains
as well. This is primarily due to close interaction with intellectuals
(both faculty and students) at IIT Bombay.

A doctoral thesis can never be produced by the thoughts and actions of
a single person. Repeated technical discussions, mathematical workouts
and simulations are the major factors that contribute to the
``evolution'' of a thesis. In this space, I wish to explicitly thank
various individuals who have helped me during my doctoral adventure.

I would like to thank my exuberant advisor Prof.  Abhay Karandikar,
who has taught me engineering in the true sense of the word. His keen
insight into the nitti-gritty of every problem and his perfectionism
in technical documentation have significantly moulded my grey matter.
I will always remember his words ``A Ph.D.  thesis is a piece of
scholarly work. It is not a sequence of papers stapled together!'' I
have also sharpened my knowledge and pedagogy as a teaching assistant
in various courses taught by Prof. Karandikar.

I would like to express my gratitude to my research progress committee
members, namely, Prof. H. Narayanan, Prof. Harish Pillai and Prof.
Varsha Apte. They have provided valuable tips and guidance throughout
my research career. I would especially like to thank Prof. Narayanan
for encouraging me to pursue a Ph.D. at IIT Bombay. I also wish to
thank my Ph.D. thesis reviewers for their insightful comments which
helped to improve the quality of the final thesis.

I have closely interacted with many bright people at Information
Networks Laboratory, which has been my second home for the past six
years. In particular, I would like to thank my peers, Nitin Salodkar,
Hemant Rath and Punit Rathod, and my juniors, Mukul Agarwal and S.
Sundhar Ram, for many a discussion, both technical and non-technical.
I also wish to thank Srikanth Jagabathula and N. Praneeth Kumar, who
have been my collaborators in some of my work.

I wish to sincerely thank my wife Chaitali for her constant love and
support. Our wonderful baby girl, born on $9^{th}$ December 2008, has
infused a lot of energy in me over the past few weeks!  My brother
Hrishikesh and cousin sister Namrata have enthused me at various
stages of my doctoral journey.

My father, Deepak Keshav Gore, and my mother, Jayshree Deepak Gore,
had recognized my proclivity for mathematics right from my childhood.
They did not flinch a bit when I decided to tread the off-beaten track
towards a Ph.D. Their unconditional love, inspiration and ethics have
been the pillars of my motivation all along. This thesis is dedicated
to them.
\\\\\\
Ashutosh Deepak Gore\\
$26^{th}$ December 2008

\renewcommand{\baselinestretch}{1.4}\Large\normalsize
\chapter*{Abstract}

This thesis focuses on link scheduling in wireless mesh networks by
taking into account physical layer characteristics.  The assumption
made throughout is that a packet is received successfully only if the
Signal to Interference and Noise Ratio (SINR) at the receiver exceeds
a certain threshold, termed as communication threshold.  The thesis
also discusses the complementary problem of flow control.

First, we consider various problems on centralized link scheduling in
Spatial Time Division Multiple Access (STDMA) wireless mesh networks.
We motivate the use of spatial reuse as performance metric and provide
an explicit characterization of spatial reuse. We propose link
scheduling algorithms based on certain graph models (communication
graph, SINR graph) of the network.  Our algorithms achieve higher
spatial reuse than that of existing algorithms, with only a slight
increase in computational complexity.

Next, we investigate a related scenario involving link scheduling,
namely random access algorithms in wireless networks. We assume that
the receiver is capable of power-based capture and propose a splitting
algorithm that varies transmission powers of users on the basis of
quaternary channel feedback.  We model the algorithm dynamics by a
Discrete Time Markov Chain and consequently show that its maximum
stable throughput is 0.5518.  Our algorithm achieves higher maximum
stable throughput and significantly lower delay than the First Come
First Serve (FCFS) splitting algorithm with uniform transmission
power.

Finally, we consider the complementary problem of flow control in
packet networks from an information-theoretic perspective. We derive
the maximum entropy of a flow which conforms to traffic constraints
imposed by a generalized token bucket regulator, by taking into
account the covert information present in the randomness of packet
lengths. Our results demonstrate that the optimal generalized token
bucket regulator has a near uniform bucket depth sequence and a
decreasing token increment sequence.

\renewcommand{\baselinestretch}{1.5}\Large\normalsize

\pagestyle{plain}
\tableofcontents\newpage

\chapter*{List of Acronyms}
\addcontentsline{toc}{chapter}{List of Acronyms}

\begin{longtable}{ll}
3GPP LTE & $3^{rd}$ Generation Partnership Project Long Term Evolution \\
3GPP2 & $3^{rd}$ Generation Partnership Project 2 \\
ACK & Acknowledgment \\
ALS & ArboricalLinkSchedule \\
AWGN & Additive White Gaussian Noise \\
BS & Base Station \\
BS & BroadcastSchedule \\
BTA & Basic Tree Algorithm \\
CAA & Channel Access Algorithm \\
CDMA & Code Division Multiple Access \\
CFLS & ConflictFreeLinkSchedule \\
CRP & Collision Resolution Period \\
CSI & Channel State Information \\
CSMA/CA & Carrier Sense Multiple Access with Collision Avoidance \\
CSMA/CD & Carrier Sense Multiple Access with Collision Detection \\
CTS & Clear To Send \\
DTMC & Discrete Time Markov Chain \\
FCFC & FirstConflictFreeColor \\
FCFS & First Come First Served \\
FDMA & Frequency Division Multiple Access \\
FEC & Forward Error Correction \\
GP & GreedyPhysical \\
GTBR & Generalized Token Bucket Regulator \\
IETF & Internet Engineering Task Force \\
i.i.d. & independent and identically distributed \\
ISP & Internet Service Provider \\
LAN & Local Area Network \\
LMMSE & Linear Minimum Mean Square Error \\
OFDM & Orthogonal Frequency Division Multiplexing \\
PCFCFS & Power Controlled First Come First Served \\
MAC & Medium Access Control \\
MANET & Mobile Ad Hoc Network \\
MASC & MaxAverageSINRColor \\
MASS & MaxAverageSINRSchedule \\
MIMO & Multiple Input Multiple Output \\
MPR & MultiPacket Reception \\
MTA & Modified Tree Algorithm \\
NDMA & Network-Assisted Diversity Multiple Access \\
NP & Non-deterministic Polynomial time \\
pdf & probability density function \\
pmf & probability mass function \\
QoS & Quality of Service \\
RTS & Request To Send \\
SGLS & SINRGraphLinkSchedule \\
SINR & Signal to Interference and Noise Ratio \\
SLA & Service Level Agreement \\
SS & Subscriber Station \\
STBR & Standard Token Bucket Regulator \\
STDMA & Spatial Time Division Multiple Access \\
TBR & Token Bucket Regulator \\
TCP & Transmission Control Protocol \\
TDMA & Time Division Multiple Access \\
TGSA & Truncated Graph Based Scheduling Algorithm \\
VBR & Variable Bit Rate \\
WiMAX & Worldwide Interoperability for Microwave Access \\
WLAN & Wireless Local Area Network \\
WMAN & Wireless Metropolitan Area Network \\
WMN & Wireless Mesh Network
\end{longtable}

\chapter*{List of Symbols}
\addcontentsline{toc}{chapter}{List of Symbols}

\begin{longtable}{ll}
$N$ & number of nodes in STDMA wireless network \\
$(X_j,Y_j)$ & Cartesian coordinates of $j^{th}$ node \\
$(R_j,\Theta_j)$ & polar coordinates of $j^{th}$ node \\
$P$ & power with which a node transmits its packet \\
$N_0$ & thermal noise power spectral density \\
$\beta$ & path loss exponent \\
$D(j,k)$ & Euclidean distance between nodes $j$ and $k$ \\
$C$ & number of slots (colors) in STDMA link schedule \\
$\gamma_c$ & communication threshold \\
$\gamma_i$ & interference threshold \\
$R_c$ & communication range \\
$R_i$ & interference range \\
$\Phi(\cdot)$ & STDMA wireless network \\
$\mathcal V$ & set of vertices \\
$\mathcal E$ & set of directed edges \\
${\mathcal E}_c$ & set of communication edges \\
${\mathcal E}_i$ & set of interference edges \\
$\mathcal G_c(\mathcal V,\mathcal E_c)$ & 
  communication graph representation of STDMA network \\
$\mathcal G(\mathcal V,\mathcal E_c \cup \mathcal E_i)$ &
  two-tier graph representation of STDMA network \\
$\Psi(\cdot)$ & point to point link schedule for STDMA network \\
$\sigma$ & spatial reuse of point to point link schedule \\
$v$ & number of vertices in communication graph \\
$e$ & number of edges in communication graph \\
$\theta$ & thickness of communication graph \\
$\rho$ & maximum degree of any vertex \\
$t_{i,j}$ & index of $j^{th}$ transmitter in $i^{th}$ slot \\
$r_{i,j}$ & index of $j^{th}$ receiver in $i^{th}$ slot \\
$\mathcal S_i$ & set of transmissions in $i^{th}$
  slot of point to point link schedule \\
$M_i$ &  number of concurrent transmitters in $i^{th}$ slot \\
$\mbox{SINR}_{r_{i,j}}$ & SINR at receiver $r_{i,j}$ \\
$\mbox{SNR}_{r_{i,j}}$ & SNR at receiver $r_{i,j}$ \\
$G_c(\cdot)$ & undirected equivalent of communication graph \\
$I(\cdot)$ & indicator function \\
$v_i$ & $i^{th}$ vertex in communication or two-tier graph \\
$T_i$ & $i^{th}$ oriented graph \\
$C(x)$ & colour assigned to edge $x$ \\
$L(u)$ & label assigned to vertex $u$ \\
$\omega$ & maximum number of neighbors with lower labels \\
$\tau[k_1,k_2]$ & number of successful links from slot $k_1$ to slot $k_2$ \\
$\eta[k_1,k_2]$ & 
  number of successful links per time slot from slot $k_1$ to slot $k_2$ \\
$\mathcal G_r(\cdot)$ & residual subgraph of communication graph \\
$\mathcal C$ & set of existing colors \\
$\mathcal C_c$ & set of conflicting colors \\
$\mathcal C_1$ & set of colors with primary edge conflict \\
$\mathcal C_2$ & set of colors with secondary edge conflict \\
${\mathcal C}_{cf}$ & set of conflict-free colors \\
${\mathcal C}_{nc}$ & set of non-conflicting colors \\
$R$ & radius of circular deployment region \\
$V(\cdot)$ & fading channel gain \\
$W(\cdot)$ & shadowing channel gain measured in bels \\
$f_X(x)$ & probability density function of random variable $X$ \\
$t_j$ & $j^{th}$ transmitter in a given time slot \\
$r_j$ & $j^{th}$ receiver in a given time slot \\
$M$ & number of concurrent transmissions in a given time slot \\
$\mathcal V'$ & set of vertices of SINR graph \\
$\mathcal E'$ & set of directed edges of SINR graph \\
$\mathcal G'(\mathcal V',\mathcal E')$ 
  & SINR graph representation of STDMA network \\
$w_{ij}$ & interference weight function for edges $i,j \in \mathcal E_c$ \\
$w_{ij}'$ 
  & co-schedulability weight function for edges $i,j \in \mathcal E_c$ \\
$\mathcal N(v')$ & normalized noise power for vertex $v' \in \mathcal V'$ \\
$\mathcal V_{uc}'$ & set of uncolored vertices of $\mathcal V'$ \\
$\mathcal V_{c_p}'$ & set of vertices of $\mathcal V'$ colored with color $p$\\
$\mathcal C(v')$ & color assigned to vertex $v' \in \mathcal V'$ \\
$\mathcal E_t'$ & set of directed edges of truncated SINR graph \\
$\mathcal G_t'(\mathcal V',\mathcal E_t')$ & truncated SINR graph \\
$\mathcal V_{cc}'$ & set of co-colored vertices of $\mathcal V'$ \\
$\Omega(\cdot)$ & point to multipoint link schedule for STDMA network \\
$r_{i,j,k}$ & index of $k^{th}$ receiver of $j^{th}$ transmission in
  time slot $i$ \\ 
$\varsigma$ & spatial reuse of point to multipoint link schedule \\
$\mathcal B_i$ & set of transmissions in $i^{th}$
  slot of point to multipoint link schedule \\
$\mbox{SINR}_{r_{i,j,k}}$ & SINR at receiver $r_{i,j,k}$ \\
$\eta(j)$ & number of neighbors of node $j$ \\
$\mathcal C_p$ & set of colors with primary vertex conflict \\
$\mathcal C_s$ & set of colors with secondary vertex conflict \\
$\lambda$ & Poisson packet arrival rate \\
$\mathcal D$ & average packet delay \\
$\mathcal T$ & throughput \\
$T(k)$ & left endpoint of allocation interval for slot $k$ \\
$\phi(k)$ & length of allocation interval for slot $k$ of PCFCFS algorithm \\
$\phi_0$ & maximum size of allocation interval of PCFCFS algorithm \\
$a_i$ & arrival time of $i^{th}$ packet \\
$d_i$ & departure time of $i^{th}$ packet \\
$P_i(k)$ & transmission power of $i^{th}$ packet in slot $k$ \\
$P_1$ & nominal transmission power \\
$P_2$ & higher transmission power \\
$\mathcal L$ & left tag \\
$\mathcal R$ & right tag \\
$\sigma(k)$ & tag of allocation interval in slot $k$ \\
$L$ & left allocation interval \\
$R$ & right allocation interval \\
$\alpha_0$ & maximum size of allocation interval of FCFS algorithm \\
$G_i$ & expected number of packets in an interval split $i$ times \\
$P_{A_i,B_j}$ & transition probability from state $(A,i)$ to $(B,j)$ \\
$x_Z$ & number of packets in allocation interval $Z$ \\
$Q_{X_i}$ & probability of hitting state $(X,i)$ in a CRP \\
$K$ & random variable denoting number of slots in a CRP \\
$F$ & random variable denoting fraction of original allocation interval
returned \\
& to waiting interval \\
$U_{X_i}$ & probability that state $(X,i)$ has a collision or a capture \\
$D$ & expected change in time backlog \\
$\tau$ & number of slots for which algorithm operates \\
$n_{suc}$ & number of successful packets in $[0,\tau)$ \\
$r$ & token increment rate of STBR \\
$B$ & bucket depth (maximum burst size) of STBR \\
$S$ & number of slots of operation of TBR \\
$r_k$ & token increment of GTBR in slot $k$ \\
$B_k$ & bucket depth of GTBR in slot $k$ \\
$\ell_k$ & length of packet transmitted by GTBR in slot $k$ \\
$u_k$ & number of residual tokens of GTBR at start of slot $k$ \\
$\mathbf r$ & token increment sequence of GTBR \\
$\mathbf B$ & bucket depth sequence of GTBR \\
$\mathcal R_s(\cdot)$ & standard token bucket regulator \\
$\mathcal R_g(\cdot)$ & generalized token bucket regulator \\
$p_{\ell_k}(u_k)$ & probability of transmitting packet of length
 $\ell_k$ bits with $u_k$ residual tokens \\
$H_k(u_k)$ & flow entropy of GTBR in slot $k$ with $u_k$ residual tokens \\
$H_k^*(u_k)$ 
  & optimal flow entropy of GTBR in slot $k$ with $u_k$ residual tokens \\
$\mu_i$ & maximum number of tokens possible in slot $i$ \\
\end{longtable}

\listoftables
\addcontentsline{toc}{chapter}{List of Tables}
\listoffigures
\setcounter{page}{26} 
\addcontentsline{toc}{chapter}{List of Figures}

\clearpage

\pagenumbering{arabic}
\pagestyle{fancy}

\chapter{Introduction}
\label{ch:introduction}

\section{Link Scheduling in Wireless Networks}

Wireless and mobile communications have revolutionized the way we
communicate over the past decade. This impact has been felt both in
voice communications and wireless Internet access.  The
ever-increasing need for applications like video and images have
driven the need for technologies like $3^{rd}$ Generation Partnership
Project Long Term Evolution (3GPP LTE), $3^{rd}$ Generation
Partnership Project 2 (3GPP2), IEEE 802.16 Worldwide Interoperability
for Microwave Access (WiMAX) networks and IEEE 802.11 Wireless Local
Area Networks (WLANs) which promise broadband data rates to wireless
users.  This revolution in wireless communications has had a great
impact in India, where the number of cellular subscribers is 250
million (as of November 2008) and is growing at a rate of approximately
$3\%$ per month \cite{cellular_operators_india}.

Wireless networks can be broadly classified into cellular networks and
ad hoc networks. A wireless ad hoc network is a collection of wireless
nodes that can dynamically self-organize into an arbitrary topology to
form a network without necessarily using any pre-existing
infrastructure. Based on their application, ad hoc networks can be
further classified into Mobile Ad Hoc Networks (MANETs), wireless mesh
networks and wireless sensor networks.  A wireless mesh network can be
considered to be an infrastructure-based ad hoc network with a mesh
backbone carrying most of the traffic.

\begin{figure}[thbp]
  \centering
  \includegraphics[width=6in]{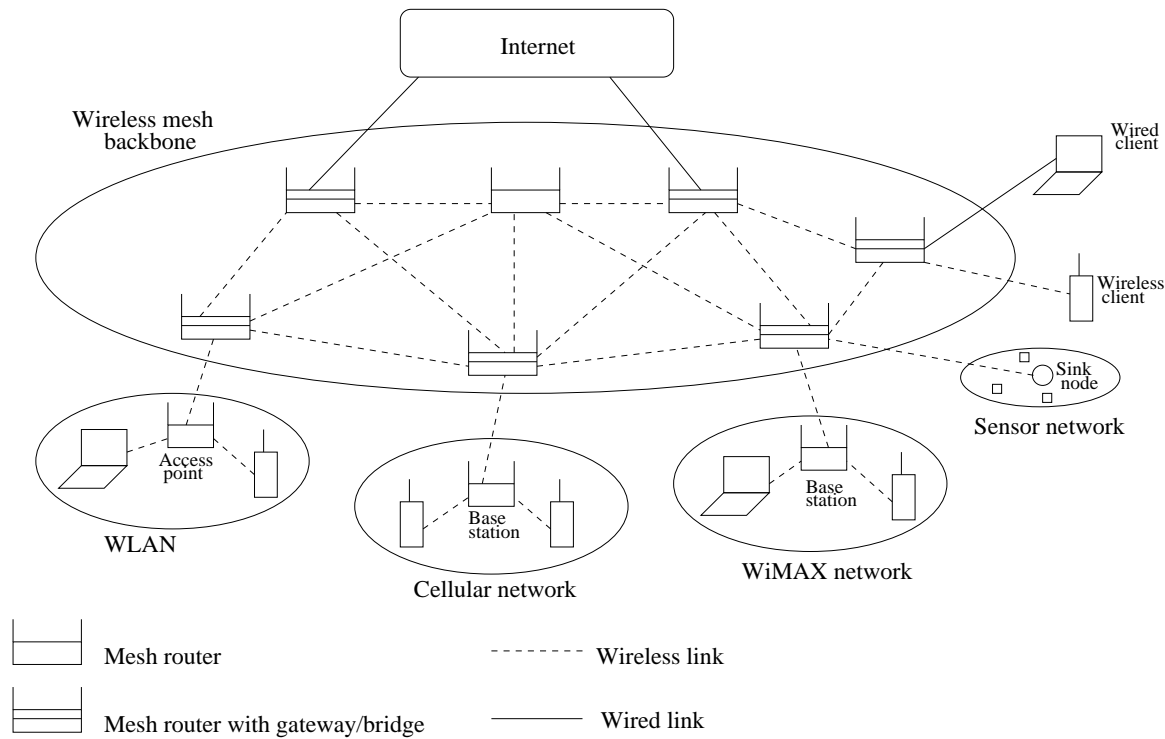}
  \caption{Wireless mesh network, adapted from
    \cite{akyildiz_wang__survey_wireless}.}
  \label{fig:wmn_mesh}
\end{figure}

Wireless Mesh Networks (WMNs) have been recently advocated to provide
connectivity and coverage, especially in sparsely populated and rural
areas.  For example, several Wireless Community Networks (WCNs) are
operational in Europe, Australia and USA
\cite{efstathiou_frangoudis__stimulating_participation}.  Peer to peer
wireless technology is also being developed by companies such as
\cite{terranet_p2p}.  WMNs are dynamically self-organized and
self-configured, with nodes in the network automatically establishing
an ad hoc network and maintaining mesh connectivity
\cite{akyildiz_wang__survey_wireless}.  An example of a WMN is shown
in Figure \ref{fig:wmn_mesh}. Typically, a WMN comprises of two types
of nodes: mesh routers and mesh clients. A mesh router consists of
gateway/bridge functions and the capability to support mesh
networking.  Mesh routers have little or no mobility and form a
wireless backbone for mesh clients. The gateway/bridge functionalities
in mesh routers aid in the integration of WMNs with heterogeneous
networks such as Ethernet \cite{metcalfe_boggs__ethernet_distributed},
cellular networks, WLANs \cite{ieee__wlan}, WiMAX networks
\cite{ieee__wimax} and sensor networks. WMNs are witnessing
commercialization in various applications like broadband home
networks, enterprise networks, community networks and metropolitan
area networks. Moreover, WMNs diversify the functionalities of ad hoc
networks, instead of just being another type of ad hoc network.  These
additional functionalities necessitate novel design principles and
efficient algorithms for the realization of WMNs.

Significant research efforts are required to realize the full
potential of WMNs. Among the many challenging issues in the design of
WMNs, the design of the physical as well as the Medium Access Control
(MAC) layers is important, especially from a perspective of achieving
high network throughput. At the physical layer, techniques like
adaptive modulation and coding, Orthogonal Frequency Division
Multiplexing (OFDM) \cite{cimini__analysis_simulation},
\cite{bahai_saltzberg_ergen__multicarrier_digital} and Multiple Input
Multiple Output (MIMO) techniques
\cite{tse_viswanath__fundamentals_wireless} can be used to increase
the capacity of a wireless channel and achieve high data transmission
rates. At the MAC layer, various solutions like directional antenna
based MAC \cite{choudhury_yang__designing_mac}, MAC with power control
\cite{jung_vaidya__power_control} and multi-channel MAC
\cite{so_vaidya__multichannel_mac} have been proposed in the
literature.

In this thesis, we primarily focus on the design of the MAC layer for
wireless mesh networks. We abstract out essential features of the MAC
and physical layers of a WMN and propose techniques that deliver high
network throughput. We take into account wireless channel effects such
as propagation path loss, fading and shadowing
\cite{sklar__rayleigh_fading}.  Towards the end of the thesis, we
provide an information-theoretic perspective on flow control.  The
main body of this thesis, however, focuses on MAC layer design for two
types of networks: Spatial Time Division Multiple Access (STDMA)
networks and random access networks.  We next describe these two types
of networks along with their potential applications in WMNs.

An STDMA network can be thought of as a mesh network in which multiple
transmitter receiver pairs can communicate at the same time. More
specifically, consider a WMN comprising of store-and-forward nodes
connected by ``point to point'' wireless communication channels
(links). A link is an ordered pair $(t,r)$, where $t$ is a transmitter
and $r$ is a receiver. Time is divided into fixed-length intervals
called slots. In STDMA, we allow concurrent communications between
collections of nodes that are ``reasonably far'' from each other,
i.e., we exploit spatial reuse. An STDMA link schedule describes the
transmission rights for each time slot in such a way that
communicating entities assigned to the same slot do not ``collide''.
In this thesis, we design centralized STDMA link scheduling algorithms
that take into account physical layer characteristics such as Signal
to Interference and Noise Ratio (SINR) at a receiver.

\begin{figure}[thbp]
  \centering
  \includegraphics[width=6in]{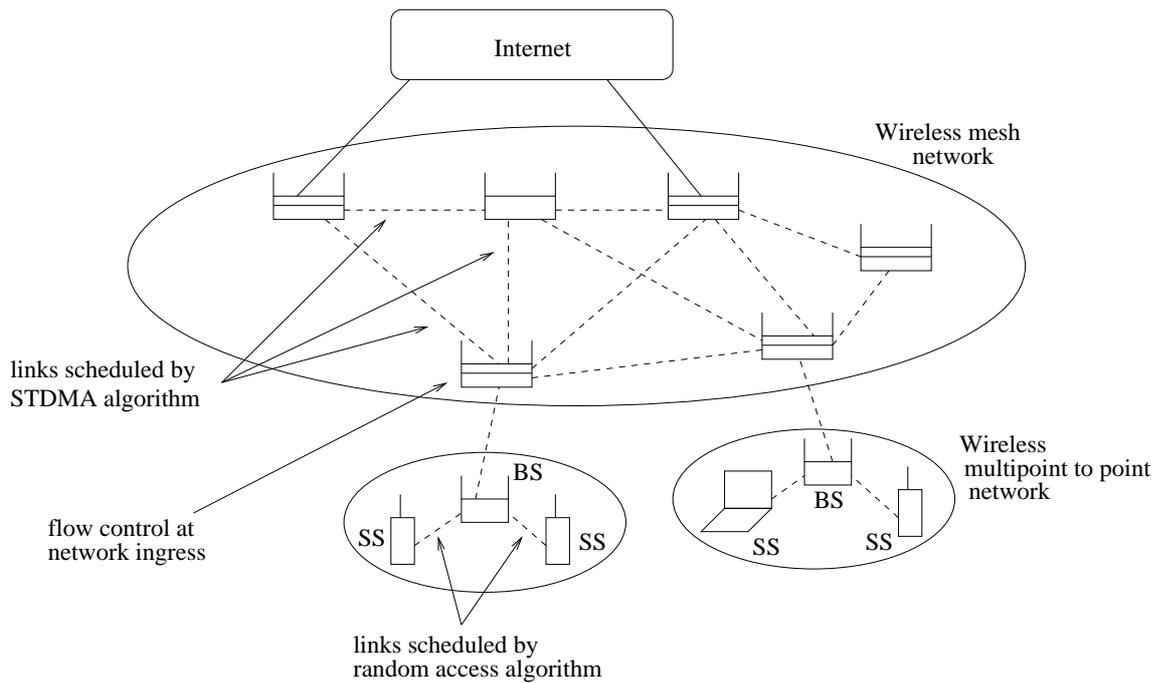}
  \caption{Potential applications of link scheduling and flow control
    in wireless networks.}
  \label{fig:potential_apps}
\end{figure}

STDMA link scheduling algorithms can be implemented at the MAC layer
of wireless mesh networks, as shown in Figure
\ref{fig:potential_apps}. A mesh network can be constructed with mesh
routers and mesh clients functioning as relay nodes in addition to
their sender and receiver roles. The link schedule can be computed by
a designated mesh router and then disseminated to all other nodes.
The mesh routers form the mesh backbone to provide connectivity to
(possibly mobile) mesh clients.

In a related problem involving link scheduling, we consider a
multipoint to point wireless network with random access. When random
access algorithms are directly translated from a wired network to a
wireless network, they yield equal or lower throughput.  This is
because they do not consider the time variation of the wireless
channel and interference conditions at the receiver.  In this thesis,
we design a distributed random access algorithm that takes into
account wireless channel attributes such as propagation path loss and
physical layer characteristics such as SINR at the receiver.

Random access algorithms can be applied to the MAC layer of wireless
networks, as shown in Figure \ref{fig:potential_apps}. The BS and SSs
are organized into a cell-like structure. Both uplink (from SS to BS)
and downlink (from BS to SS) channels are shared among the SSs. This
mode requires all SSs to be within the communication range and line of
sight of the BS.  A random access algorithm can be implemented in the
SSs to resolve contentions on the uplink channel.

In a complementary problem, we consider a packet level flow from a
source to a destination over a data network. The packets transmitted
by the source are regulated at the ingress of the network, as shown in
Figure \ref{fig:potential_apps}. In this thesis, we investigate the
maximum amount of information that can be transmitted from the source
to the destination by utilizing the idea of covert information
channels.

To summarize, this thesis deals with the design of MAC layer
algorithms (equivalently, link scheduling algorithms) for mesh
networks.  The proposed link scheduling algorithms take into account
physical layer characteristics such as SINR at a receiver.  Finally,
we also consider the problem of flow control.

Various solutions to the link scheduling problem have been proposed in
literature depending on the modeling of the wireless network and
interference conditions. In the next section, we motivate our work by
briefly outlining the essential differences between our approach and
the methodology of existing approaches.

\section{Motivation for the Thesis}

Consider the problem of determining a link schedule for an STDMA
wireless network. STDMA link schedules can be classified into point to
point and point to multipoint link schedules. In a point to point link
schedule, the transmission right in each slot is assigned to certain
links, while in a point to multipoint link schedule, the transmission
right in each slot is assigned to certain nodes. An STDMA scheduling
algorithm is a set of rules that is used to determine a link schedule
so as to satisfy certain objectives. An STDMA link schedule should be
so designed that, in every time slot, all packets transmitted by the
scheduled transmitters are received successfully at the corresponding
(intended) receivers.

Two models have been proposed in literature for specifying the
criteria for successful packet reception. According to the protocol
interference model \cite{gupta_kumar__capacity_wireless}, a packet is
received successfully at a receiver only if its intended transmitter
is within the communication range and other unintended transmitters
are outside the interference range of the receiver. In essence, the
protocol interference model mandates a ``silence zone'' around every
scheduled receiver in a time slot.  On the other hand, according to
the physical interference model \cite{gupta_kumar__capacity_wireless},
a packet is received successfully at a receiver only if the SINR at
the receiver is no less than a certain threshold, called communication
threshold.

Throughout this thesis, we assume that a packet is received
successfully if the SINR at the receiver is greater than or equal to
the communication threshold, i.e., we employ the physical interference
model. Moreover, we assume that, as long as the SINR threshold
condition is satisfied at the receiver of a link, a constant rate of
data transfer occurs along that link. In other words, the existence of
a channel coding technique that guarantees a fixed data rate is
assumed, when the SINR threshold condition is satisfied.

To maximize the aggregate traffic transported by an STDMA wireless
network, most link scheduling algorithms employ the protocol
interference model and seek to minimize the schedule length. These
algorithms model the network by a communication graph and employ novel
techniques to color all the edges of the graph using minimum number of
colors \cite{ramanathan_lloyd__scheduling_algorithms}. Such approaches
have three lacunae. First, they transform the link scheduling problem
to an edge coloring problem in a graph, which is a simplification of
the true system model. Second, they do not incorporate wireless
channel effects like propagation path loss, fading and shadowing.
Finally, they do not consider SINR threshold conditions at a receiver.

In this thesis, we seek to address these issues by designing
polynomial time link scheduling algorithms that employ the physical
interference model, provide a reasonably accurate representation of
the wireless network and aim to maximize the number of successful
packet transmissions per time slot. These algorithms take into account
wireless channel effects like propagation path loss, fading and
shadowing, as well as SINR conditions at a receiver.  We design and
evaluate algorithms for both point to point and point to multipoint
link scheduling. Our work falls under the realm of joint PHY-MAC
design of wireless networks.

In a related scenario involving link scheduling, consider the problem
of designing a random access algorithm for a multipoint to point
wireless network.  When traditional random access algorithms like
ALOHA \cite{abramson__aloha_system} and tree-like algorithms
\cite{capetanakis__tree_algorithms} are employed in a wireless
network, they yield equal or lower throughput compared to the wired
case. This is because such algorithms are incognizant of wireless
channel effects and physical layer characteristics. Thus, it is
important to design a random access algorithm that incorporates
wireless channel effects and exploits flexibilities provided by the
physical layer. Towards this step, we assume a receiver that is
capable of power-based capture
\cite{nguyen_wieselthier__capture_wireless}.  Also, we assume that
users can vary their transmission powers to increase the chances of
successful packet reception under the physical interference model.
Consequently, we design and analyze a variable-power tree-like
algorithm for a random access wireless network.

In the final scenario, we formulate the problem of analyzing flow
control in packet networks from an information-theoretic perspective.
We focus on the problem of analyzing regulated flows in a point to
point network.  It is well-known that information (in the Shannon
sense) can be transmitted from a source to a destination only by
encoding it in the contents, lengths and timings of data packets from
the source to the destination \cite{gallager__basic_limits},
\cite{anantharam_verdu__bits_through}.  We investigate the maximum
amount of information that can be transmitted by a source whose flow
is linearly bounded.  Specifically, we assume that covert information
is conveyed by randomness in packet lengths and investigate properties
of the regulating mechanism that leads to maximum information
transfer.

\section{Overview and Contributions of the Thesis}

In the first part of the thesis (Chapters \ref{ch:framework_link} to
\ref{ch:broadcastschedule}), we consider various problems on
centralized link scheduling in STDMA wireless networks; each problem
represents a different nuance of the overall link scheduling problem.
In the second part of the thesis (Chapters \ref{ch:review_random} and
\ref{ch:powercontrolled}), we consider a related link scheduling
problem, namely, distributed medium access control in a random access
wireless network.  In the third and final part of the thesis (Chapter
\ref{ch:flow_control}), we consider flow control in networks from an
information-theoretic perspective.

Chapter \ref{ch:framework_link} presents a generic framework and
system model for link scheduling in STDMA wireless networks.  We
describe the system parameters of an STDMA wireless network and
explain two prevalent models used to specify the criteria for
successful packet reception, namely protocol interference model and
physical interference model \cite{gupta_kumar__capacity_wireless}.  We
argue that STDMA link scheduling algorithms can be classified into
three classes: algorithms based on modeling the network by a two-tier
or communication graph, ``hybrid'' algorithms based on modeling the
network by a communication graph and verifying SINR conditions and
algorithms based on modeling the network by an SINR graph. We review
representative research papers from each of these classes. We explain
the relative merits and demerits of each class of algorithms in terms
of computational complexity, performance and accuracy of the network
model. We discuss limitations of link scheduling algorithms based only
on the communication graph model by providing illustrative examples.
Finally, to compare the performance of various link scheduling
algorithms, we motivate and introduce spatial reuse as a performance
metric.  Various ``spinoffs'' of the ``parent'' link scheduling
problem constitute the subproblems considered in Chapters
\ref{ch:comm_graph}, \ref{ch:sinr_graph} and
\ref{ch:broadcastschedule}.

In Chapter \ref{ch:comm_graph}, we consider STDMA point to point link
scheduling algorithms which utilize a communication graph
representation of the network. Initially, we examine the
ArboricalLinkSchedule (ALS) algorithm
\cite{ramanathan_lloyd__scheduling_algorithms}, which represents the
network by a communication graph, partitions the graph into minimum
number of planar subgraphs and colors each subgraph in a greedy
manner. We suggest a modification to the ALS algorithm based on
reusing colors from previously colored subgraphs to color the current
subgraph. We compare the performance of the modified algorithm with
the ALS algorithm and derive its running time complexity.
Subsequently, we propose the ConflictFreeLinkSchedule algorithm, which
is a hybrid algorithm based on the communication graph and verifying
SINR conditions. Under various wireless channel conditions, we
demonstrate that ConflictFreeLinkSchedule achieves higher spatial
reuse than existing link scheduling algorithms based on the
communication graph.  However, this improvement in performance is
achieved at a cost of slightly higher computational complexity.

In Chapter \ref{ch:sinr_graph}, we consider the point to point link
scheduling problem under the physical interference model.  The STDMA
network is represented by an SINR graph, in which weights of edges
correspond to interferences between pairs of nodes and weights of
vertices correspond to normalized noise powers at receiving nodes. We
propose a link scheduling algorithm based on the SINR graph
representation of the network.  We prove the correctness of the
algorithm and show that it has polynomial running time complexity.
Finally, we demonstrate that the proposed algorithm achieves higher
spatial reuse than ConflictFreeLinkSchedule.

In Chapter \ref{ch:broadcastschedule}, we consider point to multipoint
link scheduling (broadcast scheduling) under the physical interference
model.  The problem addressed herein can be considered as the ``dual''
of the problem considered in Chapters \ref{ch:comm_graph} and
\ref{ch:sinr_graph}.  We generalize the definition of spatial reuse to
the point to multipoint link scheduling problem.  We propose a greedy
scheduling algorithm which has demonstrably higher spatial reuse than
existing algorithms, without any increase in computational complexity.

In Chapter \ref{ch:review_random}, we consider another flavor of the
link scheduling problem, namely random access algorithms for wireless
networks.  While random access algorithms for satellite networks,
packet radio networks, multidrop telephone lines and multitap bus
(``traditional random access algorithms'') is a well-researched and
mature subject, the study of random access algorithms for wireless
networks that take into account physical layer characteristics such as
SINR and channel variations has yet to gain momentum.  This chapter
reviews representative research work which investigate such random
access algorithms, most of them being generalizations of the ALOHA
protocol (by adapting the retransmission probability) or the tree
algorithm (by adapting the set of contending users). We motivate the
use of variable transmission power to increase the throughput in
random access wireless networks.

We consider random access for wireless networks under the physical
interference model in Chapter \ref{ch:powercontrolled}. We design an
algorithm that adapts the set of contending users and their
corresponding transmission powers based on quaternary (2 bit) channel
feedback.  We model the algorithm dynamics by a Discrete Time Markov
Chain and subsequently derive its maximum stable throughput. Finally,
we demonstrate that the proposed algorithm achieves higher throughput
and substantially lower delay than the well-known First Come First
Serve splitting algorithm \cite{bertsekas_gallager__data_networks}.

In Chapter \ref{ch:flow_control}, we formulate the problem of
analyzing flow control in packet networks from a perspective of
maximizing mutual information between a source and a destination.  We
focus on the simpler, yet insightful, problem of analyzing regulated
flows in a point to point network. More specifically, we consider a
source whose flow is bounded by a ``generalized'' Token Bucket
Regulator (TBR) and analyze the maximum amount of information (in the
Shannon sense) that the source can convey to its destination by
encoding information in the randomness of packet lengths. This chapter
reveals two interesting results. First, under certain ``bandwidth''
constraints on cumulative tokens and cumulative bucket depth, we
demonstrate that a generalized TBR can achieve higher flow entropy
than that of a standard TBR. Second, we provide information-theoretic
arguments for the observations that the optimal generalized TBR has a
decreasing token increment sequence and a near-uniform bucket depth
sequence.

In Chapter \ref{ch:conclusions}, we summarize the thesis and provide
possible directions for future work. Specifically, we suggest
generalizations of the two-level power control algorithm proposed in
Chapter \ref{ch:powercontrolled}. We also provide pointers for
deriving the approximation factors of the algorithms proposed in
Chapters \ref{ch:comm_graph} and \ref{ch:sinr_graph}.

\clearpage{\pagestyle{empty}\cleardoublepage}

\chapter{A Framework for Link Scheduling Algorithms for STDMA Wireless Networks}
\label{ch:framework_link}

An STDMA wireless network consists of a finite set of nodes wherein
multiple pairs of nodes can communicate concurrently, as discussed in
Chapter \ref{ch:introduction}.  In this chapter, we outline a
framework for modeling STDMA link scheduling algorithms.  We consider
a general representation of an STDMA wireless network, i.e., this
model is not specific to any technology or protocol.  This abstraction
lends simplicity to the network model and helps us focus on the design
of scheduling algorithms for the network.  Since the problem of
determining an optimal link schedule is NP-hard
\cite{ramanathan_lloyd__scheduling_algorithms}, researchers have
proposed various heuristics to obtain close-to-optimal solutions.  In
our view, such heuristics can be broadly classified into three
categories: algorithms based on modeling the network by a two-tier or
communication graph, ``hybrid'' algorithms based on modeling the
network by a communication graph and verifying SINR conditions and
algorithms based on modeling the network by an SINR graph. We review
representative research papers from each of these classes. The
relative merits and demerits of each class of algorithms are also
elucidated in the chapter. Our observations motivate us to propose a
performance metric that is proportional to aggregate network
throughput.

The rest of this chapter is structured as follows.  In Section
\ref{sec:system_model_stdma}, we describe the system model of an STDMA
wireless network and explain the protocol and physical interference
models.  In Section \ref{sec:protocol_model}, we elucidate the
equivalence between a point to point link schedule for an STDMA
network and the colors of edges of the communication graph model of
the network. This is followed by a review of research work on point to
point link scheduling algorithms based on the protocol interference
model.  In Section \ref{sec:limitations_protocol}, we describe the
limitations of algorithms based on the protocol interference model
from a perspective of maximizing network throughput in wireless
networks.  We review research work on link scheduling algorithms
based on the physical interference model in Sections
\ref{sec:comm_graph_and_sinr} and \ref{sec:sinr_graph}.  Specifically,
Section \ref{sec:comm_graph_and_sinr} reviews algorithms based on
communication graph model of the network and SINR conditions, while
Section \ref{sec:sinr_graph} reviews algorithms based on an SINR graph
model of the network.  Finally, in Section
\ref{sec:performance_metric}, we propose spatial reuse as a
performance metric and argue that it corresponds to network throughput
from a physical layer viewpoint.

\section{System Model}
\label{sec:system_model_stdma}

We consider a general model of an STDMA wireless network with $N$
static store-and-forward nodes in a two-dimensional plane, where $N$
is a positive integer. Nodes are indexed as $1,2,\ldots,N$.  In a
wireless network, a link is an ordered pair of nodes $(t,r)$, where
$t$ is a transmitter and $r$ is a receiver.  We assume equal length
packets.  Time is divided into slots of equal duration.  During a time
slot, a node can either transmit, receive or remain idle.  The slot
duration equals the amount of time it takes to transmit one packet
over the wireless channel. We make the following additional
assumptions:
\begin{itemize}
\item Synchronized nodes: All nodes are synchronized to slot
  boundaries.

\item Homogeneous nodes: Every node has identical receiver
  sensitivity, transmission power and thermal noise characteristics.

\item Backlogged nodes: We assume a node to be continuously backlogged,
i.e., a node always has a packet to transmit and
  cannot transmit more than one packet in a time slot.
\end{itemize}
Let:
\begin{eqnarray*}
(x_j,y_j) &=& \mbox{Cartesian coordinates of node $j$} \;=:\;{\mathbf r}_j, \\
P &=& \mbox{power with which a node transmits its packet}, \\
N_0 &=& \mbox{thermal noise power spectral density}, \\
D(j,k) &=& \mbox{Euclidean distance between nodes $j$ and $k$}.
\end{eqnarray*}
The received signal power at a distance $D$ from the transmitter is
given by $\frac{P}{D^{\beta}}$, where $\beta$ is the path loss
exponent\footnote{We do not consider fading and shadowing effects.}.
An STDMA link schedule is a mapping from the set of links to time
slots. We only consider static link schedules, i.e., link schedules
that repeat periodically throughout the operation of the network. Let
$C$ denote the number of time slots in a link schedule, i.e., the {\em
  schedule length.}  For a given time slot $i$, $j^{th}$ communicating
transmitter-receiver pair is denoted by $t_{i,j} \rightarrow r_{i,j}$,
where $t_{i,j}$ denotes the index of the node which transmits a packet
and $r_{i,j}$ denotes the index of the node which receives the packet.
Let $M_i$ denote the number of concurrent transmitter-receiver pairs
in time slot $i$. A point to point link schedule for the STDMA network
is denoted by $\Psi(\mathcal S_1,\cdots,\mathcal S_C)$, where
\begin{eqnarray*}
\mathcal S_i 
 &:=& \{t_{i,1}\rightarrow r_{i,1},\cdots,t_{i,M_i}\rightarrow r_{i,M_i}\} \\
 &=& \mbox{set of transmitter-receiver pairs which can communicate 
     concurrently} \\ && \mbox{in time slot $i$}.
\end{eqnarray*}
Note that a link schedule repeats periodically throughout the
operation of the network. More specifically, transmitter-receiver
pairs that communicate concurrently in time slot $i$ also communicate
concurrently in time slots $i+C$, $i+2C$ and so on. Thus, $\mathcal
S_i = \mathcal S_{i\pmod{C}}$.  Finally, note that all transmitters
and receivers are stationary.

Every point to point link schedule must satisfy the following:
\begin{itemize}
\item Operational constraint: During a time slot, a node can transmit
  to exactly one node, receive from exactly one node or remain idle,
  i.e.,
  \begin{eqnarray}
    \{t_{i,j},r_{i,j}\} \cap \{t_{i,k},r_{i,k}\} = \phi \;\;\forall\;\;
    i=1,\ldots,C \;\;\forall\;\; 1 \leqslant j < k \leqslant M_i.
    \label{eq:operational_constraint}
  \end{eqnarray}
\end{itemize}

\begin{figure}
\centering
\subfigure[An STDMA wireless network with six nodes.]
{
\label{fig:deployment_stdma_link}
\includegraphics[width=5.5in]{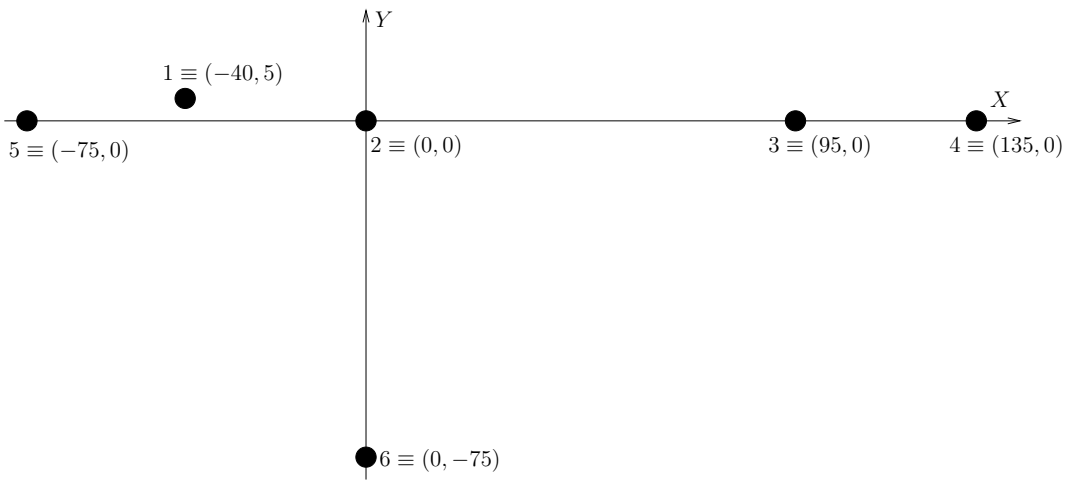}
}\\
\subfigure[A point to point link schedule for the network shown in Figure \ref{fig:deployment_stdma_link}.]
{
\label{fig:point_link_schedule}
\includegraphics[width=5.5in]{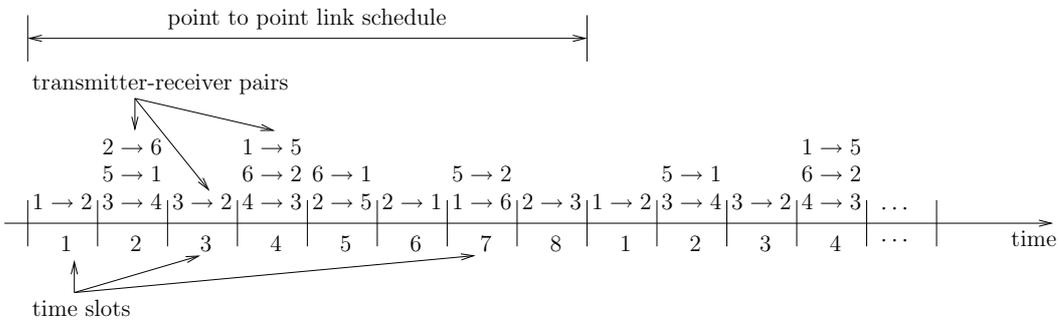}
}
\caption{Example of STDMA network and point to point link schedule.}
\label{fig:deployment_link_schedule}
\end{figure}

As an illustration, consider the STDMA wireless network shown in
Figure \ref{fig:deployment_stdma_link}. It consists of six nodes whose
coordinates (in meters) are $1 \equiv (-40,5)$, $2 \equiv (0,0)$, $3
\equiv (95,0)$, $4 \equiv (135,0)$, $5 \equiv (-75,0)$ and $6 \equiv
(0,-75)$. An example point to point link schedule for this STDMA
network is  shown in Figure \ref{fig:point_link_schedule}.  Note
that this schedule is only one of the several possible schedules and
is given here only for illustrative purposes.  The schedule length is
$C=8$ time slots and the schedule is defined by $\Psi(\mathcal
S_1,\mathcal S_2,\mathcal S_3,\mathcal S_4,\mathcal S_5, \mathcal
S_6,\mathcal S_7,\mathcal S_8)$, where
\begin{eqnarray*}
\mathcal S_1 
&=& \{t_{1,1} \rightarrow r_{1,1}\} \\
&=& \{1 \rightarrow 2\}, \\
\mathcal S_2
&=& \{t_{2,1} \rightarrow r_{2,1}, \; t_{2,2} \rightarrow r_{2,2}, 
      \; t_{2,3} \rightarrow r_{2,3}\} \\
&=& \{3 \rightarrow 4, \; 5 \rightarrow 1, \; 2 \rightarrow 6\}, \\
\mathcal S_3
&=& \{t_{3,1} \rightarrow r_{3,1}\} \\
&=& \{3 \rightarrow 2\}, \\
\mathcal S_4
&=& \{t_{4,1} \rightarrow r_{4,1}, \; t_{4,2} \rightarrow r_{4,2}, \; t_{4,3} \rightarrow r_{4,3}\} \\
&=& \{4 \rightarrow 3, \; 6 \rightarrow 2, \; 1 \rightarrow 5\}, \\
\mathcal S_5
&=& \{t_{5,1} \rightarrow r_{5,1}, \; t_{5,2} \rightarrow r_{5,2}\} \\
&=& \{2 \rightarrow 5, \; 6 \rightarrow 1\}, \\
\mathcal S_6
&=& \{t_{6,1} \rightarrow r_{6,1}\} \\
&=& \{2 \rightarrow 1\}, \\
\mathcal S_7
&=& \{t_{7,1} \rightarrow r_{7,1}, \; t_{7,2} \rightarrow r_{7,2}\} \\
&=& \{1 \rightarrow 6, \; 5 \rightarrow 2\}, \\
\mathcal S_8
&=& \{t_{8,1} \rightarrow r_{8,1}\} \\
&=& \{2 \rightarrow 3\}.
\end{eqnarray*}
After 8 time slots, the schedule repeats periodically, as shown in
Figure \ref{fig:point_link_schedule}.

A scheduling algorithm is a set of rules that is used to determine a
link schedule $\Psi(\cdot)$. Usually, a scheduling algorithm needs to
satisfy certain objectives.

Consider $j^{th}$ receiver in time slot $i$, i.e., receiver $r_{i,j}$.
The power received at $r_{i,j}$ from its intended transmitter
$t_{i,j}$ (signal power) is $\frac{P}{D^{\beta}(t_{i,j},r_{i,j})}$.
Similarly, the power received at $r_{i,j}$ from its unintended
transmitters (interference power) is $\sum_{\stackrel{k=1}{k\neq
    j}}^{M_i}\frac{P}{D^{\beta}(t_{i,k},r_{i,j})}$.  Thus, the Signal
to Interference and Noise Ratio (SINR) at receiver $r_{i,j}$ is given
by
\begin{eqnarray}
{\mbox{SINR}}_{r_{i,j}} &=& 
 \frac{\frac{P}{D^{\beta}(t_{i,j},r_{i,j})}}{N_0+\sum_{\stackrel{k=1}{k\neq j}}^{M_i}\frac{P}{D^{\beta}(t_{i,k},r_{i,j})}}.
\label{eq:definition_sinr}
\end{eqnarray}
Without considering the interference power,
the Signal to Noise Ratio (SNR) at receiver $r_{i,j}$ is given by
\begin{eqnarray}
{\mbox{SNR}}_{r_{i,j}} &=& \frac{P}{N_0D^{\beta}(t_{i,j},r_{i,j})}.
\label{eq:definition_snr}
\end{eqnarray}

According to the {\em protocol interference model}
\cite{gupta_kumar__capacity_wireless}, 
transmission $t_{i,j}\rightarrow r_{i,j}$ is successful if:
\begin{enumerate}

\item the SNR at receiver $r_{i,j}$ is no less than a certain
  threshold $\gamma_c$, termed as the {\em communication threshold}.
  From (\ref{eq:definition_snr}), this translates to
  \begin{eqnarray}
    D(t_{i,j},r_{i,j}) 
    &\leqslant& \left(\frac{P}{N_0 \gamma_c}\right)^\frac{1}{\beta}
    \;\;=:\;\; R_c,
    \label{eq:communication_range}
  \end{eqnarray}
  where $R_c$ is termed as communication range, and

\item the signal from any unintended transmitter $t_{i,k}$ is received
  at $r_{i,j}$ with an SNR less than a certain threshold $\gamma_i$,
  termed as the {\em interference threshold}. From
  (\ref{eq:definition_snr}), this translates to
  \begin{eqnarray}
    D(t_{i,k},r_{i,j}) 
    &>& \left(\frac{P}{N_0 \gamma_i}\right)^\frac{1}{\beta}
    =: R_i \;\;\forall\;\; k=1,\ldots,M_i, \;k \neq j, 
    \label{eq:interference_range}
  \end{eqnarray}
  where $R_i$ is termed as interference range.
\end{enumerate}
In essence, the transmission on a link is successful if the distance
between the nodes is less than or equal to the {\em communication
  range} and no other node is transmitting within the {\em
  interference range} from the receiver.

The STDMA network is denoted by $\Phi(N,({\mathbf r}_1,\ldots,{\mathbf
  r}_N),P, \gamma_c,\gamma_i,\beta,N_0)$. Note that $0 < \gamma_i <
\gamma_c$, thus $R_i > R_c$.  The relation $R_i=2R_c$ is widely
assumed in literature \cite{gupta_walrand__sufficient_rate},
\cite{lim_lim_hou__coordinate_based},
\cite{alicherry_bhatia_li__joint_channel},
\cite{jain_padhye__impact_interference}.

According to the {\em physical interference model}
\cite{gupta_kumar__capacity_wireless}, the transmission on a link is
successful if the SINR at the receiver is greater than or equal to the
communication threshold $\gamma_c$. More specifically, the physical
interference model states that transmission $t_{i,j}\rightarrow
r_{i,j}$ is successful if:
\begin{eqnarray}
\frac{\frac{P}{D^\beta(t_{i,j},r_{i,j})}}{N_0+\sum_{\stackrel{k=1}{k\neq j}}^{M_i} \frac{P}{D^\beta(t_{i,k},r_{i,j})}} \geqslant \gamma_c.
\label{eq:sinr_ge_gammac}
\end{eqnarray}
Note that the physical interference model is less restrictive but more
complex. Usually, this representation has been employed to model mesh
networks with TDMA like access mechanisms
\cite{brar_blough_santi__computationally_efficient}.  We will discuss
this aspect later in the thesis.

A point to point link schedule $\Psi(\cdot)$ is {\it conflict-free} if
the SINR at every intended receiver does not drop below the
communication threshold, i.e.,
\begin{eqnarray}
\mbox{SINR}_{r_{i,j}} \geqslant \gamma_c \;\;\forall\;\; i=1,\ldots,C, 
\;\;\forall\;\; j=1,\ldots,M_i.
\label{eq:conflict_free}
\end{eqnarray}

\section{Link Scheduling based on Protocol Interference Model}
\label{sec:protocol_model}

\subsection{Equivalence of Link Scheduling and Graph Edge Coloring}
\label{sec:equivalence_coloring}

In this section, we describe the communication and two-tier graph
representations of an STDMA wireless network. We explain the
equivalence between a point to point link schedule for the STDMA
network and the colors of edges of the communication graph
representation of the network, and illustrate this equivalence with an
example.

The STDMA network $\Phi(\cdot)$ can be modeled by a directed graph
$\mathcal G(\mathcal V,\mathcal E)$, where $\mathcal V$ is the set of
vertices and $\mathcal E$ is the set of edges. Let $\mathcal V =
\{v_1,v_2,\ldots,v_N\}$, where vertex $v_j$ represents node $j$ in
$\Phi(\cdot)$.  In the graph representation, if node $k$ is within
node $j$'s communication range, then there is an edge from $v_j$ to
$v_k$, denoted by $v_j \stackrel{c}{\rightarrow} v_k$ and termed as
communication edge. Similarly, if node $k$ is outside node $j$'s
communication range but within its interference range, then there is
an edge from $v_j$ to $v_k$, denoted by $v_j \stackrel{i}{\rightarrow}
v_k$ and termed as interference edge.  Thus, $\mathcal E = \mathcal
E_c \,\cup\, \mathcal E_i$, where $\mathcal E_c$ and $\mathcal E_i$
denote the set of communication and interference edges respectively.
The {\em two-tier graph} representation of the STDMA network
$\Phi(\cdot)$ is defined as the graph $\mathcal G(\mathcal V,\mathcal
E_c \cup \mathcal E_i)$ comprising of all vertices and both
communication and interference edges. The {\em communication graph}
representation of the STDMA network $\Phi(\cdot)$ is defined as the
graph $\mathcal G_c(\mathcal V,\mathcal E_c)$ comprising of all
vertices and communication edges only.  We will illustrate these
representations with an example.

\begin{table}[tbhp]
\centering
\begin{tabular}{|l|l|l|} \hline
  Parameter & Symbol & Value \\ \hline
  transmission power & $P$ & 10 mW \\ \hline
  path loss exponent & $\beta$ & 4 \\ \hline
  noise power spectral density & $N_0$ & -90 dBm \\ \hline
  communication threshold & $\gamma_c$ & 20 dB \\ \hline
  interference threshold & $\gamma_i$ & 10 dB \\ \hline
\end{tabular}
\caption{System parameters for STDMA networks shown in Figures 
    \ref{fig:deployment_stdma_link}, 
    \ref{fig:deploy_high_interference} and \ref{fig:deploy_high_colors}.}
\label{tab:system_parameters_stdma}
\end{table}

\begin{figure}[thbp]
\centering
\includegraphics[width=5.5in]{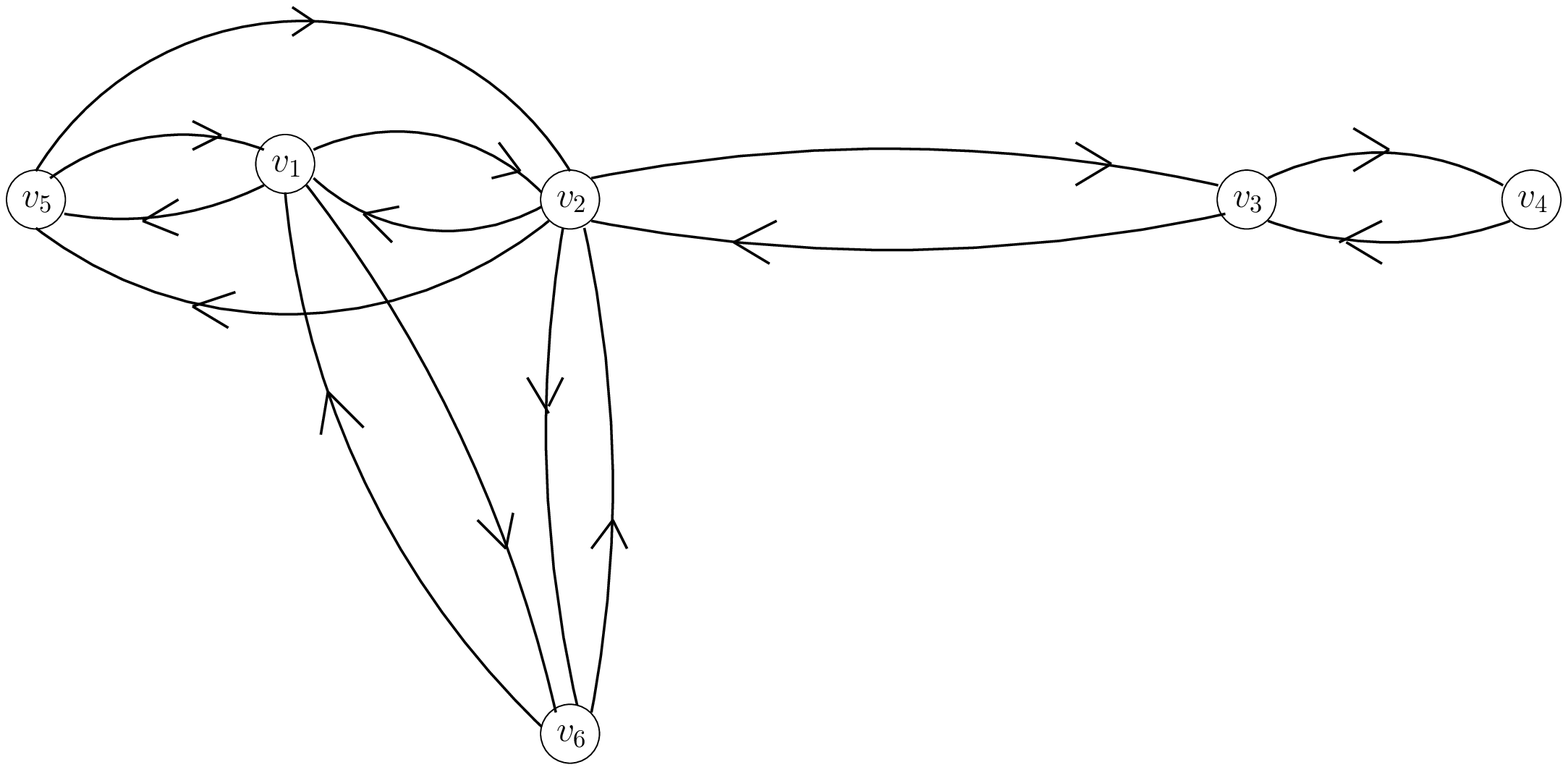}
\caption{Communication graph model of STDMA network described by
  Figure \ref{fig:deployment_stdma_link} and Table
  \ref{tab:system_parameters_stdma}.}
\label{fig:communication_graph_critique}
\end{figure}

\begin{figure}[thbp]
\centering
\includegraphics[width=5.5in]{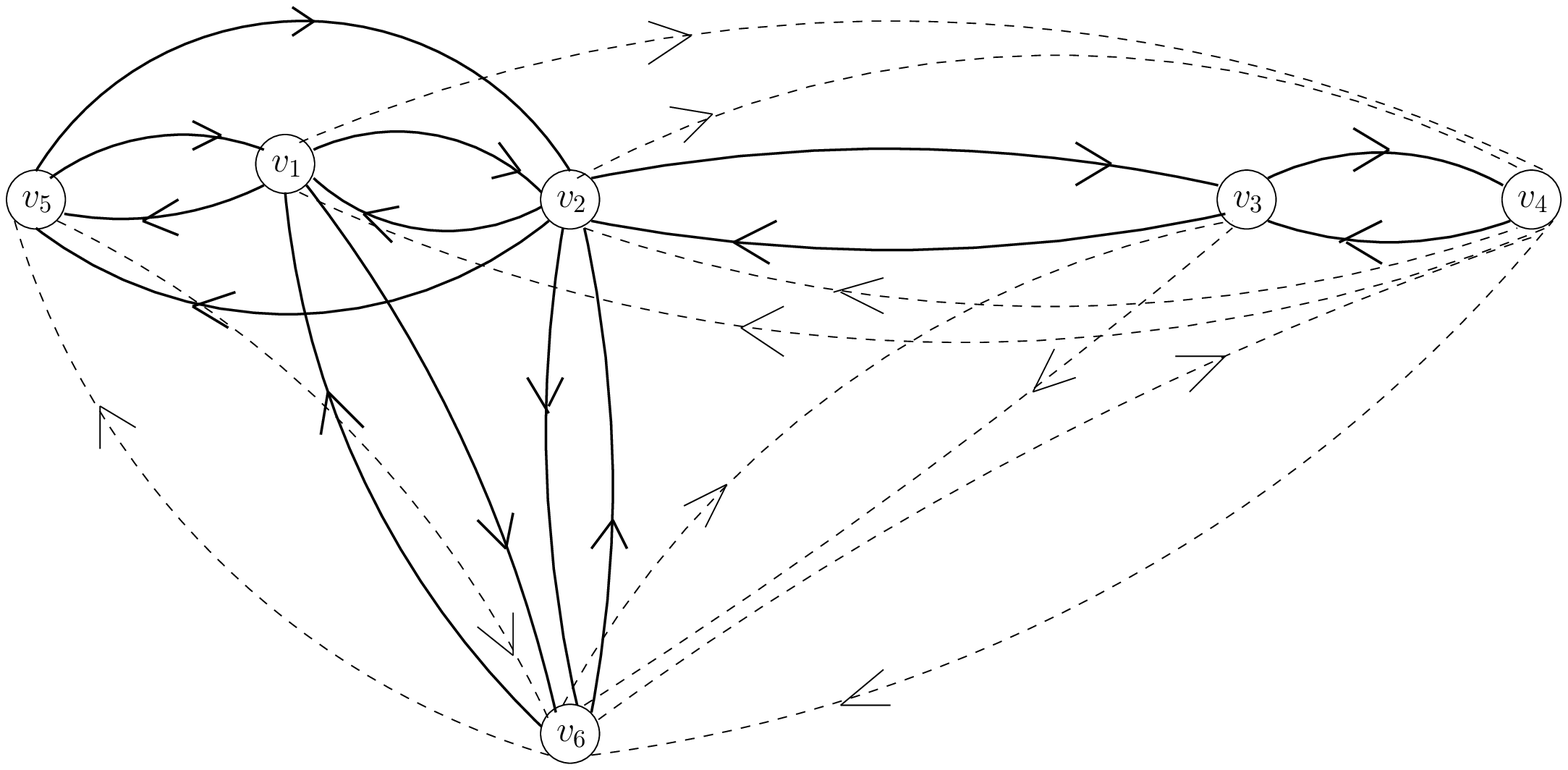}
\caption{Two-tier graph model of STDMA network described by Figure
  \ref{fig:deployment_stdma_link} and Table
  \ref{tab:system_parameters_stdma}.}
\label{fig:two_tier_graph_critique}
\end{figure}

Consider the STDMA wireless network $\Phi(\cdot)$ whose deployment is
shown in Figure \ref{fig:deployment_stdma_link}.  The system
parameters for this network are given in Table
\ref{tab:system_parameters_stdma}.  From
(\ref{eq:communication_range}) and (\ref{eq:interference_range}), it
can be easily shown that $R_c=100$ m and $R_i=177.8$ m. The
corresponding communication graph representation $\mathcal
G_c(\mathcal V,\mathcal E_c)$ is shown in Figure
\ref{fig:communication_graph_critique}. The communication graph
comprises of 6 vertices and 14 directed communication edges.  The
vertex and communication edge sets are given by
\begin{eqnarray}
\mathcal V
&=& \{v_1,v_2,v_3,v_4,v_5,v_6\} \label{eq:vertex_set}, \\
\mathcal E_c
&=& \{
v_1 \stackrel{c}{\rightarrow} v_2,
v_2 \stackrel{c}{\rightarrow} v_1,
v_1 \stackrel{c}{\rightarrow} v_5,
v_5 \stackrel{c}{\rightarrow} v_1,
v_1 \stackrel{c}{\rightarrow} v_6,
v_6 \stackrel{c}{\rightarrow} v_1,
v_2 \stackrel{c}{\rightarrow} v_5, \nonumber \\
&&
v_5 \stackrel{c}{\rightarrow} v_2,
v_2 \stackrel{c}{\rightarrow} v_6, 
v_6 \stackrel{c}{\rightarrow} v_2,
v_2 \stackrel{c}{\rightarrow} v_3,
v_3 \stackrel{c}{\rightarrow} v_2,
v_3 \stackrel{c}{\rightarrow} v_4,
v_4 \stackrel{c}{\rightarrow} v_3
\}. \label{eq:comm_edge_set}
\end{eqnarray}
The two-tier graph model $\mathcal G(\mathcal V,\mathcal E_c \cup
\mathcal E_i)$ of the STDMA network $\Phi(\cdot)$ is shown in Figure
\ref{fig:two_tier_graph_critique}. The two-tier graph comprises of 6
vertices, 14 directed communication edges and 10 directed interference
edges. The vertex and communication edge sets are given by
(\ref{eq:vertex_set}) and (\ref{eq:comm_edge_set}) respectively, while
the interference edge set is given by
\begin{eqnarray}
\mathcal E_i
&=& \{
v_1 \stackrel{i}{\rightarrow} v_4,
v_4 \stackrel{i}{\rightarrow} v_1,
v_2 \stackrel{i}{\rightarrow} v_4,
v_4 \stackrel{i}{\rightarrow} v_2,
v_3 \stackrel{i}{\rightarrow} v_6,
v_6 \stackrel{i}{\rightarrow} v_3, \nonumber \\
&&
v_4 \stackrel{i}{\rightarrow} v_6,
v_6 \stackrel{i}{\rightarrow} v_4,
v_5 \stackrel{i}{\rightarrow} v_6,
v_6 \stackrel{i}{\rightarrow} v_5,
\}.
\end{eqnarray}

\begin{figure}[thbp]
\centering
\includegraphics[width=5.5in]{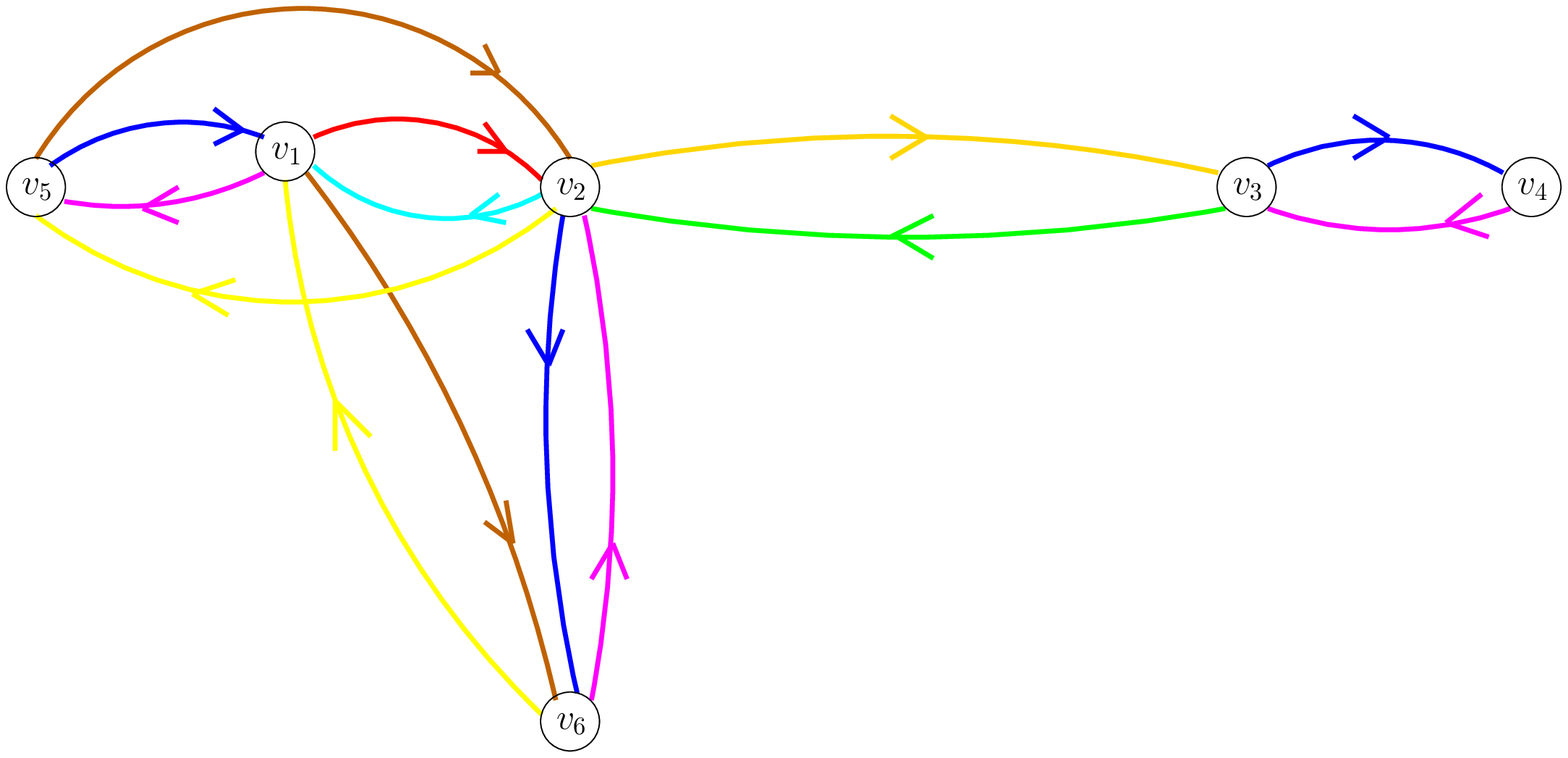}
\caption{Edge coloring of communication graph shown in Figure
  \ref{fig:communication_graph_critique} corresponding to the link
  schedule shown in Figure \ref{fig:point_link_schedule}.}
\label{fig:comm_graph_critique_colored}
\end{figure}

Given the above representations, a point to point link schedule
$\Psi(\cdot)$ for an STDMA wireless network $\Phi(\cdot)$ can be
considered as equivalent to assigning a unique color to every edge in
the communication graph, such that transmitter-receiver pairs with the
same color transmit simultaneously in a particular time slot. For the
example network considered, the link schedule shown in Figure
\ref{fig:point_link_schedule} corresponds to the coloring of the edges
of the communication graph shown in Figure
\ref{fig:comm_graph_critique_colored}.  Time slots 1, 2, 3, 4, 5, 6, 7
and 8 in $\Psi(\cdot)$ correspond to colors red, blue, green, magenta,
yellow, cyan, brown and gold in $\mathcal E_c$ respectively.  Note
that a coloring algorithm that uses the least number of colors also
minimizes the schedule length. This aspect is further addressed in
subsequent sections.

\subsection{Review of Algorithms}

In this section, we provide an overview of past research in the field
of STDMA point to point link scheduling algorithms based on the
protocol interference model.  The protocol interference model is
widely studied in literature because of its simplicity. It has been
usually employed to model networks such as Carrier Sense Multiple
Access with Collision Avoidance (CSMA/CA) based WLANs\footnote{
  Consider an IEEE 802.11 based WLAN wherein CSMA with RTS/CTS/ACK is
  used to protect unicast transmissions. Due to carrier sensing, a
  transmission between nodes $j$ and $k$ may block all transmissions
  that are within a distance of $R_i$ from either $j$ (due to sensing
  RTS and DATA) or $k$ (due to sensing CTS and ACK).}
\cite{brar_blough_santi__computationally_efficient},
\cite{alicherry_bhatia_li__joint_channel}.  Centralized algorithms
\cite{ramanathan_lloyd__scheduling_algorithms},
\cite{nelson_kleinrock__spatial_tdma},
\cite{hajek_sasaki__link_scheduling},
\cite{kodialam_nandagopal__characterizing_achievable},
\cite{alicherry_bhatia_li__joint_channel} as well as distributed
algorithms \cite{salonidis_tassiulas__distributed_dynamic} have been
proposed for generating link schedules based on the protocol
interference model.

A link scheduling algorithm based on the protocol interference model
utilizes a communication or two-tier graph model of the STDMA network
to determine a point to point link schedule
\cite{behzad_rubin__performance_graph},
\cite{gronkvist_hansson__comparison_between}.  Algorithms based on the
protocol interference model for assigning links to time slots
(equivalently, colors) require that two communication edges $v_i
\stackrel{c}{\rightarrow} v_j$ and $v_k \stackrel{c}{\rightarrow} v_l$
can be colored the same if and only if:
\begin{enumerate}
\renewcommand{\theenumi}{\roman{enumi}}
\item
vertices $v_i$, $v_j$, $v_k$, $v_l$ are all mutually distinct, i.e.,
there is no {\it primary edge conflict,} and
\label{cond:primary_conflict}

\item
$v_i \rightarrow v_l \not\in \mathcal G(\cdot)$ and $v_k \rightarrow v_j
\not\in \mathcal G(\cdot)$, i.e, there is no {\it secondary edge conflict}.
\label{cond:secondary_conflict}
\end{enumerate}
The first criterion is based on the operational constraint
(\ref{eq:operational_constraint}). The second criterion states that a
node cannot receive a packet if it lies within the interference range
of any other transmitting node.  A scheduling algorithm utilizes
various graph coloring methodologies to obtain a non-conflicting link
schedule, i.e., a link schedule devoid of primary and secondary edge
conflicts.

To maximize the throughput of an STDMA network, algorithms based on
the protocol interference model\footnote{Link scheduling algorithms
  based on the protocol interference model are sometimes referred to
  as ``graph based algorithms'' in literature
  \cite{behzad_rubin__performance_graph},
  \cite{gronkvist_hansson__comparison_between}. This term is slightly
  confusing since scheduling algorithms based on the physical
  interference model also construct graphs prior to determining a link
  schedule.}  seek to minimize the total number of colors used to
color all the communication edges of $\mathcal G(\cdot)$. This will in
turn minimize the schedule length.  It is well known that for an
arbitrary communication graph, the problem of determining a minimum
length schedule (optimal schedule) is NP-hard
\cite{ramanathan_lloyd__scheduling_algorithms},
\cite{hajek_sasaki__link_scheduling}. Hence, the approach followed in
the literature is to devise algorithms that produce close to optimal
(sub-optimal) solutions. The efficiency of a sub-optimal algorithm is
typically measured in terms of its computational (run time) complexity
and performance guarantee (approximation factor).

The concept of STDMA for wireless networks was formalized in
\cite{nelson_kleinrock__spatial_tdma}. The authors assume a multihop
packet radio network with fixed node locations and consider the
problem of assigning an integral number of slots to every link in an
STDMA cycle (frame). To solve this problem, they model the network by
a communication graph, determine a set of maximal cliques and then
assign a certain number of slots to all the links in each maximal
clique.  Finally, the authors develop a fluid approximation for the
mean system delay and validate it using simulations.

In \cite{hajek_sasaki__link_scheduling}, the authors consider
pre-specified link demands in a spread spectrum packet radio network.
They formulate the problem as a linear optimization problem and use
the ellipsoid algorithm
\cite{grotschel_lovasz_schrijver__ellipsoid_method} to solve the
problem.  They assume that the desired link data rates are rational
numbers and develop a strongly polynomial algorithm\footnote{An
  algorithm is strongly polynomial if (a) the number of arithmetic
  operations (addition, multiplication, division or comparison) is
  polynomially bounded by the dimension of the input, and (b) the
  precision of numbers appearing in the algorithm is bounded by a
  polynomial in the dimension and precision of the input.}  that
computes a minimum length schedule. Finally, they consider the problem
of link scheduling to satisfy pre-specified end-to-end demands in the
network.  They formulate this problem as a multicommodity flow problem
and describe a polynomial time algorithm that computes a minimum
length schedule. As pointed out by the authors, their algorithm is not
practical due to its high computational complexity.

A significant work in link scheduling under protocol interference
model is reported in \cite{ramanathan_lloyd__scheduling_algorithms},
in which the authors show that tree networks can be scheduled
optimally, oriented graphs\footnote{An in-oriented graph is a directed
  graph in which every vertex has at most one outgoing edge. An
  out-oriented graph is a directed graph in which every vertex has at
  most one incoming edge.}  can be scheduled near-optimally and
arbitrary networks can be scheduled such that the schedule is bounded
by a length proportional to the graph thickness\footnote{The thickness
  of a graph $\mathcal G(\cdot)$ is the minimum number of planar
  graphs into which $\mathcal G(\cdot)$ can be partitioned.}  times
the optimum number of colors.

In \cite{ramanathan_lloyd__scheduling_algorithms}, the authors seem to
have missed a subtle point that colors from previously colored
oriented graphs can be used to color the current oriented graph.
Instead, they use a {\em fresh} set of colors to color each successive
oriented graph. Consequently, their algorithm leads to a higher
numbers of colors, especially if the number of oriented graphs is
large.  The authors employ such a heuristic primarily to upper bound
the number of colors used by the algorithm
(\cite{ramanathan_lloyd__scheduling_algorithms}, Lemma 3.4) and
consequently obtain bounds on the running time complexity and
performance guarantee of the algorithm
(\cite{ramanathan_lloyd__scheduling_algorithms}, Theorem 3.3).  Though
the ArboricalLinkSchedule algorithm has nice theoretical properties
such as low computational complexity, it can be shown that it may
yield a higher number of colors {\em in practice}. This leads to lower
network throughput.

We should point out here that, if we modify the ArboricalLinkSchedule
algorithm to {\em reuse} colors from previously colored oriented
graphs to color the current oriented graph, then the schedule length
will always be {\em lower} than the schedule length obtained by the
ArboricalLinkSchedule algorithm. This can lead to higher network
throughput.  We develop this idea further in Chapter
\ref{ch:comm_graph}. Furthermore, we show that this can be achieved
with only a slight increase in computational complexity.

In \cite{jain_padhye__impact_interference}, the authors investigate
throughput bounds for a given wireless network and traffic workload
under the protocol interference model. They use a conflict
graph\footnote{Under the protocol interference model, the conflict
  graph $F(V_F,E_F)$ is constructed from the communication graph
  $\mathcal G_c(\mathcal V,\mathcal E_c)$ as follows.  Let $l_{ij}$
  denote the communication edge $v_i \stackrel{c}{\rightarrow} v_j$.
  Vertices of $F(\cdot)$ correspond to directed edges $l_{ij}$ in
  $\mathcal E_c$. In $F(\cdot)$, there exists an edge from vertex
  $l_{ij}$ to vertex $l_{pq}$ if any of the following is true: (a)
  $D(i,q) \leqslant R_i$ or (b) $D(p,j) \leqslant R_i$.}  to represent
interference constraints.  The problem of finding maximum throughput
for a given source-destination pair under the flexibility of multipath
routing is formulated as a linear program with flow constraints and
conflict graph constraints.  They show that this problem is NP-hard
and describe techniques to compute lower and upper bounds on
throughput. Finally, the authors numerically evaluate throughput
bounds and computation time of their heuristics for simple network
scenarios and IEEE 802.11 MAC (bidirectional MAC). Though the authors
provide a general framework for joint routing and scheduling, they
neither derive the computational complexity of their heuristics nor
describe their link scheduling algorithm explicitly.

Recently, in \cite{alicherry_bhatia_li__joint_channel}, the authors
investigate joint link scheduling and routing under the protocol
interference model for a wireless mesh network consisting of static
mesh routers and mobile client devices.  Assuming that $l(u)$ denotes
the aggregate traffic demand on node $u$, they consider the problem of
maximizing $\lambda$, such that at least $\lambda l(u)$ amount of
traffic can be routed from each node $u$ to a fixed gateway node.
Since this problem is NP-hard, the authors propose heuristics based on
linear programming and re-routing flows on the communication graph.
They derive the worst case bound of their algorithm and evaluate its
performance via simulations. Though the authors make a reasonable
attempt to solve the joint routing and scheduling problem, their
algorithm is extremely complex\footnote{The algorithm in
  \cite{alicherry_bhatia_li__joint_channel} consists of five steps:
  solve linear program, channel assignment, post processing, flow
  scaling and interference free link scheduling.  Moreover, the
  channel assignment step consists of three algorithms.}  and brute
force in nature.  Furthermore, the authors have not provided intuitive
arguments for their algorithm.

Another recent work which jointly investigates link scheduling and
routing under protocol interference model is reported in
\cite{kodialam_nandagopal__characterizing_achievable}.  The authors
consider wireless mesh networks with half duplex and full duplex
orthogonal channels, wherein each node can transmit to at most one
node and/or receive from at most $k$ nodes ($k \geqslant 1$) during
any time slot. They investigate the joint problem of routing and
scheduling to analyze the achievability of a given rate vector between
multiple source-destination pairs. The scheduling algorithm is
equivalent to an edge-coloring on a multi-graph
representation\footnote{A multi-graph is a directed graph in which
  multiple edges can emanate from a vertex $v_i$ and terminate at
  another vertex $v_j$ $(v_j \neq v_i)$.}  and the corresponding
necessary conditions lead the routing problem to be formulated as a
linear optimization problem.  The authors describe a polynomial time
approximation algorithm to obtain an $\epsilon$-optimal solution of
the routing problem using the primal dual approach. Finally, they
evaluate the performance of their algorithms via simulations.

It has been observed that high data rates are achievable in a wireless
mesh network by allowing a node to transmit to only one neighboring
node at fixed peak power in any time slot
\cite{kodialam_nandagopal__characterizing_achievable}.  We point out
here that a similar assumption of uniform transmission power has been
made in our system model in subsequent chapters of the thesis.

Algorithms based on the protocol interference model represent the
network by a communication or two-tier graph and employ a plethora of
techniques from graph theory \cite{west__graph_theory} and
approximation algorithms \cite{vazirani__approximation_algorithms},
\cite{hochbaum__approximation_algorithms} to devise heuristics which
yield a minimum length schedule.  Consequently, such algorithms have
the advantage of low computational complexity (in general). However,
recent research suggests that these algorithms result in low network
throughput. This aspect is further illustrated in the following
section.

\section{Limitations of Algorithms based on Protocol Interference Model}
\label{sec:limitations_protocol}

Due to its inherent simplicity, the protocol interference model has
been traditionally employed to represent a wide variety of wireless
networks. However, it leads to low network throughput in wireless mesh
networks.  To emphasize this point, we provide examples to demonstrate
that algorithms based on the protocol interference model can result in
schedules that yield low network throughput.

Intuitively, the protocol interference model divides the deployment
region of the STDMA wireless network into ``communication zones'' and
``interference zones''. This transforms the scheduling problem to an
edge coloring problem for the communication graph representation of
the network. However, this simplification can result in schedules that
do not satisfy the SINR threshold condition (\ref{eq:conflict_free}).

Specifically, algorithms based on the protocol interference model do
not necessarily maximize the throughput of an STDMA wireless network
because:
\begin{enumerate}

\item \label{exam:high_interference} They can lead to high cumulative
  interference at a receiver, due to hard-thresholding based on
  communication and interference radii
  \cite{behzad_rubin__performance_graph},
  \cite{gronkvist_hansson__comparison_between}.  This is because the
  SINR at receiver $r_{i,j}$ decreases with an increase in the number
  of concurrent transmissions $M_i$, while the communication radius
  $R_c$ and the interference radius $R_i$ have been defined for a
  single transmission only.

  \begin{figure}[thbp]
    \centering
    \includegraphics[width=5.5in]{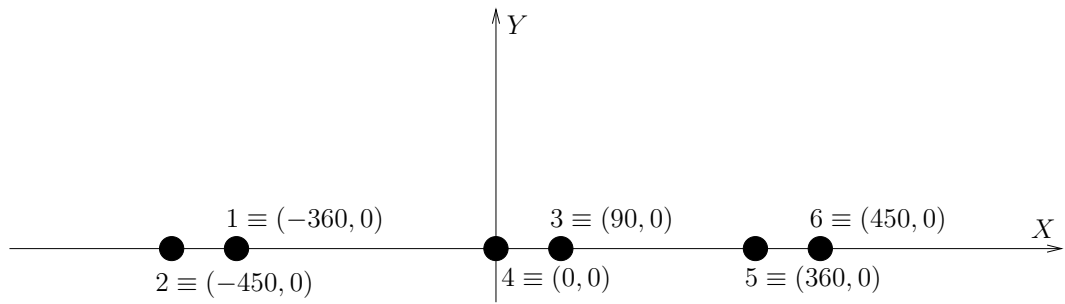}
    \caption{An STDMA wireless network with six nodes.}
    \label{fig:deploy_high_interference}
  \end{figure}

  \begin{figure}[thbp]
    \centering
    \includegraphics[width=4.5in]{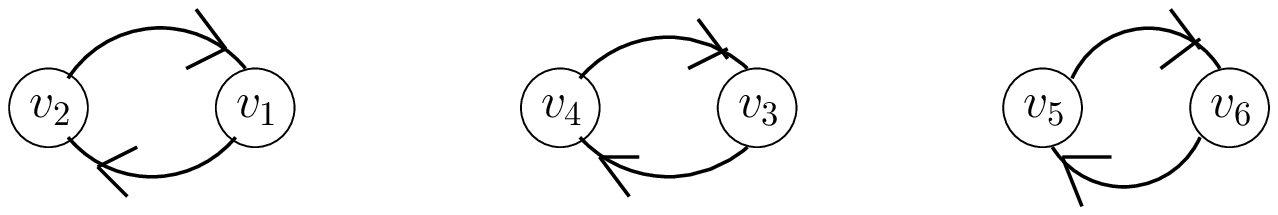}
    \caption{Two-tier graph model of the STDMA wireless network
      described by Figure \ref{fig:deploy_high_interference} and Table
      \ref{tab:system_parameters_stdma}.}
    \label{fig:comm_graph_deploy_high_intf}
  \end{figure}

  \begin{figure}[thbp]
    \centering
    \includegraphics[width=4.5in]{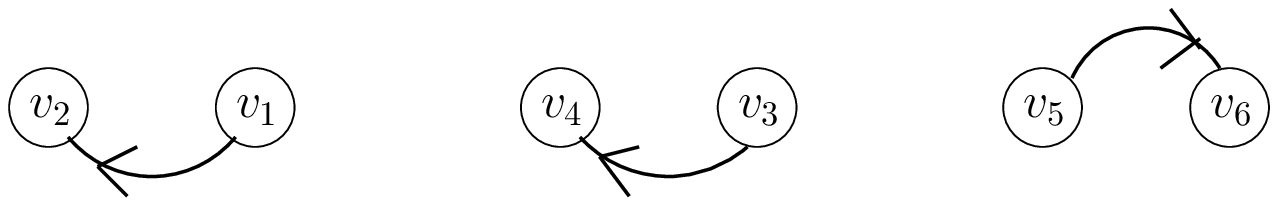}
    \caption{Subgraph of two-tier graph shown in Figure
      \ref{fig:comm_graph_deploy_high_intf}.}
    \label{fig:subgraph_deploy_high_intf}
  \end{figure}

  \begin{figure}[thbp]
    \centering
    \includegraphics[width=4.5in]{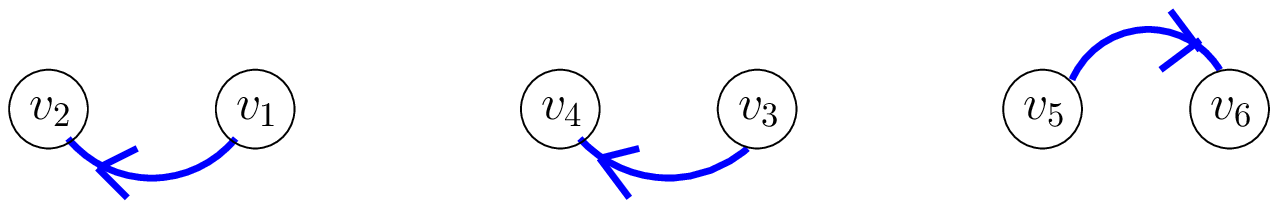}
    \caption{Coloring of subgraph shown in Figure
      \ref{fig:subgraph_deploy_high_intf}.}
    \label{fig:colored_subgraph_deploy_high_intf}
  \end{figure}

  \begin{figure}[thbp]
    \centering
    \includegraphics[width=5.5in]{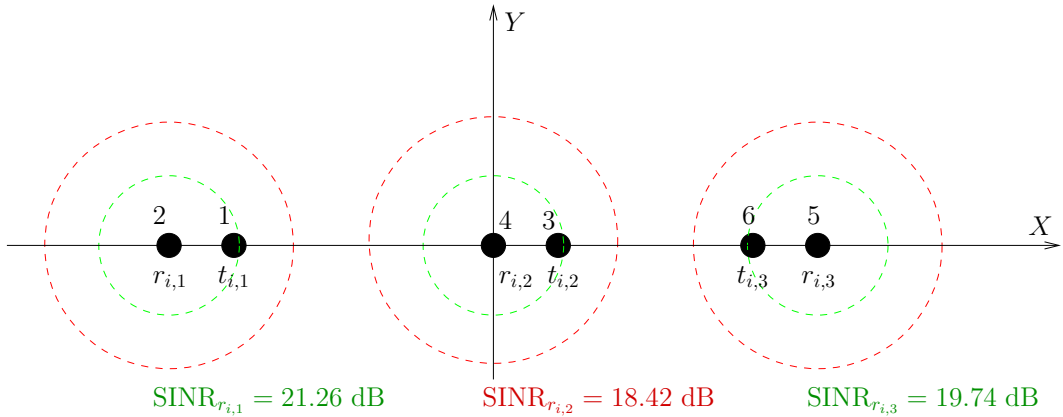}
    \caption{Point to point link scheduling algorithms based on
      protocol interference model can lead to high interference.}
    \label{fig:high_interference}
  \end{figure}

  For example, consider the STDMA wireless network whose deployment is
  shown in Figure \ref{fig:deploy_high_interference}. The network
  consists of six labeled nodes whose coordinates (in meters) are $1
  \equiv (-360,0)$, $2 \equiv (-450,0)$, $3 \equiv (90,0)$, $4 \equiv
  (0,0)$, $5 \equiv (360,0)$ and $6 \equiv (450,0)$. The system
  parameters are shown in Table \ref{tab:system_parameters_stdma},
  which yield $R_c=100$ m and $R_i=177.8$ m.  The two-tier graph model
  of the STDMA network is shown in Figure
  \ref{fig:comm_graph_deploy_high_intf}; note that interference edges
  are absent.  Consider the transmission requests $1 \rightarrow 2$,
  $3 \rightarrow 4$ and $5 \rightarrow 6$, which correspond to
  communication edges of the subgraph shown in Figure
  \ref{fig:subgraph_deploy_high_intf}.  The communication edges $v_1
  \stackrel{c}{\rightarrow} v_2$, $v_3 \stackrel{c}{\rightarrow} v_4$
  and $v_5 \stackrel{c}{\rightarrow} v_6$ shown in Figure
  \ref{fig:subgraph_deploy_high_intf} do not have primary or secondary
  edge conflicts. To minimize the number of colors, such an algorithm
  will color these edges with the same color, as shown in Figure
  \ref{fig:colored_subgraph_deploy_high_intf}. Equivalently,
  transmissions $1 \rightarrow 2$, $3 \rightarrow 4$ and $5
  \rightarrow 6$ will be scheduled in the same time slot, say time
  slot $i$.  However, our computations show that the SINRs at
  receivers $r_{i,1}$, $r_{i,2}$ and $r_{i,3}$ are $21.26$ dB, $18.42$
  dB and $19.74$ dB respectively. Figure \ref{fig:high_interference}
  shows the nodes of the network along with the labeled
  transmitter-receiver pairs, receiver-centric communication and
  interference zones and the SINRs at the receivers. From the SINR
  threshold condition (\ref{eq:sinr_ge_gammac}), transmission $t_{i,1}
  \rightarrow r_{i,1}$ is successful, while transmissions $t_{i,2}
  \rightarrow r_{i,2}$ and $t_{i,3} \rightarrow r_{i,3}$ are
  unsuccessful. This leads to low network throughput.

\item Moreover, these algorithms can be extremely conservative and
  result in higher number of colors.

  \begin{figure}[thbp]
    \centering
    \includegraphics[width=5.5in]{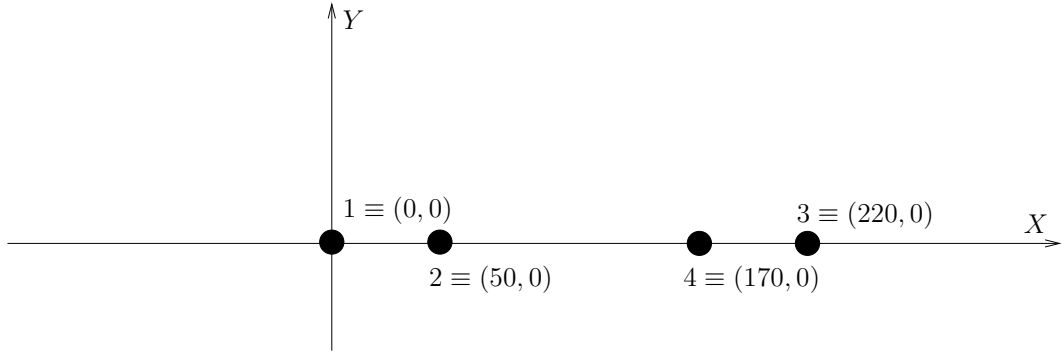}
    \caption{An STDMA wireless network with four nodes.}
    \label{fig:deploy_high_colors}
  \end{figure}

  \begin{figure}[thbp]
    \centering
    \includegraphics[width=4.5in]{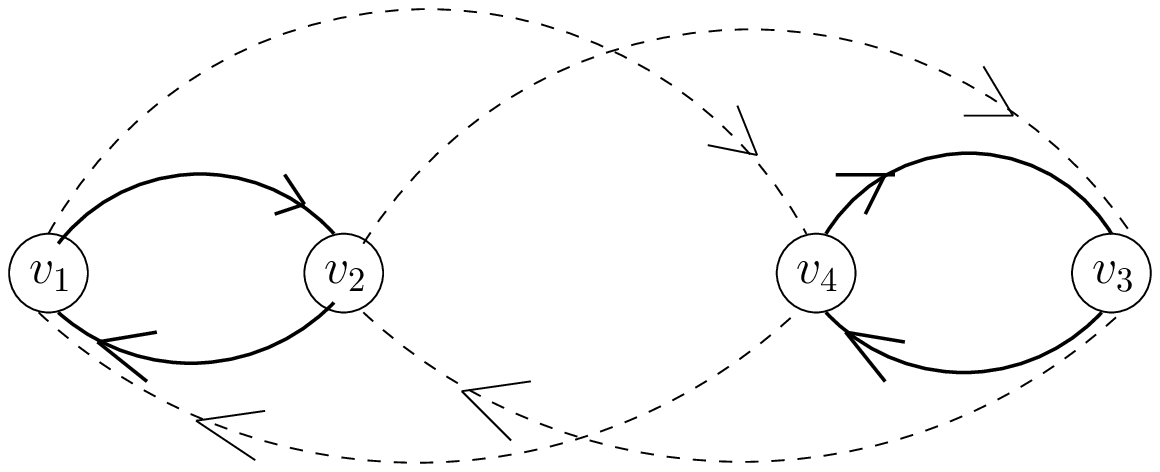}
    \caption{Two-tier graph model of STDMA wireless network described
      by Figure \ref{fig:deploy_high_colors} and Table
      \ref{tab:system_parameters_stdma}.}
    \label{fig:tiered_graph_deploy_high_colors}
  \end{figure}

  \begin{figure}[thbp]
    \centering
    \includegraphics[width=4.5in]{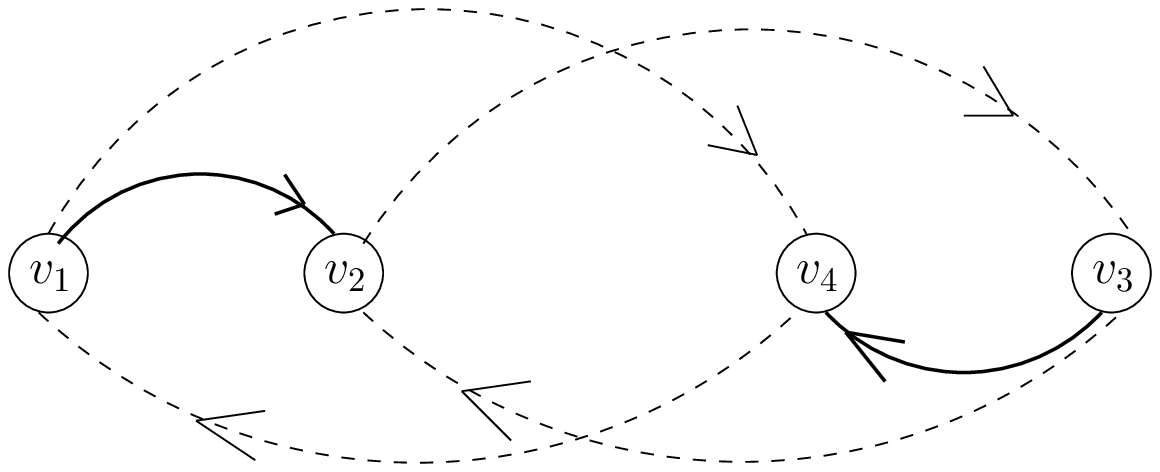}
    \caption{Subgraph of two-tier graph shown in Figure
      \ref{fig:tiered_graph_deploy_high_colors}.}
    \label{fig:subgraph_deploy_high_colors}
  \end{figure}

  \begin{figure}[thbp]
    \centering
    \includegraphics[width=4.5in]{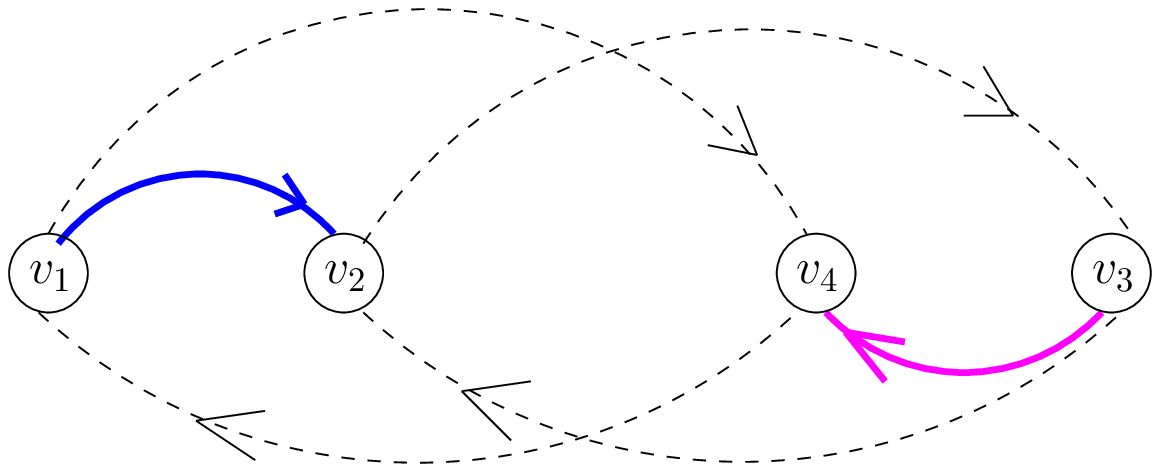}
    \caption{Coloring of subgraph shown in Figure
      \ref{fig:subgraph_deploy_high_colors}.}
    \label{fig:colored_subgraph_deploy_high_colors}
  \end{figure}

  \begin{figure}[thbp]
    \centering
    \includegraphics[width=5.5in]{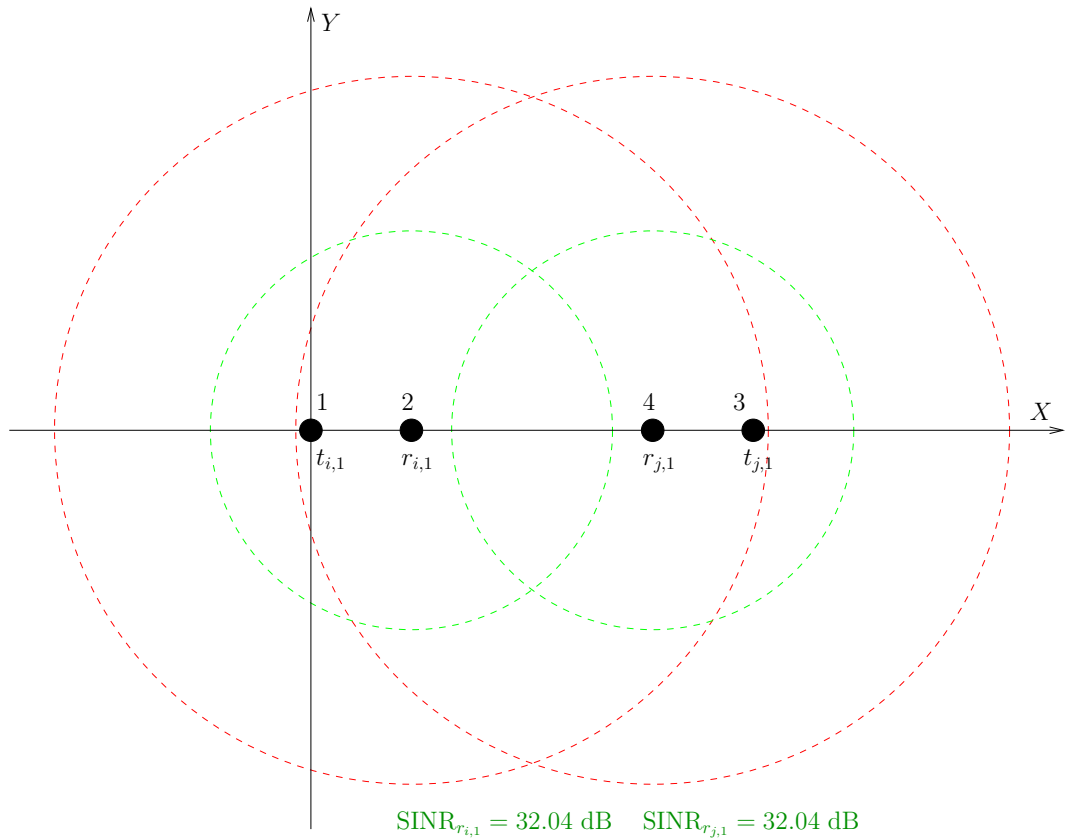}
    \caption{Point to point link scheduling algorithms based on
      protocol interference model can lead to higher number of
      colors.}
    \label{fig:high_colors}
  \end{figure}

  \begin{figure}[thbp]
    \centering
    \includegraphics[width=4.5in]{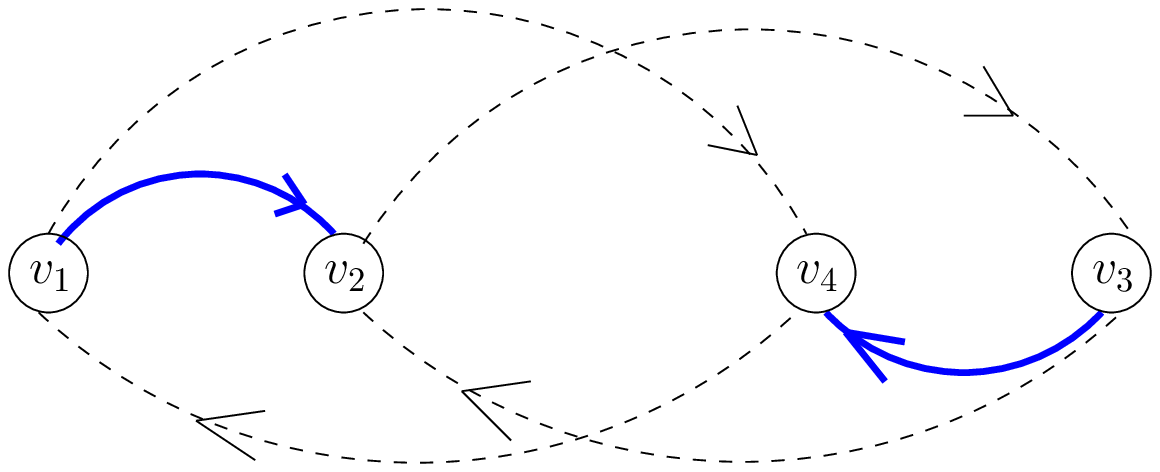}
    \caption{Alternative coloring of subgraph shown in Figure
      \ref{fig:subgraph_deploy_high_colors}.}
    \label{fig:alternative_coloring_subgraph}
  \end{figure}

  \begin{figure}[thbp]
    \centering
    \includegraphics[width=5.5in]{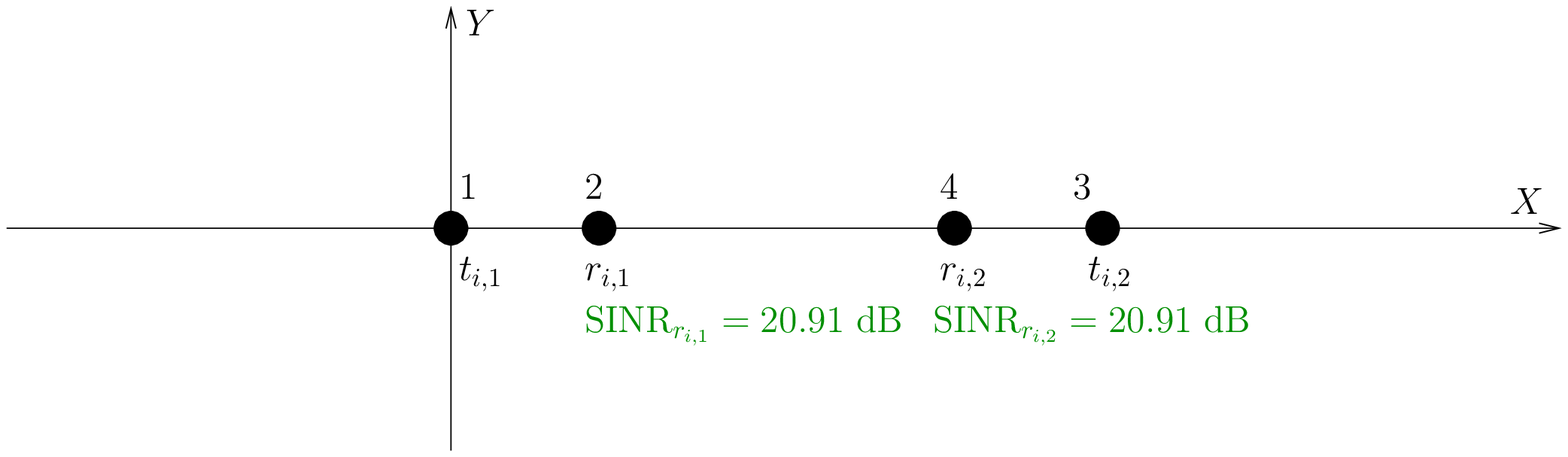}
    \caption{A point to point link schedule corresponding to Figure
      \ref{fig:alternative_coloring_subgraph} that yields lower number
      of colors.}
    \label{fig:alternative_colors}
  \end{figure}

  For example, consider the STDMA wireless network whose deployment is
  shown in Figure \ref{fig:deploy_high_colors}. The network consists
  of four labeled nodes whose coordinates (in meters) are $1 \equiv
  (0,0)$, $2 \equiv (50,0)$, $3 \equiv (220,0)$ and $4 \equiv
  (170,0)$. The system parameters are shown in Table
  \ref{tab:system_parameters_stdma}, which lead to $R_c=100$ m and
  $R_i=177.8$ m. The two-tier graph model of the STDMA network is
  shown in Figure \ref{fig:tiered_graph_deploy_high_colors}.  Consider
  the transmission requests $1 \rightarrow 2$ and $3 \rightarrow 4$,
  which correspond to communication edges of the subgraph shown in
  Figure \ref{fig:subgraph_deploy_high_colors}.  The communication
  edges $v_1 \stackrel{c}{\rightarrow} v_2$ and $v_3
  \stackrel{c}{\rightarrow} v_4$ shown in Figure
  \ref{fig:subgraph_deploy_high_colors} have secondary edge conflicts.
  Hence, such an algorithm will typically color these edges with
  different colors, as shown in Figure
  \ref{fig:colored_subgraph_deploy_high_colors}.  Equivalently, a link
  scheduling algorithm based on the protocol interference model will
  schedule transmissions $1 \rightarrow 2$ and $3 \rightarrow 4$ in
  different time slots, say time slots $i$ and $j$ respectively, where
  $i \neq j$.  Our computations show that the resulting SINRs at
  receivers $r_{i,1}$ and $r_{j,1}$ are both equal to $32.04$ dB.
  Figure \ref{fig:high_colors} shows the nodes of the network along
  with the labeled transmitter-receiver pairs, receiver-centric
  communication and interference zones and SINRs at the receivers.
  Observe that, with an algorithm based on the protocol interference
  model, the SINRs at both receivers are well above the communication
  threshold of $20$ dB. Alternatively, consider an algorithm (perhaps
  based on the physical interference model) that schedules
  transmissions $1 \rightarrow 2$ and $3 \rightarrow 4$ in the same
  time slot, say time slot $i$.  The corresponding edge coloring is
  shown in Figure \ref{fig:alternative_coloring_subgraph}.  Our
  computations show that the resulting SINRs at receivers $r_{i,1}$
  and $r_{j,1}$ are both equal to $20.91$ dB, which are also above the
  communication threshold.  Figure \ref{fig:alternative_colors} shows
  the nodes of the network along with the labeled transmitter-receiver
  pairs and SINRs at the receivers.  In essence, with the alternate
  algorithm, both transmissions $t_{i,1} \rightarrow r_{i,1}$ and
  $t_{i,2} \rightarrow r_{i,2}$ are successful, since signals levels
  are so high at the receivers that strong interferences can be
  tolerated. In summary, a point to point link scheduling algorithm
  based on the protocol interference model will typically schedule the
  above transmissions in different slots and yield lower network
  throughput compared to the alternate algorithm.

\item Lastly, these algorithms are not aware of the topology of the
  network, i.e., they determine a link schedule without being
  cognizant of the exact positions of the transmitters and receivers.

\end{enumerate}

The above examples demonstrate that scheduling algorithms based on the
protocol interference model can result in low network throughput.
Observe that algorithms that construct an approximate model of the
STDMA network (two tier graph or communication graph) and focus on
minimizing the schedule length do not necessarily maximize network
throughput. This observation is developed into a proposal for an
appropriate performance metric in Section
\ref{sec:performance_metric}.

Since link scheduling algorithms based on the protocol interference
model yield low throughput, researchers have propounded algorithms
based on the physical interference model to improve the throughput of
STDMA wireless networks. To achieve higher throughput, one possible
technique is to model the STDMA network by a communication graph and
check SINR threshold conditions during assignment of links to time
slots; this is the approach most commonly employed, for example in
\cite{brar_blough_santi__computationally_efficient},
\cite{behzad_rubin__performance_graph},
\cite{gore_jagabathula_karandikar__high_spatial}. The other technique
is to incorporate SINR threshold conditions into a special graph model
of the network; this approach is more challenging and (to the best of
our knowledge) is considered only in research work such as
\cite{moscibroda_wattenhofer__complexity_connectivity},
\cite{moscibroda_wattenhofer_zollinger__topology_control},
\cite{kumar_gore_karandikar__link_scheduling}.  Research papers which
employ the former approach are reviewed in Section
\ref{sec:comm_graph_and_sinr}, while research papers which employ the
latter approach are reviewed in Section \ref{sec:sinr_graph}.

\section{Link Scheduling based on Communication Graph Model and SINR
  Conditions}
\label{sec:comm_graph_and_sinr}

In this section, we examine recent research in link scheduling based
on modeling the STDMA network by a communication graph and verifying
SINR conditions at the receivers. Though algorithms based on this
model \cite{lim_lim_hou__coordinate_based},
\cite{kim_lim_hou__improving_spatial}, yield higher throughput, they
usually result in higher computational complexity than algorithms
based on the protocol interference model.

In \cite{brar_blough_santi__computationally_efficient}, the authors
investigate throughput improvement in an IEEE 802.11 like wireless
mesh network with CSMA/CA channel access scheme replaced by STDMA. For
a successful packet transmission, they mandate that two-way
communication be successful, i.e., a packet transmission is defined to
be successful if and only if both data and acknowledgement packets are
received successfully.  Under this ``extended physical interference
model'', they present a greedy algorithm which computes a point to
point link transmission schedule in a centralized manner. Assuming
uniform random node distribution and using results from occupancy
theory \cite{kolchin_sevastyanov__random_allocations}, they derive an
approximation factor for the length of this schedule relative to the
shortest schedule. Though the analysis presented in
\cite{brar_blough_santi__computationally_efficient} is novel, their
model is restrictive because it is only applicable to wireless
networks using link-layer reliability protocols.

The throughput performance of link scheduling algorithms based on
two-tier graph model $\mathcal G(\mathcal V,\mathcal E_c \, \cup \,
\mathcal E_i)$ has been analyzed under physical interference
conditions in \cite{behzad_rubin__performance_graph}.  The authors
determine the optimal number of simultaneous transmissions by
maximizing a lower bound on the throughput and subsequently propose
Truncated Graph-Based Scheduling Algorithm (TGSA), an algorithm that
provides probabilistic guarantees for network throughput. Though the
analysis presented in \cite{behzad_rubin__performance_graph} is
mathematically elegant and based on the Edmundson-Madansky bound
\cite{madansky__inequalities_stochastic},
\cite{dokov_morton__higher_order}, their algorithm does not yield high
network throughput.  This is because the partitioning of a maximal
independent set of communication edges into multiple subsets (time
slots) is arbitrary and not based on network topology, which can lead
to significant interference in certain regions of the network. This is
further elucidated by the simulation results in Chapter
\ref{ch:comm_graph}.

The performance of algorithms based on the protocol interference model
versus those based on communication graph model and SINR conditions is
evaluated and compared in
\cite{gronkvist_hansson__comparison_between}.  To generate a
non-conflicting link schedule based on the protocol interference
model, the authors use a two-tier graph model with certain SINR
threshold values chosen based on heuristics and examples.  To generate
a conflict-free point to point link schedule based on the physical
interference model, the authors employ a method suggested in
\cite{somarriba__multihop_packet} which describes heuristics based on
two path loss models, namely terrain-data based ground wave
propagation model and Vogler's five knife-edge model. Their
simulations results indicate that, under a Poisson arrival process,
algorithms based on the protocol interference model result in higher
average packet delay than algorithms based on communication graph
model and SINR conditions.

In \cite{kim_lim_hou__improving_spatial}, the authors investigate the
tradeoff between the average number of concurrent transmissions
(spatial reuse) and sustained data rate per node for an IEEE 802.11
wireless network.  They show that spatial reuse depends only on the
ratio of transmit power to carrier sense threshold \cite{ieee__wlan}.
Keeping the carrier sense threshold fixed, they propose a distributed
power and rate control algorithm based on interference measurement and
evaluate its performance via simulations.

In \cite{lim_lim_hou__coordinate_based}, the authors investigate
mitigation of inter-flow interference in an IEEE 802.11e wireless mesh
network from a temporal-spatial diversity perspective.  Measurements
of received signal strengths are used to construct a virtual
coordinate system to identify concurrent transmissions with minimum
inter-flow interference. Based on this new coordinate system, one of
the nodes, designated as gateway node, determines the scheduling order
for downlink frames of different connections. Through extensive
simulations with real-life measurement traces, the authors demonstrate
throughput improvement with their algorithm.

Algorithms based on representing the network by a communication graph
and verifying SINR threshold conditions yield higher network
throughput than algorithms based on the protocol interference model.
However, this is achieved at the cost of higher computational
complexity. Furthermore, the gains in throughput may not be
significant enough to justify the increase in computational
complexity. This has prompted few researchers to solve the link
scheduling problem in a more fundamental manner. These researchers
have proposed an altogether different model of the network, termed as
SINR graph model, and developed heuristics. Such algorithms are
reviewed in the following section.

\section{Link Scheduling based on SINR Graph Model}
\label{sec:sinr_graph}

In literature, many authors refer to algorithms based on communication
graph model and checking SINR conditions as ``algorithms based on
physical interference model''.  In this thesis, only algorithms that
embed SINR threshold conditions into an appropriate graph model of the
network are referred to as ``algorithms based on the physical
interference model''.  Though the physical interference model is more
realistic, algorithms based on this model
\cite{moscibroda_wattenhofer__complexity_connectivity},
\cite{moscibroda_wattenhofer_zollinger__topology_control},
\cite{kumar_gore_karandikar__link_scheduling} have, in general, higher
computational complexity than algorithms based on the protocol
interference model.

Point to point link scheduling for power-controlled STDMA networks
under the physical interference model is analyzed in
\cite{moscibroda_wattenhofer__complexity_connectivity}. The authors
define scheduling complexity as the minimum number of time slots
required for strong connectivity of the graph\footnote{A directed
  graph $\mathcal G(\cdot)$ is strongly connected if there exists a
  directed path from every vertex to every other vertex.}  constructed
from the point to point link schedule. They develop an algorithm
employing non-linear power assignment\footnote{In uniform power
  assignment, all nodes transmit with the same transmission power. In
  linear power assignment
  \cite{moscibroda_wattenhofer__complexity_connectivity}, a node
  transmits with minimum power required to satisfy the SINR threshold
  condition at the receiver, i.e., transmission power equals $N_0
  \gamma_c D^\beta$.  Non-linear power assignment refers to a power
  assignment scheme that is neither uniform nor linear.} and show that
its scheduling complexity is polylogarithmic in the number of nodes.
In a related work
\cite{moscibroda_wattenhofer_zollinger__topology_control}, the authors
investigate the time complexity of scheduling a set of communication
requests in an arbitrary network.  They consider a ``generalized
physical model'' wherein the actual received power of a signal can
deviate from the theoretically received power by a multiplicative
factor. Their algorithm successfully schedules all links in time
proportional to the squared logarithm of the number of nodes times the
static interference measure
\cite{rickenbach_schmid__robust_interference}.  Though the authors of
\cite{moscibroda_wattenhofer__complexity_connectivity},
\cite{moscibroda_wattenhofer_zollinger__topology_control} allow
non-uniform transmission power at all nodes and develop novel
algorithms, their algorithms are impractical. This is because wireless
devices have constraints on maximum transmission power, while the
algorithms in \cite{moscibroda_wattenhofer__complexity_connectivity},
\cite{moscibroda_wattenhofer_zollinger__topology_control} can result
in arbitrarily high transmission power at some nodes.

In \cite{jain_padhye__impact_interference}, the authors provide a
general framework for computation of throughput bounds for a given
wireless network and traffic workload. Though their work primarily
focuses on the protocol interference model, they briefly allude to
the physical interference model too. Specifically, they describe a
technique to construct a weighted conflict graph to represent
interference constraints.  They briefly describe methods to compute
lower and upper bounds on throughput and the issues involved therein.
However, the authors do not describe simulation results under the
physical interference model, perhaps due to the tremendous complexity
incurred in solving linear programs for representative network
scenarios.

\begin{remark}
  Under physical interference model, the weighted conflict graph
  $F(V_F,E_F)$ \cite{jain_padhye__impact_interference} is constructed
  from the network as follows.  Let $S_{ij} := \frac{P}{D^\beta(i,j)}$
  denote the received signal power at node $j$ due to the transmission
  from node $i$.  In $F(\cdot)$, a vertex corresponds to a directed
  link $l_{ij}$ (equivalently, node pair $(i,j)$) provided
  $\frac{S_{ij}}{N_0} \geqslant \gamma_c$.  $F(\cdot)$ is a perfect
  graph wherein the weight $w_{ij}^{pq}$ of the directed edge from
  vertex $l_{pq}$ to vertex $l_{ij}$ is given by $w_{ij}^{pq} =
  \frac{S_{pj}}{\frac{S_{ij}}{\gamma_c}-N_0}$.
\end{remark}

We should point out here that, analogous to a conflict graph, an SINR
graph representation of an STDMA wireless network has been proposed by
us in Chapter \ref{ch:sinr_graph}. Furthermore, the authors of
\cite{jain_padhye__impact_interference} do not propose any specific
link scheduling algorithm and use the weighted conflict graph only to
compute bounds on network throughput.  On the other hand, we use an
SINR graph representation of the network under the physical
interference model and develop a link scheduling algorithm with lower
time complexity and demonstrably superior performance.

More specifically, in Chapter \ref{ch:sinr_graph}, we investigate link
scheduling for STDMA wireless networks under the physical interference
model.  Unlike \cite{moscibroda_wattenhofer__complexity_connectivity},
\cite{moscibroda_wattenhofer_zollinger__topology_control}, we assume
that a node transmits at fixed power, i.e., we assume uniform power
assignment. Moreover, unlike
\cite{moscibroda_wattenhofer__complexity_connectivity},
\cite{moscibroda_wattenhofer_zollinger__topology_control}, we do not
assume a minimum distance of unity between any two nodes.
Consequently, our system model is more practical than those of
\cite{moscibroda_wattenhofer__complexity_connectivity},
\cite{moscibroda_wattenhofer_zollinger__topology_control}. Under these
realistic assumptions, we propose a link scheduling algorithm based on
an SINR graph representation of the network. In the SINR
graph\footnote{The SINR graph is analogous to a line graph
  \cite{west__graph_theory} constructed from the communication graph
  representation of the network.}, weights of the edges correspond to
interferences between pairs of nodes. We prove the correctness of the
algorithm and derive its computational complexity. We demonstrate that
the proposed algorithm achieves higher throughput than existing
algorithms, without any increase in computational complexity.

So far, we have provided a brief glimpse into three classes of link
scheduling algorithms, each with its relative merits and demerits.
For example, algorithms based on the protocol interference model have
low computational complexity and are simple to implement, but yield
low network throughput. On the other hand, algorithms based on SINR
graph representation have higher computational complexity and are more
cumbersome to implement, but achieve higher network throughput. Also,
there exist algorithms based on communication graph and SINR
conditions whose performance characteristics lie between these two
classes. Hence, in general, these three classes of algorithms exhibit
a tradeoff between complexity and performance. Finally, algorithms
based on the protocol interference model are better suited to model
WLANs, while the latter two classes of algorithms are better suited to
model wireless mesh networks. For these reasons, we investigate and
develop algorithms from each of these classes in this thesis.

Prior to proposing efficient algorithms in each of these classes, we
seek to address the following question: Is schedule length an
appropriate performance metric for an algorithm that considers the
SINR threshold condition (\ref{eq:sinr_ge_gammac}) as the criterion
for successful packet reception? In other words, should algorithms
based on communication graph and SINR conditions and algorithms based
on SINR graph representation focus on minimizing the schedule length?
We answer this important question in detail in the following section.

\section{Spatial Reuse as Performance Metric}
\label{sec:performance_metric}

In literature, link scheduling algorithms have only focused on
minimizing the schedule length. However, algorithms that minimize the
schedule length do not necessarily maximize network throughput, as
explained in Section \ref{sec:limitations_protocol}.  Thus, from a
perspective of maximizing network throughput observed by the physical
layer, it is imperative to consider a performance metric that takes
into account SINR threshold condition (\ref{eq:sinr_ge_gammac}) as the
criterion for successful packet reception, i.e., a metric also
suitable for the physical interference model.  We propose such a
performance metric, spatial reuse, in this section.  We show that
maximizing spatial reuse directly translates to maximizing network
throughput.

Consider an STDMA wireless network that operates over $(k_2-k_1+1)$
time slots $k_1,k_1+1,\ldots,k_2-1,k_2$. The total number of
successfully scheduled links from slot $k_1$ to slot $k_2$ is
\begin{eqnarray}
\tau[k_1,k_2]
&=& \sum_{i=k_1}^{k_2} \sum_{j=1}^{M_i} 
     I(\mbox{SINR}_{r_{ij}} \geqslant \gamma_c).
\label{eq:total_successful_packets}
\end{eqnarray}
So, the number of successfully scheduled links per time slot from slot
$k_1$ to slot $k_2$ is
\begin{eqnarray}
\eta[k_1,k_2]
&=& \frac{\sum_{i=k_1}^{k_2} \sum_{j=1}^{M_i} 
  I(\mbox{SINR}_{r_{ij}} \geqslant \gamma_c)}{k_2-k_1+1}.
\label{eq:network_throughput}
\end{eqnarray}

We define {\em spatial reuse} $\sigma$ as the limiting value of
$\eta[k_1,k_2]$ (assuming that the limit exists). In other words,
spatial reuse is the limiting value of $\eta[k_1,k_2]$ as the duration
of the time interval becomes very large. Mathematically,
\begin{eqnarray}
\mbox{Spatial Reuse}
&:=& \lim_{|k_2-k_1| \rightarrow \infty} \eta[k_1,k_2], \nonumber \\
\therefore \sigma &=& \lim_{|k_2-k_1| \rightarrow \infty}
\frac{\sum_{i=k_1}^{k_2} \sum_{j=1}^{M_i} I(\mbox{SINR}_{r_{ij}} \geqslant \gamma_c)}{k_2-k_1+1}.
\label{eq:defn_spatial_reuse}
\end{eqnarray}

Assuming a constant data rate of $R$ bits per second on each
successful link and a slot duration of $\tau_s$ seconds, the
(aggregate) network throughput is given by $\sigma R \tau_s$ bits per
second.  Thus, spatial reuse is directly proportional to network
throughput.  Note that spatial reuse is cognizant of the physical
interference model, thereby making it an appropriate performance
metric for the comparison of various link scheduling algorithms.
 
The fact that the interference at a receiver is an increasing function
of the number of concurrent transmissions in a time slot limits the
value of spatial reuse (for a given STDMA network).  More
specifically, if too many transmissions are scheduled in a single time
slot, the interference at some receivers will be high enough to drive
the SINRs below the communication threshold, leading to lower spatial
reuse.  Therefore, for a given STDMA network, there are certain
fundamental limits (upper bounds) on the spatial reuse.

In our system model, we only consider static link schedules, i.e., the
same fixed pattern of slots repeats cyclically. Hence, for our system
model, the equation for spatial reuse, (\ref{eq:defn_spatial_reuse}),
can be simplified to
\begin{eqnarray}
\mbox{Spatial Reuse} 
&=& \sigma 
\;=\; \frac{\sum_{i=1}^C \sum_{j=1}^{M_i} 
   I(\mbox{SINR}_{r_{ij}} \geqslant \gamma_c)}{C}.
\label{eq:spatial_reuse}
\end{eqnarray}

The essence of STDMA is to have a reasonably large number of
concurrent and successful transmissions.  For a network which is
operational for a long period of time, say $L$ time slots, the total
number of successfully received packets is $L\sigma$.  Thus, a high
value of spatial reuse directly translates to higher network
throughput and the number of colors $C$ is relatively unimportant.
Hence, spatial reuse\footnote{Note that spatial reuse in our network
  model is analogous to spectral efficiency in digital communication
  systems. Both performance metrics correspond to the ``rate of data
  transfer'' and are upper bounded by their respective system
  parameters.}  turns out to be a crucial metric for the comparison of
different STDMA algorithms in Chapters \ref{ch:comm_graph},
\ref{ch:sinr_graph} and \ref{ch:broadcastschedule}.

\clearpage{\pagestyle{empty}\cleardoublepage}

\chapter{Point to Point Link Scheduling based on Communication Graph Model}
\label{ch:comm_graph}

We begin our investigation in link scheduling by critically examining
the ArboricalLinkSchedule algorithm proposed in
\cite{ramanathan_lloyd__scheduling_algorithms}.  The algorithm is
based only on the communication graph (protocol interference model)
and seeks to minimize the schedule length.  Though
ArboricalLinkSchedule has good properties such as low computational
complexity, it can yield higher schedule length in practice. Towards
this end, we propose a novel modification to ArboricalLinkSchedule
that results in lower schedule length.  We compare the performance of
the modified algorithm with the ArboricalLinkSchedule algorithm and
derive its run time (computational) complexity in Section
\ref{sec:improvement_als}.  We then propose the
ConflictFreeLinkSchedule point to point link scheduling algorithm,
which is based on communication graph model and SINR conditions, in
Section \ref{sec:high_spatial}. The performance of the proposed
algorithm is compared with existing link scheduling algorithms under
various wireless channel conditions.  We show that the proposed
algorithm has polynomial run time complexity.  Finally, we summarize
the implications of our work.

\section{ArboricalLinkSchedule Algorithm Revisited}
\label{sec:improvement_als}

In this section, we propose a modification to the
ArboricalLinkSchedule point to point link scheduling algorithm.  Since
both the original algorithm and the proposed modification are based on
the protocol interference model, we compare their performance in terms
of average schedule length.  Finally, we also derive the run time
complexity of the modified algorithm.

Our system model and notation are same as described in Section
\ref{sec:protocol_model}. We seek an algorithm that determines a
minimum length point to point link schedule for an STDMA wireless
network under the protocol interference model.  For consistency with
the graph model described in
\cite{ramanathan_lloyd__scheduling_algorithms}, we assume that the
STDMA wireless network $\Phi(\cdot)$ is modeled by the communication
graph $\mathcal G_c(\mathcal V, \mathcal E_c)$ only, i.e.,
interference edges are absent $(\mathcal E=\mathcal E_c)$.

It is well known that, under the protocol interference model, the
problem of determining an optimal schedule, i.e., a minimum length
schedule, is NP-hard \cite{arikan__complexity_results}.  As pointed
out in Section \ref{sec:equivalence_coloring}, this is closely related
to the problem of coloring all edges of the communication graph with
minimum number of colors, which is also known to be NP-hard
\cite{ramanathan_lloyd__scheduling_algorithms}.  Consequently, the
only recourse is to devise approximation algorithms (heuristics) and
show their efficiency theoretically and experimentally.

One such algorithm, ArboricalLinkSchedule, has been described in
\cite{ramanathan_lloyd__scheduling_algorithms}. First, the algorithm
uses the labeler function to label all the vertices of the
communication graph. Next, it partitions the communication graph into
edge-disjoint subgraphs, which are termed as ``oriented graphs''.
Finally, the oriented graphs are colored in sequence. Specifically,
the vertices in each oriented graph are scanned in increasing order of
label and the unique edge associated with each vertex is colored using
the NonConflictingEdge function
\cite{ramanathan_lloyd__scheduling_algorithms}. The labeler function
and the partitioning technique are described later in the section.

In \cite{ramanathan_lloyd__scheduling_algorithms}, the authors appear
to have missed a delicate point that colors from previously colored
oriented graphs can be used to color the present oriented graph.
Specifically, they use a {\em fresh} set of colors to color each
successive oriented graph. In our opinion, the authors employ this
method to upper bound the number of colors used by the algorithm
(\cite{ramanathan_lloyd__scheduling_algorithms}, Lemma 3.4) and thus
derive the running time complexity of the algorithm
(\cite{ramanathan_lloyd__scheduling_algorithms}, Theorem 3.3).
However, such a heuristic can potentially lead to a higher number of
colors (and higher schedule length) in practice.

Therefore, we propound a modification to the ArboricalLinkSchedule
algorithm that {\em reuses} colors from previously colored oriented
graphs to colors the current oriented graph.  The resulting schedule
length will always be {\em lower} than that of ArboricalLinkSchedule,
leading to potentially higher throughput.  Our proposed link
scheduling algorithm is ALSReuseColors, which considers the
communication graph $\mathcal G_c({\mathcal V},\mathcal E)$ and is
described in Algorithm \ref{algo:als_reuse_colors}.

In Phase 1, we label all the vertices using the labeler function
\cite{ramanathan_lloyd__scheduling_algorithms}. The labeler function
is reproduced in Algorithm \ref{func:labeler2} for convenience.  It is
a recursive function that assigns a unique label (from $1$ to $N$) to
every vertex of the communication graph.  Let $L(w)$ denote the label
assigned to vertex $w$.  The notation $\mathcal G_r \setminus \{u\}$
denotes the graph that results when vertex $u$ and all its incident
edges are removed from graph $\mathcal G_r(\cdot)$.  At every step in
the recursion, it chooses the minimum degree vertex $u$ in the
residual graph $\mathcal G_r(\cdot)$ and assigns it the highest label
that has not been assigned so far.  Note that vertices with lower
degree tend to be assigned higher labels.  The labeler function
ensures that, for any given node, the number of neighbors with lower
labels is much lower than the number of vertices in $\mathcal
G_c(\cdot)$.

In Phase 2, the communication graph $\mathcal G_c(\cdot)$ is
decomposed into what are called as out-oriented and in-oriented graphs
$T_1,T_2,\ldots,T_k$, similar to the technique employed in
\cite{ramanathan_lloyd__scheduling_algorithms}.  Recall that an
in-oriented graph is a directed graph in which every vertex has at
most one outgoing edge, while an out-oriented graph is a directed
graph in which every vertex has at most one incoming edge.  Each $T_i$
is a forest\footnote{A graph that is a collection of trees is termed
  as a forest.}  and every edge of $\mathcal G_c(\cdot)$ is in exactly
one of the $T_i$'s.  This decomposition is achieved by partitioning
graph $G_c(\cdot)$, the undirected equivalent of $\mathcal
G_c(\cdot)$, into undirected forests. The number of forests can be
minimized by using techniques from Matroid theory
(\cite{gabow_westermann__forests_frames}, $k$-forest problem).
However, this optimal decomposition requires extensive computation.
Hence, we adopt a faster albeit non-optimal approach of using
successive breadth first searches
\cite{introduction_algorithms__cormen_leiserson_rivest} to decompose
$G_c(\cdot)$ into undirected forests. Each undirected forest is
further mapped to two directed forests. In one forest, the edges in
every connected graph point away from the root and every vertex has at
most one incoming edge, thus producing an out-oriented graph.  In the
other forest, the edges in every connected graph point toward the root
and every vertex has at most one outgoing edge, thus producing an
in-oriented graph.

In Phase 3, the oriented graphs are considered sequentially.  For each
oriented graph, the vertices are considered in increasing order of
label and the unique edge associated with each vertex is colored using
the NCEReuseColors function.  The NCEReuseColors function is explained
in Algorithm \ref{hywsl}.  For the edge $x$ under consideration, it
discards any color from any oriented graph that has an edge with a
primary or secondary conflict with $x$. It returns the least color
among the residual set of non-conflicting colors from all oriented
graphs colored so far. If no non-conflicting color from any oriented
graph is found, it returns a new color.

\begin{algorithm}
\caption{ALSReuseColors}
\label{algo:als_reuse_colors}
\begin{algorithmic}[1]
\STATE {\bf input:} Directed communication graph $\mathcal G_c(\mathcal V, \mathcal E)$
\STATE {\bf output:} A coloring $C: \mathcal E \rightarrow \{1,2,\ldots\}$
\STATE $n \leftarrow \mbox{labeler}(\mathcal G_c)$ \COMMENT{Phase 1}
\STATE use successive breadth first searches to partition $\mathcal G_c(\cdot)$ into oriented graphs $T_i$, $1 \leqslant i \leqslant k$ \COMMENT{Phase 2}
\FOR[Phase 3 begins]{$i \leftarrow 1 \mbox{ to } k$}
\FOR{$j \leftarrow 1 \mbox{ to } n$}
\IF{$T_i$ is out-oriented}
\STATE let $x=(s,d)$ be such that $L(d)=j$
\ELSE
\STATE let $x=(s,d)$ be such that $L(s)=j$
\ENDIF
\STATE $C(x) \leftarrow \mbox{NCEReuseColors}(x)$
\ENDFOR
\ENDFOR \COMMENT{Phase 3 ends}
\end{algorithmic}
\end{algorithm}

\begin{algorithm}
\caption{integer labeler($\mathcal G_r$)}
\label{func:labeler2}
\begin{algorithmic}[1]
\IF{$\mathcal G_r(\cdot)$ is not empty}
\STATE let $u$ be a vertex of $\mathcal G_r(\cdot)$ of minimum degree
\STATE $L(u) \leftarrow 1 + \mbox{labeler}(\mathcal G_r \setminus \{u\})$
\ELSE
\STATE return $0$
\ENDIF
\end{algorithmic}
\end{algorithm}

\begin{algorithm}
\caption{integer NCEReuseColors($x$)}
\label{hywsl}
\begin{algorithmic}[1]
\STATE {\bf input:} Directed communication graph $\mathcal G_c(\mathcal V,\mathcal E)$
\STATE {\bf output:} A non-conflicting color
\STATE ${\mathcal C} \leftarrow \mbox{set of existing colors}$
\STATE $\mathcal C_1 \leftarrow \{C(h):$ $h$ is colored and $x$ and $h$ have a 
primary edge conflict$\}$
\STATE $\mathcal C_2 \leftarrow \{C(h):$ $h$ is colored and $x$ and $h$ have a 
secondary edge conflict$\}$
\STATE ${\mathcal C}_{nc} = {\mathcal C} \setminus \{\mathcal C_1 \cup \mathcal C_2\}$
\IF{${\mathcal C}_{nc} \neq \phi$}
\STATE return the least color $\in$ $\mathcal C_{nc}$
\ELSE
\STATE return $|\mathcal C|+1$
\ENDIF
\end{algorithmic}
\end{algorithm}

\subsection{Performance Results}
\label{subsec:performance_als_reuse}

In the simulation experiment, every node location is generated
randomly, using a uniform distribution for its $X$ and $Y$ coordinates
in the deployment area. We assume that the deployment region is a
square of length $L$. Thus, if $(X_j,Y_j)$ are the Cartesian
coordinates of $j^{th}$ node, then $X_j \sim U[0,L]$ and $Y_j \sim
U[0,L]$.  The values chosen for system parameters $P$, $\gamma_c$,
$\beta$ and $N_0$, are prototypical values of system parameters in
wireless networks \cite{kim_lim_hou__improving_spatial}.  After
generating random positions for $N$ nodes, we have complete
information of $\Phi(\cdot)$. Using (\ref{eq:communication_range}), we
compute the communication range, and then map the STDMA network
$\Phi(\cdot)$ to the communication graph $\mathcal G_c(\cdot)$.  Once
the schedule $\Psi(\cdot)$ is computed by every algorithm, we know its
schedule length $|\mathcal C|$.  For a given set of parameters
$(N,L,R_c)$, we calculate the average schedule length by averaging
$|\mathcal C|$ over 1000 randomly generated networks.  Keeping all
other parameters fixed, we observe the effect of increasing the number
of nodes on the average schedule length.  In our experiments, we
compare the performance of the following algorithms:
\begin{itemize}
\item ArboricalLinkSchedule
  \cite{ramanathan_lloyd__scheduling_algorithms},

\item Proposed ALSReuseColors.
\end{itemize}

\begin{figure}[thbp]
  \centering
  \includegraphics[width=5in]{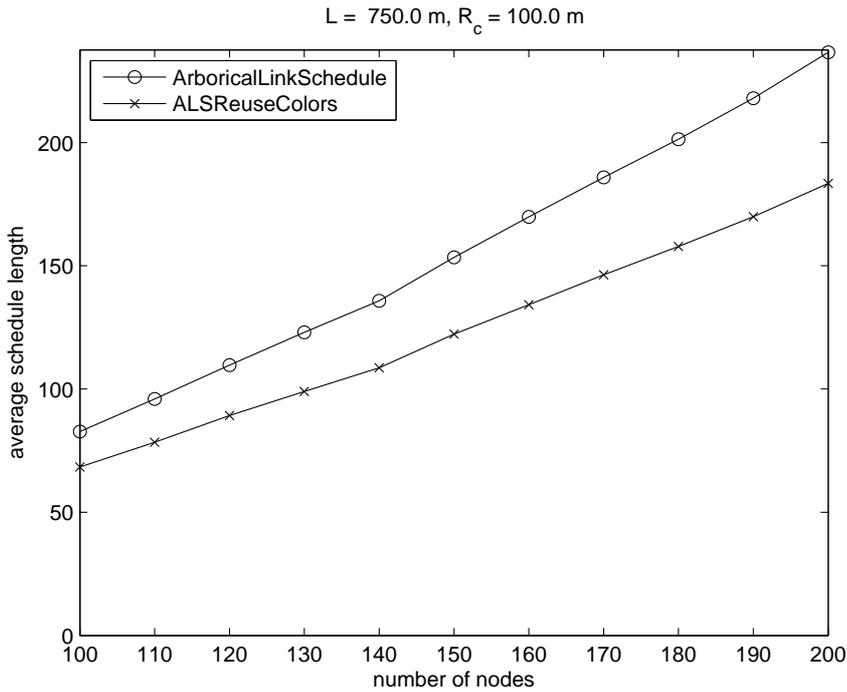}
  \caption{Schedule length vs. number of nodes.}
  \label{fig:schedule_length_als_reuse}
\end{figure}

We assume that $P=10$ mW, $\beta = 4$, $N_0=-90$ dBm and $\gamma_c =
20$ dB. From (\ref{eq:communication_range}), we obtain $R_c=100$ m.
We assume that $L=750$ m, and vary the number of nodes from 100 to 200
in steps of 10. Figure \ref{fig:schedule_length_als_reuse} plots the
average schedule length vs. number of nodes for both the algorithms.

For both the algorithms, we observe that average schedule length
increases almost linearly with the number of nodes. The average
schedule length of ALSReuseColors is about $23\%$ lower than that of
ArboricalLinkSchedule.

Note that an increase in the number of nodes in a given geographical
area leads to an increase in the number of edges incident on a vertex
and a subsequent increase in the number of oriented graphs.
ArboricalLinkSchedule, which is based on using a fresh set of colors
for each oriented graph, requires increasingly higher number of colors
to color the communication graph compared to ALSReuseColors.
Consequently, the gap between the average schedule lengths increases
with number of nodes in Figure \ref{fig:schedule_length_als_reuse}.

\subsection{Analytical Results}
\label{subsec:simulation_als_reuse}

We now derive upper bounds on the running time (computational)
complexity of the ALSReuseColors algorithm.  With respect to the
communication graph ${\mathcal G}_c({\mathcal V},\mathcal E)$, let:
\begin{eqnarray*}
e    &=& \mbox{number of edges},\\
v    &=& \mbox{number of vertices},\\
\rho &=& \mbox{maximum degree of any vertex},\\
\theta 
&=&  \mbox{thickness of the graph}\\
&:=& \mbox{minimum number of planar graphs into which the undirected 
       equivalent of ${\mathcal G}_c(\cdot)$}\\
&&   \mbox{can be partitioned},\\
\omega 
&=& \mbox{maximum number of neighbors with lower labels (for any vertex)}.
\end{eqnarray*}

Recall that the modified algorithm partitions the communication graph
$\mathcal G_c(\cdot)$ into oriented graphs $T_1,T_2,\ldots,T_k$, and
colors the oriented graphs in that order. $T_1$ is termed as the {\em
  first oriented graph}, while any oriented graph $T_j$, where $2
\leqslant j \leqslant k$, is termed as a {\em subsequent oriented
  graph}.

\begin{lemma}
  Suppose that each vertex of the first oriented graph $T_1$ has at
  most $\omega$ neighbors with lower labels. Then, $T_1$ may be
  colored using no more than $O(\omega\rho)$ colors.
\end{lemma}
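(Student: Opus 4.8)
The plan is to bound, for an arbitrary edge $x=(s,d)$ of $T_1$, the number of edges of $T_1$ that are already colored and that conflict (in the primary or secondary sense) with $x$ at the moment $x$ is colored, and then observe that this controls the number of colors. Since $T_1$ is the \emph{first} oriented graph, $\mbox{NCEReuseColors}(x)$ only ever examines colored edges of $T_1$; whenever it returns a new color $|\mathcal C|+1$ this happens precisely because $\mathcal C_{nc}=\phi$, i.e. $\mathcal C=\mathcal C_1\cup\mathcal C_2$, so at that instant $|\mathcal C|\leqslant|\mathcal C_1|+|\mathcal C_2|$, which is at most the number of colored edges conflicting with $x$. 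Thus if I can show this count is $O(\omega\rho)$ for every edge $x$, the final palette has size $O(\omega\rho)$. I will take $T_1$ to be out-oriented without loss of generality (the in-oriented case follows by exchanging the roles of transmitters and receivers and reversing the processing order); then each vertex has at most one incoming edge in $T_1$, edges are colored in increasing order of the receiver's label, and the colored edges of $T_1$ when $x=(s,d)$ is reached are exactly those $h=(s',d')\in T_1$ with $L(d')<L(d)$.

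\textbf{Primary conflicts.} A colored edge $h=(s',d')\neq x$ sharing a vertex with $x$ must emanate from $s$ or from $d$, or enter $s$ (it cannot enter $d$, since out-orientation makes $x$ the unique incoming edge of $d$). The edges of $T_1$ out of $s$ number at most $\rho$; the colored edges of $T_1$ out of $d$ have receiver label below $L(d)$, hence point to lower-labeled neighbors of $d$, of which there are at most $\omega$; and there is at most one edge of $T_1$ into $s$. Hence $|\mathcal C_1|\leqslant\rho+\omega+1=O(\rho)$ (using $\omega\leqslant\rho$).

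\textbf{Secondary conflicts.} By definition $h=(s',d')$ has a secondary conflict with $x=(s,d)$ iff $s\stackrel{c}{\rightarrow}d'\in\mathcal G_c$ or $s'\stackrel{c}{\rightarrow}d\in\mathcal G_c$. In the first case $d'$ is one of the at most $\rho$ out-neighbors of $s$ in $\mathcal G_c$, and for each such $d'$ there is at most one edge of $T_1$ into $d'$, giving $O(\rho)$. In the second case $s'$ is one of the at most $\rho$ in-neighbors of $d$ in $\mathcal G_c$, and I would split on the label of $s'$: if $L(s')<L(d)$ then $s'$ is one of the $\leqslant\omega$ lower-labeled neighbors of $d$, and each such $s'$ contributes at most $\rho$ out-edges in $T_1$, for $O(\omega\rho)$; if $L(s')>L(d)$ then, because $h$ is colored, $L(d')<L(d)<L(s')$, so $d'$ is a lower-labeled neighbor of $s'$, and each of the $\leqslant\rho$ admissible $s'$ then admits at most $\omega$ such $d'$, again $O(\omega\rho)$. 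Therefore $|\mathcal C_2|=O(\omega\rho)$, and combining with the primary bound gives $|\mathcal C_1|+|\mathcal C_2|=O(\omega\rho)$, as required.

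The only delicate step is the second secondary-conflict case: the naive estimate there is $O(\rho^2)$ (an in-neighbor of $d$ times its out-edges in $T_1$), and it is exactly the label split, leveraging the hypothesis that every vertex has at most $\omega$ lower-labeled neighbors, that reduces it to $O(\omega\rho)$ — which is precisely where the $\mbox{labeler}$ function (designed to keep $\omega$ small) pays off. Everything else reduces to counting out-edges bounded by $\rho$ and using that out-orientation makes the incoming edge of a vertex unique.
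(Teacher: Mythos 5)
Your argument is correct and is essentially the standard conflict-counting proof that the paper itself only gestures at by citing Lemma 3.1 of \cite{ramanathan_lloyd__scheduling_algorithms}: bound, for each edge $x$ of $T_1$, the colored edges with a primary or secondary conflict, using the out-orientation to make incoming edges unique and the label ordering (together with the hypothesis on $\omega$) to replace the naive $O(\rho^2)$ count in the secondary-conflict case by $O(\omega\rho)$. The greedy observation that a new color is issued only when every existing color conflicts then converts this per-edge bound into a bound on the palette size, exactly as intended.
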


\begin{proof}
  This is similar to the proof of Lemma 3.1 in
  \cite{ramanathan_lloyd__scheduling_algorithms}.
\end{proof}

\begin{lemma}
  \label{lem:subsequent_colors}
  Any subsequent oriented graph $T_j$, where $2 \leqslant j \leqslant
  k$, can be colored using no more than $O(\rho^2)$ colors.
\end{lemma}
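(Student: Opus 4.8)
The plan is to follow the template of the preceding lemma (and of Lemma~3.2 of \cite{ramanathan_lloyd__scheduling_algorithms}): fix a subsequent oriented graph $T_j$, take an arbitrary edge $x=(s,d)$ of $T_j$ at the moment \mbox{NCEReuseColors} is invoked on it, and show that the set $\mathcal C_1\cup\mathcal C_2$ of colors blocked for $x$ has size $O(\rho^2)$. The essential difference from the $T_1$ case is that, by the time $T_j$ is processed, all edges of $T_1,\ldots,T_{j-1}$ have already been colored; unlike the situation inside $T_1$, these colored edges are \emph{not} filtered by label order relative to $x$, so the sharper $O(\omega\rho)$ estimate of the preceding lemma is no longer available and one must fall back on a purely degree-based count.

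For the count, recall that a colored edge $h$ contributes a color to $\mathcal C_1\cup\mathcal C_2$ only if $h$ has a primary or a secondary edge conflict with $x=(s,d)$. A \emph{primary} conflict means $h$ is incident to $s$ or to $d$; since every vertex has degree at most $\rho$, there are at most $2\rho$ such edges $h$. A \emph{secondary} conflict means $h=(u,w)$ with $s\to w\in\mathcal G_c(\cdot)$ or $u\to d\in\mathcal G_c(\cdot)$. In the first sub-case, $w$ is one of the at most $\rho$ out-neighbors of $s$, and for each such $w$ there are at most $\rho$ edges of $\mathcal G_c(\cdot)$ having head $w$, giving at most $\rho^2$ candidate edges $h$; in the second sub-case, $u$ is one of the at most $\rho$ in-neighbors of $d$, and for each such $u$ there are at most $\rho$ edges of $\mathcal G_c(\cdot)$ having tail $u$, again at most $\rho^2$ candidates. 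Hence at most $2\rho+2\rho^2=O(\rho^2)$ colored edges conflict with $x$, so $|\mathcal C_1\cup\mathcal C_2|=O(\rho^2)$.

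To conclude, observe how \mbox{NCEReuseColors} exploits this bound. If $\mathcal C_{nc}=\mathcal C\setminus(\mathcal C_1\cup\mathcal C_2)\neq\phi$, then $x$ reuses an existing color and the palette does not grow. If $\mathcal C_{nc}=\phi$, then $\mathcal C\subseteq\mathcal C_1\cup\mathcal C_2$, so $|\mathcal C|=O(\rho^2)$ at that instant, and the new color $|\mathcal C|+1$ assigned to $x$ is still $O(\rho^2)$. Walking through the edges of $T_j$ in increasing label order and applying this dichotomy at each step, the palette stays within $O(\rho^2)$ colors throughout the processing of $T_j$; in particular the edges of $T_j$ receive at most $O(\rho^2)$ distinct colors. (The identical count shows that in the original variant, which uses a fresh palette for each oriented graph, $T_j$ needs at most $O(\rho^2)$ \emph{new} colors, since then only colors of already-colored edges of $T_j$ itself can enter $\mathcal C_1\cup\mathcal C_2$.)

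I expect the main obstacle to be the bookkeeping in the secondary-conflict count — keeping the two asymmetric sub-cases ($s\to w$ versus $u\to d$) separate and correctly invoking the in-degree bound in one and the out-degree bound in the other — together with the conceptual point of the first paragraph: making precise why the refined $O(\omega\rho)$ estimate for $T_1$ cannot be recycled once earlier oriented graphs have contributed colored edges. The remaining steps are routine.
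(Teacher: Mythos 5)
Your proof is correct and follows essentially the same route as the paper's: you bound the edges in primary conflict with $x=(s,d)$ by $O(\rho)$ and those in secondary conflict by $O(\rho^2)$ via the same two asymmetric degree counts (out-neighbors of $s$ times in-degree, in-neighbors of $d$ times out-degree), exactly matching the paper's sets $S_1,\ldots,S_4$. Your closing paragraph, which spells out why a bound on $|\mathcal C_1\cup\mathcal C_2|$ caps the palette size via the dichotomy in \mbox{NCEReuseColors}, is a step the paper leaves implicit, and it is a worthwhile addition.
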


\begin{proof}

\begin{figure}[thbp]
  \centering
  \includegraphics[width=5in]{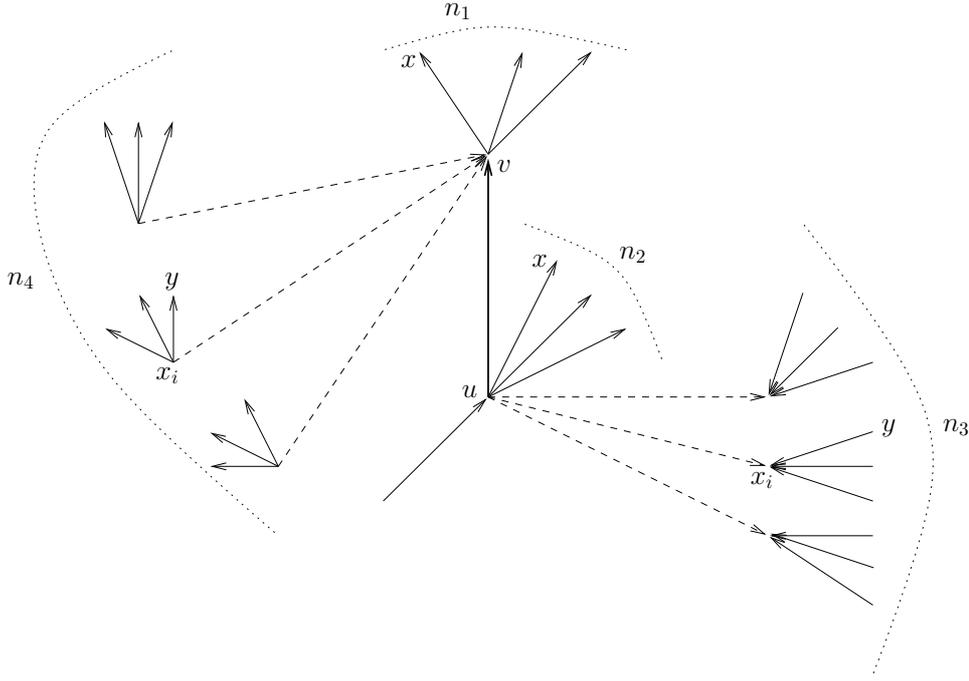}
  \caption{Potential conflicting edges when coloring edge $(u,v)$.}
  \label{fig:conflicting_edges}
\end{figure}

We prove the lemma for an out-oriented graph. A similar proof holds
for an in-oriented graph.  Let $\mathcal G_c$ be partitioned into
edge-disjoint oriented graphs $T_1,\ldots,T_k$.  Consider the coloring
of edge $(u,v)$ in $j^{th}$ oriented graph $T_j$, where $2 \leqslant j
\leqslant k$, as shown in Figure \ref{fig:conflicting_edges}.  Now,
edges of previously colored oriented graphs $T_1,\ldots,T_{j-1}$ must
also be considered for potential edge conflicts with edge $(u,v)$ of
$T_j$.  Define
\begin{eqnarray*}
S_1 &:=& 
 \Big\{ (v,x): (v,x) \in \bigcup_{i=1}^{j} T_i 
 \mbox{ and $(v,x)$ is colored} \Big\}, \\
S_2 &:=& 
 \Big\{(u,x): (u,x) \in \bigcup_{i=1}^{j} T_i 
 \mbox{ and $(u,x)$ is colored} \Big\}, \\
S_3 &:=& 
 \Big\{(y,x_i): (y,x_i) \in \bigcup_{i=1}^{j} T_i 
 \mbox{ and $(y,x_i)$ is colored and } (u,x_i) \in \mathcal G_c \Big\}, \\
S_4 &:=& 
 \Big\{(x_i,y): (x_i,y) \in \bigcup_{i=1}^{j} T_i 
 \mbox{ and $(x_i,y)$ is colored} 
 \mbox{ and } (x_i,v) \in \mathcal G_c \Big\}.
\end{eqnarray*}
Any edge which can cause a primary edge conflict with $(u,v)$ must
belong to $S_1$ or $S_2$. Also, any edge which can cause a secondary
edge conflict with $(u,v)$ must belong to $S_3$ or $S_4$. Let $n_i =
|S_i|$ for $i=1,2,3,4$. The lemma reduces to proving that
$n_1+n_2+n_3+n_4$ is $O(\rho^2)$.

By definition of maximum vertex degree, $n_1 \leqslant \rho-1$ and
$n_2 \leqslant \rho-1$. Thus, $n_1+n_2$ is $O(\rho)$. For the
computation of $n_3$, we must also consider secondary edge conflicts
with edges of previously colored oriented graphs, as shown in Figure
\ref{fig:conflicting_edges}.  The worst-case value of $n_3$ is
$(\rho-1)(\rho-1)$. Thus, $n_3$ is $O(\rho^2)$. Similarly, by
considering secondary edge conflicts with edges of previously colored
oriented graphs, it follows that $n_4$ is $O(\rho^2)$. Finally,
$n_1+n_2+n_3+n_4$ is $O(\rho^2)$.
\end{proof}

\begin{lemma}
  For the first oriented graph $T_1$, the running time of Phase 3 of
  ALSReuseColors is $O(v\omega\rho)$.
\end{lemma}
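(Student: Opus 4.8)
The plan is to bound the total work of Phase~3 restricted to the first oriented graph $T_1$ as (number of iterations of the inner loop) $\times$ (worst-case cost of one call to $\mbox{NCEReuseColors}$), and then to show the latter is $O(\omega\rho)$.

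First I would observe that Phase~3 applied to $T_1$ runs the inner \textbf{for} loop over the $n=v$ labels, and for each label $j$ it (i)~locates the unique edge $x$ of $T_1$ incident at the vertex with label $j$ and (ii)~calls $\mbox{NCEReuseColors}(x)$. Step~(i) is $O(1)$ if, while building $T_1$ in Phase~2, each vertex stores a pointer to its associated edge (or $O(\rho)$ with plain adjacency lists, which is subsumed below). Since $T_1$ is an oriented forest on at most $v$ vertices, it has at most $v-1$ edges, so at most $v$ calls to $\mbox{NCEReuseColors}$ occur and the loop overhead is $O(v)$; hence it suffices to show that each call runs in $O(\omega\rho)$ time.

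Next I would analyse one call $\mbox{NCEReuseColors}(x)$ with $x=(u,v)$, using that at this stage only edges of $T_1$ have been coloured. By the preceding lemma, colouring $T_1$ uses no more than $O(\omega\rho)$ colours, so the set $\mathcal{C}$ of existing colours always has $|\mathcal{C}|=O(\omega\rho)$. Building $\mathcal{C}_1$ amounts to scanning the coloured edges of $T_1$ incident to $u$ or $v$: for an out-oriented $T_1$ the only incoming edge at $v$ is $x$ itself, its coloured outgoing edges go to lower-labelled neighbours of $v$ (at most $\omega$), and the edges at $u$ number $O(\rho)$, giving $O(\omega+\rho)=O(\rho)$ candidates. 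Building $\mathcal{C}_2$ amounts to scanning the coloured edges of $T_1$ that secondarily conflict with $x$; using that each vertex of $T_1$ has at most one incoming edge, these can be enumerated through the $\mathcal{G}_c$-neighbours of $u$ and of $v$, again $O(\rho)$ candidates, each tested in $O(1)$. Finally, forming $\mathcal{C}_{nc}=\mathcal{C}\setminus(\mathcal{C}_1\cup\mathcal{C}_2)$ and returning its least element (or $|\mathcal{C}|+1$) costs $O(|\mathcal{C}|)=O(\omega\rho)$ if colours are maintained in a sorted array or a bitmap of width $O(\omega\rho)$. Summing, one call costs $O(\omega\rho)$, and multiplying by the at most $v$ calls yields the claimed $O(v\omega\rho)$; the in-oriented case is handled symmetrically.

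I expect the main obstacle to be the careful counting in the $\mathcal{C}_2$ step. A naive bound on the number of secondarily-conflicting coloured edges is $O(\rho^2)$ — a $\mathcal{G}_c$-neighbour of $v$ may be the tail of many edges of $T_1$ — which would only give $O(v(\omega\rho+\rho^2))$. The argument therefore has to exploit that for the \emph{first} oriented graph no colours are inherited from earlier oriented graphs (so one never reconsiders edges outside $T_1$), together with the oriented-forest structure of $T_1$, so that the conflict sets are assembled by iterating over the $O(\rho)$ neighbourhoods of the two endpoints of $x$ rather than over all of $T_1$. Making this enumeration precise, and checking it also dominates the $O(|\mathcal C|)$ colour-set manipulation, is the delicate part; the $O(v)$ loop, the $O(1)$ edge lookups, and the appeal to the preceding lemma for $|\mathcal C|$ are routine.
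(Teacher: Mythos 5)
Your overall skeleton is the right one and is essentially what the paper (which simply defers to Lemma 3.2 of \cite{ramanathan_lloyd__scheduling_algorithms}) intends: at most $v$ calls to the colouring routine, each costing $O(\omega\rho)$, with the set of existing colours bounded by $O(\omega\rho)$ via the preceding lemma. However, there is a genuine gap in your per-call analysis, and you half-acknowledge it yourself in your closing paragraph. The claim in your main argument that $\mathcal C_2$ can be assembled from only "$O(\rho)$ candidates, each tested in $O(1)$" is false: as you note, in an out-oriented forest a vertex has at most one \emph{incoming} edge but may be the tail of up to $\rho$ outgoing edges, so a $\mathcal G_c$-in-neighbour $y$ of the receiver $d$ can contribute many coloured edges of $T_1$ to the secondary-conflict set. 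Naively this gives $O(\rho^2)$ conflicting edges per call and hence only the bound $O(v\rho^2)$ --- which is exactly the paper's bound for \emph{subsequent} oriented graphs, not the stronger bound claimed here for $T_1$.

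The missing idea is the one that puts $\omega$ into the bound at all: when the edge with $L(d)=j$ is being coloured, the coloured edges of $T_1$ are precisely the unique in-edges of already-processed (lower-labelled) vertices, and the labeler guarantees that any vertex has at most $\omega$ neighbours of lower label. Running this counting through the primary and secondary conflict cases bounds the number of coloured edges of $T_1$ in conflict with $x$ by $O(\omega\rho)$ rather than $O(\rho^2)$; this is the substance of the preceding lemma's proof (Lemma 3.1 of \cite{ramanathan_lloyd__scheduling_algorithms}), and the running-time lemma needs that conflict-set size bound, not merely the colour-count statement you invoke for $|\mathcal C|$. Your proposal asserts an incorrect $O(\rho)$ count in the body and then explicitly defers "making this enumeration precise" in the final paragraph, so the decisive step of the proof is absent. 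Supplying that lower-label counting argument (after which the $O(|\mathcal C|)=O(\omega\rho)$ set manipulation and the $v$-fold loop go through exactly as you describe) would complete the proof.
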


\begin{proof}
  This is similar to the proof of Lemma 3.2 in
  \cite{ramanathan_lloyd__scheduling_algorithms}.
\end{proof}

\begin{lemma}
  \label{lem:subsequent_complexity}
  For any subsequent oriented graph $T_j$, where $2 \leqslant j
  \leqslant k$, the running time of Phase 3 of ALSReuseColors is
  $O(v\rho^2)$.
\end{lemma}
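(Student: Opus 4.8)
The plan is to bound the running time of Phase 3 for a subsequent oriented graph $T_j$ by combining the structural count from Lemma \ref{lem:subsequent_colors} with a per-edge accounting of the work done inside the \texttt{NCEReuseColors} function. Since $T_j$ is an oriented graph (each vertex has at most one outgoing edge in the out-oriented case, or at most one incoming edge in the in-oriented case), it has at most $v$ edges, and the outer loop of Phase 3 processes these $v$ edges one at a time. Hence it suffices to show that a single invocation of \texttt{NCEReuseColors}$(x)$ costs $O(\rho^2)$, after which multiplying by the $v$ edges gives the claimed $O(v\rho^2)$.

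First I would fix an edge $x=(u,v)$ in $T_j$ and examine what \texttt{NCEReuseColors} must do: it forms the sets $\mathcal C_1$ and $\mathcal C_2$ of colors of edges having a primary, respectively secondary, edge conflict with $x$, among all edges colored so far (in $T_1,\ldots,T_j$), then computes $\mathcal C_{nc} = \mathcal C \setminus (\mathcal C_1 \cup \mathcal C_2)$ and returns its least element (or a new color). The key observation is that the edges contributing to $\mathcal C_1 \cup \mathcal C_2$ are exactly those in the sets $S_1, S_2, S_3, S_4$ analyzed in the proof of Lemma \ref{lem:subsequent_colors}, and that proof already established $n_1 + n_2 + n_3 + n_4 = O(\rho^2)$. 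Enumerating these conflicting edges requires only local graph traversal around $u$ and $v$: inspecting the $O(\rho)$ neighbors of $u$ and of $v$, and for each, inspecting its $O(\rho)$ incident edges to detect secondary conflicts, which is $O(\rho^2)$ work. Collecting their colors yields $|\mathcal C_1 \cup \mathcal C_2| = O(\rho^2)$, so by Lemma \ref{lem:subsequent_colors} the total number of colors in play when coloring $T_j$ is also $O(\rho^2)$; hence finding the least color in $\mathcal C_{nc}$, e.g. by marking the $O(\rho^2)$ forbidden colors in a bounded array and scanning, also costs $O(\rho^2)$. Therefore one call to \texttt{NCEReuseColors} is $O(\rho^2)$, and Phase 3 restricted to $T_j$ is $O(v\rho^2)$.

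The main obstacle I anticipate is being careful about \emph{which} colored edges \texttt{NCEReuseColors} actually has to touch: a naive reading suggests it might scan all previously colored edges across $T_1,\ldots,T_j$ (which could be $\Theta(e)$ edges), blowing the bound. The resolution — and the crux of the argument — is that conflict detection is purely local to the endpoints $u$ and $v$ of $x$ and their neighbors in the underlying communication graph $\mathcal G_c$, so only $O(\rho^2)$ edges are ever relevant regardless of how many edges have been colored; this is precisely the content of the $S_1,\ldots,S_4$ decomposition borrowed from Lemma \ref{lem:subsequent_colors}. Once that localization is stated cleanly, the rest is routine bookkeeping. I would also remark that the in-oriented case is symmetric, and that this lemma parallels Lemma 3.2 of \cite{ramanathan_lloyd__scheduling_algorithms} with $\rho^2$ replacing the simpler bound there, the extra factor arising because secondary conflicts must now be checked against all earlier oriented graphs rather than within a single freshly-colored one.
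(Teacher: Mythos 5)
Your proposal is correct and follows the same route as the paper's proof: invoke Lemma \ref{lem:subsequent_colors} to bound the conflicting-color set $\mathcal C_1 \cup \mathcal C_2$ by $O(\rho^2)$, conclude that each call to NCEReuseColors costs $O(\rho^2)$, and multiply by the $O(v)$ edges of the oriented graph. You simply fill in more of the bookkeeping (the locality of conflict detection around the endpoints of $x$, and the scan for the least non-conflicting color) that the paper leaves implicit.
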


\begin{proof}
  From Lemma \ref{lem:subsequent_colors}, for any subsequent oriented
  graph $T_j$, the size of the set of conflicting colors $(\mathcal
  C_1 \cup \mathcal C_2)$ of function NCEReuseColors is $O(\rho^2)$.
  Thus, determining a new color for an edge in Phase 3 of
  ALSReuseColors takes $O(\rho^2)$ steps. Since this is done for every
  label and hence for every vertex, it follows that the overall
  running time of Phase 3 of ALSReuseColors is $O(v\rho^2)$.
\end{proof}

\begin{theorem}
\label{theo:arbitrary_graph}
For an arbitrary graph of thickness $\theta$ and maximum degree
$\rho$, ALSReuseColors has a running time of $O(ev\log
v+v\theta\rho^2)$.
\end{theorem}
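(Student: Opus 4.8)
The plan is to bound the running time of ALSReuseColors by the sum of the costs of its three phases, and to argue that Phases~1 and~2 are inherited unchanged from ArboricalLinkSchedule while only Phase~3 picks up the extra factor of $\rho$ coming from color reuse.

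First I would observe that Phase~1 (the call to labeler) and Phase~2 (the decomposition of $\mathcal G_c(\cdot)$ into oriented graphs $T_1,\ldots,T_k$ by successive breadth first searches) are exactly the corresponding phases of the ArboricalLinkSchedule algorithm of \cite{ramanathan_lloyd__scheduling_algorithms}. Hence, by the analysis underlying (\cite{ramanathan_lloyd__scheduling_algorithms}, Theorem~3.3), these two phases together run in $O(ev\log v)$ time. I would also extract from that analysis the one structural fact needed later: since the undirected equivalent of $\mathcal G_c(\cdot)$ has thickness $\theta$, it is a union of $\theta$ planar graphs, each of arboricity at most $3$; therefore its arboricity is at most $3\theta$, it decomposes into $O(\theta)$ undirected forests, and so $k=O(\theta)$ oriented graphs are produced (the successive-BFS decomposition is non-optimal but still yields only $O(\theta)$ forests).

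Next I would total the cost of Phase~3. Coloring the first oriented graph $T_1$ costs $O(v\omega\rho)$ by the lemma bounding the Phase~3 running time on $T_1$, and coloring each of the remaining $k-1$ oriented graphs costs $O(v\rho^2)$ by Lemma~\ref{lem:subsequent_complexity}. Hence Phase~3 takes
\[
O(v\omega\rho) + (k-1)\cdot O(v\rho^2) = O\!\left(v\omega\rho + \theta v\rho^2\right).
\]
Since $\omega\leqslant\rho$ and $\theta\geqslant 1$ (for a non-trivial communication graph), the term $v\omega\rho\leqslant v\rho^2\leqslant\theta v\rho^2$ is absorbed, so Phase~3 is $O(v\theta\rho^2)$. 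Adding the three phases gives a total running time of $O(ev\log v)+O(v\theta\rho^2)=O(ev\log v+v\theta\rho^2)$, as claimed.

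The only delicate points are the two facts quoted from \cite{ramanathan_lloyd__scheduling_algorithms}: that Phases~1--2 cost $O(ev\log v)$, and that the forest decomposition yields only $k=O(\theta)$ oriented graphs, so that the per-graph bound $O(v\rho^2)$ of Lemma~\ref{lem:subsequent_complexity} aggregates to $O(v\theta\rho^2)$ rather than to something growing with $k$ unbounded in $\theta$. Everything else is routine bookkeeping, exactly parallel to the corresponding argument in \cite{ramanathan_lloyd__scheduling_algorithms}, with $\rho$ replaced by $\rho^2$ in the coloring phase because NCEReuseColors must discard conflicting colors drawn from \emph{all} previously colored oriented graphs.
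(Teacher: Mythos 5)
Your proposal is correct and follows essentially the same route as the paper's proof: Phases 1--2 cost $O(ev\log v)$, the decomposition yields $k=O(\theta)$ oriented graphs, and Phase 3 costs $O(v\omega\rho)$ for $T_1$ plus $O(v\rho^2)$ for each subsequent $T_j$, which is absorbed into $O(v\theta\rho^2)$. The only cosmetic difference is that the paper obtains $k\leqslant 6\theta$ by citing the Gabow--Westermann partition (computable in $O(ev\log v)$ time) rather than via your arboricity argument for the successive-BFS decomposition; both rest on the same implicit assumption that the decomposition actually performed produces only $O(\theta)$ forests.
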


\begin{proof}
  The running time of the labeler function is $O(e+v\log v)$ using a
  Fibonacci Heap \cite{fredman_tarjan__fibonacci_heaps}. The
  partitioning method of \cite{gabow_westermann__forests_frames}
  results in a decomposition of a graph of thickness $\theta$ into at
  most $6\theta$ oriented graphs in time $O(ev\log v)$. Thus, $k
  \leqslant 6\theta$.  From Lemma 3.2 in
  \cite{ramanathan_lloyd__scheduling_algorithms}, the first oriented
  graph $T_1$ can be colored in time $O(v\omega\rho)$.  However,
  consider the coloring of $j^{th}$ oriented graph $T_j$, where $2
  \leqslant j \leqslant k$.  From Lemma
  \ref{lem:subsequent_complexity}, $T_j$ can be colored in time
  $O(v\rho^2)$. Hence, the for loop of ALSReuseColors runs in time
  $O(v\theta\rho^2)$. Therefore, the overall running time of
  ALSReuseColors is $O(e + v\log v + ev\log v + v\theta\rho^2)$.
  Since $e + v \log v < ev \log v$ holds for any directed graph
  $\mathcal G_c(\cdot)$ that models a wireless mesh network, the
  overall running time of ALSReuseColors simplifies to $O(ev\log v +
  v\theta\rho^2)$.
\end{proof}

\subsection{Discussion}

In this section, we have considered an STDMA wireless network with
uniform transmission power at all nodes and presented an algorithm for
point to point link scheduling under the protocol interference model.
The proposed algorithm, which is a modification of the
ArboricalLinkSchedule algorithm in
\cite{ramanathan_lloyd__scheduling_algorithms}, models the network by
a communication graph, partitions the communication graph into
edge-disjoint oriented graphs and colors each oriented graph
successively.  However, unlike
\cite{ramanathan_lloyd__scheduling_algorithms}, we reuse colors from
previously colored oriented graphs to color the current oriented
graph. The proposed algorithm results in around $26\%$ lower schedule
length than that of \cite{ramanathan_lloyd__scheduling_algorithms},
albeit at the cost of slightly higher computational
complexity\footnote{The computational complexity of
  ArboricalLinkSchedule is $O(ev \log v + v \theta^2 \rho)$
  \cite{ramanathan_lloyd__scheduling_algorithms}.}. Since schedules
are constructed only once offline and then used by the network for a
long period of time, our approach has the potential of providing
higher long-term network throughput.

For the rest of this chapter, we consider point to point link
scheduling under the physical interference model. The algorithm
developed in this section will be further refined to design a link
scheduling algorithm in the next section.

\section{A High Spatial Reuse Link Scheduling Algorithm}
\label{sec:high_spatial}

In this section, we propose a point to point link scheduling algorithm
based on the communication graph model of an STDMA wireless network as
well as SINR computations.  We adopt spatial reuse as the performance
metric, which has been motivated in Section
\ref{sec:performance_metric}.  We compare the performance of the
proposed algorithm with link scheduling algorithms which utilize a
communication graph model of the network.  We show that the proposed
algorithm achieves higher spatial reuse compared to existing
algorithms, without any increase in computational complexity.

\subsection{Problem Formulation}
\label{subsec:problem_formulation_cfls}

Our system model and notation are exactly as described in Section
\ref{sec:protocol_model}.  A link schedule is {\em feasible} if it
satisfies the following conditions:
\begin{enumerate}
\item 
Operational constraint (\ref{eq:operational_constraint}).
\item
Range constraint: Every receiver is within the communication range
of its intended transmitter, i.e.,
\begin{eqnarray}
D(t_{i,j},r_{i,j}) \leqslant R_c \;\;\forall\;\; i=1,\ldots,C 
 \;\;\forall\;\; j=1,\ldots,M_i.
\label{eq:range_constraint}
\end{eqnarray}
\end{enumerate}

A link schedule $\Psi(\cdot)$ is {\it exhaustive} if every pair of
nodes which are within communication range occur exactly twice in the
link schedule, once with one node being the transmitter and the other
node being the receiver, and during another time slot with the
transmitter-receiver roles interchanged.  Mathematically,
\begin{eqnarray}
D(j,k) \leqslant R_c \Rightarrow j \rightarrow k \in 
 \bigcup_{i=1}^C {\mathcal S}_i
  \;\;\mbox{and}\;\; k \rightarrow j \in \bigcup_{i=1}^C {\mathcal S}_i
  \;\; \forall \;\; 1 \leqslant j < k \leqslant N.
\label{eq:exhaustive_schedule}
\end{eqnarray}

Our aim is to design a low complexity conflict-free STDMA point to
point link scheduling algorithm that achieves high spatial reuse,
where spatial reuse is given by (\ref{eq:spatial_reuse}).  We only
consider STDMA link schedules which are feasible and
exhaustive\footnote{The set of edges in $\mathcal G_c(\cdot)$ to be
  scheduled is determined by a routing algorithm.  For simplicity, we
  only consider exhaustive schedules, i.e., schedules which assign
  exactly one time slot to every directed edge in $\mathcal
  G_c(\cdot)$.}. Thus, our schedules satisfy
(\ref{eq:operational_constraint}), (\ref{eq:conflict_free}),
(\ref{eq:range_constraint}) and (\ref{eq:exhaustive_schedule}).

\subsection{Motivation}
\label{subsec:motivation_cfls}

We briefly describe the essential features of STDMA link scheduling
algorithms.  An STDMA link scheduling algorithm is equivalent to
assigning a unique color to every edge in the communication graph,
such that transmitter-receiver pairs corresponding to communication
edges with the same color are simultaneously active in a particular
time slot, as described in Section \ref{sec:equivalence_coloring}.
The core of a typical link scheduling algorithm consists of the
following functions:
\begin{enumerate}

\item An order in which communication edges are considered for
  coloring.

\item A function which determines the set of all existing colors which
  can be assigned to the edge under consideration without violating
  the problem constraints.

\item A {\em BestColor} rule to determine which color to assign to the
  edge under consideration.
\end{enumerate}
The second function considers only operational and range constraints
in link scheduling algorithms based on the protocol interference model
(equivalently, based on the communication graph).  However, in the
link scheduling algorithm that we propose, SINR constraints are also
taken into account.

Algorithms based on the protocol interference model are inadequate to
design efficient link schedules. This is because the communication
graph ${\mathcal G}_c({\mathcal V},{\mathcal E}_c)$ is a crude
approximation of $\Phi(\cdot)$.  Even the two-tier graph ${\mathcal
  G}({\mathcal V},{\mathcal E}_c \,\cup\, {\mathcal E}_i)$, which is a
better approximation of $\Phi(\cdot)$, leads to low network
throughput, as argued in Section \ref{sec:limitations_protocol}.  On
the other hand, from $\Phi(\cdot)$ and ${\mathcal G}_c(\cdot)$, one
can exhaustively determine the STDMA schedule which yields the highest
spatial reuse. However, this is a combinatorial optimization problem
of prohibitive complexity $(O(|{\mathcal E}_c|^{|{\mathcal E}_c|}))$
and is thus computationally infeasible.

To overcome these problems, we propose a new algorithm for STDMA  
link scheduling under the realistic physical interference
model.  Our algorithm is based on the communication graph model
${\mathcal G}_c({\mathcal V},{\mathcal E}_c)$ as well as SINR
computations.  Motivated by techniques from matroid theory
\cite{lawler__combinatorial_optimization}, we develop a
computationally feasible algorithm with demonstrably high spatial
reuse.  The essence of our algorithm is to partition the set of
communication edges into subsets (forests) and color the edges in each
subset sequentially. The edges in each forest are considered in a
random order for coloring, since randomized algorithms are known to
outperform deterministic algorithms, especially when the
characteristics of the input are not known a priori
\cite{motwani_raghavan__randomized_algorithms}.

A similar matroid-based network partitioning technique is used in
\cite{brzezinski_zussman_modiano__enabling_distributed} to generate
high capacity subnetworks for a distributed throughput maximization
problem in wireless mesh networks.  Techniques from matroid theory
have also been employed to develop efficient heuristics for NP-hard
combinatorial optimization problems in fields such as distributed
computer systems \cite{ramalingom_thulasiraman_das__matroid_theoretic}
and linear network theory \cite{petersen__investigating_solvability}.

\subsection{ConflictFreeLinkSchedule Algorithm}
\label{subsec:cfls_algorithm}

We call the proposed point to point link scheduling algorithm as
ConflictFreeLinkSchedule (CFLS).  The algorithm considers the
communication graph ${\mathcal G}_c({\mathcal V},{\mathcal E}_c)$ and
SINR conditions and is explained in Algorithm \ref{algo:cfls}.

In Phase 1, we label all the vertices randomly. Specifically, if
${\mathcal G}_c(\cdot)$ has $v$ vertices, we perform a random
permutation of the sequence $(1,2,\ldots,v)$ and assign these labels
to vertices with indices $1,2,\ldots,v$ respectively.  $L(u)$ denotes
the label assigned to vertex $u$.

In Phase 2, the communication graph ${\mathcal G}_c(\cdot)$ is
decomposed into what are called out-oriented and in-oriented graphs
$T_1,T_2,\ldots,T_k$ \cite{ramanathan_lloyd__scheduling_algorithms}.
Each $T_i$ is a forest and every edge of ${\mathcal G}_c(\cdot)$ is in
exactly one of the $T_i$'s. This decomposition is achieved by
partitioning graph $G_c(\cdot)$, the undirected equivalent of
${\mathcal G}_c(\cdot)$, into undirected forests. The number of
forests can be minimized by using techniques from Matroid theory
(\cite{gabow_westermann__forests_frames}, $k$-forest problem).
However, this optimal decomposition requires extensive computation.
Hence, we adopt the faster albeit non-optimal approach of using
successive breadth first searches
\cite{introduction_algorithms__cormen_leiserson_rivest} to decompose
$G_c(\cdot)$ into undirected forests. Each undirected forest is
further mapped to two directed forests. In one forest, the edges in
every connected component point away from the root and every vertex
has at most one incoming edge, thus producing an out-oriented graph.
In the other forest, the edges in every connected component point
toward the root and every vertex has at most one outgoing edge, thus
producing an in-oriented graph.

In Phase 3, the oriented graphs are considered sequentially.  For each
oriented graph, vertices are considered in increasing order by label
and the unique edge associated with each vertex is colored using the
FirstConflictFreeColor (FCFC) function.

The FCFC function is explained in Algorithm
\ref{func:first_conflict_free}. For the edge under consideration $x$,
it discards any color that has an edge with a primary conflict with
$x$.  Among the residual set of colors, we choose the first color such
that the resulting SINRs at the receiver of $x$ and the receivers of
all co-colored edges are no less than the communication threshold
$\gamma_c$. If no such color is found, we assign a new color to $x$.
Hence, this function guarantees that the ensuing schedule is
conflict-free.

\begin{algorithm}
\caption{ConflictFreeLinkSchedule (CFLS)}
\label{algo:cfls}
\begin{algorithmic}[1]
\STATE {\bf input:} STDMA network $\Phi(\cdot)$, communication graph 
  ${\mathcal G}_c(\cdot)$
\STATE {\bf output:} A coloring $C: {\mathcal E}_c \rightarrow \{1,2,\ldots\}$
\STATE label the vertices of ${\mathcal G}_c$ randomly \COMMENT{Phase 1}
\STATE use successive breadth first searches to partition ${\mathcal G}_c$ into oriented graphs $T_i$, $1 \leqslant i \leqslant k$ \COMMENT{Phase 2}
\FOR[Phase 3 begins]{$i \leftarrow 1 \mbox{ to } k$}
\FOR{$j \leftarrow 1 \mbox{ to } n$}
\IF{$T_i$ is out-oriented}
\STATE let $x=(s,d)$ be such that $L(d)=j$
\ELSE
\STATE let $x=(s,d)$ be such that $L(s)=j$
\ENDIF
\STATE $C(x) \leftarrow \mbox{FirstConflictFreeColor}(x)$
\ENDFOR
\ENDFOR \COMMENT{Phase 3 ends}
\end{algorithmic}
\end{algorithm}

\begin{algorithm}
\caption{integer FirstConflictFreeColor($x$)}
\label{func:first_conflict_free}
\begin{algorithmic}[1]
\STATE {\bf input:} STDMA network $\Phi(\cdot)$, communication graph 
  ${\mathcal G}_c(\cdot)$
\STATE {\bf output:} A conflict-free color
\STATE ${\mathcal C} \leftarrow \mbox{set of existing colors}$
\STATE ${\mathcal C}_c \leftarrow \{C(h):h \in {\mathcal E}_c$, $h$ is colored, $x$ and $h$ have a primary edge conflict$\}$
\STATE ${\mathcal C}_{cf} = {\mathcal C} \setminus {\mathcal C}_c$
\FOR{$i \leftarrow 1 \mbox{ to } |{\mathcal C}_{cf}|$}
\STATE $r \leftarrow i^{th} \mbox{ color in } {\mathcal C}_{cf}$
\STATE $E_i \leftarrow \{h:h \in {\mathcal E}_c, C(h)=r\}$
\STATE $C(x) \leftarrow r$
\IF{SINR at all receivers of $E_i \cup \{x\}$ exceed $\gamma_c$}
\STATE return $r$
\ENDIF
\ENDFOR
\STATE return $|\mathcal C|+1$
\end{algorithmic}
\end{algorithm}

\subsection{Performance Results}
\label{subsec:performance_results}

\subsubsection{Simulation Model}
\label{sub2sec:simulation_model}

In the simulation experiments, the location of every node is generated
randomly, using a uniform distribution for its $X$ and $Y$
coordinates, in the deployment area. For a fair comparison of our
algorithm with the Truncated Graph-Based Scheduling Algorithm (TGSA)
\cite{behzad_rubin__performance_graph}, we assume that the deployment
region is a circular region of radius $R$. Thus, if $(X_j,Y_j)$ are
the Cartesian coordinates of $j^{th}$ node, $j=1,\ldots,N$, then $X_j
\sim U[-R,R]$ and $Y_j \sim U[-R,R]$ subject to $X_j^2 + Y_j^2
\leqslant R^2$. Equivalently, if $(R_j,\Theta_j)$ are the polar
coordinates of $j^{th}$ node, then $R_j^2 \sim U[0,R^2]$ and $\Theta_j
\sim U[0,2\pi]$.  After generating random positions for $N$ nodes, we
have complete information of $\Phi(\cdot)$. Using
(\ref{eq:communication_range}) and (\ref{eq:interference_range}), we
compute the communication and interference radii, and then map the
network $\Phi(\cdot)$ to the two-tier graph ${\mathcal G}({\mathcal
  V},{\mathcal E}_c \,\cup\, {\mathcal E}_i)$.  Once the link schedule
is computed by an algorithm, $\sigma$ is computed using
(\ref{eq:spatial_reuse}).  System parameters are chosen based on their
prototypical values in wireless mesh networks
\cite{kim_lim_hou__improving_spatial}.  For a given set of system
parameters, we calculate the average spatial reuse by averaging
$\sigma$ over 1000 randomly generated networks.  Keeping all other
parameters fixed, we observe the effect of increasing the number of
nodes $N$ on the average spatial reuse.

In our experiments, we compare the performance of the following algorithms:
\begin{itemize}
\item ArboricalLinkSchedule (ALS)
  \cite{ramanathan_lloyd__scheduling_algorithms},

\item Truncated Graph-Based Scheduling Algorithm\footnote{In Truncated
    Graph-Based Scheduling Algorithm, for the computation of optimal
    number of transmissions $M^*$, we follow the method described in
    \cite{behzad_rubin__performance_graph}. Since $0 < \xi <
    \frac{N_0}{P}$, we assume that $\xi = 0.9999\frac{N_0}{P}$ and
    compute successive Edmundson-Madansky (EM) upper bounds
    \cite{madansky__inequalities_stochastic},
    \cite{dokov_morton__higher_order}, till the difference between
    successive EM bounds is less than $0.3\%$. We have experimentally
    verified that only high values of $\xi$ lead to reasonable values
    for $M^*$, whereas low values of $\xi$, say $\xi =
    0.1\frac{N_0}{P}$, lead to the extremely conservative value of
    $M^*=1$ in most cases.}  (TGSA)
  \cite{behzad_rubin__performance_graph},

\item GreedyPhysical (GP)
  \cite{brar_blough_santi__computationally_efficient},

\item Proposed ConflictFreeLinkSchedule (CFLS).
\end{itemize}

\subsubsection{Performance Comparison under Path Loss Model}
\label{sub2sec:performance_pathloss}

In the first set of experiments (Experiment 1), we assume that $R=500$
m, $P=10$ mW, $\beta = 4$, $N_0=-90$ dBm, $\gamma_c = 20$ dB and
$\gamma_i=10$ dB \cite{kim_lim_hou__improving_spatial}.  Thus,
$R_c=100$ m and $R_i=177.8$ m.  We vary the number of nodes from 30 to
110 in steps of 5. Figure \ref{fig:expt1_path_loss} plots the average
spatial reuse vs. number of nodes for all the algorithms.

\begin{figure}[thbp]
  \centering
  \includegraphics[width=5in]{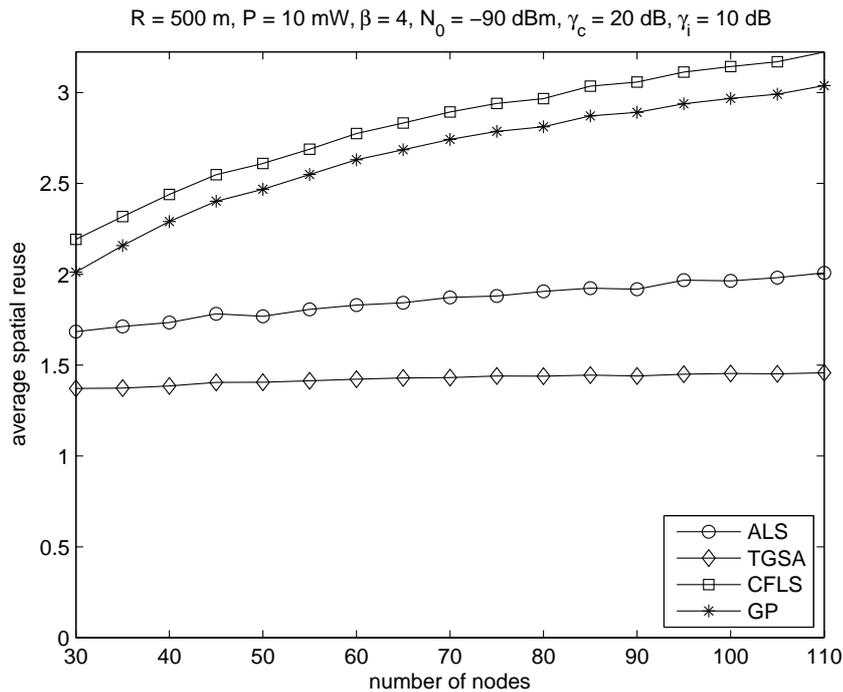}
  \caption{Spatial reuse vs. number of nodes for Experiment 1.}
  \label{fig:expt1_path_loss}
\end{figure}

In the second set of experiments (Experiment 2), we assume that
$R=700$ m, $P=15$ mW, $\beta = 4$, $N_0=-85$ dBm, $\gamma_c = 15$ dB
and $\gamma_i=7$ dB. Thus, $R_c=110.7$ m and $R_i=175.4$ m.  We vary
the number of nodes from 70 to 150 in steps of 5. Figure
\ref{fig:expt2_path_loss} plots the average spatial reuse vs. number
of nodes for all the algorithms.

\begin{figure}[thbp]
  \centering
  \includegraphics[width=5in]{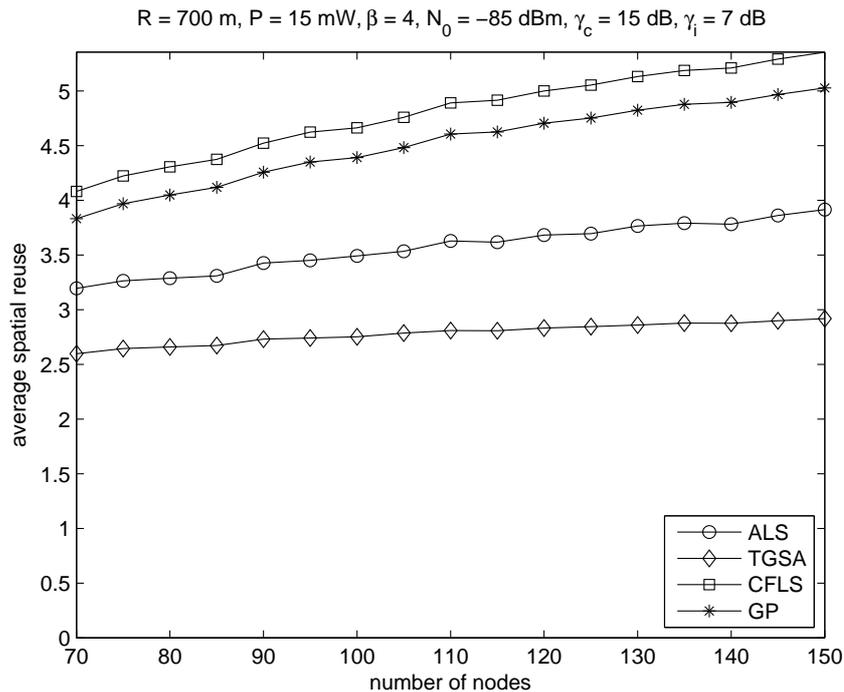}
  \caption{Spatial reuse vs. number of nodes for Experiment 2.}
  \label{fig:expt2_path_loss}
\end{figure}

For the ALS algorithm, we observe that spatial reuse increases very
slowly with increasing number of nodes.

For the TGSA algorithm, we observe that spatial reuse is $18$-$27\%$
lower than that of ALS and $30$-$55\%$ lower than that of GP.  A
plausible explanation for this behavior is as follows.  The basis for
TGSA is the computation of $M^*$, the optimal number of transmissions
in every slot \cite{behzad_rubin__performance_graph}.  $M^*$ is
determined by maximizing a lower bound on the expected number of
successful transmissions in a time slot.  Since the partitioning of a
maximal independent set of communication arcs into subsets of
cardinality at most $M^*$ is arbitrary and not geography-based, there
could be scenarios where the transmissions scheduled in a subset are
in the vicinity of each other, resulting in moderate to high
interference.  In essence, maximizing this lower bound does not
necessarily translate to maximizing the number of successful
transmissions in a time slot.  Also, due to its design, the TGSA
algorithm yields higher number of colors compared to ALS and GP.

Though the GP algorithm is based on communication graph and SINR
conditions, it yields slightly lower spatial reuse than CFLS.  A
possible reason for this observation is as follows. The GP algorithm
colors edges of the communication graph in the decreasing order of
interference number. The interference number of edge $e$ is the number
of edges $e_i$ such that, if $(e,e_i)$ are scheduled simultaneously,
then the SINR threshold condition (\ref{eq:conflict_free}) is violated
along one or both links. Edges with higher interference number tend to
be located towards the center of the deployment region.  Since these
edges are colored first, a large number of colors are utilized in the
initial stages of the algorithm, lead to potentially higher schedule
length and lower spatial reuse. A better technique would be
successively examine edges at the centre and the periphery, which is
achieved by the partition technique employed by CFLS.

For the proposed CFLS algorithm, we observe that spatial reuse
increases steadily with increasing number of nodes and is about $15\%$
higher than the spatial reuse of ALS, TGSA and GP.

\subsubsection{Performance Comparison under Realistic Conditions}
\label{sub2sec:performance_realistic}

In a realistic wireless environment, channel impairments like
multipath fading and shadowing affect the received SINR at a receiver
\cite{sklar__rayleigh_fading}.  In this section, we compare the
performance of the ALS, TGSA, GP and CFLS algorithms in a wireless
channel which experiences Rayleigh fading and lognormal shadowing.

In the absence of fading and shadowing, the SINR at receiver $r_{i,j}$
is given by (\ref{eq:definition_sinr}). We assume that every algorithm
(ALS, TGSA, GP and CFLS) considers only path loss in the channel prior
to constructing the two-tier graph ${\mathcal G}({\mathcal
  V},{\mathcal E}_c \,\cup\, {\mathcal E}_i)$ and computing the link
schedule.

However, for computing the average spatial reuse of each algorithm, we
take into account fading and shadowing channel gains between each pair
of nodes. More specifically, for computing the spatial reuse using
(\ref{eq:spatial_reuse}), the (actual) SINR at receiver $r_{i,j}$ is
given by
\begin{eqnarray}
{\mbox{SINR}}_{r_{i,j}} = 
 \frac{\frac{P}{D^{\beta}(t_{i,j},r_{i,j})}V(t_{i,j},r_{i,j})10^{W(t_{i,j},r_{i,j})}}{N_0+\sum_{\stackrel{k=1}{k\neq j}}^{M_i}\frac{P}{D^{\beta}(t_{i,k},r_{i,j})}V(t_{i,k},r_{i,j})10^{W(t_{i,k},r_{i,j})}},
\end{eqnarray}
where random variables $V(\cdot)$ and $W(\cdot)$ correspond to channel
gains due to Rayleigh fading and lognormal shadowing respectively. We
assume that $\{V(k,l)|1\leqslant k,l \leqslant N, k \neq l\}$ are
independent and identically distributed (i.i.d.) random variables with
probability density function (pdf)
\cite{tse_viswanath__fundamentals_wireless}
\begin{eqnarray}  
f_V(v) &=& \frac{1}{\sigma_V^2}e^{\frac{-v}{\sigma_V^2}}u(v),
\end{eqnarray}
where $u(\cdot)$ is the unit step function.  Also,
$\{W(k,l)|1\leqslant k,l \leqslant N, k\neq l\}$ are assumed to be
i.i.d. zero mean Gaussian random variables with pdf
\cite{goldsmith__wireless_communications}
\begin{eqnarray}
f_W(w) &=& \frac{1}{\sqrt{2\pi}\sigma_W}e^{\frac{-w^2}{2\sigma_W^2}}.
\end{eqnarray}
Random variables $V(\cdot)$ and $W(\cdot)$ are independent of each
other and also independent of the node locations.

The simulation model and experiments are exactly as described before.
In the simulations, we assume $\sigma_V^2 = \sigma_W^2 = 1$.  For
Experiment 1, Figure \ref{fig:expt1_fading} plots the average spatial
reuse vs. number of nodes for all the algorithms.  For Experiment 2,
Figure \ref{fig:expt2_fading} plots the average spatial reuse vs.
number of nodes for all the algorithms.

\begin{figure}[thbp]
  \centering
  \includegraphics[width=5in]{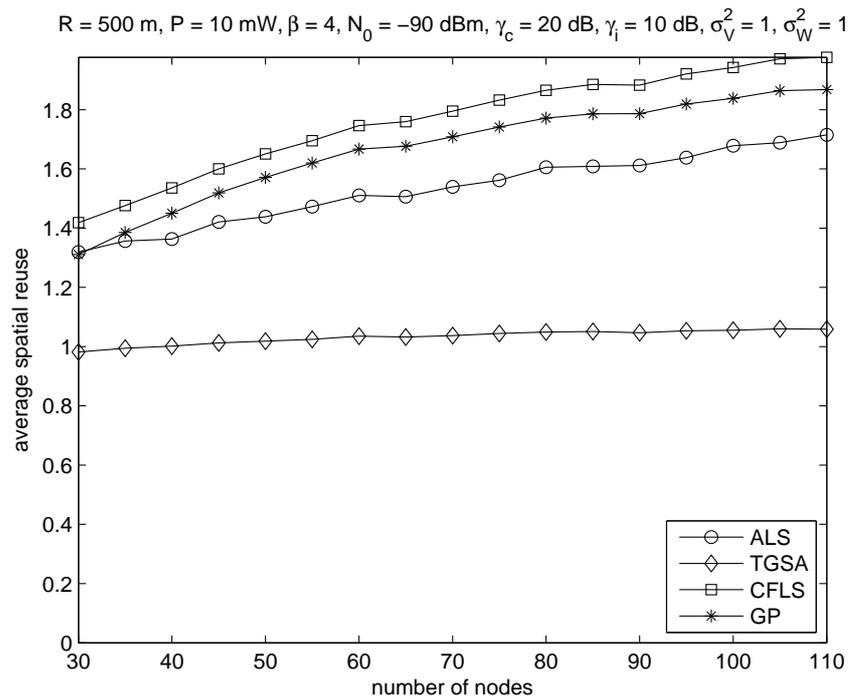}
  \caption{Spatial reuse vs. number of nodes for Experiment 1 under
    multipath fading and shadowing channel conditions.}
  \label{fig:expt1_fading}
\end{figure}

\begin{figure}[thbp]
  \centering
  \includegraphics[width=5in]{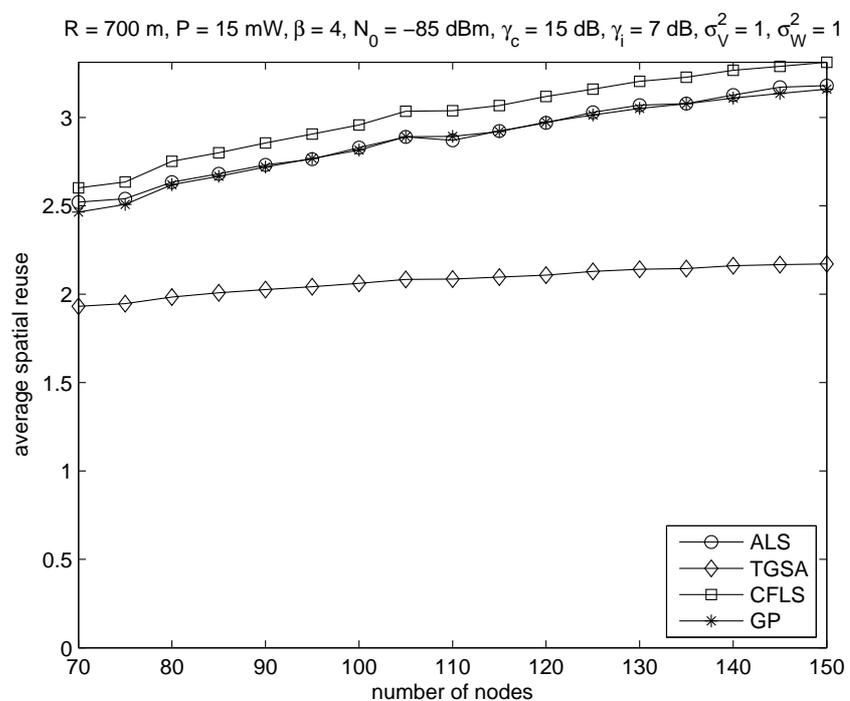}
  \caption{Spatial reuse vs. number of nodes for Experiment 2 under
    multipath fading and shadowing channel conditions.}
  \label{fig:expt2_fading}
\end{figure}

From Figures \ref{fig:expt1_path_loss}, \ref{fig:expt2_path_loss},
\ref{fig:expt1_fading} and \ref{fig:expt2_fading}, we observe that
spatial reuse decreases by $20$-$40\%$ in a channel experiencing
multipath fading and shadowing effects. A plausible explanation for
this observation is as follows. Since the channel gains between every
pair of nodes are independent of each other, it is reasonable to
assume that the interference power at a typical receiver remains
almost the same as in the non-fading case. This is because, even if
the power received from few unintended transmitters is low, the power
received from other unintended transmitters will be high (on average);
thus the interference power remains constant.  Consequently, the
change in SINR is determined by the change in received signal power
only. If the received signal power is higher compared to the
non-fading case, the transmission is anyway successful and spatial
reuse remains unchanged (see (\ref{eq:spatial_reuse})). However, if
the received signal power is lower, the transmission is now
unsuccessful and spatial reuse decreases. Hence, on average, the
spatial reuse decreases.

Finally, from Figures \ref{fig:expt1_fading} and
\ref{fig:expt2_fading}, we observe that the proposed CFLS algorithm
achieves $5$-$17\%$ higher spatial reuse than the ALS and GP
algorithms and $40$-$80\%$ higher spatial reuse than the TGSA
algorithm, under realistic wireless channel conditions.

\subsection{Analytical Results}
\label{subsec:cfls_complexity}

In this section, we derive upper bounds on the running time
(computational) complexity of ConflictFreeLinkSchedule algorithm.  We
use the following notation with respect to the communication graph
${\mathcal G}_c({\mathcal V},{\mathcal E}_c)$:
\begin{eqnarray*}
e &=& \mbox{number of communication edges},\\
v &=& \mbox{number of vertices},\\
\theta &=& \mbox{thickness of the graph}\\
&:=& \mbox{minimum number of planar graphs into which the undirected equivalent of ${\mathcal G}_c(\cdot)$}\\
&& \mbox{can be partitioned}.
\end{eqnarray*}

\begin{figure}
  \centering \subfigure[Experiment 1] {
    \label{fig:thickness_expt1}
    \includegraphics[width=5in]{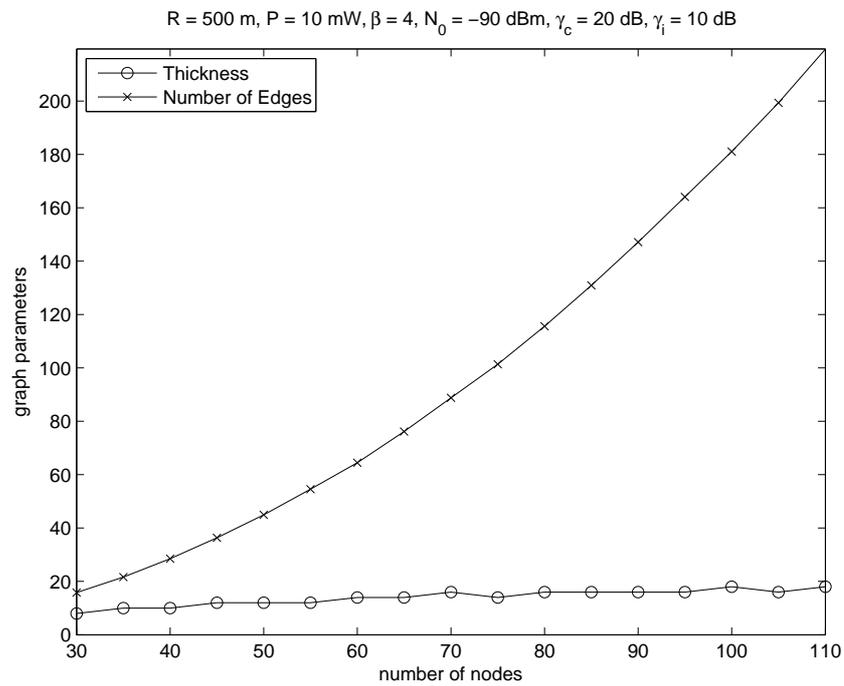}
  } \hspace{0.2in} \subfigure[Experiment 2] {
    \label{fig:thickness_expt2}
    \includegraphics[width=5in]{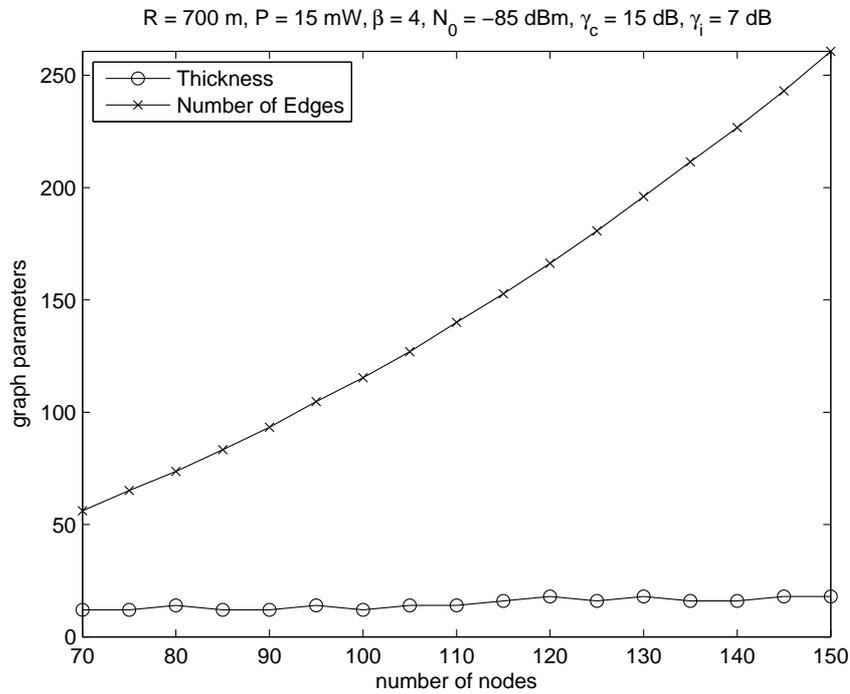}
  }
  \caption{Comparison of thickness and number of edges with number of
    vertices.}
  \label{fig:thickness_vs_vertices}
\end{figure}

Before we prove our results, it is instructive to observe Figure
\ref{fig:thickness_vs_vertices}, which shows the variation of $\theta$
and $e$ with $v$ for the two experiments described in Section
\ref{subsec:performance_results}.  Since determining the thickness of
a graph is NP-hard \cite{mutzel_odenthal__thickness_graphs}, each
value of $\theta$ in Figure \ref{fig:thickness_vs_vertices} is an
upper bound on the actual thickness based on the number of forests
into which the undirected equivalent of the communication graph has
been decomposed using successive breadth first searches.  We observe
that the graph thickness increases very slowly with the number of
vertices $(\theta \ll v)$, while the number of edges increases
super-linearly with the number of vertices.

\begin{lemma}
  An oriented graph $T$ can be colored using no more than $O(v)$
  colors using ConflictFreeLinkSchedule.
\label{lem:oriented_colors}
\end{lemma}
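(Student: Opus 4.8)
The plan is to combine the defining structural property of an oriented graph with the observation that FirstConflictFreeColor introduces at most one previously-unused color per edge. First I would recall that $T$ is, by the construction in Phase~2, either in-oriented or out-oriented: in an out-oriented graph every vertex has at most one incoming edge, and in an in-oriented graph every vertex has at most one outgoing edge. Mapping each edge of $T$ to its head (in the out-oriented case) or to its tail (in the in-oriented case) therefore defines an injection from the edge set of $T$ into $\mathcal V$, so $T$ has at most $v$ edges. Equivalently, since each $T_i$ is a forest on at most $v$ vertices, $|E(T)| \leqslant v-1$.

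Next I would examine Phase~3 of ConflictFreeLinkSchedule restricted to $T$: the vertices of $T$ are scanned in increasing order of label, and for each such vertex the unique associated edge $x$ is colored by a single call to FirstConflictFreeColor$(x)$. Inspecting Algorithm~\ref{func:first_conflict_free}, that function either returns a color already in $\mathcal C$ (the set of colors in use) or returns the single new color $|\mathcal C|+1$. Hence processing one edge of $T$ creates at most one color that was not already available, and consequently coloring all edges of $T$ creates at most $|E(T)| \leqslant v$ new colors; in particular the edges of $T$ carry at most $O(v)$ distinct colors, which is the asserted bound.

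I would close with a remark explaining why we cannot hope here for the sharper, degree-based bounds obtained for ALSReuseColors (Lemma~\ref{lem:subsequent_colors} and the surrounding lemmas): unlike NCEReuseColors, the function FirstConflictFreeColor may reject a color that is free of primary edge conflicts because the SINR requirement~(\ref{eq:conflict_free}) is violated at some already co-colored receiver, so the greedy counting argument that bounds the number of conflicting colors by $O(\rho^2)$ no longer applies. There is thus no genuine obstacle in the proof itself — the only point requiring care is to make clear that, in the physical-interference setting, the trivial count $|E(T)| \leqslant v$ is the best structural guarantee available, and it already suffices for the complexity analysis that follows.
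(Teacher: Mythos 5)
Your proof is correct and follows essentially the same route as the paper's: the paper simply notes that an oriented graph on $v$ vertices has at most $v$ edges, so at most $v$ colors suffice. Your additional details (the injection from edges to vertices and the observation that each call to FirstConflictFreeColor introduces at most one new color) merely make explicit what the paper leaves implicit.
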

\begin{proof}
  Since an oriented graph with $v$ vertices has at most $v$ edges, the
  edges of $T$ can be colored with at most $v$ colors.
\end{proof}

\begin{lemma}
  For an oriented graph $T$, the running time of
  ConflictFreeLinkSchedule is $O(v^2)$.
\label{lem:oriented_cfls}
\end{lemma}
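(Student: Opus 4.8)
The plan is to mirror the running-time analysis of \cite{ramanathan_lloyd__scheduling_algorithms} (their Lemma~3.2), adapted to the \texttt{FirstConflictFreeColor} subroutine. I read the statement as the cost of running Phase~3 of CFLS on a single oriented graph $T$ over the vertex set $\mathcal V$ with $|\mathcal V| = v$ (the cost of Phases~1 and 2 and of iterating over all oriented graphs being charged separately in the subsequent theorem). Two ingredients suffice: (i) an oriented graph has at most $v$ edges, so Phase~3 on $T$ invokes \texttt{FirstConflictFreeColor} at most $v$ times, once for the unique edge associated with each vertex processed in label order; and (ii) each such invocation runs in $O(v)$ time. Multiplying, Phase~3 on $T$ costs $v \cdot O(v) = O(v^2)$.

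For (ii), fix the edge $x$ being coloured. The routine first forms the set $\mathcal C$ of existing colours and the set $\mathcal C_c$ of colours having a primary edge conflict with $x$; by Lemma~\ref{lem:oriented_colors}, $|\mathcal C| = O(v)$, and scanning the at most $v$ already-coloured edges (equivalently, the edges incident to the two endpoints of $x$) determines $\mathcal C_c$, hence $\mathcal C_{cf} = \mathcal C \setminus \mathcal C_c$, in $O(v)$ time. The \texttt{for} loop then iterates over the colours $r \in \mathcal C_{cf}$ in increasing order, and for each tentatively sets $C(x) = r$ and tests whether the SINR at every receiver of $E_r \cup \{x\}$ is at least $\gamma_c$. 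Maintaining, for every currently coloured receiver, a running value of the accumulated interference-plus-noise in its slot, the test for a given colour $r$ costs $O(|E_r|)$: the SINR at the receiver of $x$ is computed from a sum over the $|E_r|$ transmitters of colour $r$, and the SINR at each of the $|E_r|$ receivers already in $E_r$ is re-evaluated in $O(1)$ by adding the single new interference term contributed by the transmitter of $x$. Since at the time $x$ is processed the colour classes $E_r$ partition a subset of the at most $v$ already-coloured edges, $\sum_{r \in \mathcal C_{cf}} |E_r| \le v$, so the loop costs $O(v)$, and committing $x$ to its colour while updating the stored interference values costs another $O(v)$; thus \texttt{FirstConflictFreeColor} runs in $O(v)$ per edge, and (i)--(ii) give the claimed $O(v^2)$.

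The step requiring the most care is the accounting for the repeated SINR feasibility tests: a naive implementation that recomputes the SINR at every co-coloured receiver from scratch for each candidate colour would yield only an $O(v^3)$ bound, so the argument genuinely relies on maintaining incremental per-receiver interference sums (equivalently, on charging each arithmetic SINR evaluation $O(1)$ work). I would make this bookkeeping explicit in the write-up. The remaining facts — the $O(v)$ bound on the number of colours from Lemma~\ref{lem:oriented_colors} and the $\le v$ bound on the number of edges of an oriented graph — are immediate and need no further argument.
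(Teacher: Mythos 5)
Your proof is correct and follows essentially the same route as the paper: bound the number of edges of $T$ (hence the number of calls to FirstConflictFreeColor) by $v$, and show each call costs $O(v)$ by charging one SINR check to each already-coloured edge and at most $v$ checks to the edge under consideration. The paper compresses this second step into the phrase ``with a careful implementation,'' and your explicit bookkeeping with incremental per-receiver interference sums is exactly the implementation it has in mind.
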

\begin{proof}
  Assuming that an element can be chosen randomly and uniformly from a
  finite set in unit time
  (\cite{motwani_raghavan__randomized_algorithms}, Chapter 1), the
  running time of Phase 1 can be shown to be $O(v)$.  Since there is
  only one oriented graph, Phase 2 runs in time $O(v)$. In Phase 3,
  the unique edge associated with the vertex under consideration is
  assigned a color using FirstConflictFreeColor. From Lemma
  \ref{lem:oriented_colors}, the size of the set of colors to be
  examined $|{\mathcal C}_c \cup {\mathcal C}_{cf}|$ is $O(v)$.  In
  FirstConflictFreeColor, the SINR is checked only once for every
  colored edge in the set $\bigcup_{i=1}^{|{\mathcal C}_{cf}|} E_i$
  and at most $v$ times for the edge under consideration $x$. With a
  careful implementation, FirstConflictFreeColor runs in time $O(v)$.
  So, the running time of Phase 3 is $O(v^2)$.  Thus, the total
  running time is $O(v^2)$.
\end{proof}

\begin{theorem}
  For an arbitrary graph $\mathcal G$, the running time of
  ConflictFreeLinkSchedule is $O(ev\log v + ev\theta)$.
\end{theorem}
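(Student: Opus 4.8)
The plan is to bound the running time of ConflictFreeLinkSchedule phase by phase, reusing the skeleton of the proof of Theorem~\ref{theo:arbitrary_graph} but replacing the maximum-degree estimates used there by estimates in terms of the total edge count $e$, since FirstConflictFreeColor scans candidate colors and verifies SINR conditions rather than merely checking primary and secondary edge conflicts.

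First I would dispose of Phases~1 and~2. Phase~1 performs a uniform random permutation of $(1,\ldots,v)$; assuming a uniform random element of a finite set can be drawn in unit time (\cite{motwani_raghavan__randomized_algorithms}, Chapter~1), this costs $O(v)$. For Phase~2, exactly as in the proof of Theorem~\ref{theo:arbitrary_graph}, the matroid-based decomposition of \cite{gabow_westermann__forests_frames} partitions a graph of thickness $\theta$ into $k \leqslant 6\theta$ oriented graphs in time $O(ev\log v)$, so Phase~2 costs $O(ev\log v)$ and yields $k = O(\theta)$ oriented graphs $T_1,\ldots,T_k$.

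The core of the argument is Phase~3, where I would show that coloring any single oriented graph $T_i$ costs $O(ev)$. Each $T_i$ has at most $v$ edges, and as the $v$ vertices of $T_i$ are scanned in label order, each contributes at most one call to FirstConflictFreeColor. For the edge $x$ under consideration, the conflicting set $\mathcal C_c$ is obtained by scanning the colored edges incident to the endpoints of $x$, of which there are $O(e)$; the candidate set $\mathcal C_{cf}$ has at most $|\mathcal C|$ elements, and since every colored edge carries exactly one color, $|\mathcal C| \leqslant e$. For each candidate color $r$, the set $E := \{h \in \mathcal E_c : C(h)=r\}$ of edges currently bearing color $r$ is retrieved in $O(|E|)$ time from a color-indexed edge list; if, for every color class, one caches the accumulated interference power already present at each of its receivers, then tentatively setting $C(x)=r$ and evaluating the SINR at the receiver of $x$ and at every receiver of $E$ costs only $O(|E|)$ further time. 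Since the sets $E$ arising for distinct candidate colors are pairwise disjoint and all contained in the set of already-colored edges, their sizes sum to at most $e$, so one FirstConflictFreeColor call runs in $O(e)$ time --- this is precisely the estimate behind Lemma~\ref{lem:oriented_cfls}, with the global edge count $e$ now playing the role that $v$ played there by virtue of Lemma~\ref{lem:oriented_colors}. Summing over the $\leqslant v$ edges of $T_i$ gives $O(ev)$ per oriented graph, and over the $k=O(\theta)$ oriented graphs gives $O(ev\theta)$ for Phase~3.

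Adding the three contributions yields $O(v) + O(ev\log v) + O(ev\theta)$, which collapses to $O(ev\log v + ev\theta)$ for any communication graph of a wireless mesh network, since $v = O(ev\log v)$ there. The hard part will be the Phase~3 estimate: one must justify the ``careful implementation'' that keeps each FirstConflictFreeColor call linear in $e$, in particular caching, for every color class, the running interference sums at its receivers so that a tentative insertion of $x$ together with the SINR test costs $O(|E|)$ rather than $O(|E|^2)$; the remaining bookkeeping over the three phases is routine.
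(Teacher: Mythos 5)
Your proposal is correct and follows essentially the same route as the paper's proof: $O(v)$ for the random labeling, $O(ev\log v)$ for the matroid-based partition into $k\leqslant 6\theta$ oriented graphs, and $O(e)$ per call to FirstConflictFreeColor (checking SINR once per already-colored edge and at most $e$ times for $x$), giving $O(ev)$ per oriented graph and $O(ev\theta)$ for Phase 3. Your cached-interference-sum justification merely makes explicit the ``careful implementation'' the paper invokes, and treating all $k$ oriented graphs uniformly by the $O(e)$ bound (rather than singling out $T_1$ as $O(v^2)$ as the paper does) changes nothing in the final estimate.
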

\begin{proof}
  Assuming that an element can be chosen randomly and uniformly from a
  finite set in unit time
  \cite{motwani_raghavan__randomized_algorithms}, the running time of
  Phase 1 can be shown to be $O(v)$.  For Phase 2, the optimal
  partitioning technique of \cite{gabow_westermann__forests_frames}
  based on Matroids can be used to partition the communication graph
  ${\mathcal G}_c$ into at most $6\theta$ oriented graphs in time
  $O(ev\log v)$.  Thus, $k \leqslant 6\theta$ holds for Phase 3.  From
  Lemma \ref{lem:oriented_cfls}, it follows that the first oriented
  graph $T_1$ can be colored in time $O(v^2)$. However, consider the
  coloring of $j^{th}$ oriented graph $T_j$, where $2 \leqslant j
  \leqslant k$.  When coloring edge $x$ from $T_j$ using
  FirstConflictFreeColor, conflicts can occur not only with the
  colored edges of $T_j$, but also with the edges of the previously
  colored oriented graphs $T_1,T_2,\ldots,T_{j-1}$.  Hence, the
  worst-case size of the set of colors to be examined $|{\mathcal C}_c
  \cup {\mathcal C}_{cf}|$ is $O(e)$. Note that in
  FirstConflictFreeColor, the SINR is checked only once for every
  colored edge in the set $\bigcup_{i=1}^{|{\mathcal C}_{cf}|} E_i$
  and at most $e$ times for the edge under consideration $x$.  With a
  careful implementation, FirstConflictFreeColor runs in time $O(e)$.
  Hence, any subsequent oriented graph $T_j$ can be colored in time
  $O(ev)$. Thus, the running time of Phase 3 is $O(ev\theta)$.
  Therefore, the overall running time of ConflictFreeLinkSchedule is
  $O(ev\log v + ev\theta)$.
\end{proof}

\subsection{Discussion}
\label{subsec:cfls_discussion}

In this section, we have developed ConflictFreeLinkSchedule, a point
to point link scheduling algorithm for an STDMA wireless mesh network
under the physical interference model.  The performance of the
proposed algorithm is superior to those of existing link scheduling
algorithms for STDMA wireless networks with uniform power assignment.
A practical experimental modeling shows that, on average, the proposed
algorithm achieves $20\%$ higher spatial reuse than the
ArboricalLinkSchedule \cite{ramanathan_lloyd__scheduling_algorithms},
GreedyPhysical \cite{brar_blough_santi__computationally_efficient} and
Truncated Graph-Based Scheduling
\cite{behzad_rubin__performance_graph} algorithms.  Since link
schedules are constructed offline only once and then used by the
network for a long period of time, these improvements in performance
directly translate to higher long-term network throughput.

The computational complexity of ConflictFreeLinkSchedule is comparable
to the computational complexity of ArboricalLinkSchedule and is much
lower than the computational complexity of GreedyPhysical and
Truncated Graph-Based Scheduling algorithms. Thus, in cognizance of
spatial reuse as well as computational complexity,
ConflictFreeLinkSchedule appears to be a good candidate for efficient
STDMA link scheduling algorithms.

\clearpage{\pagestyle{empty}\cleardoublepage}

\chapter{Point to Point Link Scheduling based on SINR Graph Model}
\label{ch:sinr_graph}

In this chapter, we propound a somewhat different approach for point
to point link scheduling in an STDMA wireless network under the
physical interference model.  This approach is based on SINR graph
representation of the network wherein weights of edges correspond to
interferences between pairs of nodes and weights of vertices
correspond to normalized noise powers at receiving nodes. We develop a
novel link scheduling algorithm with polynomial time complexity and
improved performance in terms of spatial reuse.

The rest of the chapter is organized as follows.  We motivate our SINR
graph approach in Section \ref{sec:motivation_lgls}.  We describe the
proposed link scheduling algorithm and provide an illustrative example
in Section \ref{sec:algo_lgls}. We prove the correctness of the
algorithm and derive its computational complexity in Section
\ref{sec:complexity_lgls}. The performance of the proposed algorithm
is compared with existing link scheduling algorithms in Section
\ref{sec:performance_lgls}.  We discuss the implications of our work
in Section \ref{sec:conclusion_lgls}.

\section{Motivation}
\label{sec:motivation_lgls}

The system model, notation and problem formulation are exactly as
described in Section \ref{subsec:problem_formulation_cfls}.
Specifically, we seek a low complexity conflict-free point to point
link scheduling algorithm that achieves high spatial reuse.

In general, for the STDMA wireless network $\Phi(\cdot)$, the set of
links to be scheduled is determined by a routing algorithm.  For
simplicity, we only consider exhaustive link schedules, i.e., we
consider uniform load on all links.

Note that for point to point link schedules that are conflict-free,
i.e., for link schedules that satisfy
(\ref{eq:conflict_free}), the equation for spatial reuse
(\ref{eq:spatial_reuse}) reduces to
\begin{eqnarray}
\mbox{Spatial Reuse} &=& \sigma \;=\; \frac{e}{C},
\label{eq:spatial_reuse_conflict_free}
\end{eqnarray}
where $e$ denotes the number of directed edges in the communication
graph $\mathcal G_c(\mathcal V,\mathcal E_c)$ and $C$ denotes the
number of slots in the link schedule.  Therefore, for conflict-free
link schedules, maximizing spatial reuse is equivalent to minimizing
the number of colors, i.e., minimizing the schedule length.

To the best of our knowledge, there is no known polynomial time
algorithm that determines a provably optimal schedule (minimum length
schedule) for an STDMA wireless network with constrained transmission
power. Hence, the only recourse is to devise heuristics and show their
efficiency theoretically and experimentally. Towards this end, we
propose a heuristic based on an SINR graph representation of the
network.

Consider any directed graph $G(V,E)$, where $V$ is the set of vertices
and $E$ is the set of edges. The line graph of $G(V,E)$ is the graph
$G'(V',E')$ whose vertices are the edges of $G(\cdot)$, i.e., $V'=E$
\cite{west__graph_theory}.  The SINR graph that we consider in this
chapter is analogous to the concept of line graph in
\cite{west__graph_theory}.  However, unlike the line graph, we assume
that the SINR graph is a complete graph, i.e., for any two distinct
vertices $v_i'$, $v_j'$ $\in V'$, there is a directed edge from $v_i'$
to $v_j'$ in $E'$.

The crux of the proposed  link scheduling algorithm can
be understood by revisiting the condition for successful packet
reception under the physical interference model (Equation
\ref{eq:sinr_ge_gammac}), i.e.,
\begin{eqnarray}
\frac{\frac{P}{D^\beta(t_{i,j},r_{i,j})}}{N_0+\sum_{\stackrel{k=1}{k\neq j}}^{M_i} \frac{P}{D^\beta(t_{i,k},r_{i,j})}} \geqslant \gamma_c.
\label{eq:sinr_slot_agnostic}
\end{eqnarray}
Rearranging the terms in (\ref{eq:sinr_slot_agnostic}), we obtain
\begin{eqnarray}
\frac{N_0\gamma_c}{P}D^\beta(t_{i,j},r_{i,j}) 
 + \sum_{\stackrel{k=1}{k\neq j}}^{M_i}
 \gamma_c \frac{D^\beta(t_{i,j},r_{i,j})}{D^\beta(t_{i,k},r_{i,j})}
&\leqslant& 1.
\label{eq:sinr_gammac_manipulation}
\end{eqnarray}
Dropping time slot index $i$ for clarity, we obtain the
``equivalent'' SINR threshold condition
\begin{eqnarray}
\frac{N_0\gamma_c}{P}D^\beta(t_j,r_j) + \sum_{\stackrel{k=1}{k\neq j}}^{M}
 \gamma_c \frac{D^\beta(t_j,r_j)}{D^\beta(t_k,r_j)}
&\leqslant& 1,
\label{eq:sinr_gammac_rearranged}
\end{eqnarray}
where $t_j$, $r_j$ and $M$ can be interpreted as $j^{th}$ transmitter,
$j^{th}$ receiver and number of concurrent transmissions,
respectively, in a given time slot.  The terms appearing in
(\ref{eq:sinr_gammac_rearranged}) correspond to vertex and edge
weights in a special graph representation of the STDMA network, termed
as SINR graph. This idea will be elucidated further in Section
\ref{subsec:description_lgls}.

\section{SINRGraphLinkSchedule Algorithm}
\label{sec:algo_lgls}

In this section, we explain the proposed link scheduling algorithm
based on SINR graph representation of the STDMA network. We provide an
illustrative example to elucidate the intricacies of the proposed
algorithm.

\subsection{Description}
\label{subsec:description_lgls}

\renewcommand{\baselinestretch}{1.2}\Large\normalsize

\begin{algorithm}[tbhp]
\caption{SINRGraphLinkSchedule (SGLS)}
\label{algo:line_graph}
\begin{algorithmic}[1]
\STATE{\bf Input:} Communication graph ${\mathcal G}_c$(${\mathcal V}$,${\mathcal E_c}$), $\gamma_c$, $N_0$, $P$
\STATE{\bf Output:} A coloring ${\mathcal C}$: ${\mathcal E_c} \rightarrow \{1,2,\ldots\}$ 
\STATE ${\mathcal V'} \leftarrow {\mathcal E_c}$
\STATE Construct the directed complete graph $\mathcal G'(\mathcal V',\mathcal E')$
\FORALL{$e'_{ij} \in {\mathcal E'}$}
\IF {edges $i$ and $j$ have a common vertex in $\mathcal G_c(\cdot)$}
\STATE $w_{ij} \leftarrow 1$
\ELSE
\STATE $w_{ij} \leftarrow \gamma_c \frac{{D(t_j,r_j)}^\beta}{{D(t_i,r_j)}^\beta}$
\ENDIF
\ENDFOR
\FORALL{$e'_{ij} \in {\mathcal E'}$}
\STATE $w'_{ij} \leftarrow \max\{0,1-w_{ij}\}$
\ENDFOR
\FORALL{$v_j' \in {\mathcal V'}$}
\STATE ${\mathcal N}(v_j') \leftarrow \frac{N_0\gamma_c}{P} {{D(t_j,r_j)}^\beta}$
\ENDFOR
\STATE $p \leftarrow 0$; ${\mathcal V_{uc}'} \leftarrow {\mathcal V'}$
\WHILE {${\mathcal V_{uc}'} \neq \phi$}
\STATE $p \leftarrow p+1$; choose $v' \in \mathcal V_{uc}'$ randomly
\STATE ${\mathcal C}(v') \leftarrow p$; ${\mathcal V_{uc}'} \leftarrow {\mathcal V_{uc}'} \setminus \{v'\}$; ${\mathcal V_{c_p}'} \leftarrow \{v'\}$; $\psi \leftarrow \mbox{1}$
\WHILE {$\psi=1$ and $\mathcal V_{uc}' \neq \phi$ and $\max_{y' \in {\mathcal V}_{uc}'} \sum_{x' \in {\mathcal V}_{c_p}'} w'_{x'y'}+w'_{y'x'} > 0$}
\FORALL{$u' \in \mathcal V_{uc}'$ such that $\sum_{x' \in {\mathcal V}_{c_p}'} w'_{x'u'}+w'_{u'x'} > 0$}
\STATE $\varrho \leftarrow 1$
\FORALL {$v_c' \in {\mathcal V_{c_p}'}$}
\IF {$\sum_{v_1' \in {\mathcal V_{c_p}'} \setminus \{v_c'\} \cup \{u'\}} w'_{{v_1'}{v_c'}} \leqslant |{\mathcal V_{c_p}'}|+{\mathcal N}(v_c')-1$}
\STATE $\varrho \leftarrow 0$
\ENDIF
\ENDFOR
\IF {$\varrho=1$ and $\sum_{v_2' \in {\mathcal V_{c_p}'}} w'_{{v_2'}u'} > |{\mathcal V_{c_p}'}|+{\mathcal N}(u')-1$}
\STATE ${\mathcal C}(u') \leftarrow p$; ${\mathcal V_{c_p}'} \leftarrow {\mathcal V_{c_p}'} \cup \{u'\}$; ${\mathcal V_{uc}'} \leftarrow {\mathcal V_{uc}'} \setminus \{u'\}$
\ELSE
\STATE $\varrho \leftarrow 0$
\ENDIF
\ENDFOR
\IF{$\varrho=0$}
\STATE $\psi \leftarrow 0$
\ENDIF
\ENDWHILE
\ENDWHILE
\end{algorithmic}
\end{algorithm}

\renewcommand{\baselinestretch}{1.5}\Large\normalsize

The proposed link scheduling algorithm under the physical interference
model is SINRGraphLinkSchedule (SGLS), which considers the
communication graph $\mathcal G_c(\mathcal V,\mathcal E_c)$.

First, we construct a directed complete SINR graph ${\mathcal
  G'}({\mathcal V'},{\mathcal E'})$ that has the edges of $\mathcal
G_c(\cdot)$ as its vertices, i.e., $\mathcal V' = \mathcal E_c$.  Let
the edges of $\mathcal G_c(\cdot)$ and the corresponding vertices of
$\mathcal G'(\cdot)$ be labeled $1,2,\ldots,e$. Let $t_i$ and $r_i$
denote the transmitter and receiver respectively of edge $i$ in
$\mathcal G_c(\cdot)$.  For any two edges $i$ and $j$ in graph
$\mathcal G_c(\cdot)$, the {\em interference weight function} $w_{ij}$
is defined as:
\begin{eqnarray*}
w_{ij}
&:=&
\left\{
\begin{array}{ll}
1 & \mbox{if $i$ and $j$ have a common vertex}, \\
\gamma_c \frac{D(t_j,r_j)^\beta}{D(t_i,r_j)^\beta} & \mbox{otherwise}.
\end{array}
\right.
\end{eqnarray*}
The interference weight function $w_{ij}$ indicates the interference
energy at $r_j$ due to transmission from $t_i$ to $r_i$ scaled with
respect to the signal energy of $t_j$ at $r_j$. Note that the
interference weight function appears as a summand in the equivalent
SINR threshold condition (\ref{eq:sinr_gammac_rearranged}).

We then compute the {\em co-schedulability weight function} $w'$. For
any two edges $i$ and $j$ in $\mathcal G_c(\cdot)$, the weight of edge
$e'_{ij}$ in ${\mathcal G'}(\cdot)$ is given by
$w'_{ij}=\max\{0,1-w_{ij}\}$.  Since $w_{ij}$ and $w_{ji}$ represent
interferences among links $i$ and $j$ in the STDMA network
$\Phi(\cdot)$, $w'_{ij}$ and $w'_{ji}$ intuitively represent the
co-schedulability of links $i$ and $j$ in $\Phi(\cdot)$ (equivalently,
co-schedulability of vertices $i$ and $j$ in $\mathcal G'(\cdot)$).
For example, if $w_{ij}$ is greater than or equal to $1$, then the
interference at the receiver of link $j$ from the transmitter of link
$i$ is very high and these links cannot be scheduled simultaneously.
This will result in $w'_{ij}$ being equal to $0$ indicating that
vertices $i$ and $j$ in $\mathcal G'(\cdot)$ are not co-schedulable.
On the other hand, if $w_{ij}$ is slightly greater than $0$ $(0 <
w_{ij} \ll 1)$, $w'_{ij}$ will be slightly less than $1$ indicating
that the vertices $i$ and $j$ in $\mathcal G'(\cdot)$ are
co-schedulable.  Note that for the SINR graph $\mathcal G'(\cdot)$,
the weight of an edge refers to the value of co-schedulability
function for that edge.

Next, we determine the normalized noise power at the receiver of each
link of $\Phi(\cdot)$. This is tantamount to computing the normalized
noise power for each edge of $\mathcal G_c(\cdot)$, i.e., at each
vertex of ${\mathcal G'}(\cdot)$.  Note that the normalized noise
power function appears as a term in the equivalent SINR threshold
condition (\ref{eq:sinr_gammac_rearranged}).

Our objective is to color the vertices of ${\mathcal G'}(\cdot)$
(equivalently, edges of $\mathcal G_c(\cdot)$) using minimum number of
colors under the physical interference model, i.e., subject to the
condition that the SINR at the receiver of every link in $\Phi(\cdot)$
is no less than the communication threshold $\gamma_c$.  Equivalently,
for any ${\mathcal V_{cc}'} \subseteq {\mathcal V'}$, the coloring of
all vertices $v_i' \in {\mathcal V_{cc}'}$ with the same color is
defined to be {\em feasible} if
\begin{eqnarray}
\frac{\frac{P}{{D(t_{v_i'},r_{v_i'})}^\beta}}{N_0+\sum_{v_j'\in{\mathcal
V_{cc}'} \setminus \{v_i'\}}\frac{P}{{D(t_{v_j'},r_{v_i'})}^\beta}}
\geqslant \gamma_c \;\;\forall\;\; v_i' \in {\mathcal V_{cc}'}.
\label{eq:feasible_condition}
\end{eqnarray}
In the SINR graph $\mathcal G'(\cdot)$, this condition translates to
the sum of weights of edges incoming to a vertex from all co-colored
vertices being greater than the sum of the number of {\em remaining}
co-colored vertices and the normalized noise power minus a constant
factor (unity); this will be proved in Theorem
\ref{theo:correctness_lgls}.

Finally, we color vertices of $\mathcal G'(\cdot)$, i.e., edges of
$\mathcal G_c(\cdot)$, according to the following procedure.  Let
${\mathcal V_{uc}'}$ denote the set of uncolored vertices of $\mathcal
G'(\cdot)$.  Initially, $\mathcal V_{uc}'$ includes all vertices of
$\mathcal G'(\cdot)$.  First, we choose a vertex randomly from
${\mathcal V_{uc}'}$. This is assigned a new color, say $p$. Then, we
consider every vertex $u'$ from ${\mathcal V_{uc}'}$ such that the sum
of weights of all the edges between $u'$ and the vertices colored with
$p$ is positive.  Next, for each vertex colored with $p$, we check if
the sum of weights of all incoming edges is greater than the sum of
the number of vertices colored with $p$ and the normalized noise power
at that vertex minus a constant factor (unity).  If this inequality is
satisfied, we further check if the sum of weights of all edges
incoming to $u$ is greater than the sum of the number of vertices
colored with $p$ and the normalized noise power at $u'$ minus unity.
If this inequality is also satisfied, then vertex $u'$ is colored with
$p$. If any of these inequalities are not satisfied, vertex $u'$ is
colored with a new color. The algorithm exits when all the vertices
are colored.  The pseudocode of the algorithm is provided in Algorithm
\ref{algo:line_graph}.

\subsection{Example}
\label{subsec:example_lgls}

Consider the STDMA wireless network $\Phi(\cdot)$ whose deployment is
shown in Figure \ref{fig:deployment_lgls}. It consists of four labeled
nodes whose coordinates (in meters) are $1 \equiv (-40,5)$, $2 \equiv
(0,0)$, $3 \equiv (95,0)$ and $4 \equiv (135,0)$.  We use typical
values of system parameters in wireless networks
\cite{kim_lim_hou__improving_spatial}.  These values are shown in
Table \ref{tab:system_parameters_example}, which lead to $R_c = 100$
m.

\begin{figure}[thbp]
  \centering
  \includegraphics[width=6in]{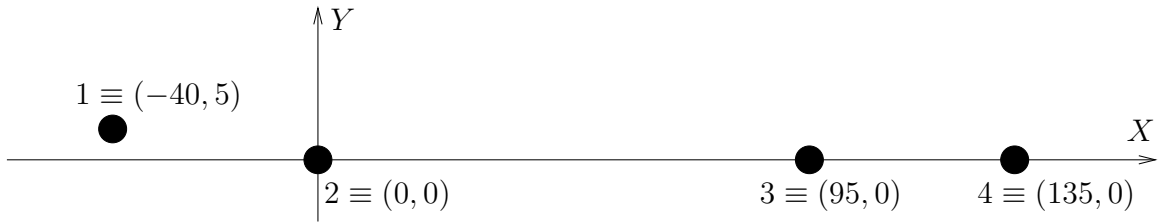}
  \caption{An STDMA wireless network with four nodes.}
  \label{fig:deployment_lgls}
\end{figure}

\begin{table}[tbhp]
  \centering
  \begin{tabular}{|l|l|l|} \hline
    Parameter & Symbol & Value \\ \hline
    transmission power & $P$ & 10 mW \\ \hline
    path loss exponent & $\beta$ & 4 \\ \hline
    noise power spectral density & $N_0$ & -90 dBm \\ \hline
    communication threshold & $\gamma_c$ & 20 dB \\ \hline
  \end{tabular}
  \caption{System parameters for the STDMA network shown in Figure 
    \ref{fig:deployment_lgls}.}
  \label{tab:system_parameters_example}
\end{table}

The communication graph model of the STDMA network is
shown in Figure \ref{fig:comm_graph_lgls}. The communication graph
$\mathcal G_c(\mathcal V,\mathcal E_c)$ consists of four vertices and six
directed edges. The vertex and edge sets are given by
\begin{eqnarray}
\mathcal V &=& \{v_1,v_2,v_3,v_4\}, \\
\mathcal E_c &=& \{(1,2),(2,1),(2,3),(3,2),(3,4),(4,3)\}.
\end{eqnarray}

\begin{figure}[thbp]
  \centering
  \includegraphics[width=5in]{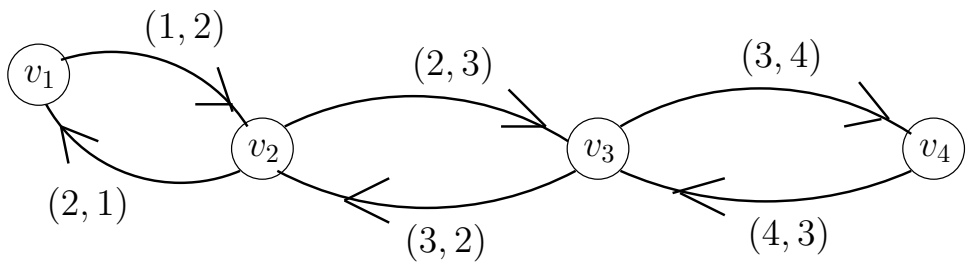}
  \caption{Communication graph model of STDMA network described by
    Figure \ref{fig:deployment_lgls} and Table
    \ref{tab:system_parameters_example}.}
  \label{fig:comm_graph_lgls}
\end{figure}

The SINR graph model of the communication graph $\mathcal G_c(\mathcal
V,\mathcal E_c)$ is shown in Figure \ref{fig:line_graph_lgls}.  The
SINR graph $\mathcal G'(\mathcal V',\mathcal E')$ is a complete graph
and consists of six vertices and thirty directed edges.  The vertex
set of the SINR graph is given by
\begin{eqnarray}
\mathcal V' &=& \{(1,2),(2,1),(2,3),(3,2),(3,4),(4,3)\}.
\end{eqnarray}
The edge set $\mathcal E'$ of the SINR graph is enumerated in Table
\ref{tab:weights_lgls}, along with the interference weight function
$w_{ij}$ and co-schedulability weight function $w_{ij}'$ for each edge
$i \rightarrow j \in \mathcal G'(\cdot)$. The normalized noise powers
at vertices of the SINR graph are enumerated in Table
\ref{tab:noise_powers_lgls}.

\begin{figure}[thbp]
  \centering
  \includegraphics[width=5in]{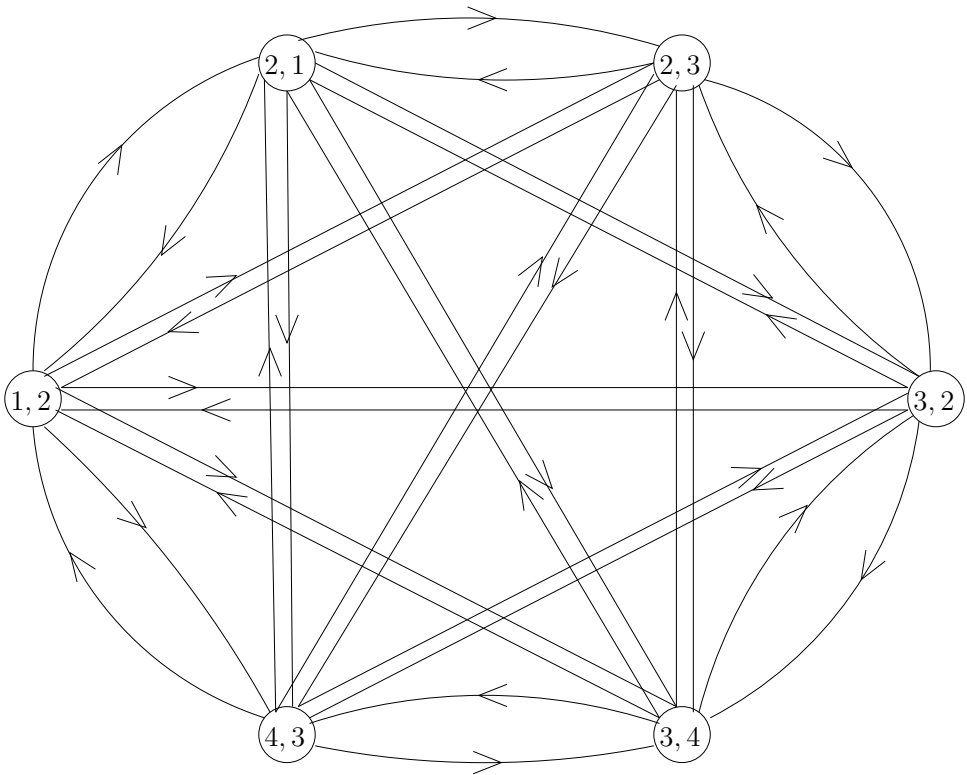}
  \caption{SINR graph model of communication graph shown in Figure
    \ref{fig:comm_graph_lgls}.}
  \label{fig:line_graph_lgls}
\end{figure}

\begin{table}[tbhp]
\begin{tabular}{|l|l|l||l|l|l|} \hline
Edge $i \rightarrow j$ of & & & Edge $i \rightarrow j$ of & & \\
SINR graph $\mathcal G'(\mathcal V',\mathcal E')$ & $w_{ij}$ & $w_{ij}'$ 
  & SINR graph $\mathcal G'(\mathcal V',\mathcal E')$ & $w_{ij}$ & $w_{ij}'$ 
    \\ \hline
$(1,2) \rightarrow (2,1)$ & 1 & 0 & $(3,2) \rightarrow (1,2)$ & 1 & 0 \\ \hline
$(2,1) \rightarrow (1,2)$ & 1 & 0 & 
  $(1,2) \rightarrow (3,4)$ & 0.2725 & 0.7275 \\ \hline
$(2,1) \rightarrow (2,3)$ & 1 & 0 & 
  $(3,4) \rightarrow (1,2)$ & 3.2420 & 0 \\ \hline
$(2,3) \rightarrow (2,1)$ & 1 & 0 & $(2,1) \rightarrow (3,2)$ & 1 & 0 \\ \hline
$(2,3) \rightarrow (3,2)$ & 1 & 0 & $(3,2) \rightarrow (2,1)$ & 1 & 0 \\ \hline
$(3,2) \rightarrow (2,3)$ & 1 & 0 & 
  $(2,1) \rightarrow (3,4)$ & 0.7707 & 0.2293 \\ \hline
$(3,2) \rightarrow (3,4)$ & 1 & 0 & 
  $(3,4) \rightarrow (2,1)$ & 0.7928 & 0.2072 \\ \hline
$(3,4) \rightarrow (3,2)$ & 1 & 0 & 
  $(2,1) \rightarrow (4,3)$ & 3.1430 & 0 \\ \hline
$(3,4) \rightarrow (4,3)$ & 1 & 0 & 
  $(4,3) \rightarrow (2,1)$ & 0.2811 & 0.7189 \\ \hline
$(4,3) \rightarrow (3,4)$ & 1 & 0 & $(2,3) \rightarrow (3,4)$ & 1 & 0 \\ \hline
$(4,3) \rightarrow (1,2)$ & 0.7950 & 0.2050 & 
  $(3,4) \rightarrow (2,3)$ & 1 & 0 \\ \hline
$(1,2) \rightarrow (4,3)$ & 0.7686 & 0.2314 &
  $(2,3) \rightarrow (4,3)$ & 1 & 0 \\ \hline
$(1,2) \rightarrow (2,3)$ & 1 & 0 & $(4,3) \rightarrow (2,3)$ & 1 & 0 \\ \hline
$(2,3) \rightarrow (1,2)$ & 1 & 0 & $(3,2) \rightarrow (4,3)$ & 1 & 0 \\ \hline
$(1,2) \rightarrow (3,2)$ & 1 & 0 & $(4,3) \rightarrow (3,2)$ & 1 & 0 \\ \hline
\end{tabular}
\caption{Interference and co-schedulability weight functions for 
  edges of SINR graph shown in Figure \ref{fig:line_graph_lgls}.}
\label{tab:weights_lgls}
\end{table}

\begin{table}
  \centering
  \begin{tabular}{|l|l|} \hline
    Vertex $v_j'$ of & \\
    SINR graph $\mathcal G'(\mathcal V',\mathcal E')$ & 
    $\mathcal N(v_j')$ \\ \hline
    (1,2) & 0.0264 \\ \hline
    (2,1) & 0.0264 \\ \hline
    (2,3) & 0.8145 \\ \hline
    (3,2) & 0.8145 \\ \hline
    (3,4) & 0.0256 \\ \hline
    (4,3) & 0.0256 \\ \hline
  \end{tabular}
  \caption{Normalized noise powers at vertices of SINR graph shown in 
    Figure \ref{fig:line_graph_lgls}.}
  \label{tab:noise_powers_lgls}
\end{table}

The truncated SINR graph $\mathcal G_t'(\mathcal V',\mathcal E_t')$ is
shown in Figure \ref{fig:truncated_line_graph}.  The truncated SINR
graph consists of all vertices of the SINR graph and only those edges
whose co-schedulability weight function is positive, i.e., $\mathcal
E_t' = \{(i,j): i,j \in \mathcal E_c \mbox{ and }w_{ij}' > 0\}$. The
values of the co-schedulability weight functions for all edges and the
normalized noise powers at all vertices are also shown in the figure.
We use the truncated SINR graph to explain the SGLS algorithm, since
edges having zero weight in the SINR graph do not play any role in the
SGLS algorithm.  Note that, in the truncated SINR graph, the weight of
an edge refers to the value of the co-schedulability weight function
for that edge.

\begin{figure}[thbp]
  \centering
  \includegraphics[width=5in]{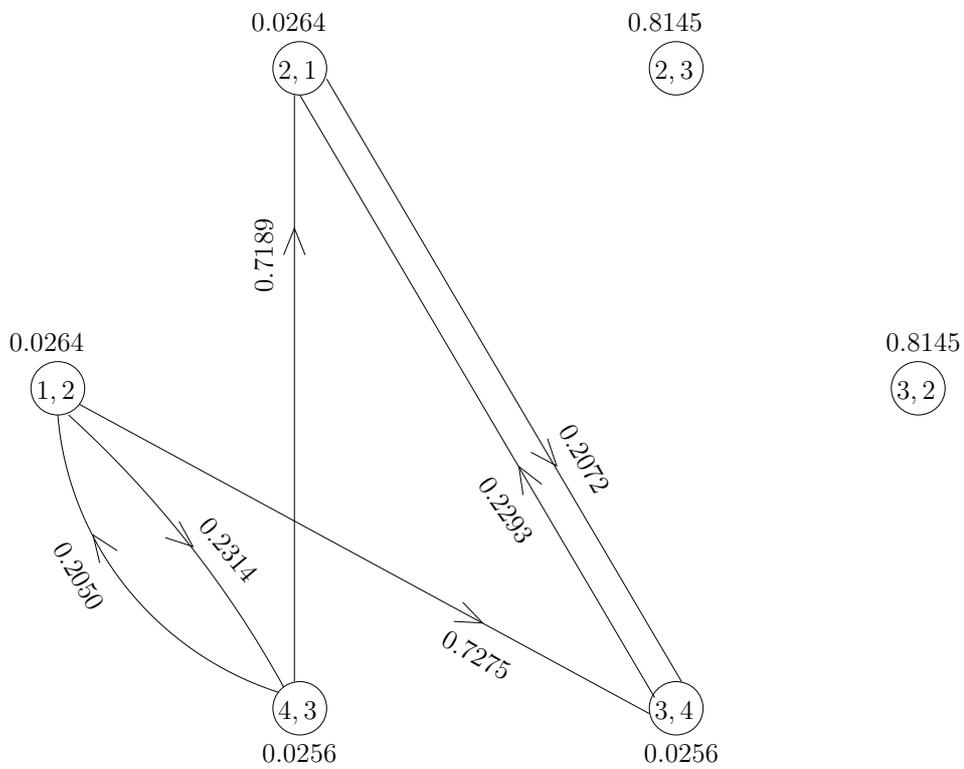}
  \caption{Truncated SINR graph derived from SINR graph shown in
    Figure \ref{fig:line_graph_lgls} and weight values given in Tables
    \ref{tab:weights_lgls} and \ref{tab:noise_powers_lgls}.}
  \label{fig:truncated_line_graph}
\end{figure}

Initially, the set of uncolored vertices is $\mathcal V_{uc}'=\{(1,2),
(2,1), (2,3), (3,2), (3,4), (4,3)\}$. In the first iteration, we
randomly choose $v'=(1,2)$ and assign it Color $1$ (say, red). So,
$\mathcal C(1,2)=1$. The set of uncolored vertices is $\mathcal
V_{uc}'=\{(2,1), (2,3), (3,2), (3,4), (4,3)\}$ and the set of vertices
colored 1 is $\mathcal V_{c_1}'=\{(1,2)\}$.  From the set of uncolored
vertices $\mathcal V_{uc}'$, we consider every vertex $u'$ such that
the sum of weights of edges from the presently colored vertex $(1,2)$
to $u'$ and from $u'$ to the presently colored vertex is positive.
From Figure \ref{fig:truncated_line_graph}, we obtain two candidates:
$u'=(3,4)$ and $u'=(4,3)$.  We first examine the candidate vertex
$(3,4)$.  We check if the weight of the edge from $(3,4)$ to the
presently colored vertex $(1,2)$ is no greater than the number of
vertices colored with the present color (red) plus the normalized
noise power at the colored vertex minus unity.  Our calculations show
that inequality holds $(0 < 0.0264)$ and candidate vertex $(3,4)$
cannot be assigned Color 1.  We next examine the candidate vertex
$(4,3)$.  We check if the weight of the edge from the candidate vertex
to $(1,2)$ is no greater than the number of vertices colored with the
present color plus the normalized noise power at $(1,2)$ minus unity.
Our calculations show that inequality does not hold $(0.2050
\not\leqslant 0.0264)$.  Furthermore, we check if the weight of the
edge from the presently colored vertex $(1,2)$ to the candidate vertex
$(4,3)$ is greater than the number of vertices colored red plus the
normalized noise power at $(4,3)$ minus unity.  The inequality holds
and hence the candidate vertex $(4,3)$ is assigned Color 1 (red). So,
$\mathcal C(4,3)=1$.  The set of uncolored vertices is $\mathcal
V_{uc}'=\{(2,1), (2,3), (3,2), (3,4)\}$ and the set of vertices
colored 1 is $\mathcal V_{c_1}'=\{(1,2),(4,3)\}$.  Again, from the set
of uncolored vertices $\mathcal V_{uc}'$, we consider every vertex
$u'$ such that the sum of weights of edges from the presently colored
vertices $\{(1,2),(4,3)\}$ to $u'$ and from $u'$ to the presently
colored vertices is positive.  From Figure
\ref{fig:truncated_line_graph}, the candidate vertices are $(2,1)$ and
$(3,4)$. Consider the candidate vertex $(2,1)$. For every vertex
$v_c'$ colored 1, we check if the sum of weights of edges from
remaining co-colored vertices and the candidate vertex to the colored
vertex is no greater than the number of co-colored vertices and the
normalized noise power at the colored vertex minus unity.  For the
colored vertex $(1,2)$, our calculations show that inequality holds
$(0.2050 \leqslant 1.0264)$. So, we discard $(2,1)$, consider the next
candidate vertex $(3,4)$ and perform an analogous comparison with
$v_c'=(1,2)$.  Since inequality holds in this case too $(0.2050
\leqslant 1.0264)$, we discard $(3,4)$ and proceed to the next
iteration. The set of vertices colored so far is shown in Figure
\ref{fig:sgls_1}.

\begin{figure}[thbp]
  \centering
  \includegraphics[width=5in]{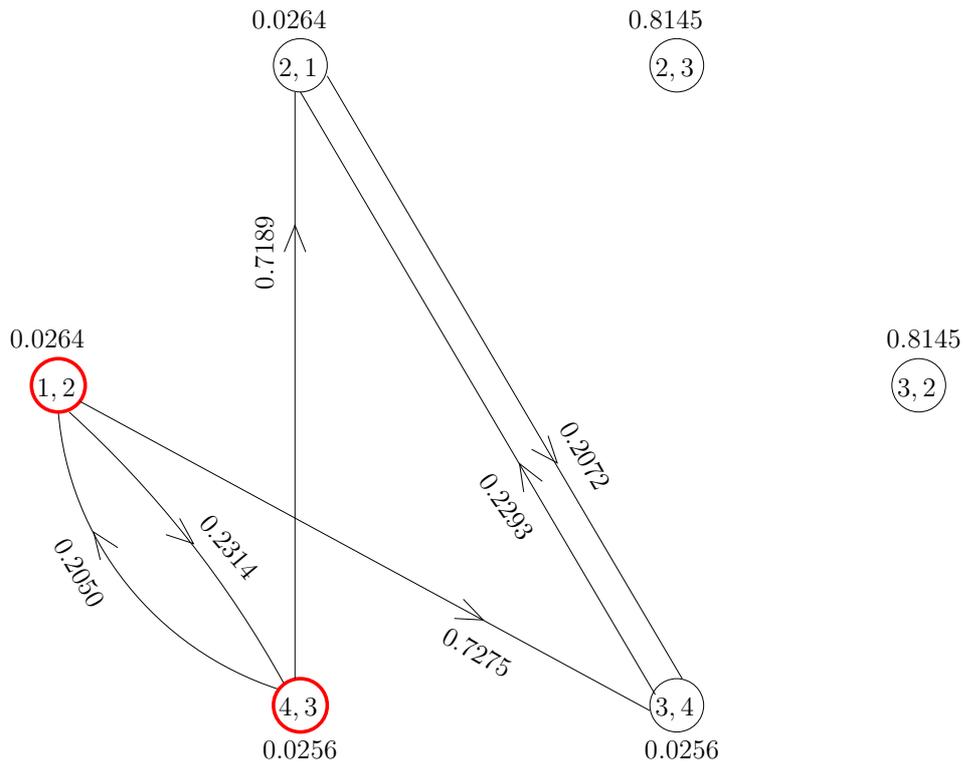}
  \caption{Coloring of vertices of truncated SINR graph after first
    iteration of SGLS algorithm.}
  \label{fig:sgls_1}
\end{figure}

In the second iteration, we randomly choose $v'=(2,3)$ and assign it
Color 2 (say, blue).  So, $\mathcal C(2,3)=2$.  The set of uncolored
vertices is $\mathcal V_{uc}'=\{(2,1), (3,2), (3,4)\}$ and the set of
vertices colored 2 is $\mathcal V_{c_2}'=\{(2,3)\}$.  From the set of
uncolored vertices $\mathcal V_{uc}'$, we consider every vertex $u'$
such that the sum of weights of edges from the presently colored
vertex $(2,3)$ to $u'$ and from $u'$ to the presently colored vertex
is positive.  From Figure \ref{fig:sgls_1}, no such vertex exists.
So, we proceed to the next iteration.  The vertices colored so far are
shown in Figure \ref{fig:sgls_2}.

\begin{figure}[thbp]
  \centering
  \includegraphics[width=5in]{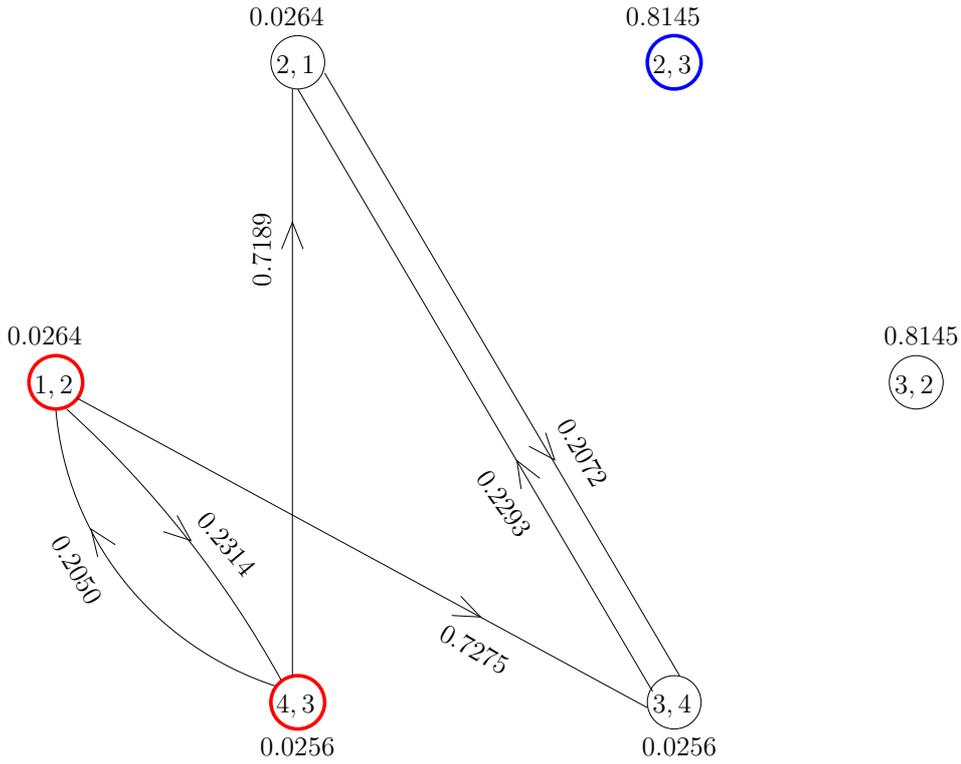}
  \caption{Coloring of vertices of truncated SINR graph after second
    iteration of SGLS algorithm.}
  \label{fig:sgls_2}
\end{figure}

In the third iteration, we randomly choose $v'=(3,4)$ and assign it
Color 3 (say, green).  So, $\mathcal C(3,4)=3$. The set of uncolored
vertices is $\mathcal V_{uc}'=\{(2,1), (3,2)\}$ and the set of
vertices colored 3 is $\mathcal V_{c_3}'=\{(3,4)\}$.  From the set of
uncolored vertices $\mathcal V_{uc}'$, we consider every vertex $u'$
such that the sum of weights of edges from the presently colored
vertex $(3,4)$ to $u'$ and from $u'$ to the presently colored vertex
is positive. From Figure \ref{fig:truncated_line_graph}, we obtain
$u'=(2,1)$ as the only candidate vertex.  Next, we check if the weight
of the edge from the candidate vertex $(2,1)$ to the presently colored
vertex $(3,4)$ is no greater than the number of vertices colored with
the present color (green) plus the normalized noise power at the
colored vertex minus unity.  Our calculations show that inequality
does not hold $(0.2293 \not\leqslant 0.0256)$.  So, we further check
if the weight of the edge from the presently colored vertex $(3,4)$ to
the candidate vertex $(2,1)$ exceeds the number of vertices colored
with the present color plus the normalized noise power at the
candidate vertex minus unity.  Since the inequality holds $(0.2072 >
0.0264)$, the candidate vertex $(2,1)$ is assigned Color 3 (green).
So, $\mathcal C(2,1)=3$.  The set of uncolored vertices is $\mathcal
V_{uc}'=\{(3,2)\}$ and the set of vertices colored green is $\mathcal
V_{c_3}'=\{(3,4),(2,1)\}$.  Next, from the set of uncolored vertices
$\mathcal V_{uc}'$, we choose that uncolored vertex $u'$ such that the
sum of weights of edges from $u'$ to the set of presently colored
vertices $\{(3,4),(2,1)\}$ and from $\{(3,4),(2,1)\}$ to $u'$ is
positive. From Figure \ref{fig:sgls_2}, no such vertex $u'$ exists.
So, we proceed to the next iteration. Figure \ref{fig:sgls_3} shows
the set of vertices colored so far.

\begin{figure}[thbp]
  \centering
  \includegraphics[width=5in]{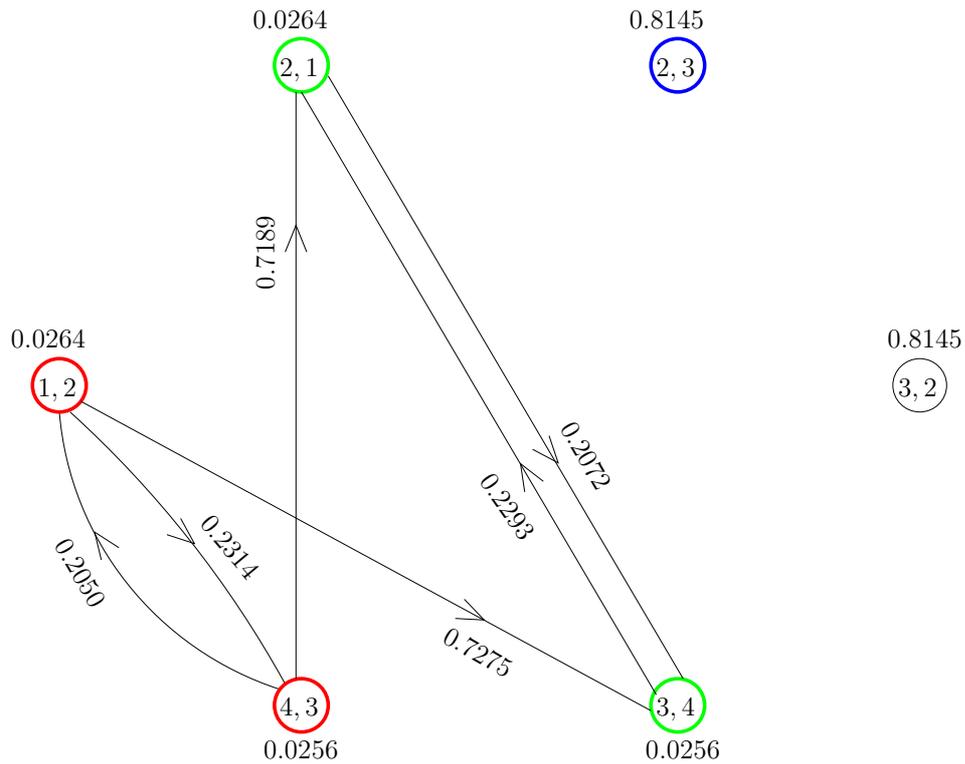}
  \caption{Coloring of vertices of truncated SINR graph after third
    iteration of SGLS algorithm.}
  \label{fig:sgls_3}
\end{figure}

In the fourth iteration, $(3,2)$ is the only uncolored vertex. So, we
choose $v'=(3,2)$ and assign it Color 4 (say, pink).  The set of
vertices colored 4 is $\mathcal V_{c_4}'=\{(3,2)\}$ and the set of
uncolored vertices is $\mathcal V_{uc}'=\phi$.  So, the algorithm
ends. The final coloring of vertices of the truncated SINR graph by
SGLS algorithm is shown in Figure \ref{fig:sgls_4}.

\begin{figure}[thbp]
  \centering
  \includegraphics[width=5in]{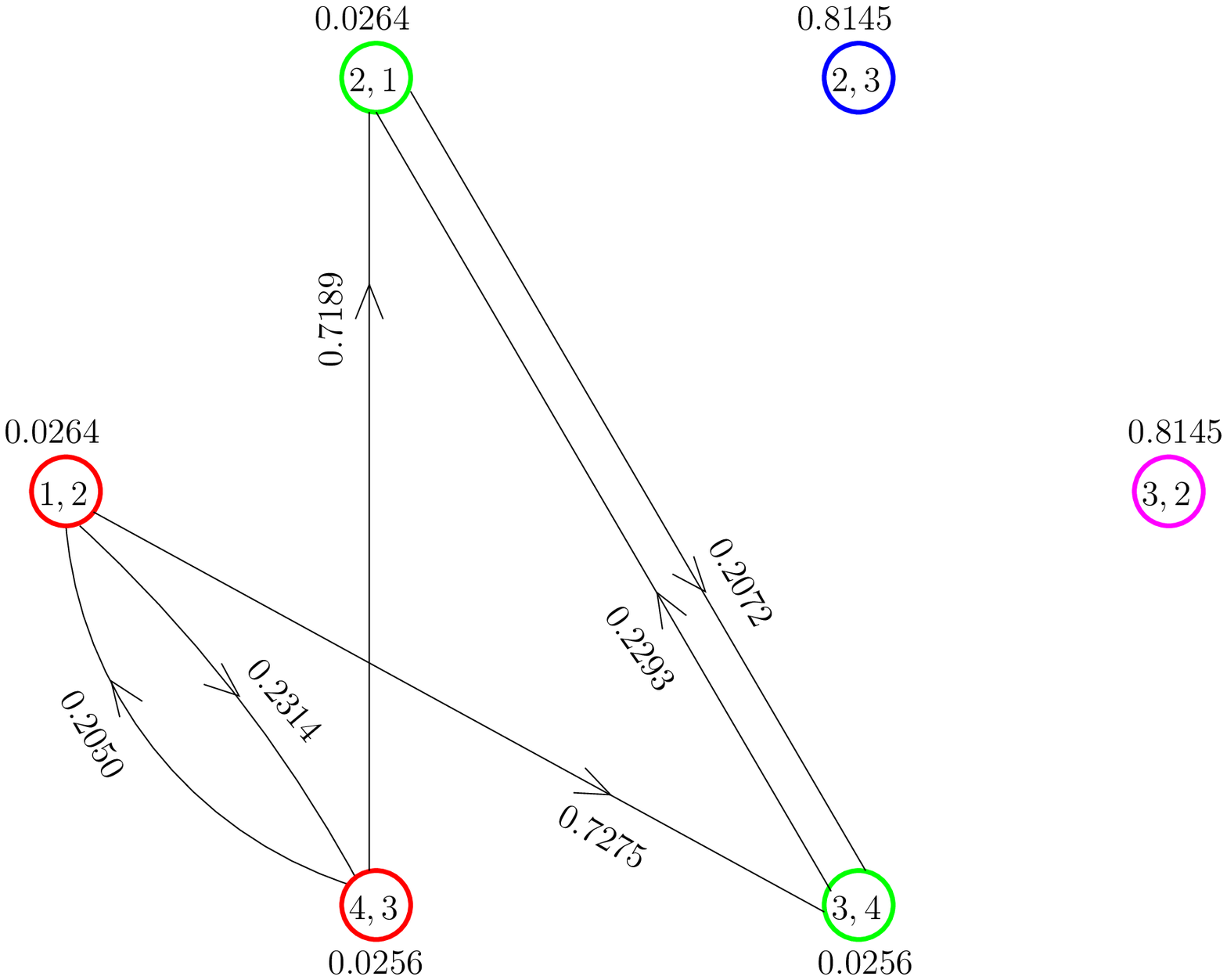}
  \caption{Coloring of vertices of truncated SINR graph after complete
    execution of SGLS algorithm.}
  \label{fig:sgls_4}
\end{figure}

The output of the SGLS algorithm is enumerated in Table
\ref{tab:output_lgls} and is also shown pictorially in Figure
\ref{fig:output_lgls}. The resulting link schedule is denoted by
$\Psi(\mathcal S_1,\mathcal S_2,\mathcal S_3, \mathcal S_4,\mathcal
S_5)$, where
\begin{eqnarray*}
\mathcal S_1 &=& \{1 \rightarrow 2, \; 4 \rightarrow 3\}, \\
\mathcal S_2 &=& \{2 \rightarrow 3\}, \\
\mathcal S_3 &=& \{3 \rightarrow 4, \; 2 \rightarrow 1\}, \\
\mathcal S_4 &=& \{3 \rightarrow 2\}.
\end{eqnarray*}

\begin{table}
  \centering
  \begin{tabular}{|l|l|l|} \hline
    Time slot & Color & Active (transmitter, receiver) pairs \\ \hline
    1 & red & (1,2), (4,3) \\ \hline
    2 & blue & (2,3) \\ \hline
    3 & green & (3,4), (2,1) \\ \hline
    4 & pink & (3,2) \\ \hline
  \end{tabular}
  \caption{Output of SGLS algorithm for STDMA network described by Figure
    \ref{fig:deployment_lgls} and Table \ref{tab:system_parameters_example}.}
  \label{tab:output_lgls}
\end{table}

\begin{figure}[thbp]
  \centering
  \includegraphics[width=5in]{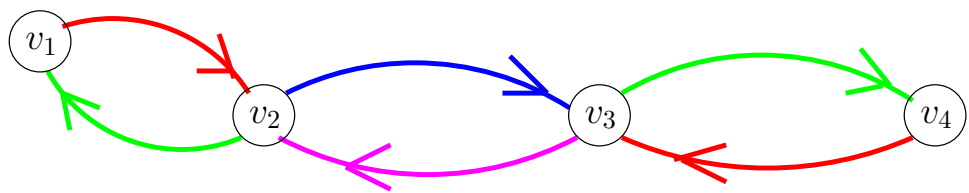}
  \caption{Output of SGLS algorithm for STDMA network described by
    Figure \ref{fig:deployment_lgls} and Table
    \ref{tab:system_parameters_example}.}
  \label{fig:output_lgls}
\end{figure}

Finally, we check if the link schedule enumerated in Table
\ref{tab:output_lgls} is conflict-free, i.e., if the SINR threshold
condition (\ref{eq:conflict_free}) is satisfied at every receiver for
the STDMA network described by Figure \ref{fig:deployment_lgls} and
Table \ref{tab:system_parameters_example}. Only one
transmitter-receiver pair is active during time slots 2 and 4.  Since
the receiver is within the communication range of its corresponding
transmitter for each of these time slots, the SINR threshold condition
is satisfied trivially for time slots 2 and 4. Two
transmitter-receiver pairs are active during time slots 1 and 3. In
time slot 1, the active transmitter-receiver pairs are $(1,2)$ and
$(4,3)$. Our computations show that the SINRs at Receivers 2 and 3 are
$20.85$ dB and $21$ dB, both of which exceed the communication
threshold of $20$ dB.  In time slot 3, the active transmitter-receiver
pairs are $(2,1)$ and $(3,4)$. Our computations show that the SINRs at
Receivers 1 and 4 are $20.87$ dB and $20.99$ dB respectively, both of
which exceed the communication threshold. This verifies that the SGLS
algorithm yields a conflict-free link schedule for the network
described by Figure \ref{fig:deployment_lgls} and Table
\ref{tab:system_parameters_example}.  Note that, from
(\ref{eq:spatial_reuse_conflict_free}), the spatial reuse of SGLS
algorithm for this network is $1.5$.

\section{Analytical Results}
\label{sec:complexity_lgls}

In this section, we prove the correctness of the SGLS algorithm and
derive its running time (computational) complexity. We follow the
notation of Algorithm \ref{algo:line_graph}.

\begin{theorem}
  For any ${\mathcal V_{cc}'} \subseteq {\mathcal V'}$, if
  \begin{eqnarray}
  \sum_{v_2'\in{\mathcal V_{cc}'} \setminus \{v_1'\}} w'_{v_2'v_1'} 
  &>& 
  |{\mathcal V_{cc}'}|+{\mathcal N}(v_1')-2 \;\;\forall\;\; 
  v_1' \in {\mathcal V_{cc}'}, 
  \end{eqnarray}
  then the coloring of all vertices of ${\mathcal V_{cc}'}$ with the
  same color is feasible.
\label{theo:correctness_lgls}
\end{theorem}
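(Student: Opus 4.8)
The plan is to show that the hypothesis is nothing more than the SINR inequality (\ref{eq:feasible_condition}) rewritten in terms of the SINR-graph weights, with the truncation $\max\{0,\cdot\}$ in the co-schedulability function supplying exactly the needed slack. First I would fix an arbitrary $v_1' \in \mathcal{V}_{cc}'$ and set $m := |\mathcal{V}_{cc}'|$; the goal is then to verify (\ref{eq:feasible_condition}) at the receiver of the link $v_1'$. Recall that the algebra leading to (\ref{eq:sinr_gammac_manipulation})--(\ref{eq:sinr_gammac_rearranged}) shows that, provided no two of the edges in $\mathcal{V}_{cc}'$ share an endpoint in $\mathcal{G}_c$ (so that each $w_{v_2'v_1'}$ is given by the ``otherwise'' branch of its definition, not set to $1$), the condition (\ref{eq:feasible_condition}) at $r_{v_1'}$ is equivalent to
\[
\mathcal{N}(v_1') + \sum_{v_2' \in \mathcal{V}_{cc}' \setminus \{v_1'\}} w_{v_2'v_1'} \;\leqslant\; 1 .
\]
So the task reduces to deriving this inequality from the hypothesis.

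The key step is the observation that the hypothesis forces $w_{v_2'v_1'} < 1$ for every $v_2' \in \mathcal{V}_{cc}' \setminus \{v_1'\}$. Indeed, each co-schedulability weight satisfies $w'_{v_2'v_1'} = \max\{0,\,1-w_{v_2'v_1'}\} \in [0,1]$; hence if some $w_{v_2'v_1'} \geqslant 1$, that term contributes $0$ and the remaining $m-2$ terms contribute at most $m-2$, giving $\sum_{v_2'} w'_{v_2'v_1'} \leqslant m-2 \leqslant m + \mathcal{N}(v_1') - 2$ (using $\mathcal{N}(v_1') \geqslant 0$), which contradicts the strict hypothesis. In particular no two edges of $\mathcal{V}_{cc}'$ can share a common vertex in $\mathcal{G}_c$ (that case yields $w = 1$), so the operational constraint is automatically satisfied and the equivalence quoted above is legitimate.

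Once all $w_{v_2'v_1'} < 1$, the $\max$ is inactive: $w'_{v_2'v_1'} = 1 - w_{v_2'v_1'}$, so summing over the $m-1$ terms gives $\sum_{v_2'} w'_{v_2'v_1'} = (m-1) - \sum_{v_2'} w_{v_2'v_1'}$. Substituting into the hypothesis $\sum_{v_2'} w'_{v_2'v_1'} > m + \mathcal{N}(v_1') - 2$ and rearranging yields $\mathcal{N}(v_1') + \sum_{v_2'} w_{v_2'v_1'} < 1$, i.e.\ the (strict form of the) SINR condition at $r_{v_1'}$, which a fortiori gives $\mathrm{SINR}_{r_{v_1'}} \geqslant \gamma_c$. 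Since $v_1'$ was arbitrary, (\ref{eq:feasible_condition}) holds at every vertex of $\mathcal{V}_{cc}'$, establishing Theorem~\ref{theo:correctness_lgls}.

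I do not expect a genuine obstacle here: the argument is a short chain of rearrangements. The only points requiring care are the case split on whether $w_{v_2'v_1'} \gtrless 1$ (equivalently, whether the truncation in $w'$ is active) and the bookkeeping that ensures the ``common endpoint'' situation is ruled out by the hypothesis, so that the passage to and from the rearranged SINR form (\ref{eq:sinr_gammac_rearranged}) is valid.
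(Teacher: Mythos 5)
Your proposal is correct and follows essentially the same route as the paper's proof: both first use the hypothesis (together with $w' \leqslant 1$ and $\mathcal{N}(v_1') \geqslant 0$) to rule out any truncated weight $w'_{v_2'v_1'}=0$, hence any $w_{v_2'v_1'}\geqslant 1$ and in particular any shared endpoint in $\mathcal{G}_c$, and then perform the same chain of algebraic rearrangements turning $\sum_{v_2'}(1-w_{v_2'v_1'}) > |\mathcal{V}_{cc}'|+\mathcal{N}(v_1')-2$ into $\sum_{v_2'} w_{v_2'v_1'}+\mathcal{N}(v_1')<1$, i.e.\ the SINR condition (\ref{eq:feasible_condition}). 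No gaps.
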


\begin{proof}
  Recall that $w'_{v_2'v_1'} = 0$ or $ 1-w_{v_2'v_1'}$ and that $0
  \leqslant w'_{v_2'v_1'} \leqslant 1$.  Suppose $w'_{v_3'v_1'}=0$ for
  some $v_1', v_3' \in {\mathcal V_{cc}'}$, $v_1' \neq v_3'$, then
  \begin{eqnarray*}
  \sum_{v_2'\in{\mathcal V_{cc}'} \setminus \{v_1'\}} w'_{v_2'v_1'}
  &=&
  \sum_{v_2'\in{\mathcal V_{cc}'} \setminus \{v_1',v_3'\} } w'_{v_2'v_1'},\\
  &\leqslant& \sum_{v_2'\in{\mathcal V_{cc}'} \setminus \{v_1',v_3'\} } 1, \\
  &=& |{\mathcal V_{cc}'} \setminus \{v_1',v_3'\}|,\\
  &=& |{\mathcal V_{cc}'}|-2,
  \end{eqnarray*}
  which contradicts the hypothesis since $\mathcal N(v_1')>0$. So, an
  edge connecting any two vertices in ${\mathcal V_{cc}'}$ must have
  positive weight.  Thus, $0 < w'_{v_2'v_1'} \leqslant 1$ $\forall$
  $v_1', v_2' \in {\mathcal V_{cc}'}$, $v_1' \neq v_2'$.
  Equivalently, $0 < 1-w_{v_2'v_1'} \leqslant 1$ $\forall$ $v_1', v_2'
  \in {\mathcal V_{cc}'}$, $v_1' \neq v_2'$.  If two vertices $v_1',
  v_2' \in {\mathcal V_{cc}'}$ (equivalently, edges $v_1', v_2' \in
  \mathcal G_c(\cdot)$) have a common vertex in $\mathcal G_c(\cdot)$,
  then $w_{v_2'v_1'}=1$, which is a contradiction. So, no two vertices
  in $\mathcal V_{cc}'$ have a common vertex in $\mathcal G_c(\cdot)$.
  From the hypothesis,
  \begin{eqnarray*}
  \sum_{v_2'\in{\mathcal V_{cc}'} \setminus \{v_1'\}} w'_{v_2'v_1'} 
  &>& |{\mathcal V_{cc}'}|+{\mathcal N}(v_1')-2 
    \;\;\forall\;\; v_1' \in {\mathcal V_{cc}'}, \\ 
   \Leftrightarrow 
   \sum_{v_2'\in{\mathcal V_{cc}'} \setminus \{v_1'\}} (1-w_{v_2'v_1'})
  &>& |{\mathcal V_{cc}'}|+{\mathcal N}(v_1')-2 
      \;\;\forall\;\; v_1' \in {\mathcal V_{cc}'}, \\ 
    \Leftrightarrow |{\mathcal V_{cc}'} 
     \setminus \{v_1'\}|-\sum_{v_2'\in{\mathcal V_{cc}'} \setminus 
     \{v_1'\}}w_{v_2'v_1'} 
  &>& |{\mathcal V_{cc}'}|+{\mathcal N}(v_1')-2 
      \;\;\forall\;\; v_1' \in {\mathcal V_{cc}'}, \\ 
    \Leftrightarrow 
    |{\mathcal V_{cc}'}|-1-\sum_{v_2'\in{\mathcal V_{cc}'} \setminus 
     \{v_1'\}}w_{v_2'v_1'} 
  &>& |{\mathcal V_{cc}'}|+{\mathcal N}(v_1')-2 
     \;\;\forall\;\; v_1' \in {\mathcal V_{cc}'}, \\ 
    \Leftrightarrow \sum_{v_2'\in{\mathcal V_{cc}'} 
    \setminus \{v_1'\}}w_{v_2'v_1'} + {\mathcal N}(v_1') 
  &<& 1 \;\;\forall\;\; v_1' \in {\mathcal V_{cc}'}, \\
     \Leftrightarrow \sum_{v_2'\in{\mathcal V_{cc}'} \setminus \{v_1'\}} 
       \gamma_c
     \frac{{D(t_{v_1'},r_{v_1'})}^\beta}{{D(t_{v_2'},r_{v_1'})}^\beta}
    + \frac{N_0 \gamma_c}{P}{D(t_{v_1'},r_{v_1'})}^\beta &<& 1 
      \;\;\forall\;\; v_1' \in {\mathcal V_{cc}'}, \\
   \Leftrightarrow
    \frac{\frac{P}{{D(t_{v_1'},r_{v_1'})}^\beta}}{N_0+\sum_{v_2'\in{\mathcal
    V_{cc}'} \setminus \{v_1'\}}\frac{P}{{D(t_{v_2'},r_{v_1'})}^\beta}}
  &>& \gamma_c \;\;\forall\;\; v_1' \in {\mathcal V_{cc}'}.
  \end{eqnarray*}
  Therefore, the SINR threshold condition
  (\ref{eq:feasible_condition}) is satisfied at the receivers of all
  vertices of ${\mathcal V_{cc}'}$.
\end{proof}

\noindent With respect to (w.r.t.) the communication graph $\mathcal
G_c(\mathcal V,\mathcal E_c)$, let:
\begin{eqnarray*}
e &=& \mbox{number of edges},\\
v &=& \mbox{number of vertices}.
\end{eqnarray*}

\begin{theorem}
  The running time complexity of SGLS algorithm is $O(e^2)$.
\label{theo:complexity_lgls}
\end{theorem}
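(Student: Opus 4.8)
The plan is to split the running time into the \emph{construction phase} --- everything in Algorithm~\ref{algo:line_graph} before the main \textbf{while} loop --- and the \emph{colouring phase} --- the main \textbf{while} loop --- and to bound each by $O(e^2)$. Since $\mathcal V' = \mathcal E_c$ we have $|\mathcal V'| = e$, and the directed complete graph $\mathcal G'(\cdot)$ has $e(e-1) = O(e^2)$ edges. The three \textbf{forall} loops that compute the interference weights $w_{ij}$, the co-schedulability weights $w'_{ij} = \max\{0, 1-w_{ij}\}$, and the normalised noise powers $\mathcal N(v_j')$ each perform $O(1)$ arithmetic per edge or per vertex (a $w_{ij}$ costs two distance evaluations from the node coordinates), so the construction phase is $O(e^2)$.

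For the colouring phase I would first fix a careful implementation. While the current colour is $p$, I maintain for every vertex $y' \in \mathcal V'$ the running in-sum $S(y') := \sum_{x' \in \mathcal V'_{c_p}} w'_{x'y'}$ and the running out-sum $S^{\mathrm{out}}(y') := \sum_{x' \in \mathcal V'_{c_p}} w'_{y'x'}$; each addition of a vertex to $\mathcal V'_{c_p}$ updates all these sums in $O(e)$ time. With the sums available, deciding whether $y'$ is a live candidate ($S(y') + S^{\mathrm{out}}(y') > 0$) is $O(1)$, and the two inner tests become $S(v_c') + w'_{u'v_c'} \le |\mathcal V'_{c_p}| + \mathcal N(v_c') - 1$ (inequality~A, triggering $\varrho \leftarrow 0$) and $S(u') > |\mathcal V'_{c_p}| + \mathcal N(u') - 1$ (inequality~B), each $O(1)$; thus one examination of a candidate $u'$ costs $O(|\mathcal V'_{c_p}|)$, dominated by the \textbf{forall} over $v_c'$. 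The outer \textbf{while} loop runs at most $e$ times since each new colour is used on at least one previously uncoloured vertex, and since each of the $e$ vertices is coloured exactly once the total cost of all sum updates is $O(e^2)$.

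The hard part will be bounding the total cost of the candidate examinations: naively a candidate could be re-examined on every pass of the inner \textbf{while} loop, which yields only $O(e^3)$. I would prove a \emph{monotonicity lemma}: once a candidate $u'$ fails for colour $p$ --- i.e.\ inequality~A holds for some $v_c' \in \mathcal V'_{c_p}$, or inequality~B holds --- it fails for colour $p$ forever. The reason is that adding any vertex $z'$ to $\mathcal V'_{c_p}$ increases the left side of A (of B) by $w'_{z'v_c'}$ (by $w'_{z'u'}$), which lies in $[0,1]$, while it increases the right side by exactly $1$; hence $0 \le w' \le 1$ forces the failing inequality to persist. So in the careful implementation every failed candidate is marked \emph{dead for colour $p$} and never re-examined, and the inner \textbf{while} loop is exited as soon as no live candidate remains --- which does not change the colouring, since a dead candidate provably can never receive the current colour.

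Consequently each vertex is examined at most once per colour. Writing $n_p$ for the number of uncoloured vertices at the start of colour $p$ and $c_p := |\mathcal V'_{c_p}|$ at the end, there are at most $n_p \le e$ examinations for colour $p$, each of cost $O(c_p)$, so the examinations total $\sum_p O(n_p c_p) \le O(e) \sum_p c_p = O(e^2)$, using $\sum_p c_p = e$. Combining this with the $O(e^2)$ sum-maintenance cost of the colouring phase and the $O(e^2)$ construction phase gives the claimed $O(e^2)$ running time for SGLS.
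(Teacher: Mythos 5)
Your proof is correct, and it takes a genuinely different --- and sharper --- route than the paper's. The construction phase is handled identically ($|\mathcal V'|=e$, $|\mathcal E'|=O(e^2)$, constant work per edge or vertex), and the paper invokes the same running-sum trick to make each individual test cheap. Where you diverge is in counting candidate examinations: the paper simply charges, for every vertex added to a colour class, a full pass over all uncoloured vertices at cost $O(|\mathcal V_{c_i}'|)$ each, obtaining $\sum_i\sum_{j=1}^{N_i}|\mathcal V_{uc}'|\bigl(k_2|\mathcal V_{c_i}'|+k_3\bigr)=O(e^3)$; indeed the paper's own proof explicitly concludes that the running time is $O(e^3)$, which does not match the $O(e^2)$ asserted in the theorem statement. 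Your monotonicity lemma is precisely the ingredient that closes this gap: because every co-schedulability weight satisfies $0\leqslant w'\leqslant 1$ while the right-hand sides of both tests grow by exactly $1$ for each vertex admitted to $\mathcal V_{c_p}'$, a candidate that fails once for colour $p$ fails forever, so a careful implementation may mark it dead and never revisit it without altering the output. That reduces the examination count to at most one per vertex per colour, and your accounting $\sum_p n_p c_p\leqslant e\sum_p c_p=e^2$, together with the $O(e^2)$ total for sum maintenance and the $O(e^2)$ construction phase, is correct. The only point worth making explicit is that the bound is therefore a statement about this refined implementation rather than about the pseudocode read literally (whose inner loop may re-examine failed candidates); since the refinement provably produces the same colouring, this is the right reading of the theorem, and your argument actually establishes the claimed $O(e^2)$ where the paper's own proof only delivers $O(e^3)$.
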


\begin{proof}
  $|\mathcal V'| = |\mathcal E_c|=e$. Since $\mathcal G'(\cdot)$ is a
  directed complete graph, $|{\mathcal E'}|=e(e-1)=O(e^2)$.  Since the
  computation of $w_{ij}$ for given edges $i$ and $j$ of $\mathcal
  G'(\cdot)$ takes unit time, the computation of interference weight
  functions for all edges of $\mathcal G'(\cdot)$ takes $O(e^2)$ time.
  Similarly, the computation of co-schedulability weight functions for
  all edges of $\mathcal G'(\cdot)$ requires $O(e^2)$ time.  The
  computation of normalized noise powers at all vertices of $\mathcal
  G'(\cdot)$ takes $O(e)$ time.

  In $\mathcal G'(\cdot)$, let $C$ denote the total number of colors
  used to color all vertices and let $N_i$ denote the number of
  vertices assigned color $i$, i.e., $N_i =|\mathcal V_{c_i}'|$.
  Since $C$ can never exceed the number of vertices in $\mathcal
  G'(\cdot)$, i.e., the number of edges in $\mathcal G_c(\cdot)$, $C$
  is $O(e)$.  The time required by Lines 20-21 is $O(1)$, let it be
  $k_1$, where $k_1$ is a constant.  

  With a careful implementation of storing $\sum_{v_1' \in {\mathcal
      V_{c_p}'} \setminus \{v_c'\} \cup \{u'\}} w'_{{v_1'}{v_c'}}$
  $\forall$ $v_c' \in {\mathcal V_{c_p}'}$, Lines 26-28 take $O(1)$
  time. Thus, Lines 25-29 take $O(|\mathcal V_{c_p}'|)$ time, let it
  be equal to $k_2 |\mathcal V_{c_p}'|$, where $k_2$ is a constant.
  Along similar arguments, Lines 30-34 take $O(1)$ time, let it be
  equal to $k_3$, where $k_3$ is a constant. The time required by
  Lines 36-38 is $k_4$, where $k_4$ is a constant. Thus, the total
  running time of the coloring phase is
  \begin{eqnarray*}
    \tau &=& \sum_{i=1}^C \Big(k_1 + \sum_{j=1}^{N_i} \big(|\mathcal V_{uc}'|
    (k_2 |\mathcal V_{c_i}'| + k_3) + k_4 \big) \Big).
  \end{eqnarray*}
  Since $\mathcal V_{uc}', \mathcal V_{c_i}' \subseteq \mathcal V'$,
  it follows that $|\mathcal V_{uc}'|, |\mathcal V_{c_i}'| \leqslant
  |\mathcal V'|=e$.  Furthermore, for any color $i$, $\mathcal V_{uc}'
  \cup \mathcal V_{c_i}' \subseteq \mathcal V'$.  Thus, $|\mathcal
  V_{uc}'| |\mathcal V_{c_i}'| \leqslant \frac{e^2}{4}$. Therefore
  \begin{eqnarray*}
    \tau
    &\leqslant& 
    \sum_{i=1}^C k_1 + \sum_{i=1}^C \sum_{j=1}^{N_i} k_2 \frac{e^2}{4}
    + \sum_{i=1}^C \sum_{j=1}^{N_i} k_3 e 
    + \sum_{i=1}^C \sum_{j=1}^{N_i} k_4 \\
    &=& k_1 C + k_2 \frac{e^2}{4} (e) + k_3 e (e) + k_4 (e)\\
    &=& k_1 C + k_3 e^2 + \frac{k_2}{4} e^3 + k_4 e\\
    &=& O(e^3).
   \end{eqnarray*}
   Hence, the total running time complexity of SGLS algorithm is
   $O(e^3)$.
\end{proof}

\section{Performance Results}
\label{sec:performance_lgls}

In this section, we demonstrate the efficacy of SGLS algorithm via
simulations. To the best of our knowledge, for an STDMA network with
constrained transmission power, there is no existing work on link
scheduling that utilizes an SINR graph representation of the network.
However, for completeness, we compare the performance of SGLS
algorithm with the CFLS algorithm proposed in Chapter
\ref{ch:comm_graph}. Note that SGLS is based on SINR graph while CFLS
is based on communication graph and verifying SINR conditions.

In the simulation experiments, the location of every node is generated
randomly using a uniform distribution for its $X$ and $Y$ coordinates.
We assume that the deployment area is a circular region of radius $R$.
The values chosen for system parameters $P$, $\gamma_c$, $\beta$ and
$N_0$ are prototypical values of system parameters in wireless
networks \cite{kim_lim_hou__improving_spatial}.  After generating
random positions for $N$ nodes, we have complete information of
$\Phi(\cdot)$.  Once the link schedule $\Psi(\cdot)$ is computed by
every algorithm, $\sigma$ is computed using
(\ref{eq:spatial_reuse_conflict_free}).  For a given set of system
parameters, we calculate the average spatial reuse by averaging
$\sigma$ over 1000 randomly generated networks.  Keeping all other
parameters fixed, we observe the effect of increasing the number of
nodes $N$ on the average spatial reuse.

In the first experiment (Experiment 1), we assume that $R=500$ m,
$P=10$ mW, $\beta = 4$, $N_0=-90$ dBm and $\gamma_c = 20$ dB.  Thus,
$R_c=100$ m. We vary the number of nodes from 30 to 110 in steps of 5.
Figure \ref{fig:expt1_sgls} plots the average spatial reuse vs. number
of nodes for both the algorithms.

\begin{figure}[thbp]
  \centering
  \includegraphics[width=5in]{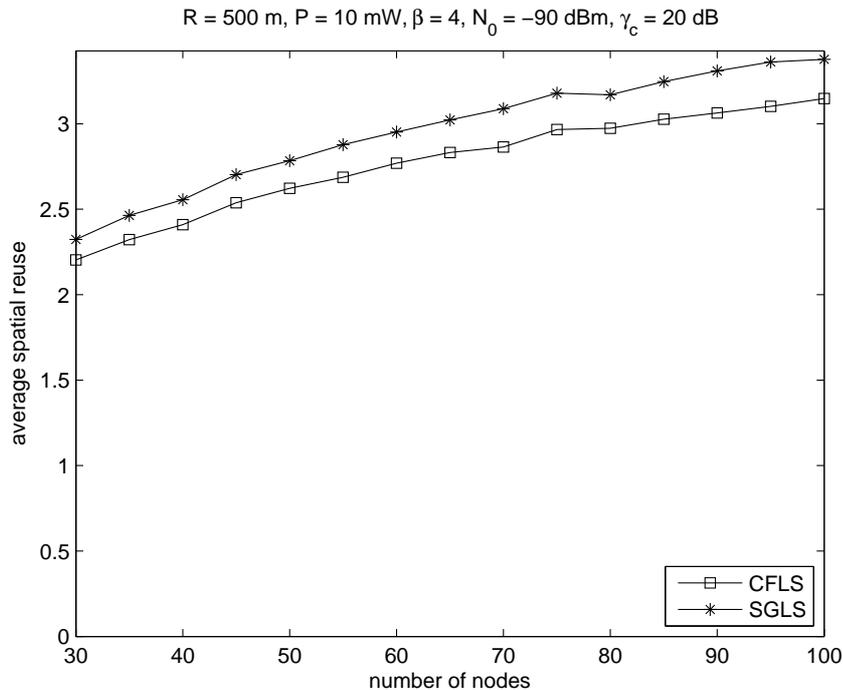}
  \caption{Spatial reuse vs. number of nodes for Experiment 1.}
  \label{fig:expt1_sgls}
\end{figure}

In the second experiment (Experiment 2), we assume that $R=700$ m,
$P=15$ mW, $\beta = 4$, $N_0=-85$ dBm and $\gamma_c = 15$ dB.  Thus,
$R_c=110.7$ m.  We vary the number of nodes from 70 to 150 in steps of
10. Figure \ref{fig:expt2_sgls} plots the average spatial reuse vs.
number of nodes for both the algorithms.

\begin{figure}[thbp]
  \centering
  \includegraphics[width=5in]{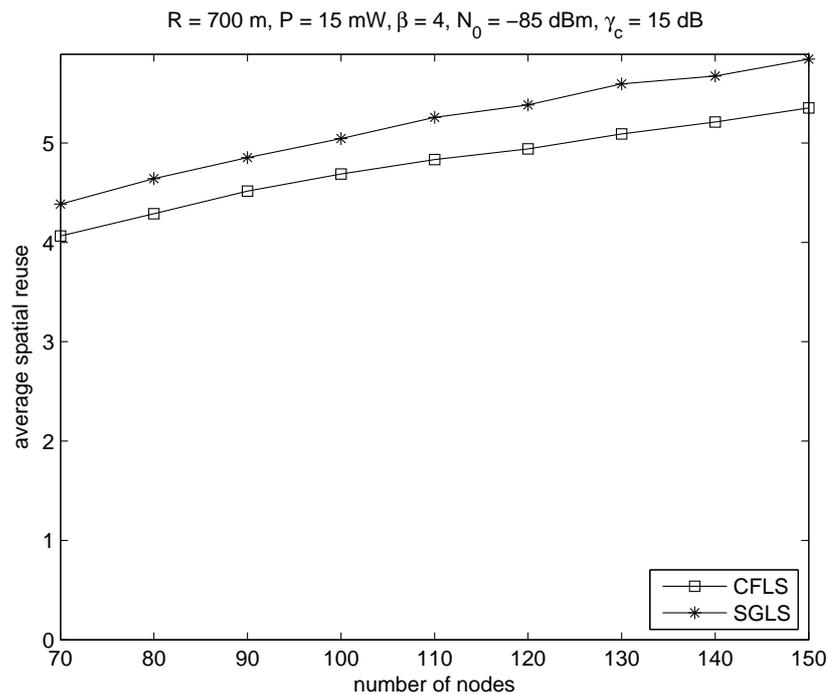}
  \caption{Spatial reuse vs. number of nodes for Experiment 2.}
  \label{fig:expt2_sgls}
\end{figure}

From the figures, we observe that SGLS achieves 5-10\% higher spatial
reuse than CFLS. However, this improvement in performance is obtained
at the cost of higher computational complexity.

\section{Discussion}
\label{sec:conclusion_lgls}

In this chapter, we have proposed a novel point to point link
scheduling algorithm based on an SINR graph representation of an STDMA
wireless network under the physical interference model.  Our results
demonstrate that the spatial reuse for the proposed algorithm is
higher than that of the ConflictFreeLinkSchedule algorithm.  This is
due to the fact that we have embedded interference conditions between
pairs of nodes into the edge weights and normalized noise powers at
receiver nodes into vertex weights of the SINR graph and consequently
determined a conflict-free schedule. Our approach has the potential to
scale with the number of nodes in the network.

\clearpage{\pagestyle{empty}\cleardoublepage}

\chapter{Point to Multipoint Link Scheduling: A Hybrid Approach}
\label{ch:broadcastschedule}

In this chapter, we investigate point to multipoint link scheduling in
STDMA wireless networks.  We generalize the definition of spatial
reuse introduced in Chapter \ref{ch:framework_link} for point to
multipoint link scheduling. We propose a ``hybrid'' link scheduling
algorithm based on a communication graph representation of the network
and SINR conditions.  We demonstrate that the proposed algorithm
achieves higher spatial reuse than existing algorithms, without any
increase in running time complexity.

The rest of this chapter is organized as follows. In Section
\ref{sec:system_multipoint}, we describe our system model.  We
describe point to multipoint link scheduling based on the protocol
interference model in Section \ref{sec:protocol_based_broadcast} and
describe its limitations in Section
\ref{sec:limitations_protocol_broadcast}.  In Section
\ref{sec:problem_formulation_broadcast}, we introduce spatial reuse as
our performance metric and formulate the problem. In Section
\ref{sec:mass_algorithm}, we describe the proposed link scheduling
algorithm.  We evaluate its performance in Section
\ref{sec:performance_mass} and derive its computational complexity in
Section \ref{sec:complexity_mass}.  We discuss the implications of our
work in Section \ref{sec:conclusions_mass}.

\section{System Model}
\label{sec:system_multipoint}

Our system model and notations are exactly as described in Section
\ref{sec:system_model_stdma}. However, we redefine and introduce terms
that are applicable to point to multipoint link scheduling.

If node $k$ is within node $j$'s communication range, then $k$ is
defined as a {\it neighbor} of $j$, since $k$ can decode $j$'s packet
correctly (subject to Equation \ref{eq:sinr_ge_gammac}).  Note that if
node $k$ is outside node $j$'s communication range, then it can never
decode $j$'s packet correctly (from Equation \ref{eq:sinr_ge_gammac}).
The number of neighbors of node $j$ is denoted by $\eta(j)$.

A point to multipoint link schedule for an STDMA wireless network
$\Phi(\cdot)$ is a mapping from the set of nodes to time slots.  Let
$C$ denote the number of time slots in a point to multipoint link
schedule.  For a given time slot $i$, $j^{th}$ point to multipoint
transmission is denoted by $\{t_{i,j}\rightarrow
\{r_{i,j,1},r_{i,j,2},\ldots,r_{i,j,\eta(t_{i,j})}\}\}$, where
$t_{i,j}$ denotes the index of the node which transmits a packet and
$r_{i,j,1},r_{i,j,2},\ldots,r_{i,j,\eta(t_{i,j})}$ denote the indices
of neighboring nodes (neighbors of $t_{i,j}$) that receive the packet.
Note that $r_{i,j,k}$ denotes $k^{th}$ receiver of $j^{th}$
transmission in time slot $i$.  Let $M_i$ denote the number of
concurrent point to multipoint transmissions in time slot $i$.  A
point to multipoint link schedule for an STDMA network $\Phi(\cdot)$
is denoted by $\Omega(\mathcal B_1,\cdots,\mathcal B_C)$, where
\begin{eqnarray*}
{\mathcal B}_i 
&:=& \{t_{i,1}\rightarrow \{r_{i,1,1},r_{i,1,2},\ldots,r_{i,1,\eta(t_{i,1})}\},
 \cdots,t_{i,M_i}\rightarrow \{r_{i,M_i,1},r_{i,M_i,2},\ldots,r_{i,M_i,\eta(t_{i,M_i})}\}\} \\
&=& 
\mbox{set of concurrent point to multipoint transmissions in time slot $i$}.
\end{eqnarray*}

\noindent Every point to multipoint schedule $\Omega(\cdot)$ must
satisfy the following:
\begin{enumerate}
\item Operational constraints:
  \begin{enumerate}
  \item A node cannot transmit and receive in the same time slot, i.e.,
    \begin{eqnarray}
      \{t_{i,j}\} \cap \{r_{i,k,1}, \ldots, r_{i,k,\eta(t_{i,k})}\} &=& \phi
      \;\; \forall \;\; i=1,\ldots,C \;\; \forall \;\; j \neq k.
      \label{eq:op_constraint_broadcast_1}
    \end{eqnarray} 

  \item A node cannot receive from multiple transmitters in the same
    time slot, i.e.,
    \begin{multline}
      \{r_{i,j,1}, \ldots, r_{i,j,\eta(t_{i,j})}\} \cap \{r_{i,k,1},
      \ldots, r_{i,k,\eta(t_{i,k})}\} = \phi \;\; \forall \;\;
      i=1,\ldots,C \;\; \forall \;\; j \neq k.
      \label{eq:op_constraint_broadcast_2}
    \end{multline}
  \end{enumerate}

\item Range constraint: Every receiver is within the communication
  range of its intended transmitter, i.e.,
  \begin{multline}
    D(t_{i,j},r_{i,j,k}) \leqslant R_c \;\forall\; i=1,\ldots,C
    \;\forall \; j=1,\ldots,M_i \;\forall\; k=1,\ldots,\eta(t_{i,j}).
    \label{eq:range_constraint_broadcast}
  \end{multline}
\end{enumerate}

\begin{figure}
\centering
\subfigure[An STDMA wireless network with six nodes.]
{
\label{fig:deployment_stdma_broadcast}
\includegraphics[width=5.5in]{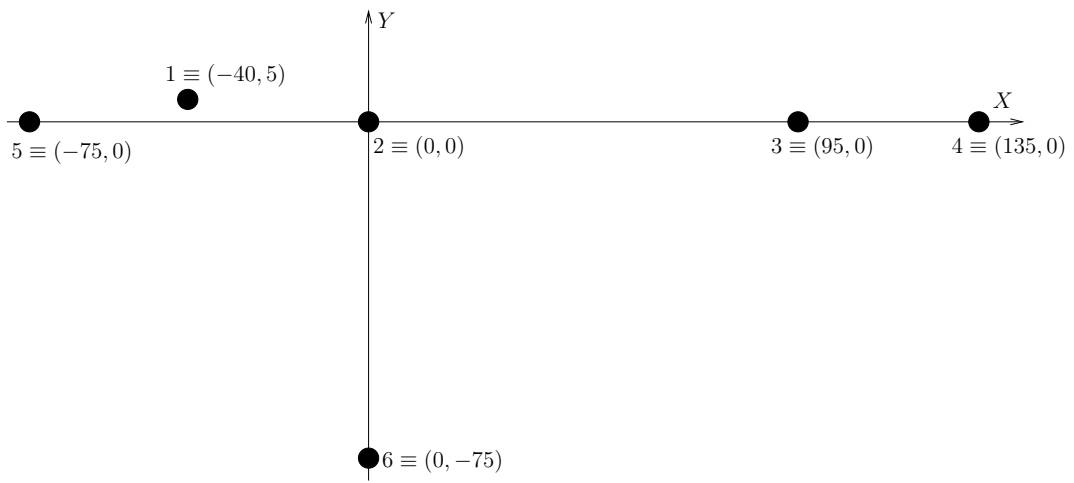}
}\\
\subfigure[A point to multipoint link schedule for the network shown in Figure \ref{fig:deployment_stdma_broadcast}.]
{
\label{fig:multipoint_link_schedule}
\includegraphics[width=5.5in]{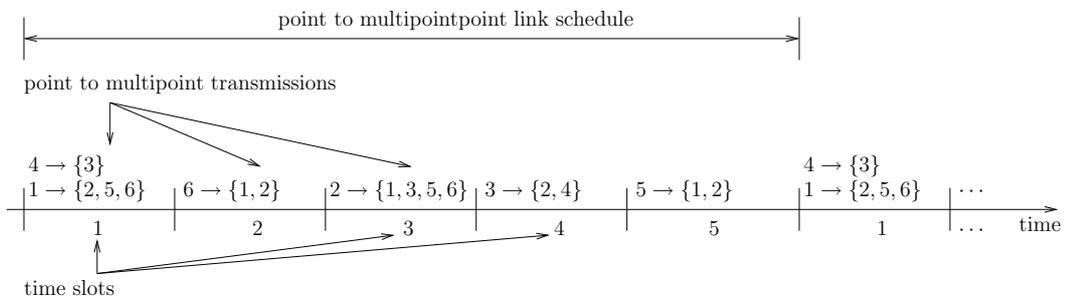}
}
\caption{Example of STDMA network and point to multipoint link schedule.}
\label{fig:deployment_broadcast_schedule}
\end{figure}

For an example, consider the STDMA wireless network $\Phi(\cdot)$
shown in Figure \ref{fig:deployment_stdma_broadcast}. It consists of
six nodes whose coordinates (in meters) are $1 \equiv (-40,5)$, $2
\equiv (0,0)$, $3 \equiv (95,0)$, $4 \equiv (135,0)$, $5 \equiv
(-75,0)$ and $6 \equiv (0,-75)$. One of the possible point to
multipoint link schedules for this STDMA network is shown in Figure
\ref{fig:multipoint_link_schedule}.  The schedule length is $C=5$ time
slots and the schedule is defined by $\Omega(\mathcal B_1,\mathcal
B_2,\mathcal B_3,\mathcal B_4,\mathcal B_5)$, where
\begin{eqnarray*}
\mathcal B_1 
&=& \big\{t_{1,1} \rightarrow \{r_{1,1,1},r_{1,1,2},r_{1,1,3}\}, \;
 t_{1,2} \rightarrow \{r_{1,2,1}\} \big\} \\
&=& \big\{1 \rightarrow \{2,5,6\}, \; 4 \rightarrow \{3\} \big\}, \\
\mathcal B_2
&=& \big\{ t_{2,1} \rightarrow \{r_{2,1,1},r_{2,1,2}\} \big\} \\
&=& \big\{ 6 \rightarrow \{1,2\} \big\}, \\
\mathcal B_3
&=& \big\{ t_{3,1} \rightarrow 
  \{r_{3,1,1},r_{3,1,2},r_{3,1,3},r_{3,1,4}\} \big\} \\
&=& \big\{ 2 \rightarrow \{1,3,5,6\} \big\}, \\
\mathcal B_4
&=& \big\{ t_{4,1} \rightarrow 
  \{r_{4,1,1},r_{4,1,2}\} \big\} \\
&=& \big\{ 3 \rightarrow \{2,4\} \big\}, \\
\mathcal B_5
&=& \big\{ t_{5,1} \rightarrow \{r_{5,1,1},r_{5,1,2}\} \big\} \\
&=& \big\{ 5 \rightarrow \{1,2\} \big\}.
\end{eqnarray*}
After 5 time slots, the schedule repeats periodically, as shown in
Figure \ref{fig:multipoint_link_schedule}.

A point to multipoint link scheduling algorithm is a set of rules that
is used to determine a schedule $\Omega(\cdot)$. Typically, a
scheduling algorithm is required to satisfy certain objectives.

Consider $k^{th}$ receiver of $j^{th}$ transmission in time slot $i$,
i.e., receiver $r_{i,j,k}$. The power received at $r_{i,j,k}$ from its
intended transmitter $t_{i,j}$ (signal power) is
$\frac{P}{D^{\beta}(t_{i,j},r_{i,j,k})}$. The power received at
$r_{i,j,k}$ from its unintended transmitters (interference power) is
$\sum_{\stackrel{l=1}{l\neq
    j}}^{M_i}\frac{P}{D^{\beta}(t_{i,l},r_{i,j,k})}$. Thus, the SINR
at receiver $r_{i,j,k}$ is given by
\begin{eqnarray}
{\mbox{SINR}}_{r_{i,j,k}} 
&=& 
 \frac{\frac{P}{D^{\beta}(t_{i,j},r_{i,j,k})}}{N_0+\sum_{\stackrel{l=1}{l\neq j}}^{M_i}\frac{P}{D^{\beta}(t_{i,l},r_{i,j,k})}} .
\label{eq:sinr_broadcast}
\end{eqnarray}

According to the physical interference model
\cite{gupta_kumar__capacity_wireless}, receiver $r_{i,j,k}$ can
successfully decode the packet transmitted by $t_{i,j}$ if the SINR at
$r_{i,j,k}$ is no less than the communication threshold $\gamma_c$,
i.e.,
\begin{eqnarray}
{\mbox{SINR}}_{r_{i,j,k}} &\geqslant& \gamma_c .
\label{eq:sinr_gammac_broadcast}
\end{eqnarray}

A link schedule $\Omega(\cdot)$ is {\it exhaustive} if every two nodes
$j$, $k$ who are neighbors of each other are included in the schedule
exactly twice, once with $j$ being a transmitter and $k$ being one of
its receivers, and during another time slot with $k$ being a
transmitter and $j$ being one of its receivers.  


\section{Equivalence of Link Scheduling and Graph Vertex Coloring}
\label{sec:protocol_based_broadcast}

In this section, we describe the equivalence between a point to
multipoint link schedule for an STDMA wireless network and the coloring of
vertices of the communication graph representation (see Section
\ref{sec:equivalence_coloring}) of the network.

\begin{table}[tbhp]
\centering
\begin{tabular}{|l|l|l|} \hline
  Parameter & Symbol & Value \\ \hline
  transmission power & $P$ & 10 mW \\ \hline
  path loss exponent & $\beta$ & 4 \\ \hline
  noise power spectral density & $N_0$ & -90 dBm \\ \hline
  communication threshold & $\gamma_c$ & 20 dB \\ \hline
  interference threshold & $\gamma_i$ & 10 dB \\ \hline
\end{tabular}
\caption{System parameters for STDMA networks shown in Figures
  \ref{fig:deployment_stdma_broadcast} and 
  \ref{fig:deploy_broadcast_high_intf}.}
\label{tab:sys_parameters_broadcast}
\end{table}

\begin{figure}[thbp]
  \centering
  \includegraphics[width=5.5in]{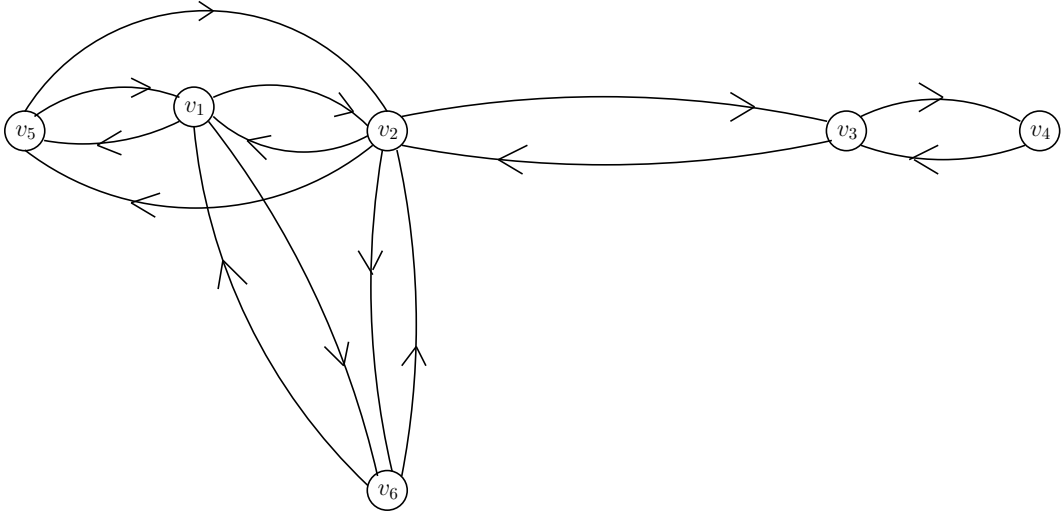}
  \caption{Communication graph model of STDMA network described by
    Figure \ref{fig:deployment_stdma_broadcast} and Table
    \ref{tab:sys_parameters_broadcast}.}
  \label{fig:communication_graph_multipoint}
\end{figure}

Consider the STDMA wireless network $\Phi(\cdot)$ whose deployment is
shown in Figure \ref{fig:deployment_stdma_broadcast}.  The system
parameters for this network are given in Table
\ref{tab:sys_parameters_broadcast}. From
(\ref{eq:communication_range}), we obtain $R_c=100$ m.  The
corresponding communication graph representation $\mathcal
G_c(\mathcal V,\mathcal E_c)$ is shown in Figure
\ref{fig:communication_graph_multipoint}.  The communication graph
comprises of 6 vertices and 14 directed communication edges.  The
vertex and communication edge sets are given by
\begin{eqnarray}
\mathcal V
&=& \{v_1,v_2,v_3,v_4,v_5,v_6\} \label{eq:set_vertices}, \\
\mathcal E_c
&=& \{
v_1 \stackrel{c}{\rightarrow} v_2,
v_2 \stackrel{c}{\rightarrow} v_1,
v_1 \stackrel{c}{\rightarrow} v_5,
v_5 \stackrel{c}{\rightarrow} v_1,
v_1 \stackrel{c}{\rightarrow} v_6,
v_6 \stackrel{c}{\rightarrow} v_1,
v_2 \stackrel{c}{\rightarrow} v_5, \nonumber \\
&&
v_5 \stackrel{c}{\rightarrow} v_2,
v_2 \stackrel{c}{\rightarrow} v_6, 
v_6 \stackrel{c}{\rightarrow} v_2,
v_2 \stackrel{c}{\rightarrow} v_3,
v_3 \stackrel{c}{\rightarrow} v_2,
v_3 \stackrel{c}{\rightarrow} v_4,
v_4 \stackrel{c}{\rightarrow} v_3
\}. \label{eq:set_comm_edges}
\end{eqnarray}

\begin{figure}[thbp]
\centering
\includegraphics[width=5.5in]{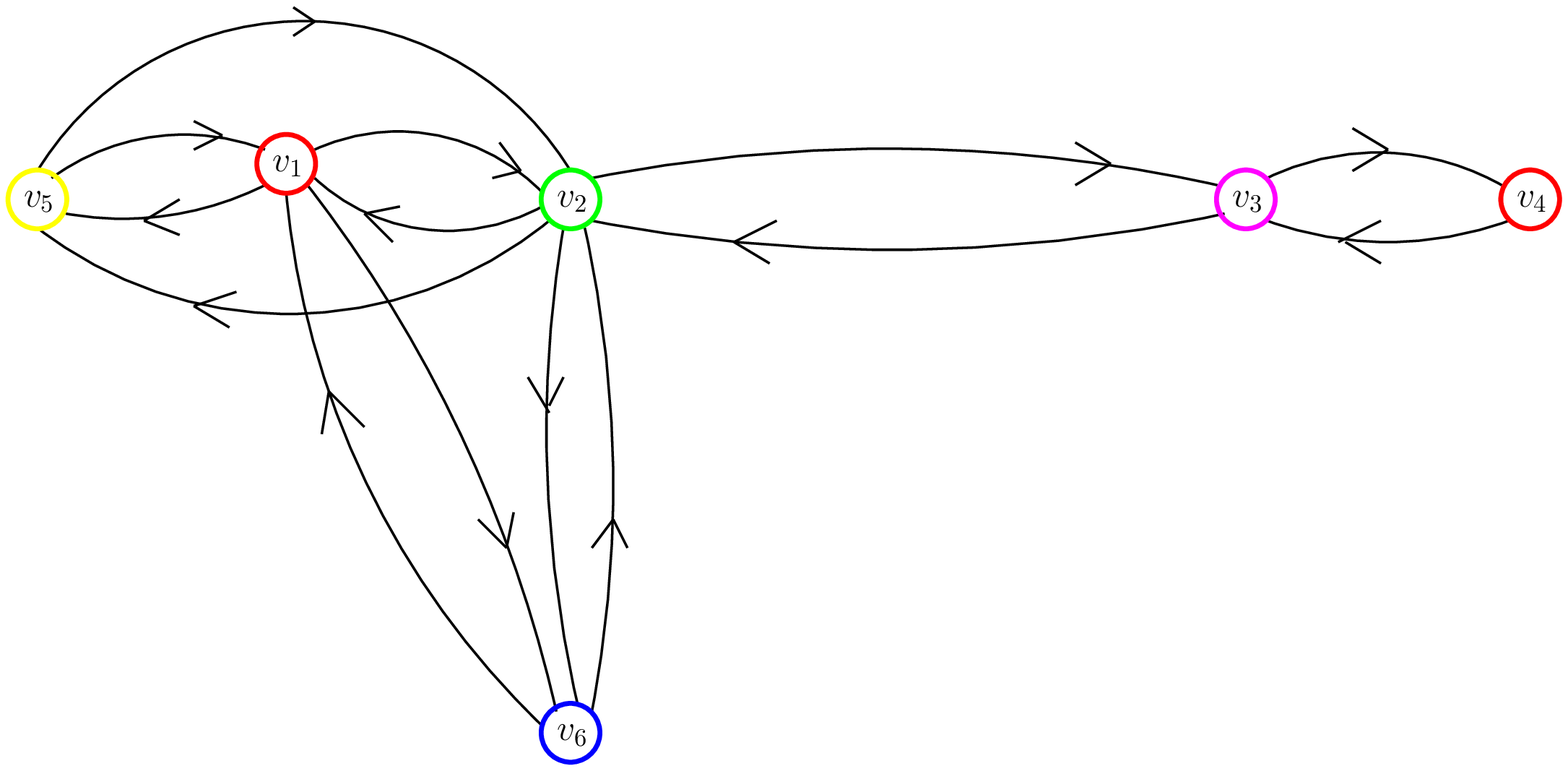}
\caption{Vertex coloring of communication graph shown in Figure
  \ref{fig:communication_graph_multipoint} corresponding to the link
  schedule shown in Figure \ref{fig:multipoint_link_schedule}.}
\label{fig:comm_graph_bcast_colored}
\end{figure}

Given the above representation of an STDMA network, a point to
multipoint link schedule $\Omega(\cdot)$ can be considered as
equivalent to assigning a unique color to every vertex in the
communication graph, such that nodes with the same color transmit
simultaneously in a particular time slot. For the example network
considered, the link schedule shown in Figure
\ref{fig:multipoint_link_schedule} corresponds to the coloring of the
vertices of the communication graph shown in Figure
\ref{fig:comm_graph_bcast_colored}. Time slots 1, 2, 3, 4 and 5 in
$\Omega(\cdot)$ correspond to colors red, blue, green, magenta and
yellow in $\mathcal V$ respectively. Note that a coloring algorithm
that uses the least number of colors also minimizes the schedule
length.

Algorithms for assigning nodes to time slots (equivalently, colors)
require that two vertices $v_i$, $v_j$ can be colored the same if and
only if:
\begin{enumerate}
\item 
  edge $v_i \stackrel{c}{\rightarrow} v_j \not\in {\mathcal E}_c$ and
  edge $v_j \stackrel{c}{\rightarrow} v_i \not\in {\mathcal E}_c$,
  i.e., there is no {\it primary vertex conflict,} and
  \label{primary_vertex_conflict}

\item
  there is no vertex $v_k$ such that $v_i \stackrel{c}{\rightarrow}
  v_k \in {\mathcal E}_c$ and $v_j \stackrel{c}{\rightarrow} v_k \in
  {\mathcal E}_c$ , i.e., there is no {\it secondary vertex conflict}.
  \label{secondary_vertex_conflict}
\end{enumerate}
These criteria are based on the operational constraints
(\ref{eq:op_constraint_broadcast_1}) and
(\ref{eq:op_constraint_broadcast_2}).

Algorithms based on the protocol interference model represent the
network by a communication graph and utilize various graph coloring
methodologies to devise heuristics which yield a minimum length
schedule. Hence, such algorithms have the merit of low computational
complexity. However, recent research suggests that these algorithms
yield low network throughput. This aspect is elaborated in the
following section.

\section{Limitations of Algorithms based on Protocol Interference
  Model}
\label{sec:limitations_protocol_broadcast}

In this section, we illustrate that algorithms based on the protocol
interference model can result in schedules that yield low network
throughput. Note that the limitations of point to multipoint link
scheduling algorithm are similar to those of point to point link
scheduling algorithms described in Section
\ref{sec:limitations_protocol}.

With the intent of maximizing the throughput of an STDMA network,
algorithms based on the protocol interference model transform the
scheduling problem to a vertex coloring problem for the communication
graph representation of the network.  For example, the
BroadcastSchedule algorithm
\cite{ramanathan_lloyd__scheduling_algorithms} works in two phases.
In Phase 1, the vertices of the communication graph are labeled using
the labeler function (Algorithm \ref{func:labeler2}, Section
\ref{sec:improvement_als}).  In Phase 2, vertices are considered in
increasing order of label. For the vertex $u$ under consideration, it
discards any color that leads to primary or secondary vertex conflicts
with $u$. The least color among the residual set of non-conflicting
colors is used to color vertex $u$. If no non-conflicting color
exists, vertex $u$ is colored with a new color.

The simplification of the link scheduling problem in a wireless
network as a vertex coloring problem on the communication graph can
result in schedules that violate the SINR threshold condition
(\ref{eq:sinr_gammac_broadcast}).  Specifically, algorithms based on
the protocol interference model do not necessarily maximize the
throughput of an STDMA network because:
\begin{enumerate}

\item They can result in high cumulative interference at a receiver,
  due to hard-thresholding based on communication radius.  This is
  because the SINR at receiver $r_{i,j,k}$ decreases with an increase
  in the number of concurrent transmissions $M_i$, while $R_c$ has
  been defined for a single transmission only.

  \begin{figure}[thbp]
    \centering
    \includegraphics[width=5.5in]{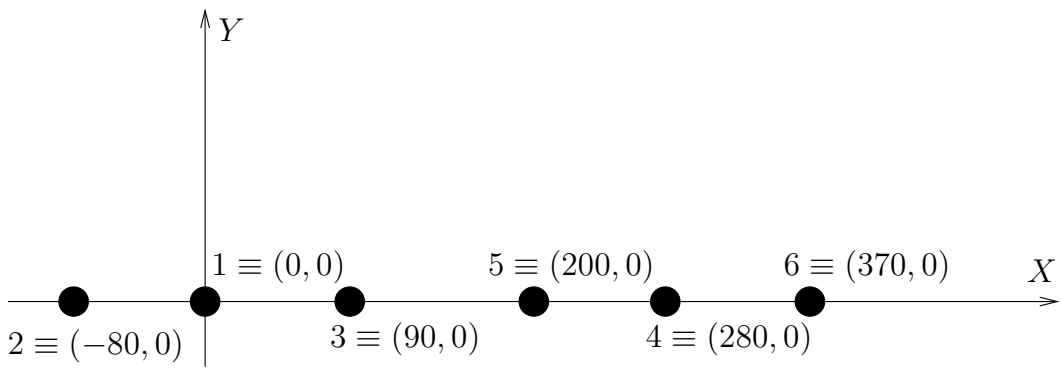}
    \caption{An STDMA wireless network with six nodes.}
    \label{fig:deploy_broadcast_high_intf}
  \end{figure}

  \begin{figure}[thbp]
    \centering
    \includegraphics[width=6in]{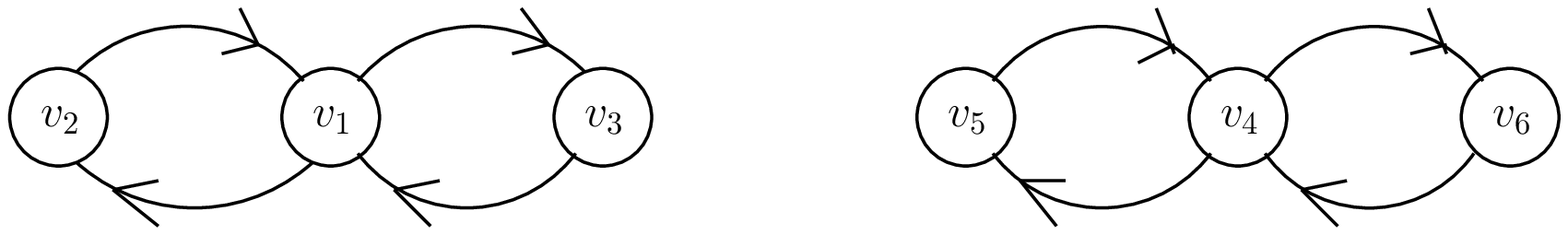}
    \caption{Communication graph model of STDMA network described by
      Figure \ref{fig:deploy_broadcast_high_intf} and Table
      \ref{tab:sys_parameters_broadcast}.}
    \label{fig:comm_graph_broadcast}
  \end{figure}

  \begin{figure}[thbp]
    \centering
    \includegraphics[width=6in]{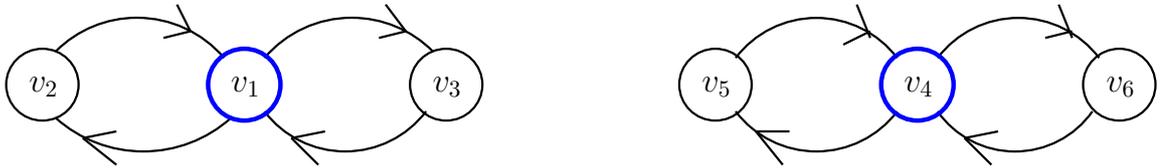}
    \caption{Coloring of vertices $v_1$ and $v_4$ of graph shown in Figure
      \ref{fig:deploy_broadcast_high_intf}.}
    \label{fig:comm_graph_broadcast_colored}
  \end{figure}

  \begin{figure}[thbp]
    \centering
    \includegraphics[width=6.2in]{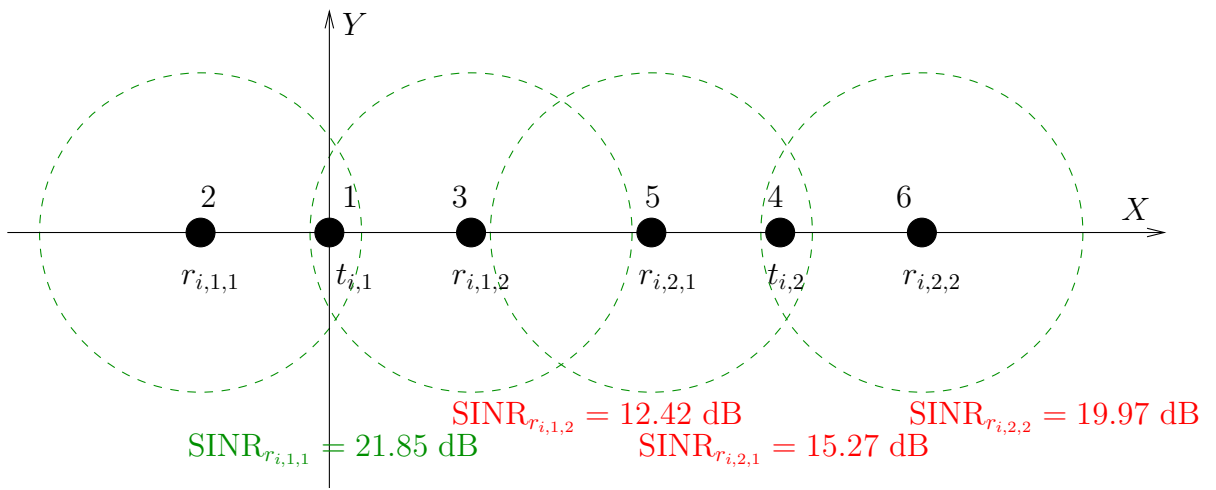}
    \caption{Point to multipoint link scheduling algorithms based on
      protocol interference model can lead to high interference.}
    \label{fig:high_interference_broadcast}
  \end{figure}

  For example, consider the STDMA wireless network whose deployment is
  shown in Figure \ref{fig:deploy_broadcast_high_intf}.  The network
  consists of six nodes whose coordinates (in meters) are $1 \equiv
  (0,0)$, $2 \equiv (-80,0)$, $3 \equiv (90,0)$, $4 \equiv (280,0)$,
  $5 \equiv (200,0)$ and $6 \equiv (370,0)$. The system parameters are
  shown in Table \ref{tab:sys_parameters_broadcast}, which yields
  $R_c=100$ m. The communication graph model of the STDMA network is
  shown in Figure \ref{fig:comm_graph_broadcast}.  Consider the
  transmission requests $1 \rightarrow \{2,3\}$ and $4 \rightarrow
  \{5,6\}$, which correspond to vertices $v_1$ and $v_4$ of the graph
  shown in Figure \ref{fig:comm_graph_broadcast}.  Note that vertices
  $v_1$ and $v_4$ do not have primary or secondary vertex conflicts.
  So, to minimize the number of colors, such an algorithm will color
  these vertices with the same color, as shown in Figure
  \ref{fig:comm_graph_broadcast_colored}.  Equivalently, transmissions
  $1 \rightarrow \{2,3\}$ and $4 \rightarrow \{5,6\}$ will be
  scheduled in the same time slot, say time slot $i$.  However, our
  computations show that the SINRs at receivers $r_{i,1,1}$,
  $r_{i,1,2}$, $r_{i,2,1}$ and $r_{i,2,2}$ are $21.85$ dB, $12.42$ dB,
  $15.27$ dB and $19.97$ dB respectively.  Figure
  \ref{fig:high_interference_broadcast} shows the nodes of the network
  along with the labeled transmitter-receivers sets, receiver-centric
  communication zones and SINRs at the receivers.  From the SINR
  threshold condition (\ref{eq:sinr_gammac_broadcast}), transmission
  $t_{i,1} \rightarrow r_{i,1,1}$ is successful, while transmissions
  $t_{i,1} \rightarrow r_{i,1,2}$, $t_{i,2} \rightarrow r_{i,2,1}$ and
  $t_{i,2} \rightarrow r_{i,2,2}$ are unsuccessful. This leads to low
  network throughput.

\item Moreover, these algorithms are not aware of the topology of the
  network, i.e., they determine a link schedule without being
  cognizant of the exact positions of the transmitters and receivers.

\end{enumerate}

As argued above, point to multipoint link scheduling algorithms based
on the protocol interference model can result in low network
throughput.  In essence, algorithms that construct an approximate
model of an STDMA network (communication graph) and concentrate on
minimizing the schedule length do not necessarily maximize network
throughput.  This observation is developed into a proposal for an
appropriate performance metric in Section
\ref{sec:problem_formulation_broadcast}.

\section{Problem Formulation}
\label{sec:problem_formulation_broadcast}

In this section, we motivate the need for a performance metric that
takes into account the SINR threshold condition
(\ref{eq:sinr_gammac_broadcast}) as the criterion for successful
packet reception. Analogous to the notion of spatial reuse, we propose
a performance metric for point to multipoint link scheduling, which is
also termed as spatial reuse. We argue that spatial reuse is directly
proportional to the number of successful point to multipoint
transmissions.  Finally, we formulate the scheduling problem from a
perspective of maximizing spatial reuse.

Algorithms based on the protocol interference model are inadequate to
design efficient point to multipoint link schedules. This is because
these algorithms are entirely based on the communication graph
$\mathcal G_c(\mathcal V,\mathcal E_c)$, which is a crude
approximation of $\Phi(\cdot)$, and can lead to low network
throughput, as argued in Section
\ref{sec:limitations_protocol_broadcast}.  On the other hand, from
$\Phi(\cdot)$ and ${\mathcal G}_c(\cdot)$, one can exhaustively
determine the link schedule $\Omega(\cdot)$ which yields highest
network throughput according to the physical interference model.
However, this is a combinatorial optimization problem of prohibitive
complexity $(O(|\mathcal V|^{|\mathcal V|}))$ and is thus
computationally infeasible.

To overcome these problems, we propose a point to multipoint link
scheduling algorithm for STDMA wireless networks under the physical
interference model.  Our algorithm is based on the communication graph
model ${\mathcal G}_c({\mathcal V},{\mathcal E}_c)$ as well as SINR
computations.

To evaluate the performance of our algorithm and compare it with
existing link scheduling algorithms, we define the notion of
spatial reuse.  Consider the point to multipoint link schedule
$\Omega(\cdot)$ for the STDMA network $\Phi(\cdot)$.  Under the
physical interference model, transmission $t_{i,j} \rightarrow
r_{i,j,k}$ is successful if and only if
(\ref{eq:sinr_gammac_broadcast}) is satisfied.  The {\it spatial
  reuse} of the link schedule $\Omega(\cdot)$ is defined as the
average number of successful point to multipoint transmissions per
time slot.  Thus
\begin{eqnarray}
\mbox{Spatial Reuse} \;=\; \varsigma &:=&
 \frac{\sum_{i=1}^C\sum_{j=1}^{M_i}\frac{\sum_{k=1}^{\eta(t_{i,j})}I({\mbox{\scriptsize SINR}}_{r_{i,j,k}}\geqslant\gamma_c)}{\eta(t_{i,j})}}{C},
\label{eq:spatial_reuse_broadcast}
\end{eqnarray}
where $I(A)$ denote the indicator function for event $A$, i.e.,
$I(A)=1$ if event $A$ occurs, $I(A)=0$ if event $A$ does not occur.
Note that in (\ref{eq:spatial_reuse_broadcast}), the number of nodes
that successfully receive a transmitted packet is normalized by the
number of neighbors of the transmitting node.  A high value of spatial
reuse corresponds to high network throughput.

The essence of STDMA is to have a reasonably large number of
simultaneous and successful transmissions.  For an STDMA wireless
network which is operational for a long period of time, say $L$ time
slots, the total number of successful point to multipoint
transmissions is $L\varsigma$.  Thus, a high value of spatial reuse
directly translates to higher network throughput and the number of
colors $C$ is relatively unimportant.  Hence, spatial reuse turns out
to be a crucial metric for the comparison of various STDMA link
scheduling algorithms.

Our goal is to design a low complexity point to multipoint link
scheduling algorithm that achieves high spatial reuse, where spatial
reuse is given by (\ref{eq:spatial_reuse_broadcast}). We only consider
link schedules that are feasible and exhaustive.

\section{MaxAverageSINRSchedule Algorithm}
\label{sec:mass_algorithm}

Our proposed point to multipoint link scheduling scheduling algorithm
under the physical interference model is MaxAverageSINRSchedule
(MASS), which considers the communication graph ${\mathcal
  G}_c({\mathcal V},{\mathcal E}_c)$ and is described in Algorithm
\ref{algo:mass}.

\begin{algorithm}
\caption{MaxAverageSINRSchedule (MASS)}
\label{algo:mass}
\begin{algorithmic}[1]
\STATE {\bf input:} STDMA wireless network $\Phi(\cdot)$, communication graph 
  $\mathcal G_c(\mathcal V, \mathcal E_c)$
\STATE {\bf output:} A coloring $C: {\mathcal V} \rightarrow \{1,2,\ldots\}$
\STATE label the vertices of ${\mathcal G}_c$ randomly \COMMENT{Phase 1} 
\FOR[Phase 2 begins]{$j \leftarrow 1 \mbox{ to } n$}
\STATE let $u$ be such that $L(u)=j$
\STATE $C(u) \leftarrow \mbox{MaxAverageSINRColor}(u)$
\ENDFOR \COMMENT{Phase 2 ends}
\end{algorithmic}
\end{algorithm}

In Phase 1, we label all the vertices randomly\footnote{Randomized
  algorithms are known to outperform deterministic algorithms,
  especially when the characteristics of the input are not known a
  priori \cite{motwani_raghavan__randomized_algorithms}.}.
Specifically, if $\mathcal G_c(\cdot)$ has $v$ vertices, we perform a
random permutation of the sequence $(1,2,\ldots,v)$ and assign these
labels to vertices with indices $1,2,\ldots,v$ respectively. Let
$L(u)$ denote the label assigned to vertex $u$.

\begin{algorithm}
\caption{integer MaxAverageSINRColor($u$)}
\label{func:masc}
\begin{algorithmic}[1]
\STATE {\bf input:} STDMA wireless network $\Phi(\cdot)$, communication graph 
  $\mathcal G_c(\mathcal V,\mathcal E_c)$
\STATE {\bf output:} A non-conflicting color
\STATE ${\mathcal C} \leftarrow \mbox{set of existing colors}$
\STATE ${\mathcal C}_p \leftarrow \{C(x):\mbox{$x$ is colored and is a neighbor of $u$}\}$
\STATE ${\mathcal C}_s \leftarrow \{C(x):\mbox{$x$ is colored and is two hops away from $u$}\}$
\STATE ${\mathcal C}_{nc} = {\mathcal C} \setminus \{{\mathcal C}_p \cup {\mathcal C}_s\}$
\IF{${\mathcal C}_{nc} \neq \phi$}
\STATE $r \leftarrow \mbox{color in ${\mathcal C}_{nc}$ which results in 
 maximum average}$ $\mbox{SINR at neighbors of $u$}$
\IF{$\mbox{maximum average SINR} \geqslant \gamma_c$}
\STATE return $r$
\ENDIF
\ENDIF
\STATE return $|\mathcal C|+1$
\end{algorithmic}
\end{algorithm}

In Phase 2, the vertices are examined in increasing order of
label\footnote{In essence, the vertices are scanned in a random order,
  since labeling is random.} and the MaxAverageSINRColor (MASC)
function is used to assign a color to the vertex under consideration.
The MASC function is explained in Algorithm \ref{func:masc}.  It
begins by discarding all colors that have a primary or secondary
vertex conflict with $u$, the vertex under consideration. Among the
set of non-conflicting colors ${\mathcal C}_{nc}$, it chooses that
color for $u$ which results in the maximum value of average SINR at
the neighbors of $u$, provided this value exceeds the communication
threshold.  Intuitively, the average SINR is also a measure of the
average distance of every neighbor of $u$ from all co-colored
transmitters.  The higher the average SINR, the higher is this average
distance. We choose that color which results in the maximum average
SINR at the neighbors of $u$, so that the additional interference at
the neighbors of all co-colored transmitters is kept low. If no such
color is found, it assigns a new color to $u$.

\section{Performance Results}
\label{sec:performance_mass}

In this section, we describe our simulation model. We compare the
performance of the proposed algorithm with existing point to
multipoint link scheduling algorithms.

In our simulation experiments, the location of every node is generated
randomly in a circular region of radius $R$. If $(X_j,Y_j)$ are the
Cartesian coordinates of node $j$, then $X_j \sim U[-R,R]$ and $Y_j
\sim U[-R,R]$ subject to $X_j^2 + Y_j^2 \leqslant R^2$.  Equivalently,
if $(R_j,\Theta_j)$ are the polar coordinates of node $j$, then $R_j^2
\sim U[0,R^2]$ and $\Theta_j \sim U[0,2\pi]$.  Using
(\ref{eq:communication_range}) and (\ref{eq:interference_range}), we
compute $R_c$ and $R_i$, and then map the STDMA network $\Phi(\cdot)$
to the communication graph $\mathcal G(\mathcal V,\mathcal E_c)$.
Once the link schedule $\Omega(\cdot)$ is computed by every algorithm,
the spatial reuse $\varsigma$ is computed using
(\ref{eq:spatial_reuse_broadcast}).  We use two sets of prototypical
values of system parameters in wireless networks
\cite{kim_lim_hou__improving_spatial}.  For a given set of system
parameters, we calculate the average spatial reuse by averaging $\varsigma$
over 1000 randomly generated networks.  Keeping all other parameters
fixed, we observe the effect of increasing the number of nodes on the
average spatial reuse.

In our experiments, we compare the performance of the following
algorithms:
\begin{enumerate}
\item BroadcastSchedule (BS) \cite{ramanathan_lloyd__scheduling_algorithms}
\item MaxAverageSINRSchedule (MASS)
\end{enumerate}

In our first set of experiments (Experiment 1), we assume that $R=500$
m, $P=10$ mW, $\beta = 4$, $N_0=-90$ dBm, $\gamma_c = 20$ dB and
$\gamma_i=10$ dB. Thus, $R_c=100$ m and $R_i=177.8$ m.  We vary the
number of nodes from 30 to 110 in steps of 5. Figure
\ref{fig:expt1_broadcast} plots the average spatial reuse vs. number of nodes
for both the algorithms.

\begin{figure}[thbp]
\centering
\includegraphics[width=5in]{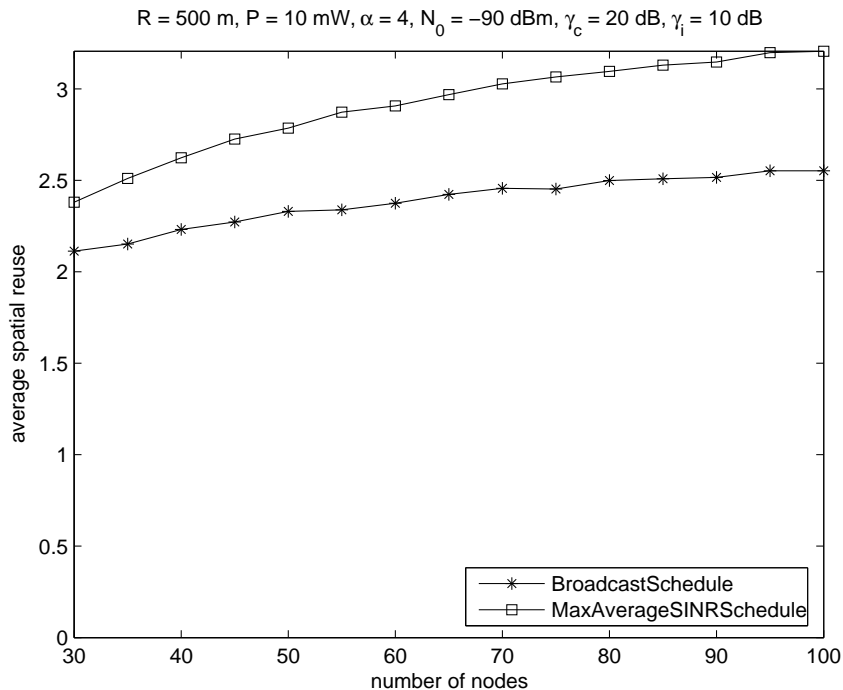}
\caption{Average spatial reuse vs. number of nodes for Experiment 1.}
\label{fig:expt1_broadcast}
\end{figure}

In our second set of experiments (Experiment 2), we assume that
$R=700$ m, $P=15$ mW, $\beta = 4$, $N_0=-85$ dBm, $\gamma_c = 15$ dB
and $\gamma_i=7$ dB. Thus, $R_c=110.7$ m and $R_i=175.4$ m.  We vary
the number of nodes from 70 to 150 in steps of 5. Figure
\ref{fig:expt2_broadcast} plots the average spatial reuse vs. number
of nodes for both the algorithms.

\begin{figure}[thbp]
\centering
\includegraphics[width=5in]{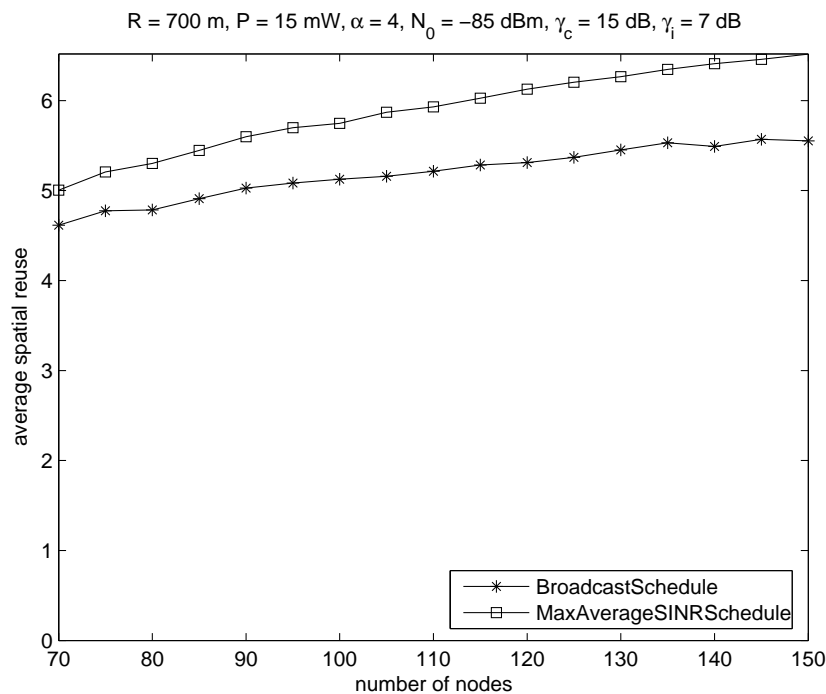}
\caption{Average spatial reuse vs. number of nodes for Experiment 2.}
\label{fig:expt2_broadcast}
\end{figure}

From Figures \ref{fig:expt1_broadcast} and \ref{fig:expt2_broadcast},
we observe that average spatial reuse increases with the number of
nodes for both the algorithms.  The MASS algorithm consistently yields
higher average spatial reuse compared to BS.  The spatial reuse of
MASS is about $15\%$ higher than BS in Experiment 1 and $4\%$ higher
in Experiment 2.  This improvement in performance translates to
substantially higher network throughput.

Also, an increase in the number of nodes in a given geographical area
leads to an increase in the number of vertices having a primary or
secondary vertex conflict with a given vertex. Hence, the number of
non-conflicting colors for a given vertex also decreases. From this
reduced set of non-conflicting colors, BroadcastSchedule chooses a
color randomly, while MaxAverageSINRSchedule chooses a color based on
SINR conditions. Since spatial reuse takes SINR threshold conditions
into account, the gap between average spatial reuse values increases
with number of nodes in Figures \ref{fig:expt1_broadcast} and
\ref{fig:expt2_broadcast}.

\section{Analytical Result}
\label{sec:complexity_mass}

In this section, we derive an upper bound on the running time
(computational) complexity of the MaxAverageSINRSchedule algorithm.
Let $v$ denote the number of vertices of the communication graph
${\mathcal G}_c({\mathcal V},{\mathcal E}_c)$.

\begin{theorem}
  The running time of MaxAverageSINRSchedule is $O(v^2)$.
\end{theorem}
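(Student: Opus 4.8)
The plan is to bound the running time of each of the two phases of MaxAverageSINRSchedule separately and then add them. Phase 1 is a random permutation of the labels $1,\ldots,v$ and assignment to the $v$ vertices; assuming a random element can be drawn from a finite set in unit time (as in \cite{motwani_raghavan__randomized_algorithms}, and consistent with the assumption already used in Lemmas \ref{lem:oriented_cfls} and the complexity proof of SGLS), this takes $O(v)$ time. So Phase 1 is not the bottleneck. The real work is Phase 2, which iterates over the $v$ vertices in label order and calls MaxAverageSINRColor once per vertex. Hence the theorem reduces to showing that a single invocation of MaxAverageSINRColor runs in $O(v)$ time.

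Next I would analyze MaxAverageSINRColor($u$) line by line. Computing the set $\mathcal C_p$ of colors of colored neighbors of $u$ requires scanning the (at most $v-1$) vertices adjacent to $u$ in $\mathcal G_c(\cdot)$, which is $O(v)$; computing $\mathcal C_s$, the colors of vertices two hops from $u$, requires scanning neighbors of neighbors, which naively is $O(v^2)$ but can be done in $O(v)$ with a careful implementation — e.g. for each colored vertex $x$, precompute/maintain whether $x$ is within two hops of $u$ by a single pass, or equivalently mark all neighbors of $u$ and then for each colored vertex check adjacency to a marked vertex; since there are $O(v)$ colored vertices and adjacency can be tested in $O(1)$ with an adjacency matrix, this is $O(v)$. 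The set difference $\mathcal C_{nc} = \mathcal C \setminus (\mathcal C_p \cup \mathcal C_s)$ involves at most $O(v)$ colors (since $|\mathcal C| = O(v)$, as the number of colors never exceeds $v$) and is computed in $O(v)$. The key step is choosing the color in $\mathcal C_{nc}$ that maximizes the average SINR at the neighbors of $u$: for each candidate color $r$, one must compute, for each of the $\eta(u) \leqslant v$ neighbors $w$ of $u$, the SINR at $w$ when $u$ transmits alongside the transmitters already colored $r$. The subtlety is that a fixed candidate color $r$ has some set of co-colored transmitters $E_r$, and $\sum_r |E_r| = O(v)$ across all colors (each colored vertex belongs to exactly one color class); moreover each neighbor $w$ of $u$ needs, for its SINR computation under color $r$, only the interference contributions from transmitters in $E_r$ that lie within range, so with a careful implementation — maintaining for each colored vertex its color, and iterating once over all colored vertices to accumulate interference at the relevant neighbors — the total work to evaluate all candidate colors is $O(v)$, not $O(v^2)$. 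Then selecting the maximizing $r$ and checking maximum average SINR $\geqslant \gamma_c$ is $O(v)$, and returning $r$ or $|\mathcal C|+1$ is $O(1)$.

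Putting it together: each call to MaxAverageSINRColor is $O(v)$, there are $v$ calls in Phase 2, so Phase 2 is $O(v^2)$, and the total running time is $O(v) + O(v^2) = O(v^2)$. The main obstacle is the third paragraph's claim that the ``choose the color with maximum average SINR'' step is $O(v)$ rather than $O(v^2)$ — this rests on the amortization observation that the color classes partition the colored vertices, so the aggregate cost of computing interference contributions over \emph{all} candidate colors and \emph{all} neighbors of $u$ is proportional to (number of colored vertices) $\times$ (constant work per vertex), i.e. $O(v)$, provided one does not recompute from scratch for each (color, neighbor) pair. I would state this amortization explicitly as the crux of the argument, mirroring the ``careful implementation'' remarks already made in the SGLS and CFLS complexity proofs, and note that using an adjacency matrix for $\mathcal G_c(\cdot)$ makes all adjacency and distance look-ups $O(1)$.
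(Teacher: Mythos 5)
Your overall decomposition is exactly the paper's: Phase 1 costs $O(v)$ under the unit-time random-selection assumption, Phase 2 makes $v$ calls to MaxAverageSINRColor, and everything reduces to showing that a single call costs $O(v)$. The paper disposes of that last step in one sentence --- the set of colors to examine has size $O(v)$ and ``with a careful implementation'' the function runs in time proportional to $|{\mathcal C}_{nc}|$ --- whereas you try to exhibit the careful implementation. That is the right instinct, but your justification of the crucial step does not close. The cost of evaluating the average SINR at the neighbors of $u$ for \emph{all} candidate colors is not $\sum_r |E_r|$ but $\sum_r |E_r|\cdot \eta(u)$: every colored vertex must deposit its interference contribution at each of the $\eta(u)$ neighbors of $u$, and, separately, each of the $O(v)$ candidate colors requires an $O(\eta(u))$ pass to form the average of the per-neighbor SINRs, which are ratios and do not aggregate across neighbors in $O(1)$. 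So the amortized bound you actually get is $O(v\,\eta(u))$ per call, which is $O(v)$ only for bounded-degree graphs. Your parenthetical that a neighbor needs interference ``only from transmitters in $E_r$ that lie within range'' is also wrong under the physical interference model: the SINR denominator sums over all co-colored transmitters regardless of distance, so no range-based pruning is available.

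In short, you reproduce the paper's argument faithfully at the level the paper states it, but the extra detail you add in the third paragraph --- the claim that the color-selection step amortizes to $O(v)$ --- is exactly the point the paper leaves unproved, and your proposed amortization undercounts by a factor of $\eta(u)$. To match the paper you could simply invoke the same ``careful implementation'' assertion; to actually prove the $O(v)$ per-call bound you would need either a degree bound or a precomputation scheme that maintains, for every (color, vertex) pair, the accumulated interference together with the neighbor-averaged SINR, and you would then have to account for the cost of keeping those tables current across coloring events.
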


\begin{proof}
  Assuming that an element can be chosen randomly and uniformly from a
  finite set in unit time
  (\cite{motwani_raghavan__randomized_algorithms}, Chapter 1), the
  running time of Phase 1 can be shown to be $O(v)$.  In Phase 2, the
  vertex under consideration is assigned a color using
  MaxAverageSINRColor. The worst-case size of the set of colors to be
  examined $|{\mathcal C}_{nc} \cup {\mathcal C}_p \cup {\mathcal
    C}_s|$ is $O(v)$.  With a careful implementation,
  MaxAverageSINRColor runs in time proportional to $|{\mathcal
    C}_{nc}|$, i.e., $O(v)$. Thus, the running time of Phase 2 is
  $O(v^2)$. Finally, the overall running time of
  MaxAverageSINRSchedule is $O(v^2)$.
\end{proof}

\section{Discussion}
\label{sec:conclusions_mass}

In this chapter, we have developed a point to multipoint link
scheduling algorithm for STDMA wireless networks under the physical
interference model, namely MaxAverageSINRSchedule.  The performance of
our algorithm is superior to existing algorithms.  A practical
experimental modeling shows that, on an average, our algorithm
achieves $15\%$ higher spatial reuse than the BroadcastSchedule
algorithm \cite{ramanathan_lloyd__scheduling_algorithms}.  Since link
schedules are constructed offline only once and then used by the
network for a long period of time, this improvement in performance
directly translates to higher network throughput.  The computational
complexity of MaxAverageSINRSchedule is also comparable to the
computational complexity of BroadcastSchedule. Therefore,
MaxAverageSINRSchedule is a good candidate for efficient STDMA point
to multipoint link scheduling algorithms.

\clearpage{\pagestyle{empty}\cleardoublepage}

\chapter{A Review of Random Access Algorithms for Wireless Networks}
\label{ch:review_random}

The MAC problem or multipoint to point problem is present in all
communication networks, both wired and wireless. Multiple nodes
(users) can access a single channel simultaneously to communicate with
each other or a common receiver -- the challenge is to design
efficient channel access algorithms to achieve the desired performance
in terms of throughput and delay. Several solutions to the MAC problem
have been proposed depending on source traffic characteristics,
channel models and Quality of Service (QoS) requirements of the users.

MAC protocols can be broadly classified into two types: fixed resource
allocation protocols and random access protocols.  Fixed resource
allocation protocols such as Time Division Multiple Access (TDMA),
Frequency Division Multiple Access (FDMA) and Code Division Multiple
Access (CDMA) assign orthogonal or near-orthogonal channels to every
user and are mostly implemented in voice-dominant wireless cellular
networks.  These protocols typically require the presence of a central
entity (base station) to perform channel allocation and admission
control, i.e., they are highly centralized. Though fixed resource
allocation protocols are contention-free and can multiplex users with
similar traffic characteristics easily, they suffer from low
throughput and high channel access delay when the traffic is bursty
and there are large number of users.  On the other hand, in random
access protocols, users vary their transmission probabilities or
transmission times based on limited channel feedback, i.e., random
access protocols are highly distributed.  Random access protocols are
more suitable for scenarios wherein many users with varied traffic
requirements have to be multiplexed, i.e., the traffic is bursty.

Random access algorithms for satellite communications, multidrop
telephone lines and multitap bus (``traditional random access
algorithms'') have been well studied for the past four decades. These
algorithms can be broadly classified into three categories: ALOHA
\cite{abramson__aloha_system}, \cite{roberts__aloha_packet}, Carrier
Sense Multiple Access \cite{kleinrock_tobagi__packet_switching} and
tree (or stack or splitting) algorithms
\cite{capetanakis__tree_algorithms}. Traditional random access
algorithms have been implemented in practical systems.  For example,
ALOHA is used in most cellular networks to request channel access and
also in satellite communication networks.  Carrier Sense Multiple
Access with Collision Detection (CSMA/CD) is used to resolve
contentions in Local Area Networks (LANs).

On the other hand, random access algorithms that incorporate physical
layer characteristics such as SINR and channel variations have only
been studied recently. These algorithms, which have been primarily
proposed for wireless networks, can be broadly classified into three
categories: algorithms based on signal processing and diversity
techniques, channel-aware ALOHA algorithms based on adapting the
retransmission probabilities of contending users and ``tree-like''
algorithms based on adapting the set of contending users.  Existing
random access algorithms, such as Carrier Sense Multiple Access with
Collision Avoidance (CSMA/CA ), are not channel-aware and can lead to
low throughput.  Thus, the design of physical layer aware random
access algorithms can be a potential step towards achieving higher
data rates in future wireless networks.

The organization of this chapter is as follows. Section
\ref{sec:traditional_random_access} provides a summary of traditional
random access algorithms along with the canonical system model,
performance metrics and well-known random access techniques such as
ALOHA and tree algorithms.  This helps us understand channel-aware
generalizations of these algorithms. In Section
\ref{sec:sigpro_random_access}, we review research papers which employ
signal processing and diversity techniques to correctly decode packets
in random access wireless networks. We critically review some of the
research which focus on channel-aware ALOHA and tree-like algorithms
for wireless networks in Sections \ref{sec:aloha_wireless} and
\ref{sec:splitting_wireless} respectively. Finally, we motivate the
use of variable transmission power to increase the throughput of
random access wireless networks in Section \ref{sec:power_controlled}.

\section{Traditional Random Access Algorithms}
\label{sec:traditional_random_access}

In this section, we describe the idealized slotted multiaccess model,
which can be used to represent various multiaccess media such as
satellite channels, multidrop telephone lines and multitap bus.  We
explain traditional random access algorithms such as ALOHA and tree
algorithms. We also describe the performance metrics used to analyze
and evaluate random access algorithms, namely, throughput, delay and
stability.

Consider an idealized slotted multiaccess system with $m$ transmitting
nodes and one receiver. The assumptions of the model are
\cite{bertsekas_gallager__data_networks}:
\begin{enumerate}

\item Slotted system: All transmitted packets have the same length and
  each packet requires one time unit, called a slot, for transmission.

\item One of the following is usually assumed:
  \begin{enumerate}
  \item Poisson arrivals: Packets arrive at each of the $m$ nodes
    according to an independent Poisson process. Let $\lambda$ be the
    overall arrival rate to the system and let $\frac{\lambda}{m}$ be
    the arrival rate at each transmitting node.
  \item Backlogged model: Every node always has a packet to transmit.
    Once a node transmits a packet successfully, a new packet is
    generated and awaits transmission. \label{asmp:backlogged_model}
  \end{enumerate}

\item Collision or perfect reception: If two or more nodes transmit a
  packet in a given slot, then there is a collision and the receiver
  obtains no information about the contents or the sources of
  transmitted packets. If only one node transmits a packet in a given
  slot, the packet is correctly received.

\item $\{0,1,e\}$ immediate feedback: At the end of each slot, every
  node obtains feedback from the receiver specifying whether 0 packet,
  1 packet or more than one packet ($e$ denotes error) were
  transmitted in that slot. \label{asmp:collision_model}

\item Retransmission of collisions: Each packet involved in a
  collision must be retransmitted in some later slot, with further
  such retransmissions until the packet is successfully received. A
  node with a packet that must be retransmitted is said to be
  backlogged.

\item Only one of the following is assumed:
  \begin{enumerate}
  \item No buffering: If one packet at a node is currently waiting for
    transmission or colliding with another packet during transmission,
    new arrivals at that node are discarded and never transmitted.
  \item Infinite set of nodes: The system has an infinite set of nodes
    and each new packet arrives at a new node.
    \label{asmp:infinite_nodes}
  \end{enumerate}

\end{enumerate}

For the analysis and performance evaluation of random access
algorithms, the metrics of interest are:
\begin{enumerate}

\item Delay: Index packets as $1,2,3,\ldots$ according to their
  arrival instants. Let $D_j$ denote the delay experienced by $j^{th}$
  packet. Then the average packet delay is defined as
  \begin{eqnarray}
    \mathcal D &=& \lim_{m \rightarrow \infty} 
      E\left[\frac{1}{m} \sum_{j=1}^m D_j\right].
  \end{eqnarray}

\item Throughput: The following are the two most common definitions of
  throughput:
  \begin{enumerate}
  \item Throughput is the supremum of input packet arrival rates
    $\lambda$ such that the packet delay remains bounded, i.e.,
    \begin{eqnarray} {\mathcal T}_1 &=& \sup_{{\mathcal D} < \infty }
      \lambda .
    \end{eqnarray}

  \item Let $n(t)$ denote the number of packets successfully transmitted
    in $[0,t]$. Define
    \begin{eqnarray*}
      T(\lambda) &=&
      \left\{
        \begin{array}{ll}
          \lim_{t \rightarrow \infty} E\left[\frac{n(t)}{t}\right] & 
          \mbox{if} \;\; {\mathcal D} < \infty, \\
          0 & \mbox{otherwise}.
        \end{array}
      \right.
    \end{eqnarray*}
    Throughput is then defined as
    \begin{eqnarray} 
      \mathcal T_2 &=& \sup_{\lambda} T(\lambda).
    \end{eqnarray}

\end{enumerate}

\item Stability: A random access algorithm is stable if the throughput
  $\mathcal T > 0$ and unstable if $\mathcal T = 0$.

\end{enumerate}

The research of random access algorithms began with the unslotted
ALOHA (pure ALOHA) algorithm proposed by Abramson
\cite{abramson__aloha_system}. Each node, upon receiving a packet,
transmits it immediately rather than waiting for a slot boundary.  If
a packet is involved in a collision, it is retransmitted after a
random delay. It can be shown that unslotted ALOHA achieves a maximum
throughput of $\frac{1}{2e} \approx 0.1839$
\cite{bertsekas_gallager__data_networks}.  An advantage of unslotted
ALOHA is that it can be used with variable-length packets.

Slotted ALOHA is a variation by Roberts \cite{roberts__aloha_packet}
of the original unslotted ALOHA protocol proposed by Abramson.  Each
node simply transmits a newly arriving packet in the first slot after
the packet arrival. When a collision occurs, every node sending a
colliding packet discovers the collision at the end of the slot and
becomes backlogged. Backlogged nodes wait for a random number of slots
before retransmitting.  The maximum throughput of slotted ALOHA can be
shown to be $\frac{1}{e} \approx 0.3678$
\cite{bertsekas_gallager__data_networks}.
Drift-analytic\footnote{Drift in state $n$ is defined as the expected
  change in backlog over one time-slot, starting in state $n$.}
methods reveal that slotted ALOHA is unstable.  To stabilize ALOHA,
some techniques estimate $n$ or $p_r$, so as to maintain the attempt
rate $G(n)$ at 1, resulting in a maximum stable throughput of
$\frac{1}{e}$ \cite{hajek_loon__decentralized_dynamic},
\cite{rivest__network_control}.  Unlike unslotted ALOHA, slotted ALOHA
cannot be easily used with variable-sized packets. In slotted ALOHA,
long packets must be broken up to fit into slots and short packets
must be padded out to fill up slots.

Keeping the random access spirit of the ALOHA protocol, researchers
attempted to design more efficient protocols. A highly successful
approach consists of improving the control of the channel by carrier
sensing, i.e., the Carrier Sense Multiple Access (CSMA) technique. In
\cite{bertsekas_gallager__data_networks}, the authors show that CSMA
outperforms ALOHA. Research has shown that CSMA based protocols can
achieve a throughput close to 0.9
\cite{takagi_kleinrock__throughput_analysis}.  The Ethernet protocol,
which is used to connect computers on a wired LAN, utilizes Carrier
Sense Multiple Access with Collision Detection (CSMA/CD).

In splitting algorithms, the set of colliding nodes splits into
subsets, one of which transmits in the next slot. For a given
colliding node, the choice of its subset depends on a pre-determined
rule such as, the outcome of tossing an unbiased coin, a function of
its arrival time or a function of its node identifier.  If the
collision is not resolved, a further splitting into subsets takes
place. The algorithm proceeds recursively until all collisions are
resolved.

\begin{figure}[thbp]
  \centering
  \includegraphics[width=6in]{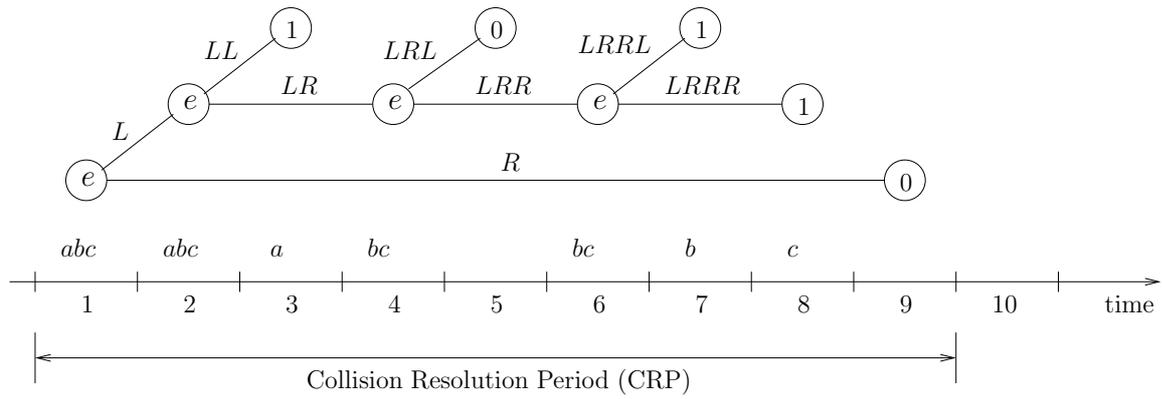}
  \caption{Basic Tree Algorithm for three nodes $a$, $b$ and $c$.}
  \label{fig:crp_bta}
\end{figure}

\begin{table}[tbhp]
\centering
\begin{tabular}{|l|l|l|l|} \hline
Slot & Transmitting set & Waiting sets & Feedback \\ \hline
1 & $U$    & $\phi$   & $e$ \\
2 & $L$    & $R$      & $e$ \\
3 & $LL$   & $LR,R$   & 1 \\
4 & $LR$   & $R$      & $e$ \\
5 & $LRL$  & $LRR,R$  & 0 \\
6 & $LRR$  & $R$      & $e$ \\
7 & $LRRL$ & $LRRR,R$ & 1 \\
8 & $LRRR$ & $R$      & 1 \\
9 & $R$    & $\phi$   & 0 \\ \hline
\end{tabular}
\caption{Transmitting and waiting sets for basic tree algorithm shown in Figure
  \ref{fig:crp_bta}.}
\label{tab:sets_bta}
\end{table}

\begin{figure}[thbp]
  \centering
  \includegraphics[width=6in]{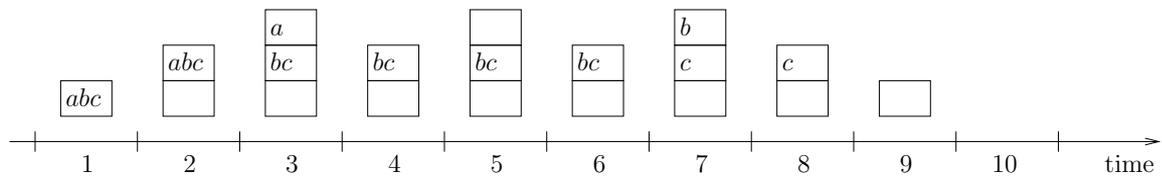}
  \caption{Stack representation of transmitting and waiting nodes for
    basic tree algorithm shown in Figure \ref{fig:crp_bta}.}
  \label{fig:stack_bta}
\end{figure}

In the Basic Tree Algorithm (BTA) \cite{capetanakis__tree_algorithms},
when a collision occurs, say in $k^{th}$ slot, all nodes not involved
in the collision go into a waiting mode, and all those involved in the
collision split into two subsets, according to the pre-determined
rule.  The first subset (``left'' subset) transmits in slot $k+1$, and
if that slot is idle or successful, the second subset (``right''
subset) transmits in slot $k+2$.  Alternatively, if another collision
occurs in slot $k+1$, the first of the two subsets splits again, and
the second subset waits for the resolution of that collision.  Figure
\ref{fig:crp_bta} exemplifies the operation of BTA for three nodes
$a$, $b$ and $c$.  Observe the binary tree structure of the sets of
transmitting and waiting nodes in the figure.  The transmitting and
waiting sets in terms of subtrees of this binary tree are shown in
Table \ref{tab:sets_bta}, where $U=\{a,b,c\}$ denotes the set of all
nodes that were involved in the initial collision.  The labeling of
the subtrees is recursive; for example, $LR$ denotes the right subtree
of the left subtree of the original binary tree.  The transmission
order corresponds to that of a stack, as shown in Figure
\ref{fig:stack_bta}.  In each slot, the stack is popped and all the
nodes that were at the top of the stack transmit their packets. In
case of a collision, the stack is pushed with nodes that join the
right subset and then pushed again with nodes that join the left
subset. In case of a success or idle, no push operations are performed
on the stack.  A Collision Resolution Period (CRP) is defined to be
completed when a success or idle occurs and there are no remaining
elements on the stack. In Figure \ref{fig:crp_bta}, the length of the
CRP is 9 slots.

During the operation of BTA, many new packets might arrive while a
collision is being resolved. To solve this problem, at the end of a
CRP, the set of nodes with new arrivals is immediately split into $j$
subsets, where $j$ is chosen so that the expected number of packets
per subset is slightly greater than 1.  The maximum throughput,
optimized over the choice of $j$ as a function of expected number of
waiting packets, is 0.43 packets per slot
\cite{capetanakis__tree_algorithms}. 

\begin{figure}[thbp]
  \centering
  \includegraphics[width=6in]{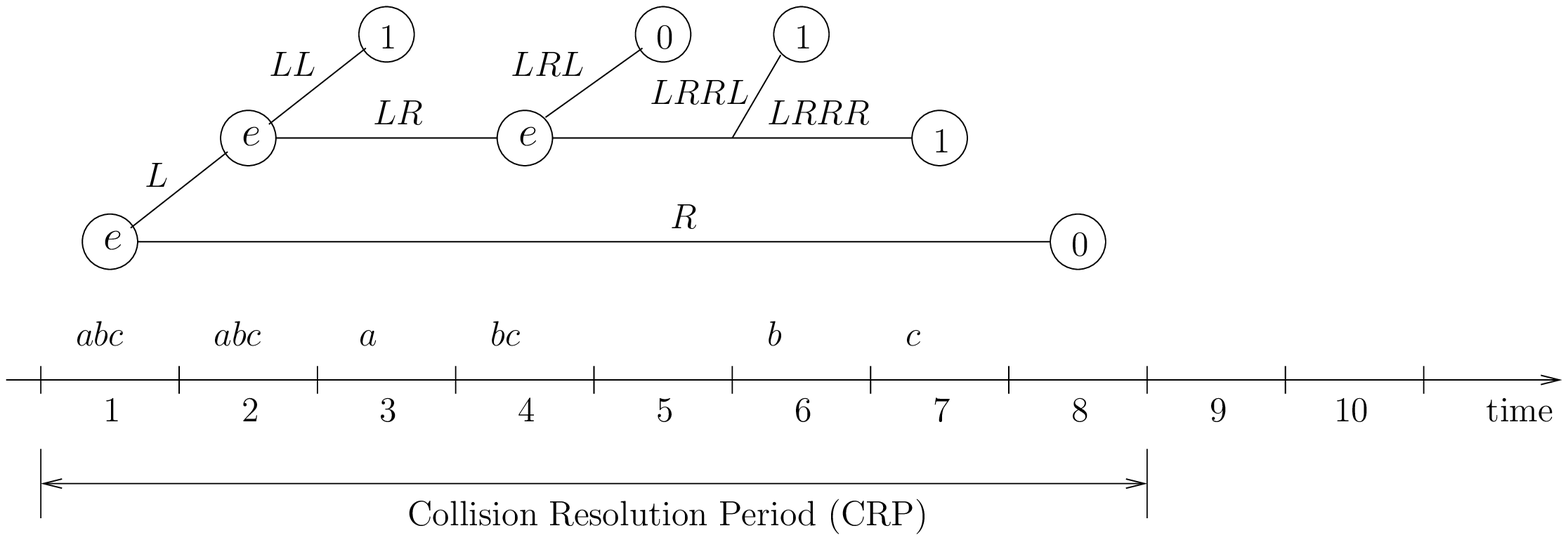}
  \caption{Modified Tree Algorithm for three nodes $a$, $b$ and $c$.}
  \label{fig:crp_mta}
\end{figure}

\begin{table}[tbhp]
\centering
\begin{tabular}{|l|l|l|l|} \hline
Slot & Transmitting set & Waiting sets & Feedback \\ \hline
1 & $U$    & $\phi$   & $e$ \\
2 & $L$    & $R$      & $e$ \\
3 & $LL$   & $LR,R$   & 1 \\
4 & $LR$   & $R$      & $e$ \\
5 & $LRL$  & $LRR,R$  & 0 \\
6 & $LRRL$ & $LRRR,R$ & 1 \\
7 & $LRRR$ & $R$      & 1 \\
8 & $R$    & $\phi$   & 0 \\ \hline
\end{tabular}
\caption{Transmitting and waiting sets for modified tree algorithm shown in 
  Figure \ref{fig:crp_mta}.}
\label{tab:sets_mta}
\end{table}

\begin{figure}[thbp]
  \centering
  \includegraphics[width=6in]{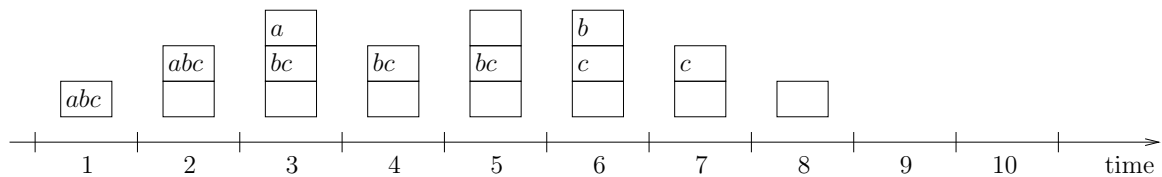}
  \caption{Stack representation of transmitting and waiting nodes for
    modified tree algorithm shown in Figure \ref{fig:crp_mta}.}
  \label{fig:stack_mta}
\end{figure}

There exist variants of BTA that yield higher throughput.  For
example, in Modified Tree Algorithm (MTA), if a collision in slot $k$
is followed by an idle in slot $k+1$, then nodes which collided in
slot $k$ refrain from transmitting in slot $k+2$. Instead, they
further split into two subsets, one of which transmits in slot $k+2$.
As an example, the operation of MTA for three nodes $a$, $b$ and $c$
is shown in Figure \ref{fig:crp_mta}.  Observe that the length of the
CRP is 8 slots.  For this example, the transmitting and waiting sets
of subtrees are shown in Table \ref{tab:sets_mta}, while the
corresponding stack representation is shown in Figure
\ref{fig:stack_mta}.  If an idle occurs in the current slot and a
collision occurred in the previous slot (see Slot 5 in Figure
\ref{fig:stack_mta}), then the stack is popped a second time but the
nodes at the top of the stack are not transmitted. Instead, these
nodes split into right and left subsets and these subsets are pushed
on the stack.  This leads to fewer collisions and higher throughput
compared to BTA.  The maximum stable throughput of MTA is 0.46 packets
per slot \cite{massey__conflict_resolution}.

In First Come First Serve (FCFS) splitting algorithm
\cite{bertsekas_gallager__data_networks}, nodes involved in a
collision split into two subsets based on the arrival times of
collided packets. Using this approach, each subset consists of all
packets that arrived in some given interval, and when a collision
occurs, that interval will be split into two smaller intervals.  By
always transmitting packets that arrived in the earlier interval
first, the algorithm transmits successful packets in the order of
their arrival. The FCFS algorithm is stable for $\lambda < 0.4871$
\cite{bertsekas_gallager__data_networks}.  Conflict resolution
protocols based on tree algorithms have provable stability properties
\cite{molle_polyzos__conflict_resolution}.

We should point out that the random access algorithm proposed in
Chapter \ref{ch:powercontrolled} has a ``tree structure'' analogous to
that of MTA. The detailed explanations of BTA and MTA provide a basic
background to understand the dynamics of the proposed algorithm.

So far, we have summarized the methodology of traditional random
access algorithms.  In subsequent sections, we will focus on random
access algorithms that are better suited for wireless networks such as
WLANs and Wireless Metropolitan Area Networks (WMANs).

\section{Signal Processing in Random Access}
\label{sec:sigpro_random_access}

The collision model (Section \ref{sec:traditional_random_access},
Assumption \ref{asmp:collision_model}) is simple in that the onus of
scheduling packets is left entirely to the MAC layer. On the contrary,
physical layer techniques like multipacket reception, capture and
network-assisted diversity are able to correctly decode packets from
collisions by means such as coding and signal processing.  These
techniques are potential steps towards alleviating the burden of
decoding packets from the MAC layer to the physical layer
\cite{tong_naware__signal_processing}.  In this section, we review
representative research work which exploits signal processing and
diversity techniques to correctly decode the received packets in
random access wireless networks.

With the advent of multiaccess techniques such as CDMA and Multiuser
Detection \cite{verdu__multiuser_detection}, the first fundamental
change in the collision model has been propounded in
\cite{ghez_verdu__stability_properties}. The authors offer the
generalization that, in the presence of simultaneous transmissions,
the reception can be described by conditional probabilities instead of
deterministic failure. They propose the MultiPacket Reception (MPR)
model defined by the matrix
\begin{eqnarray}
  {\mathbf C} &=&
  \left[ 
    \begin{array}{cccc}
    C_{10} & C_{11} & & \\
    C_{20} & C_{21} & C_{22} & \\
    \vdots & & & \ddots
    \end{array}
  \right],
\end{eqnarray}
where $C_{ij}$ is the conditional probability that, given $i$ users
transmit, $j$ out of $i$ transmissions are successful. Given $k$ users
transmit at the same time, the average number of successfully received
packets is given by
\begin{eqnarray}
  C_k &=& \sum_{j=0}^k jC_{kj}.
\end{eqnarray}
They show that ALOHA under MPR achieves stable throughput $\lim_{k
  \rightarrow \infty}C_k$ assuming that the limit exists.  The
stability and delay of finite-user slotted ALOHA with multipacket
reception has been analyzed in \cite{naware_mergen__stability_delay}.

In \cite{hajek_krishna__capture_probability}, the authors analyze the
probability of capture in a multipoint to point wireless network.
Analogous to the physical interference model, the capture model
assumes that if a user's SINR exceeds a threshold $\gamma$, then that
user's packet will be successfully received.  They consider a
realistic multiplicative propagation model in which the received power
is obtained by multiplying the transmitted power by independent random
variables representing fading, shadowing and path loss effects. To
model the near-far effect, they assume that the distance $r$ of a
mobile station from the base station is a random variable with
distribution function $F_R(r)$. They show that, under broad
conditions, the roll-off parameter $\delta$ of the distribution of
power received from a mobile station is determined by the path loss
exponent and $F_R(r)$. Additionally, $\delta$ is insensitive to other
effects such as Rayleigh or Rician fading and log-normal shadowing.
Finally, they show that in the limit of a large number of
transmitters, the probability of capture is determined by the power
capture threshold $\gamma$ and $\delta$.  Though the analysis provided
in \cite{hajek_krishna__capture_probability} is mathematically robust,
the authors do not describe any multiple access algorithm which
achieves high throughput in wireless networks under the capture model,
i.e., their result is more existential than constructive.

In \cite{tsatsanis_zhang_banerjee__network_assisted}, the authors
propose Network-Assisted Diversity Multiple Access (NDMA), a technique
for resolving collisions in wireless networks. They consider a
wireless slotted random access network with Rayleigh fading and
Additive White Gaussian Noise (AWGN). In NDMA, if $k$ users collide in
a given slot, they repeat their transmissions $k-1$ times so that $k$
copies of the collided packets are received.  Using signal separation
principles, the receiver resolves a $k \times k$ source mixing problem
to extract the signals of individual users, without incurring any
penalty in throughput. The protocol has been extended to blind user
detection \cite{zhang_sidiropoulos__collision_resolution} and has
provable stability \cite{dimic_sidiropoulos__wireless_networks}.  A
good review of NDMA protocols is given in
\cite{dimic_sidiropoulos__medium_access}.

An alternative to employing signal processing techniques in random
access wireless networks is to appropriately model the wireless
channel and modify the well-researched ALOHA protocol. We review such
research work in the next section.

\section{Channel-Aware ALOHA Algorithms}
\label{sec:aloha_wireless}

In this section, we review representative research work whose central
theme is to adapt the retransmission probabilities of users in random
access wireless networks.  In other words, we review research work
which develops channel-aware ALOHA algorithms for wireless networks.

In \cite{qin_berry__exploiting_multiuser}, the authors develop a
channel-aware ALOHA protocol for wireless networks.  They assume a
slotted system, block fading, $\{0,1,e\}$ feedback and a backlogged
model (Section \ref{sec:traditional_random_access}, Assumption
\ref{asmp:backlogged_model}).  They develop a distributed random
access protocol in which each node only has knowledge of its own
channel gain and nodes have long-term power constraints. A node
transmits only if its channel gain exceeds $H_0$.  For a system with
$n$ nodes, the authors show that the optimum transmission probability
is $\frac{\alpha(n)}{n}$, where $\alpha(n) \in (0,1]$ and $\alpha(n)
\rightarrow 1$ as $n \rightarrow \infty$.  Asymptotically, the ratio
of the throughput of channel-aware ALOHA to the throughput of a
centralized scheduler (which has knowledge of channel gains of all
nodes) is shown to be $\frac{1}{e}$.

Opportunistic ALOHA algorithms for wireless networks have been studied
in \cite{adireddy_tong__exploiting_decentralized}.  The authors
consider a general reception model which encompasses $\{0,1,e\}$
feedback, capture as well as multipacket reception.  Under the
assumption that the Channel State Information (CSI) is known to each
user, they propose a variant of slotted ALOHA, where the transmission
probability is allowed to be a function of the CSI.  The maximum
throughput for the finite-user infinite-buffer model is derived.
Finally, the theory is applied to CDMA networks with Linear Minimum
Mean Square Error (LMMSE) receiver and matched filters.

The performance of slotted ALOHA in a wireless network with multiple
destinations under the physical interference model is evaluated in
\cite{nguyen_wieselthier__capture_wireless}. A packet is successful
only if it is captured at the receiver of its intended destination.
The authors assume Poisson packet generation, $\{0,1,e\}$ feedback and
circularly symmetric Gaussian distribution of users around each
destination.  They use a modified version of Rivest's pseudo-Bayesian
estimator \cite{bertsekas_gallager__data_networks} to estimate the
backlog. Their simulation results demonstrate the effect of arrival
rate, capture threshold, variance of user distribution and number of
destinations on the throughput and energy efficiency per destination.

In \cite{gau__performance_analysis}, the author analyzes the
throughput of slotted ALOHA in a multipoint to point wireless ad hoc
network under the physical interference model.  The cluster head
employs reverse link power control, similar to IS-95 CDMA systems
\cite{lee_miller__cdma_systems}, to ensure that equal power is
received from all nodes who attempt transmission in a time slot. The
wireless channel is modeled as a multipacket reception channel.
Assuming that one new packet arrives at each node in every time slot,
the state of the system is characterized by a discrete time Markov
chain with a steady state distribution.  Finally, the author describes
a technique to compute the network throughput.

In \cite{baccelli_blaszczyszyn__aloha_protocol}, the authors introduce
spatial reuse slotted ALOHA, a random access protocol for random
homogeneous mobile wireless networks. The occurrence of a collision is
determined by the SINR at a receiver, i.e., the authors employ the
physical interference model.  They assume that nodes are randomly
placed in a two-dimensional plane according to a Poisson point process
and each node chooses a random destination at some finite distance.
The powers at which stations can transmit are assumed to be
independent and identically distributed (i.i.d.) and the wireless
channel is characterized by its propagation path loss. Nodes move
according to the random waypoint mobility model
\cite{navidi_camp__stationary_distributions}. The authors characterize
the interference process using tools from stochastic geometry.
Subsequently, they determine the probability of channel access that
maximizes the expected projected distance traversed per hop towards a
destination, termed as ``spatial density of progress''.  Under the
assumption that there is some non-degenerate node mobility, the
authors show that the spatial density of progress is proportional to
the square-root of the density of the nodes. Though the authors
present a distributed ALOHA protocol and address certain
implementation issues, their model does not represent real-world
scenarios. Practical deployments of wireless nodes are better modeled
by a uniform distribution in a finite plane rather than a Poisson
point process in an infinite plane. Also, most of their results do not
hold for static wireless networks (say, wireless mesh networks) since
ergodicity assumptions no longer hold. Finally, their proposed routing
protocol requires every node to have knowledge of locations and MAC
states (receiver or transmitter) of all other nodes, which requires a
lot of message passing between nodes (especially with mobile nodes)
and is thus not scalable.

Instead of adapting the transmission probabilities of users in random
access wireless networks, one can also adapt the transmission times of
users based on the channel state and feedback from the receiver.  Such
techniques, which can be broadly termed as splitting algorithms or
tree-like algorithms for wireless networks, are reviewed in the next
section.

\section{Splitting Algorithms}
\label{sec:splitting_wireless}

In this section, we review representative research work on random
access algorithms whose main idea is to adapt the set of contending
users based on feedback from the channel or the common receiver.  In
such work, the authors develop and analyze splitting (or tree or
stack) algorithms for various models of the wireless channel and
evaluate the performance of their algorithms via simulations.

In \cite{qin_berry__opportunistic_splitting}, the authors propose an
opportunistic splitting algorithm for a multipoint to point wireless
network. They assume a slotted system, block fading channel and
$\{0,1,e\}$ feedback. Assuming that each user only knows its own
channel gain and the number of backlogged users, the authors propose a
distributed splitting algorithm to determine the user with the best
channel gain over a sequence of mini-slots. The algorithm determines a
lower threshold $H_l$ and a higher threshold $H_h$ for each mini-slot,
such that only users whose channel gains lie between between $H_l$ and
$H_h$ are allowed to transmit their packets. Based on results from
``partitioning a sample with binary type questions''
\cite{arrow_pesotchinsky__partitioning_sample}, they show that the
average number of mini-slots required to determine the user with the
best channel is 2.5, independent of the number of users and the fading
distribution. However, their algorithm is impractical because it
assumes that every user can accurately estimate the number of
backlogged users.

In \cite{yu_giannakis__sicta_0693}, the authors consider a random
access network with infinite users, Poisson arrivals and $\{0,k,e\}$
immediate feedback, where $k$ is any positive integer.  In contrast to
standard tree algorithms (BTA, MTA, FCFS) that discard collided
packets (Section \ref{sec:traditional_random_access}, Assumption
\ref{asmp:collision_model}), they propose an algorithm that stores
collided packets. The receiver extracts information from the collided
packets by relying on successive interference cancellation techniques
(\cite{verdu__multiuser_detection}, Chapter 7) and the tree structure
of a collision resolution algorithm. Though their algorithm achieves a
stable throughput of 0.693, it requires infinite storage and increased
input voltage range at the receiver, which are not feasible in
practical systems.

In \cite{tsybakov__packet_multiple}, the author considers a multipoint
to point wireless channel with and without capture and MPR.  The
channel provides Empty(E)/Non-Empty(NE) feedback to all active users
and `success' feedback to successful users only. The users do not need
to know the starting times and ending times of collision resolution
periods.  For such a channel with E/NE binary feedback, the author
proposes and analyzes a stack multiple access algorithm that is
limited sensing and does not require any frame synchronization. The
author considers two models for capture, namely Rayleigh fading with
incoherent and coherent combining of joint interference power.  For
MPR, the author assumes a maximum of two successes during a collision.
The maximum throughput of the algorithm is numerically evaluated be to
0.6548 when capture and MPR are present, and 0.2891 when both effects
are absent.  Though a novel splitting algorithm is proposed in
\cite{tsybakov__packet_multiple}, the author does not take into
account throughput gains possible by varying transmission powers of
users.

So far, we have reviewed research papers that either utilize signal
processing techniques or adapt transmission probabilities or
transmission times to increase the throughput in random access
wireless networks. The throughput can be further increased by allowing
users to use variable transmission powers.  We review research papers
which employ this idea in the next section.

\section{Towards Power Controlled Random Access}
\label{sec:power_controlled}

In this section, we review representative research papers which focus
on power control techniques in random access wireless networks. We
then motivate the use of variable transmission power to increase the
throughput in random access wireless networks.

In \cite{dua__power_controlled}, the author considers a time-slotted
CDMA-based wireless network wherein a finite number of nodes
communicate with a common receiver. The author formulates the problem
of determining the set of nodes that can transmit in each slot along
with their corresponding transmission powers, subject to constraints
on maximum transmission power and the SINRs of all transmissions
exceeding the communication threshold. Due to its NP-hard nature, the
problem is relaxed to a case wherein a node transmits with a certain
probability in each slot. Equivalently, the problem of joint power
control and link scheduling is transformed to a problem of power
controlled random access, wherein the objective is to determine the
probability of transmission $\Delta_i$ and transmission power $P_i$
for each node $i$, subject to constraints on maximum transmission
power and the ``expected SINR'' exceeding the communication threshold.
The author seeks to minimize a weighted sum of the maximum
transmission power and maximum reciprocal probability, i.e., minimize
$(\max_i P_i + \lambda \max_i \frac{1}{\Delta_i})$. This convex
optimization problem is solved using techniques from geometric
programming \cite{duffin_peterson_zener__geometric_programming}.
Finally, the author derives the probability of outage\footnote{An
  outage occurs on a link if the received (actual) SINR on the link is
  less than the communication threshold.} and delay distribution of
buffered packets and demonstrates the efficacy of the schemes via
simulations.

In \cite{sagduyu_ephremides__power_control}, the authors investigate
transmission power control and rate adaptation in random access
wireless networks using game theoretic techniques. They consider
multiple transmitters sharing a time-slotted channel to communicate
equal-length packets with a common receiver. A user's packet is
successfully received if the SINR at the receiver is no less than the
communication threshold, i.e., the authors employ the physical
interference model. The random access problem is formulated as a game
wherein each user selects its strategy (transmit or wait) at each
stage of the game in a non-cooperative (independent) or cooperative
manner. The authors evaluate equilibrium strategies for
non-cooperative and cooperative symmetric random access games.
Finally, the authors describe distributed power control and rate
adaptation games for non-cooperative users for a collision channel
with power-based capture. Their numerical results demonstrate improved
expected user utilities when power control and rate adaptation are
incorporated, at the expense of increased computational complexity.
Though the authors propose a distributed random access algorithm based
on game theoretic techniques, their algorithm is impractical because
it assumes that every user knows $n$, the number of backlogged users,
in each slot. However, in practice, $n$ can only be estimated using
techniques such as Rivest's pseudo-Bayesian algorithm
\cite{rivest__network_control}.

Though researchers have addressed the problem of random access in
wireless networks by considering various channel models, different
types of feedback and realistic criteria for successful packet
reception, only few of them exploit the idea that throughput gains are
achievable in a random access wireless network by varying transmission
powers of users. In general, varying the transmission powers of users
leads to higher long-term average power. However, there exist wireless
networks whose users do not have stringent energy requirements. For
such scenarios, it would be useful to investigate the throughput gains
achievable in the network by varying the transmission powers of users.

We envisage developing a power controlled random access algorithm for
wireless networks under the physical interference model.  We seek an
algorithm that yields higher throughput than traditional random access
algorithms.  In cognizance of these requirements, we propose a power
controlled splitting algorithm for wireless networks in Chapter
\ref{ch:powercontrolled}. The algorithm is so designed that successful
packets are transmitted in the order of their arrivals, i.e., in an
FCFS manner.

In the system model considered in Chapter \ref{ch:powercontrolled}, if
multiple transmissions occur, the receiver can decode a certain user's
packet correctly only if the received SINR exceeds a threshold, i.e.,
we consider a channel with power-based capture. The notion of capture
has been addressed previously, though in different contexts
\cite{nguyen_wieselthier__capture_wireless},
\cite{tsybakov__packet_multiple}, \cite{zorzi__mobile_radio}.
However, in Chapter \ref{ch:powercontrolled}, we motivate the idea
that a user can transmit at variable power levels to increase the
chances of capture. Moreover, unlike
\cite{nguyen_wieselthier__capture_wireless},
\cite{tsybakov__packet_multiple}, \cite{zorzi__mobile_radio}, we
assume $\{0,1,c,e\}$ feedback, where $0$, $1$ and $e$ denote idle,
success and error respectively (Section
\ref{sec:traditional_random_access}, Assumption
\ref{asmp:collision_model}), and $c$ denotes capture in the presence
of multiple transmissions. Note that the system model considered in
Chapter 7 is different from those considered in existing works on
splitting algorithms for wireless networks. For example, in
\cite{tsybakov__packet_multiple}, the author proposes a novel
splitting algorithm, but does not take into account throughput gains
possible by varying the transmission power.  Though the authors of
\cite{qin_berry__opportunistic_splitting} propose a splitting
algorithm to determine the user with the best channel gain, their
algorithm is impractical because it assumes that every user can
accurately estimate the number of backlogged users.

To the best of our knowledge, there is no existing work on variable
power splitting algorithms for a wireless network under the physical
interference model.  The specification of the proposed algorithm along
with its performance analysis and evaluation constitute the subject
matter of the next chapter.

\clearpage{\pagestyle{empty}\cleardoublepage}

\chapter{Power Controlled FCFS Splitting Algorithm for Wireless Networks}
\label{ch:powercontrolled}

In this chapter, we propound a random access algorithm that
incorporates variable transmission powers in a multipoint to point
wireless network.  Specifically, we investigate random access in
wireless networks under the physical interference model wherein the
receiver is capable of power-based capture, i.e., a packet can be
decoded correctly in the presence of multiple transmissions if the
received SINR exceeds the communication threshold. We propose an
interval splitting algorithm that varies the transmission powers of
users based on channel feedback. We derive the maximum stable
throughput of the proposed algorithm and demonstrate that it achieves
better performance than the FCFS splitting algorithm
\cite{bertsekas_gallager__data_networks} with uniform transmission
power.

The rest of the chapter is organized as follows. We describe our
system model in Section \ref{sec:system_pcfcfs} and motivate variable
control of transmission powers of contending users in Section
\ref{sec:motivation_pcfcfs}. We describe the proposed random access
algorithm and provide two illustrative examples in Section
\ref{sec:algorithm_pcfcfs}. We model the algorithm dynamics by a
Markov chain and derive its maximum stable throughput in Section
\ref{sec:throughput_pcfcfs}. The performance of the proposed algorithm
is evaluated in Section \ref{sec:results_pcfcfs}. We conclude in
Section \ref{sec:conclusions_pcfcfs}.

\section{System Model}
\label{sec:system_pcfcfs}

Consider a multipoint to point wireless network. We assume the
following:

\begin{enumerate}

\item Slotted system: Users (nodes) transmit fixed-length packets to a
  common receiver over a time-slotted channel. All users are
  synchronized such that the reception of a packet starts at an
  integer time and ends before the next integer time.

\item Poisson arrivals: The packet arrival process is Poisson
  distributed with overall rate $\lambda$, and each packet arrives to
  a new user that has never been assigned a packet before.  After a
  user successfully transmits its packet, that user ceases to exist
  and does not contend for channel access in future time slots.

\item Channel model: The wireless channel is modeled by the path loss
  propagation model.  The received signal power at a distance $D$ from
  the transmitter is given by $\frac{P}{D^\beta}$, where $P$ is the
  transmission power and $\beta$ is the path loss factor. We do not
  consider fading and shadowing effects.

\item Power-based capture: According to the physical interference
  model \cite{gupta_kumar__capacity_wireless}, a packet transmission
  from transmitter $t_{i,j}$ to receiver $r$ in $i^{th}$ time slot is
  successful if and only if the SINR at receiver $r$ is greater than
  or equal to the communication threshold $\gamma_c$\footnote{In
    literature, $\gamma_c$ is also referred to as capture ratio
    \cite{tsybakov__packet_multiple}, capture threshold
    \cite{nguyen_wieselthier__capture_wireless} and power ratio
    threshold \cite{hajek_krishna__capture_probability}.}, i.e.,
  \begin{eqnarray}
    \frac{\frac{P_{i,j}}{D^\beta(t_{i,j},r)}}{N_0+\sum_{\stackrel{k=1}{k\neq j}}^{M_i}\frac{P_{i,k}}{D^\beta(t_{i,k},r)}} \geqslant \gamma_c,
  \label{eq:sinr_threshold_pcfcfs}
  \end{eqnarray}
  where
  \begin{eqnarray*}
    M_i &=& \mbox{number of concurrent transmitters in $i^{th}$ time slot}, \\
    t_{i,j} &=& \mbox{$j^{th}$ transmitter in $i^{th}$ time slot}
    \;\;(j=1,2,\ldots,M_i),\\
    D(t_{i,j},r) &=& \mbox{Euclidean distance between $t_{i,j}$ and $r$}, \\
    P_{i,j} &=& \mbox{transmission power of $t_{i,j}$}, \\
    N_0 &=& \mbox{thermal noise power spectral density}.
  \end{eqnarray*}

\item $\{0,1,c,e\}$ immediate feedback: By the end of each slot, users
  are informed of the feedback from the receiver immediately and
  without any error. The feedback is one of:
  \begin{enumerate}
  \item idle $(0)$: when no packet transmission occurs,
  \item perfect reception $(1)$: when one packet transmission occurs
    and is received successfully,
  \item capture $(c)$: when multiple packet transmissions occur and
    only one packet is received successfully, or
  \item collision $(e)$: when multiple packet transmissions occur and
    no packet reception is successful.
  \end{enumerate}
  The receiver can distinguish between $1$ and $c$ by using energy
  detectors \cite{yu_giannakis__sicta_0693},
  \cite{georgiadis_kazakos__collision_resolution}.  Thus, by the end
  of every slot, only two bits are required to provide feedback from
  the receiver to all users. Note that two bits are required to
  provide feedback even for the classical $\{0,1,e\}$ feedback model.
  Thus, our $\{0,1,c,e\}$ immediate feedback assumption does not
  increase the number of bits required for feedback..

\item Gated Channel Access Algorithm (CAA): New packets are
  transmitted in the first available slot after previous conflicts are
  resolved. The time interval from the slot where an initial collision
  occurs up to and including the slot in which all users recognize
  that all packets involved in the collision have been successfully
  received, is called a Collision Resolution Period (CRP).  Thus, new
  arrivals are inhibited from transmission during the CRP.

\item Equal distances: We assume that each user is at the same
  distance $D$ from the common receiver.

\end{enumerate}

\section{Motivation and Problem Formulation}
\label{sec:motivation_pcfcfs}

The maximum stable throughput of the well-known FCFS splitting
algorithm is 0.4871 \cite{bertsekas_gallager__data_networks}, which is
the highest throughput amongst a wide class of random access
algorithms for wired networks. However, in a wireless network,
transmission power of a node provides an extra degree of freedom, and
higher throughputs are achievable.

Consider a scenario wherein all contending nodes transmit with equal
power $P$ in a given time slot. When only one node transmits, its
packet is successfully received if the SINR threshold condition
(\ref{eq:sinr_threshold_pcfcfs}) is satisfied, i.e.,
\begin{eqnarray}
P \geqslant \gamma_c N_0 D^\beta.
\end{eqnarray}
When $M$ nodes transmit concurrently with equal power $P$, where $M
\geqslant 2$, the SINR corresponding to $i^{th}$ transmission is given
by
\begin{eqnarray}
\mbox{SINR}_i &=& \frac{\frac{P}{D^\beta}}{N_0+(M-1)\frac{P}{D^\beta}},
\label{eq:sinr_equal_power}
\end{eqnarray}
a quantity which is always less than 1. Since $\gamma_c>1$ for all
practical narrowband communication receivers
\cite{hajek_krishna__capture_probability}, $\mbox{SINR}_i < \gamma_c$
$\forall$ $i$ and all $M$ transmissions are unsuccessful\footnote{For
  a spread spectrum CDMA system with processing gain $L$,
  (\ref{eq:sinr_equal_power}) gets modified to $\mbox{SINR}_i =
  \frac{\frac{P}{D^\beta}}{N_0 + \frac{I}{L}(M-1)\frac{P}{D^\beta}}$
  \cite{sagduyu_ephremides__power_control}. For such a wideband
  system, $\gamma_c < 1$, and more than one packet can be decoded
  correctly in the presence of multiple transmissions. However, in
  this thesis, we consider narrowband systems only.}.  Thus, when
multiple nodes transmit with equal power, a collision occurs
irrespective of the transmission power $P$.

However, the above situation can be circumvented by varying
transmission powers of users in some special cases. With relatively
small attempt rates, when a collision occurs, it is most likely
between only two packets \cite{bertsekas_gallager__data_networks}. In
this case, if the receiver is capable of power-based capture, a
collision between two nodes can be avoided by using different
transmission powers.  Specifically, one of the nodes, say $N_1$,
transmits with minimum power $P_1$ such that, if it were the only node
transmitting in that time slot, then its packet transmission will be
successful. From (\ref{eq:sinr_threshold_pcfcfs}), the required
nominal power is
\begin{eqnarray}
P_1 &=& \gamma_c N_0 D^\beta.
\label{eq:nominal_power}
\end{eqnarray}
The other node, say $N_2$, transmits with minimum power $P_2$ such
that if there is exactly one other node transmitting at nominal power
$P_1$, then the packet transmitted by $N_2$ will be successful. From
(\ref{eq:sinr_threshold_pcfcfs}) and (\ref{eq:nominal_power}), we
obtain
\begin{eqnarray}
\frac{\frac{P_2}{D^\beta}}{N_0+\frac{P_1}{D^\beta}} 
 &=& \gamma_c, \nonumber \\
P_2 &=& \gamma_c (N_0 D^\beta + P_1), \nonumber \\
\therefore P_2
 &=& \gamma_c (1+\gamma_c) N_0 D^\beta.
\label{eq:higher_power}
\end{eqnarray}
Note that $\frac{P_2}{P_1} = 1+\gamma_c$.  We do not consider more
than two power levels for the following reasons:
\begin{enumerate}
\item 
  it complicates the power-control algorithm, and
\item 
  most mobile wireless devices have constraints on peak transmission
  power.
\end{enumerate}
Note that the above power control technique converts some collisions
into ``captures''. Thus, it has the potential of increasing the
throughput of random access algorithms employing uniform transmission
power.

We seek to design a distributed algorithm incorporating this power
control technique, while still ensuring that the algorithm transmits
successful packets in the order of their arrival, i.e., in an FCFS
manner\footnote{Since successful packets are transmitted in an FCFS
  manner, the delay experienced by a packet will not be significantly
  higher than the average packet delay. Thus, from a QoS perspective,
  FCFS transmission of packets not only guarantees average delay
  bounds, but also ensures fairness of user packets.}.

\section{PCFCFS Interval Splitting Algorithm}
\label{sec:algorithm_pcfcfs}

In this section, we present an algorithmic description of the proposed
Power Controlled First Come First Serve (PCFCFS) splitting algorithm.
We also explain the behavior of the proposed algorithm by providing
two illustrative examples.

\subsection{Description}
\label{subsec:description_pcfcfs}

We first describe the notation.  Slot $k$ is defined to be the time
interval $[k,k+1)$. At each integer time $k$ ($k \geqslant 1$), the
algorithm specifies the packets to be transmitted in slot $k$ to be
the set of all packets that arrived in an earlier interval
$[T(k),T(k)+\phi(k))$, which is defined as the {\em allocation
  interval} for slot $k$.  The maximum size of the allocation interval
is denoted by $\phi_0$, a parameter which will be optimized for
maximum throughput in Section \ref{sec:throughput_pcfcfs}.  Packets
are indexed as $1,2,\ldots$ in the order of their arrival.  Since the
arrival times are Poisson distributed with rate $\lambda$, the
inter-arrival times are exponentially distributed with mean
$\frac{1}{\lambda}$.  Let $a_i$ denote the arrival time of $i^{th}$
packet. Using the memoryless property of the exponential distribution
(and without loss of generality), we assume that $a_1=0$.  The
transmission power of $i^{th}$ packet in slot $k$ is denoted by
$P_i(k)$, where $P_i(k) \in \{0,P_1,P_2\}$. Note that, if $P_i(k)=0$,
then $i^{th}$ packet is not transmitted in slot $k$.

Algorithm \ref{algo:pcfcfs} describes the proposed Power Controlled
First Come First Serve (PCFCFS) splitting algorithm, which is the set
of rules by which the users compute allocation interval parameters
$\{T(k+1), \phi(k+1), \sigma(k+1)\}$ and transmission power $P_i(k+1)$
for slot $k+1$ in terms of the feedback and allocation interval
parameters for slot $k$. In our algorithm, every allocation interval
is tagged as a ``left'' $(\mathcal L)$ or ``right'' $(\mathcal R)$
interval.  $\sigma(k)$ denotes the tag ($\mathcal L$ or $\mathcal R$)
of allocation interval $[T(k),T(k)+\phi(k))$ in slot $k$.  Moreover,
whenever an allocation interval is split, it is always split into two
equal-sized subintervals, and these subintervals $(\mathcal L,
\mathcal R)$ are said to {\em correspond} to each other.

\renewcommand{\baselinestretch}{1.2}\Large\normalsize

\begin{algorithm}
\caption{PCFCFS splitting algorithm}
\label{algo:pcfcfs}
\begin{algorithmic}[1]
\STATE {\bf input:} $\phi_0$, $P_1$, $P_2$, 
  arrivals $a_1,a_2,a_3,\ldots$ in $[0,\tau)$ \COMMENT{Phase 1 begins}
\STATE $T(1) \leftarrow 0$
\STATE $\phi(1) \leftarrow \min(\phi_0,1)$
\STATE $\sigma(1) = \mathcal R$
\STATE $\mbox{feedback} = 0$ \COMMENT{Phase 1 ends}
\FOR[Phase 2 begins]{$k \leftarrow 1 \mbox{ to } \tau$}
 \IF[Phase 2a begins]{$\mbox{feedback} \neq c$}
  \FORALL{$i$ such that $\scriptstyle T(k) \leqslant a_i < T(k)+\frac{\phi(k)}{2}$}
   \STATE $P_i(k)=P_2$
  \ENDFOR
  \FORALL{$i$ such that $\scriptstyle T(k)+\frac{\phi(k)}{2} \leqslant a_i < T(k)+\phi(k)$}
   \STATE $P_i(k)=P_1$
  \ENDFOR
 \ENDIF \COMMENT{Phase 2a ends}
 \STATE transmit packets whose arrivals times lie in $[T(k),T(k)+\phi(k))$ 
   and obtain channel feedback \COMMENT{Phase 2b begins}
 \IF{$\mbox{feedback} = e$}
  \STATE $T(k+1)\leftarrow T(k)$
  \STATE $\phi(k+1)\leftarrow\frac{\phi(k)}{2}$
  \STATE $\sigma(k+1) \leftarrow \mathcal L$
 \ELSIF{$\mbox{feedback} = c$}
  \STATE $T(k+1) \leftarrow T(k) + \frac{\phi(k)}{2}$
  \STATE $\phi(k+1) \leftarrow \frac{\phi(k)}{2}$
  \STATE $\sigma(k+1) \leftarrow \mathcal R$
 \ELSIF{$\mbox{feedback} = 1$ and $\sigma(k) = \mathcal L$}
  \STATE $T(k+1) \leftarrow T(k) + \phi(k)$
  \STATE $\phi(k+1) \leftarrow \phi(k)$
  \STATE $\sigma(k+1) \leftarrow \mathcal R$
 \ELSIF{$\mbox{feedback} = 0$ and $\sigma(k) = \mathcal L$}
  \STATE $T(k+1) \leftarrow T(k) + \phi(k)$
  \STATE $\phi(k+1) \leftarrow \frac{\phi(k)}{2}$
  \STATE $\sigma(k+1) \leftarrow \mathcal L$
 \ELSE
  \STATE $T(k+1) \leftarrow T(k) + \phi(k)$
  \STATE $\phi(k+1) = \min(\phi_0,k-T(k))$
  \STATE $\sigma(k+1) \leftarrow \mathcal R$
 \ENDIF \COMMENT{Phase 2b ends}
\ENDFOR \COMMENT{Phase 2 ends}
\end{algorithmic}
\end{algorithm}

\renewcommand{\baselinestretch}{1.5}\Large\normalsize

In Phase 1 of the algorithm, we initialize various quantities.  $\tau$
denotes the number of slots for which the algorithm operates; ideally
$\tau \rightarrow \infty$.  By convention, the initial allocation
interval is $[0,\min(\phi_0,1))$, which is a right interval ($\mathcal
R$). The initial channel feedback is assumed to be idle (0).

In Phase 2 of the algorithm, we determine power levels, obtain channel
feedback and compute allocation interval parameters for each
successive slot $k$.  In Phase 2a, all users whose arrival times lie
in the left half of the current allocation interval transmit with
higher power $P_2$, while all users whose arrival times lie in the
right half of the current allocation interval transmit with nominal
power $P_1$.  However, if a capture occurred in the previous slot
$k-1$, all users in the current allocation interval transmit with
nominal power $P_1$.  Therefore, our algorithm always transmits
successful packets in an FCFS manner.  In Phase 2b, the allocation
interval parameters are modulated based on the channel feedback. More
specifically, if a collision occurs, then the left half of the current
allocation interval becomes the new allocation interval.  If a capture
occurs, then the right half of the current allocation interval becomes
the new allocation interval.  If a success occurs and the current
allocation interval is tagged as a left interval, then the
corresponding right interval becomes the new allocation interval. If
an idle occurs and the current allocation interval is tagged as a left
interval, then the left half of the corresponding right interval
becomes the new allocation interval.  Otherwise, if a success or an
idle occurs and the current allocation interval is tagged as a right
interval, the waiting interval truncated to length $\phi_0$ becomes
the new allocation interval, and a new Collision Resolution Period
(CRP) begins in the next time slot $k+1$. Note that the transmit power
levels in PCFCFS are variable and based on channel feedback, i.e.,
they are adaptive.

\subsection{Examples}
\label{subsec:examples_pcfcfs}

\begin{figure}[thbp]
\centering
\includegraphics[width=6in]{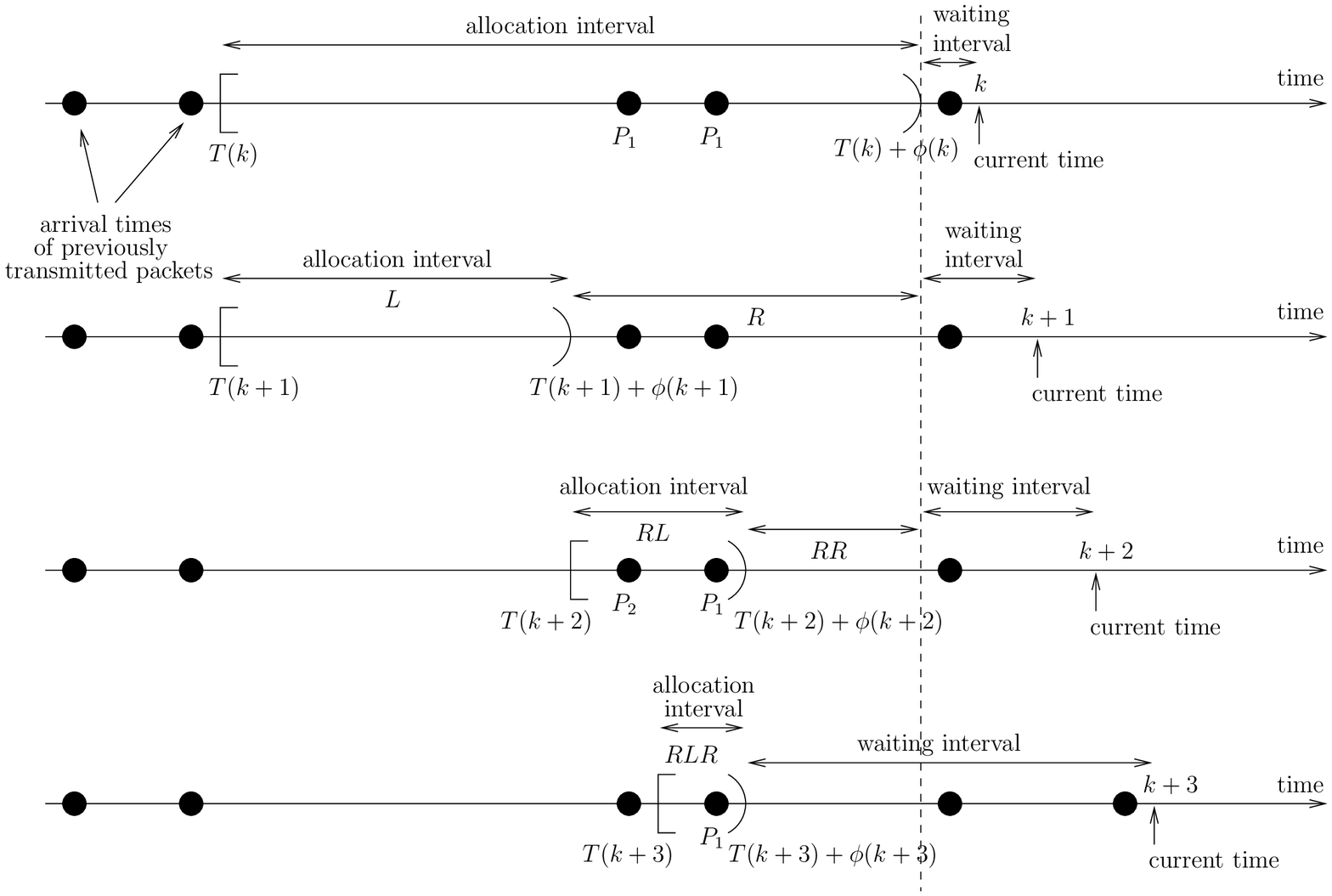}
\caption{PCFCFS splitting algorithm illustrating a collision followed
  by an idle.}
\label{fig:pcfcfs_example1}
\end{figure}

\begin{figure}[thbp]
\centering
\includegraphics[width=6in]{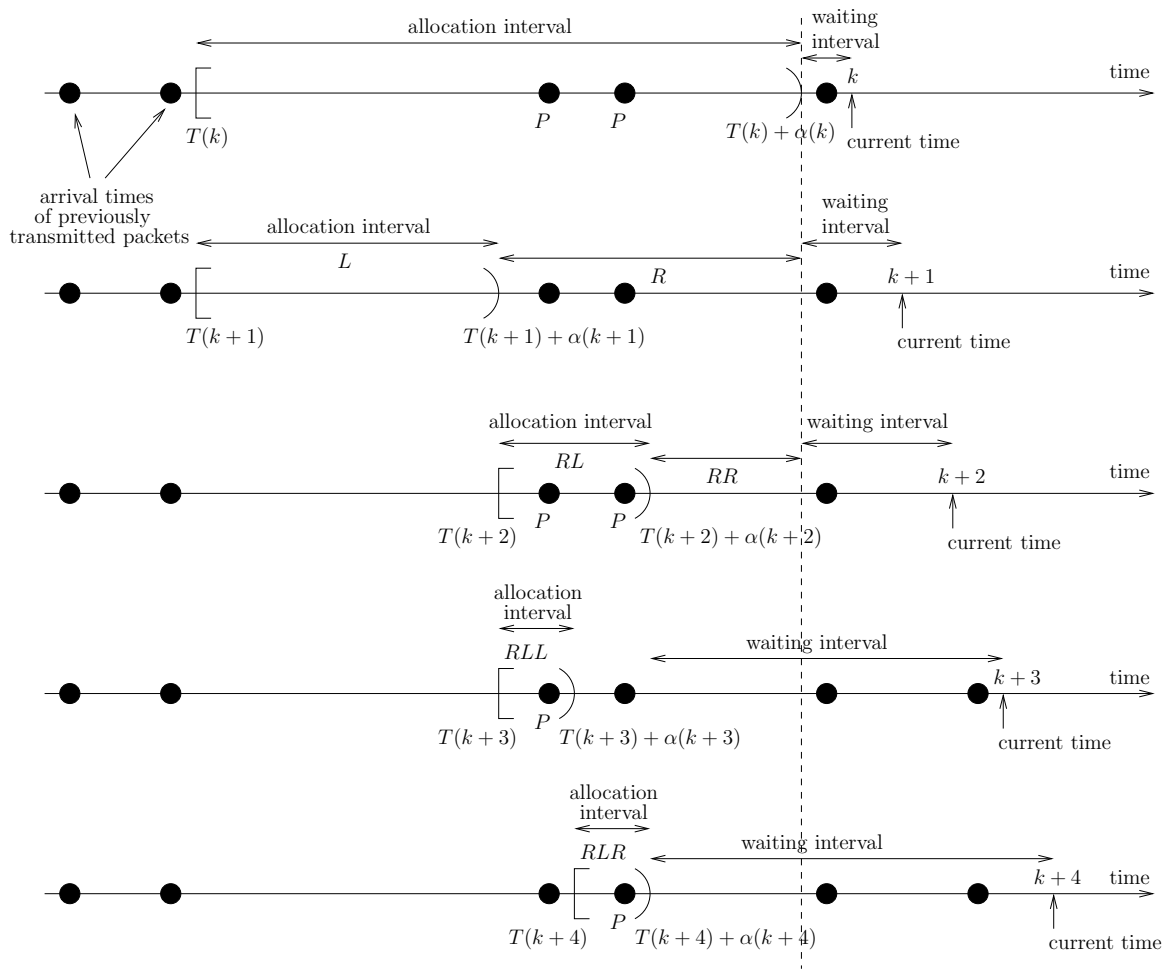}
\caption{FCFS splitting algorithm illustrating a collision followed by
  an idle.}
\label{fig:fcfs_example1}
\end{figure}

To illustrate the rules of the PCFCFS splitting algorithm for a single
CRP, consider the example shown in Figure \ref{fig:pcfcfs_example1}.
In slot $k$, the allocation interval has no node in its left half and
two nodes in its right half. Both these nodes transmit with nominal
power $P_1$ and a collision occurs. So, the allocation interval is
split, with the left interval $L$ being the allocation interval for
slot $k+1$.  An idle occurs in slot $k+1$. Next, the right subinterval
$R$ is further split, with $RL$ being the allocation interval for slot
$k+2$. The left node in $RL$ transmits with higher power $P_2$, while
the right node in $RL$ transmits with nominal power $P_1$, resulting
in a capture of the packet transmitted by the left node.  The
allocation interval is further split, with $RLR$ forming the
allocation interval for slot $k+3$.  Since a capture occurred in $RL$
in slot $k+2$, the corresponding right subinterval $RR$ is returned to
the waiting interval in slot $k+3$.  Post-capture, the lone node in
$RLR$ transmits with nominal power $P_1$, resulting in a success and
completing the CRP. For the same sequence of arrival times, the
behavior of the FCFS algorithm with uniform transmission power $P$ is
shown in Figure \ref{fig:fcfs_example1}, where $\alpha(k)$ denotes the
length of the allocation interval in slot $k$.  Note that the FCFS
algorithm requires 5 slots to resolve the collisions, while the
proposed PCFCFS algorithm requires only 4 slots.

\begin{figure}[thbp]
\centering
\includegraphics[width=6in]{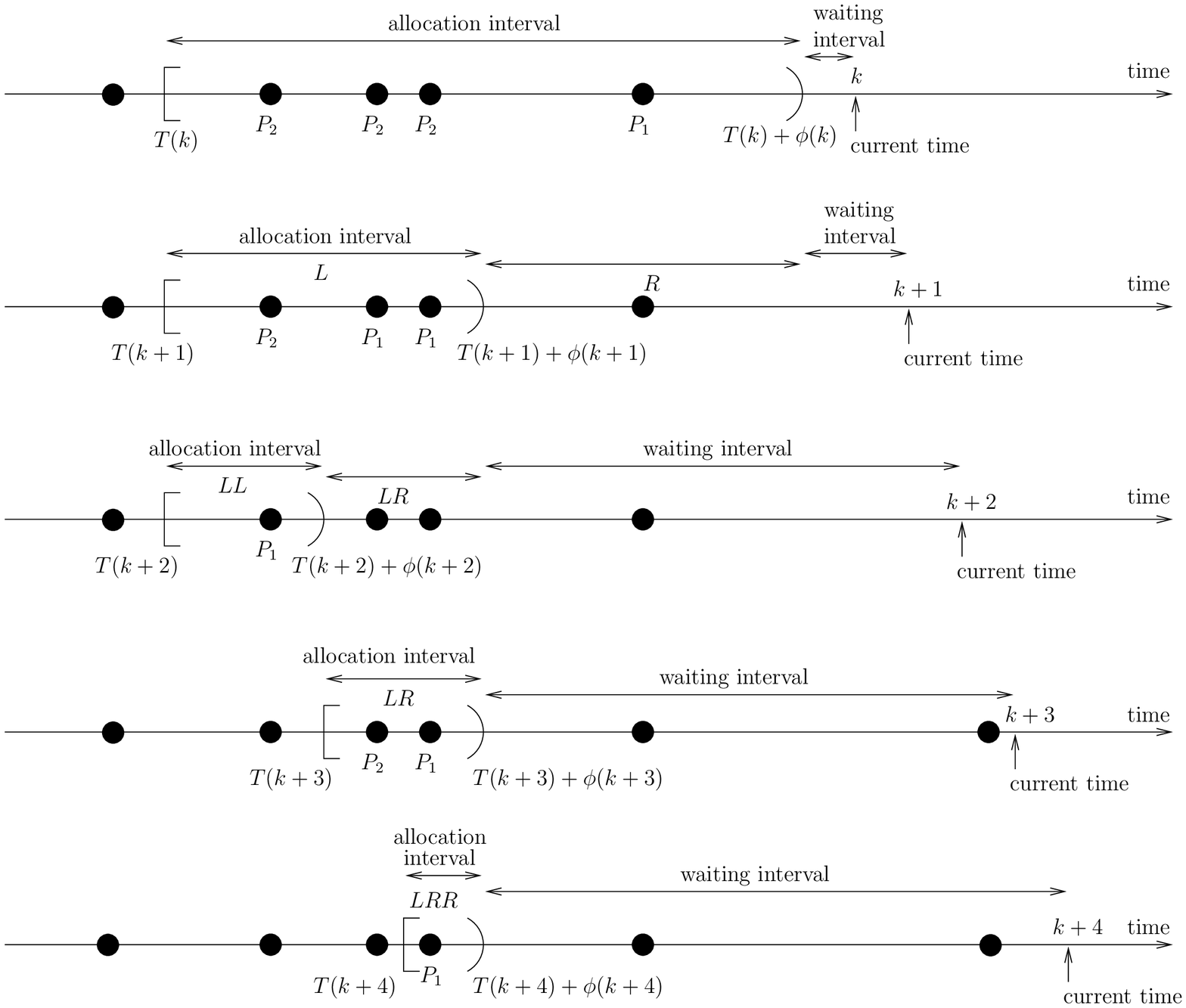}
\caption{PCFCFS splitting algorithm illustrating a collision followed
  by another collision.}
\label{fig:pcfcfs_example2}
\end{figure}

\begin{figure}[thbp]
\centering
\includegraphics[width=6in]{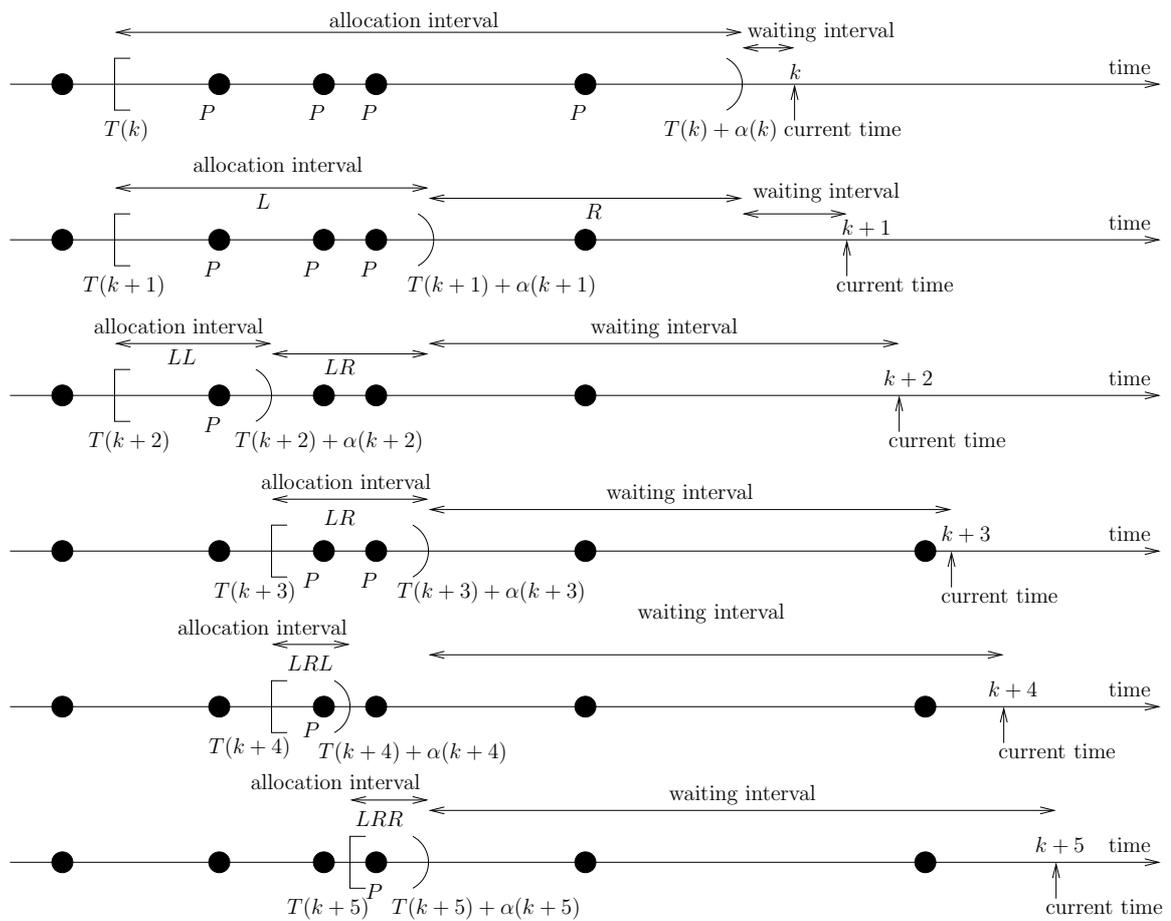}
\caption{FCFS splitting algorithm illustrating a collision followed
  by another collision.}
\label{fig:fcfs_example2}
\end{figure}

To further illustrate the rules of the PCFCFS splitting algorithm for
a single CRP, consider the example shown in Figure
\ref{fig:pcfcfs_example2}.  In slot $k$, the allocation interval has
three nodes in its left half and one node in its right half. All `left
half' nodes transmit with higher power $P_2$, while the `right half'
node transmits with nominal power $P_1$, leading to a collision.  So,
the allocation interval is split, with the left interval $L$ being the
allocation interval for slot $k+1$. In slot $k+1$, the allocation
interval has one node in its left half, which transmits with higher
power $P_2$, and two nodes in its right half, which transmit with
nominal power $P_1$. Hence, a collision occurs, and the allocation
interval $L$ is split into two equal sized subintervals $LL$ and $LR$,
with $LL$ being the allocation interval for slot $k+2$. Since a
collision is followed by another collision, the right interval $R$ is
returned to the waiting interval in slot $k+2$. In slot $k+2$, there
is only one node in the allocation interval. Since this lone node lies
in the right half of the allocation interval, it transmits with
nominal power $P_1$, leading to a success. Thus, $LR$ becomes the
allocation interval for slot $k+3$. For this allocation interval, the
node in the left half transmits with higher power $P_2$ and the node
in the right half transmits with nominal power $P_1$, resulting in a
capture of the packet transmitted by the former node.  Consequently,
$LRR$ becomes the new allocation interval for slot $k+4$.  Finally, in
slot $k+4$, the lone node transmits with nominal power $P_1$, leading
to a deterministic success and completing the CRP.  For the same
sequence of arrival times, the behavior of the FCFS algorithm with
uniform transmission power $P$ is shown in Figure
\ref{fig:fcfs_example2}.  Note that the FCFS algorithm requires 6
slots to resolve the collisions, while the proposed PCFCFS algorithm
requires only 5 slots.

\section{Throughput Analysis}
\label{sec:throughput_pcfcfs}

\begin{figure}[thbp]
\centering
\includegraphics[width=6.5in]{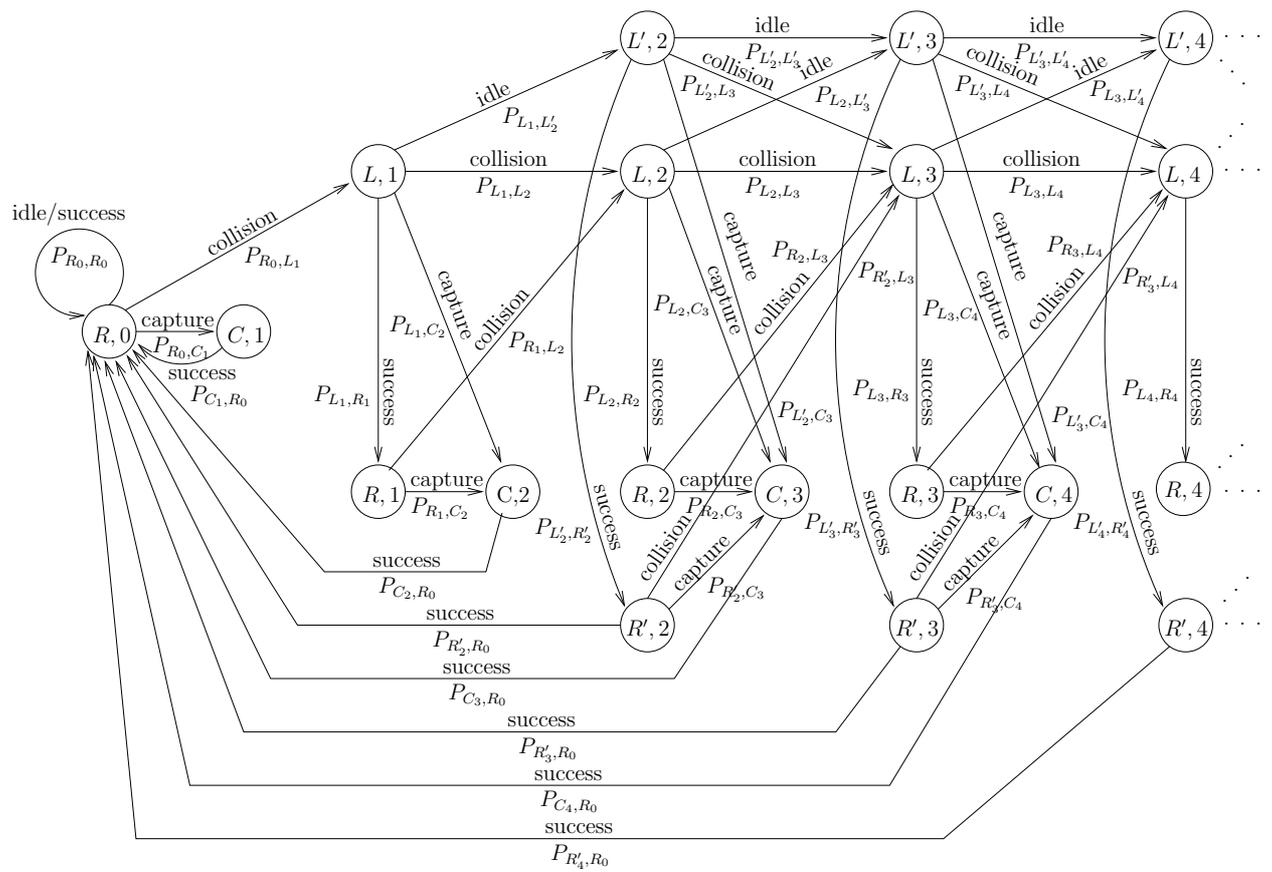}
\caption{Discrete Time Markov Chain representing a CRP of PCFCFS splitting algorithm.}
\label{fig:dtmc_pcfcfs}
\end{figure}

The evolution of a CRP can be represented by the Discrete Time Markov
Chain (DTMC) shown in Figure \ref{fig:dtmc_pcfcfs}.  Every state in
the DTMC is a pair $(\sigma,i)$, where $\sigma$ is the status $\{L,
L', R, R', C\}$ and $i$ is the number of times the original allocation
interval (of length $\phi_0$) has been split.  State $(R,0)$
corresponds to the initial slot of a CRP. If an idle or a success
occurs, the CRP ends immediately and a new CRP begins in the next
slot.  If a capture occurs, a transition occurs to state $(C,1)$,
where $C$ indicates that capture has occurred in the allocation
interval.  If a collision occurs in $(R,0)$, a transition occurs to
state $(L,1)$.  Each subsequent idle in a left allocation interval
generates one additional split with a smaller left allocation
interval, corresponding to a transition to $(L',i+1)$, where $L'$
indicates that the current left allocation interval has been reached
after a collision (in some time slot) followed by one or more idles.
A collision in an allocation interval generates one additional split
with a smaller left allocation interval, corresponding to a transition
to $(L,i+1)$, where $L$ indicates that the current left allocation
interval has been reached just after a collision.  A capture in an
allocation interval generates an additional split with a smaller right
allocation interval and corresponds to a transition to $(C,i+1)$. This
is followed by a success from $(C,i+1)$ to $(R,0)$, thus ending the
CRP.  A success in a left allocation interval leads to the
corresponding right allocation interval with no additional split,
which causes a transition from $(L,i)$ to $(R,i)$, or $(L',i)$ to
$(R',i)$.  A success in $(R',i)$ causes a transition to $(R,0)$, thus
ending the CRP.  It can be easily verified that the states and
transitions in Figure \ref{fig:dtmc_pcfcfs} constitute a Markov chain,
i.e., each transition from every state is independent of the path used
to reach the given state.

We now analyze a single CRP. Assume that the size of the initial
allocation interval is $\phi_0$ (corresponding to state $(R,0)$).
Each splitting of the allocation interval halves this, so that states
$(L,i)$, $(L',i)$, $(R,i)$, $(R',i)$ and $(C,i)$ in Figure
\ref{fig:dtmc_pcfcfs} correspond to allocation intervals of size
$2^{-i}\phi_0$. Since the arrival process is Poisson with rate
$\lambda$, the number of packets in the original allocation interval
is a Poisson random variable (r.v.) with mean $\lambda\phi_0$.
Consequently, the a priori distributions on the number of packets in
disjoint subintervals are independent and Poisson. Define $G_i$ as the
expected number of packets in an interval that has been split $i$
times. Thus
\begin{eqnarray}
G_i &=& 2^{-i}\lambda\phi_0 \;=\; 2^{-i}G_0 \;\;\;\forall\;\;\; i \geqslant 0,
 \label{eq:expected_packets} \\
\therefore G_i &=& \frac{1}{2}G_{i-1} \;\;\;\forall\;\;\; i \geqslant 1.
 \label{eq:expected_half}
\end{eqnarray}

We view $(R,0)$ as the starting state as well as the final state. For
brevity in notation, the transition probability from state $(A,i)$ to
state $(B,j)$ is denoted by $P_{A_i,B_j}$, where $A,B \in
\{L,L',R,R',C\}$ and $i,j \in \{0\} \cup {\mathbb Z}^+$ (see Figure
\ref{fig:dtmc_pcfcfs}). For example, the transition probability from
$(L,1)$ to $(C,2)$ is denoted by $P_{L_1,C_2}$.

\begin{figure}[thbp]
\centering
\includegraphics[width=6in]{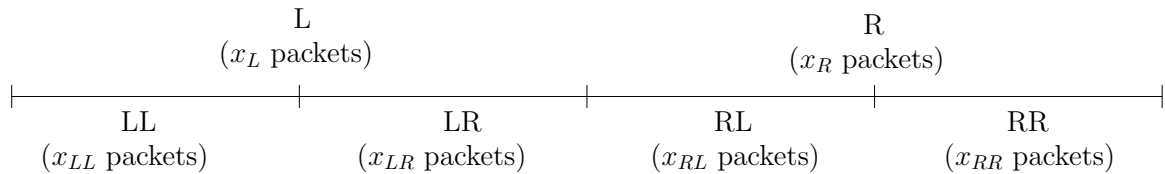}
\caption{Notation for number of packets in left and right subintervals
of the original allocation interval.}
\label{fig:notation_LR}
\end{figure}

$P_{R_0,R_0}$ is the probability of an idle or success in the first
slot of the CRP. Since the number of packets in the initial allocation
interval is Poisson with mean $G_0$, the probability of 0 or 1 packet
is
\begin{eqnarray}
P_{R_0,R_0} &=& (1+G_0)e^{-G_0}.
\label{eq:probability_R0_R0}
\end{eqnarray}

$P_{R_0,C_1}$ is the probability of capture in the first slot of a
CRP. Let $x_L$ and $x_R$ denote the number of packets in the left and
right halves of the original allocation interval respectively, as
shown in Figure \ref{fig:notation_LR}.  Capture occurs if and only if
$x_L=1$ and $x_R=1$.  $x_L$ and $x_R$ are independent Poisson r.v.s of
mean $G_1$ each. Thus
\begin{eqnarray}
P_{R_0,C_1}
 &=& \Pr(x_L=1,x_R=1), \nonumber \\
 &=& \Pr(x_L=1)\Pr(x_R=1), \nonumber \\
 &=& G_1^2 e^{-2G_1}, \nonumber \\
\therefore P_{R_0,C_1}
 &=& \frac{G_0^2}{4}e^{-G_0}.
\label{eq:probability_R0_C1}
\end{eqnarray}

State $(L,1)$ is entered after collision in state $(R,0)$.  Using
(\ref{eq:probability_R0_R0}) and (\ref{eq:probability_R0_C1}), this
occurs with probability
\begin{eqnarray}
P_{R_0,L_1} 
 &=& 1 - P_{R_0,R_0} - P_{R_0,C_1}, \nonumber \\
\therefore P_{R_0,L_1} 
 &=& 1 - \Big(1+G_0+\frac{G_0^2}{4}\Big)e^{-G_0}.
\label{eq:probability_R0_L1}
\end{eqnarray}

Since a capture is always followed by a deterministic success,
\begin{eqnarray}
P_{C_i,R_0} &=& 1 \;\;\forall\;\; i \geqslant 1.
\label{eq:probability_Ci_R0}
\end{eqnarray}

\begin{lemma}
The outgoing transition probabilities from $(L,i)$, where $i \geqslant1$, 
are given by
\begin{eqnarray}
P_{L_i,R_i} 
&=& \frac{(1-e^{-G_i}-G_ie^{-G_i})G_ie^{-G_i}}{1-\Big(1+G_{i-1}+\frac{G_{i-1}^2}{4}\Big)e^{-G_{i-1}}},  \\
P_{L_i,L_{i+1}'}
&=& \frac{(1-e^{-G_i}-G_ie^{-G_i})e^{-G_i}}{1-\Big(1+G_{i-1}+\frac{G_{i-1}^2}{4}\Big)e^{-G_{i-1}}},  \\
P_{L_i,C_{i+1}}
&=& \frac{\frac{G_i^2}{4}e^{-G_i}}{1-\Big(1+G_{i-1}+\frac{G_{i-1}^2}{4}\Big)e^{-G_{i-1}}}, \label{eq:probability_Li_Ciplus1} \\
P_{L_i,L_{i+1}}
&=& \frac{1-(1+G_i+\frac{G_i^2}{4})e^{-G_i}}{1-\Big(1+G_{i-1}+\frac{G_{i-1}^2}{4}\Big)e^{-G_{i-1}}}. \label{eq:probability_Li_Liplus1}
\end{eqnarray}
\label{lem:outgoing_probabilities_Li}
\end{lemma}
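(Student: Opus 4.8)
The plan is to condition on the chain being in state $(L,i)$, identify the joint law of the packet counts in the relevant subintervals, and read off each outgoing transition probability as a conditional probability; everything then reduces to elementary Poisson bookkeeping. First I would pin down the combinatorial picture. For $i\geqslant 1$, state $(L,i)$ is entered only following a collision, and by the collision branch of Phase 2b its allocation interval is the left half of a ``parent'' interval of expected load $G_{i-1}=2G_i$ (see (\ref{eq:expected_half})). Let $a$ be the number of packets in the $(L,i)$ interval (the parent's left half) and $b$ the number in the parent's right half; since packet counts on disjoint subintervals are independent and Poisson, $a$ and $b$ are independent Poisson$(G_i)$ variables. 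The slot preceding the parent never carries feedback $c$ — the only states reachable after a capture are the $(C,\cdot)$ states, from which a success is deterministic, so a $(C,\cdot)$ state is never a parent and every CRP ends on an idle or a success — hence the parent is processed with the two-power rule of Phase 2a. Using the SINR/capture computations of Section \ref{sec:motivation_pcfcfs} (a single transmission always succeeds; the mixed pair $(1,1)$ captures; anything else with two or more transmissions collides), a collision in the parent occurs precisely when $(a,b)$ avoids $(0,0),(1,0),(0,1),(1,1)$, i.e.\ precisely when $a\geqslant 2$ or $b\geqslant 2$. Thus
\[
\Pr(\mbox{reach }(L,i)) \;=\; 1-\bigl((1+G_i)e^{-G_i}\bigr)^2 \;=\; 1-\Bigl(1+G_{i-1}+\tfrac{G_{i-1}^2}{4}\Bigr)e^{-G_{i-1}},
\]
which is the common denominator of all four formulas.

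Next I would treat the slot processed in state $(L,i)$ itself. Its feedback-before is $e$, so Phase 2a splits the $(L,i)$ interval into a left half of count $u$ transmitted at $P_2$ and a right half of count $v$ transmitted at $P_1$; here $u+v=a$, and $u,v$ are independent Poisson$(G_{i+1})$ and independent of $b$ (again by (\ref{eq:expected_half}) and the disjoint-subinterval property). By Section \ref{sec:motivation_pcfcfs}, the four outgoing events are: collision $\{u\geqslant 2\ \mbox{or}\ v\geqslant 2\}\to(L,i+1)$; capture $\{u=v=1\}\to(C,i+1)$; success $\{u+v=1\}\to(R,i)$; and idle $\{u=v=0\}\to(L',i+1)$. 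The first two lie inside $\{u+v\geqslant 2\}$, hence inside the conditioning event, so their numerators are unconditioned: $\Pr(u\geqslant 2\ \mbox{or}\ v\geqslant 2)=1-(1+G_i+\tfrac{G_i^2}{4})e^{-G_i}$ and $\Pr(u=v=1)=\tfrac{G_i^2}{4}e^{-G_i}$. The last two force $u+v\leqslant 1<2$, so reaching $(L,i)$ requires $b\geqslant 2$; since $b$ is independent of $(u,v)$ and $u+v\sim\mbox{Poisson}(G_i)$, the numerators factor as $\Pr(u+v=1)\Pr(b\geqslant 2)=G_ie^{-G_i}(1-e^{-G_i}-G_ie^{-G_i})$ and $\Pr(u=v=0)\Pr(b\geqslant 2)=e^{-G_i}(1-e^{-G_i}-G_ie^{-G_i})$. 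Dividing each numerator by the denominator above gives the four claimed expressions, and a quick check confirms that they sum to $1$.

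In my assessment the main obstacle is not any single computation but the careful justification that ``reaching $(L,i)$'' reduces to the single event $\{a\geqslant 2\}\cup\{b\geqslant 2\}$ and that the history of the CRP up to $(L,i)$ is independent of $u,v,b$ except through this event — which is exactly the Markov property asserted for the chain in Figure \ref{fig:dtmc_pcfcfs}. Once that is granted, together with the capture/collision dichotomy of Section \ref{sec:motivation_pcfcfs}, the lemma follows by the Poisson arithmetic above; the same argument, run with a ``reached after one or more idles'' conditioning, will handle the outgoing probabilities from $(L',i)$ and $(R',i)$.
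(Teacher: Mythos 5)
Your proposal is correct and follows essentially the same route as the paper's proof: condition on the event that the parent interval (split into two independent Poisson$(G_i)$ halves) produced a collision, note that this event is $\{x_{YL}+x_{YR}\geqslant 2\}\setminus\{x_{YL}=x_{YR}=1\}$ with probability $1-\bigl(1+G_{i-1}+\tfrac{G_{i-1}^2}{4}\bigr)e^{-G_{i-1}}$, and obtain each outgoing probability from the Poisson counts in the further-halved subintervals. The only cosmetic differences are that you compute the joint probabilities of each outgoing event with the conditioning event directly (exploiting that capture/collision in $(L,i)$ already imply the conditioning event, while success/idle force $x_{YR}\geqslant 2$) rather than writing out Bayes' rule, and you make explicit the observation that the parent slot is never a post-capture slot — both of which match the paper's argument in substance.
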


\begin{proof}
  Refer to Figure \ref{fig:dtmc_pcfcfs}.  For $i=1$, $(L,i)$ is
  entered only via a collision in $(R,i-1)$.  For $i=2$, $(L,i)$ is
  entered only via a collision in $(L,i-1)$ or $(R,i-1)$.  For
  $i\geqslant 3$, $(L,i)$ is entered only via a collision in
  $(L',i-1)$, $(L,i-1)$, $(R,i-1)$ or $(R',i-1)$. In every case, a
  subinterval $Y$ is split into $YL$ and $YR$, and $YL$ becomes the
  new allocation interval. Let $x_{YL}$ and $x_{YR}$ denote the number
  of packets in $YL$ and $YR$ respectively.  A priori, $x_{YL}$ and
  $x_{YR}$ are independent Poisson r.v.s of mean $G_i$ each. The event
  that a collision occurred in the previous state is
  $\{x_{YL}+x_{YR}\geqslant 2\} \cap \{x_{YL}=x_{YR}=1\}^c =: \mathcal
  C_Y$.  Note that $x_{YL}+x_{YR}=x_Y$ is a Poisson r.v. of mean
  $G_{i-1}$.  From (\ref{eq:expected_half}), $G_i =
  \frac{1}{2}G_{i-1}$ $\forall$ $i \geqslant 1$.  The probability of
  success in $(L,i)$ is the probability that $x_{YL}=1$ conditional on
  $\mathcal C_Y$, i.e.,
  \begin{eqnarray}
    P_{L_i,R_i}
    &=& \Pr(x_{YL}=1|\mathcal C_Y), \nonumber \\
    &=& \frac{\Pr(\mathcal C_Y|x_{YL}=1)\Pr(x_{YL}=1)}{\Pr(\mathcal C_Y)}, \nonumber \\
    &=& \frac{\Pr(\{x_{YR}=1\}\cap \{x_{YR}=1\}^c)\Pr(x_{YL}=1)}{\Pr(\mathcal C_Y)}, \nonumber \\
    &=& \frac{\Pr(x_{YR}\geqslant 2)\Pr(x_{YL}=1)}{\Pr(\{x_{YL}+x_{YR}\geqslant 2\} \cap \{x_{YL}=x_{YR}=1\}^c)}, \nonumber \\
    &=& \frac{\Pr(x_{YR}\geqslant 2)\Pr(x_{YL}=1)}{\Pr(x_Y\geqslant 2)-\Pr(x_{YL}=1)\Pr(x_{YR}=1)}, \nonumber \\
    &=& \frac{(1-e^{-G_i}-G_ie^{-G_i})G_ie^{-G_i}}{1-e^{-G_{i-1}}-G_{i-1}e^{-G_{i-1}}-G_i^2e^{-2G_i}}, \nonumber \\
    \therefore P_{L_i,R_i}
    &=& \frac{(1-e^{-G_i}-G_ie^{-G_i})G_ie^{-G_i}}{1-\Big(1+G_{i-1}+\frac{G_{i-1}^2}{4}\Big)e^{-G_{i-1}}}. \label{eq:prob_Li_Ri}
  \end{eqnarray}

  The probability of idle in $(L,i)$ is the probability that
  $x_{YL}=0$ conditional on $\mathcal C_Y$, i.e.,
  \begin{eqnarray}
    P_{L_i,L_{i+1}'}
    &=& \Pr(x_{YL}=0|\mathcal C_Y), \nonumber \\
    &=& \frac{\Pr(\mathcal C_Y|x_{YL}=0)\Pr(x_{YL}=0)}{\Pr(\mathcal C_Y)}, \nonumber \\
    &=& \frac{\Pr(\{x_{YR}\geqslant 2\}\cap \{x_{YR}=1\}^c)\Pr(x_{YL}=0)}{\Pr(\mathcal C_Y)}, \nonumber \\
    &=& \frac{\Pr(x_{YR}\geqslant 2)\Pr(x_{YL}=0)}{\Pr(\{x_{YL}+x_{YR}\geqslant 2\} \cap \{x_{YL}=x_{YR}=1\}^c)}, \nonumber \\
    &=& \frac{\Pr(x_{YR}\geqslant 2)\Pr(x_{YL}=0)}{\Pr(x_Y\geqslant 2)-\Pr(x_{YL}=1)\Pr(x_{YR}=1)}, \nonumber \\
    &=& \frac{(1-e^{-G_i}-G_ie^{-G_i})e^{-G_i}}{1-e^{-G_{i-1}}-G_{i-1}e^{-G_{i-1}}-G_i^2e^{-2G_i}}, \nonumber \\
    \therefore P_{L_i,L_{i+1}'}
    &=& \frac{(1-e^{-G_i}-G_ie^{-G_i})e^{-G_i}}{1-\Big(1+G_{i-1}+\frac{G_{i-1}^2}{4}\Big)e^{-G_{i-1}}}. \label{eq:prob_Li_Liplus1prime}
  \end{eqnarray}

  Let $x_{YLL}$ and $x_{YLR}$ denote the number of packets in $YLL$
  and $YLR$ respectively.  $x_{YLL}$ and $x_{YLR}$ are independent
  Poisson r.v.s of mean $G_{i+1}$ each, and $x_{YLL}+x_{YLR}=x_{YL}$.
  The probability of capture in $(L,i)$ is the probability that
  $x_{YLL}=1$ and $x_{YLR}=1$ conditional on $\mathcal C_Y$, i.e.,
  \begin{eqnarray}
    P_{L_i,C_{i+1}}
    &=& \Pr(x_{YLL}=1,x_{YLR}=1|\mathcal C_Y), \nonumber \\
    &=& \frac{\Pr(\mathcal C_Y|x_{YLL}=1,x_{YLR}=1)\Pr(x_{YLL}=1,x_{YLR}=1)}{\Pr(\{x_{YL}+x_{YR}\geqslant 2\}\cap\{x_{YL}=x_{YR}=1\}^c)}, \nonumber \\
    &=& \frac{\Pr(\mathcal C_Y|x_{YL}=2)\Pr(x_{YLL}=1)\Pr(x_{YLR}=1)}{\Pr(\{x_{YL}+x_{YR}\geqslant 2\}\cap\{x_{YL}=x_{YR}=1\}^c)}, \nonumber \\
    &=& \frac{\Pr(x_{YR}\geqslant 0)\Pr(x_{YLL}=1)\Pr(x_{YLR}=1)}{\Pr(x_Y\geqslant 2)-\Pr(x_{YL}=1)\Pr(x_{YR}=1)}, \nonumber \\
    &=& \frac{1.G_{i+1}^2e^{-2G_{i+1}}}{1-e^{-G_{i-1}}-G_{i-1}e^{-G_{i-1}}-G_i^2e^{-2G_i}}, \nonumber \\
    \therefore P_{L_i,C_{i+1}}
    &=& \frac{\frac{G_i^2}{4}e^{-G_i}}{1-\Big(1+G_{i-1}+\frac{G_{i-1}^2}{4}\Big)e^{-G_{i-1}}}. \label{eq:prob_Li_Ciplus1}
  \end{eqnarray}

  From (\ref{eq:prob_Li_Ri}), (\ref{eq:prob_Li_Liplus1prime}) and
  (\ref{eq:prob_Li_Ciplus1}), we obtain
  \begin{eqnarray}
    P_{L_i,L_{i+1}}
    &=& 1 - P_{L_i,R_i} - P_{L_i,L_{i+1}'} - P_{L_i,C_{i+1}}, \nonumber \\
    \therefore P_{L_i,L_{i+1}}
    &=& \frac{1-\Big(1+G_i+\frac{G_i^2}{4}\Big)e^{-G_i}}{1-\Big(1+G_{i-1}+\frac{G_{i-1}^2}{4}\Big)e^{-G_{i-1}}}.
  \end{eqnarray}
\end{proof}

\begin{lemma}
The outgoing transition probabilities from $(R,i)$ are given by
\begin{eqnarray}
P_{R_i,C_{i+1}}
&=& \frac{\frac{G_i^2}{4}e^{-G_i}}{1-(1+G_i)e^{-G_i}}
 \;\; \forall \;\; i \geqslant 1, \label{eq:probability_Ri_Ciplus1} \\
P_{R_i,L_{i+1}}
&=& \frac{1-\big(1+G_i+\frac{G_i^2}{4}\big)e^{-G_i}}{1-(1+G_i)e^{-G_i}}
 \;\; \forall \;\; i \geqslant 1. \label{eq:probability_Ri_Liplus1}
\end{eqnarray}
\label{lem:outgoing_probabilities_Ri}
\end{lemma}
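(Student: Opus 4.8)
The plan is to mirror the proof of Lemma~\ref{lem:outgoing_probabilities_Li}, taking advantage of the fact that a right interval at level $i\geqslant1$ is \emph{always} reached immediately after a collision, which makes it easier to handle than a left interval. First I would establish that, conditioned on the algorithm being in state $(R,i)$ with $i\geqslant1$, the number of packets $x$ in the current allocation interval (which has length $2^{-i}\phi_0$) is Poisson of mean $G_i$ conditioned on the event $\{x\geqslant 2\}$. Indeed, from Figure~\ref{fig:dtmc_pcfcfs}, state $(R,i)$ is entered only via a success in the corresponding left interval $(L,i)$. Writing $P$ for the common parent interval (of length $2^{-(i-1)}\phi_0$) and $x_{P_L},x_{P_R}$ for the numbers of packets in its two halves --- which are \emph{a priori} independent Poisson random variables of mean $G_i$ each --- reaching $(R,i)$ means that $P$ suffered a collision (so $x_{P_L}+x_{P_R}\geqslant 2$ and the capture event $\{x_{P_L}=x_{P_R}=1\}$ did not occur) and that subsequently $x_{P_L}=1$. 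Intersecting these conditions forces $x_{P_R}\geqslant 2$, and since $x_{P_L}$ is independent of $x_{P_R}$, the extra knowledge $x_{P_L}=1$ leaves $x_{P_R}$ distributed as Poisson($G_i$) conditioned on $\{x_{P_R}\geqslant 2\}$; by the independence of arrival counts on disjoint dyadic subintervals (the very property already invoked in the proof of Lemma~\ref{lem:outgoing_probabilities_Li}), nothing in the earlier history of the CRP constrains $x_{P_R}$ any further. This also explains why, unlike $(R,0)$, the state $(R,i)$ for $i\geqslant1$ has no outgoing idle or success transition: $x\geqslant 2$ rules out $0$ or $1$ packet, so the only possibilities are capture (to $(C,i+1)$) and collision (to $(L,i+1)$), exactly as drawn in Figure~\ref{fig:dtmc_pcfcfs}.

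Second, I would split the allocation interval of $(R,i)$ into its two halves, with packet counts $x_L$ and $x_R$ that are independent Poisson of mean $G_{i+1}=\tfrac12 G_i$ by (\ref{eq:expected_half}). A capture occurs in $(R,i)$ exactly when $x_L=1$ and $x_R=1$; since this event lies inside $\{x=2\}\subseteq\{x\geqslant 2\}$, the conditioning from the first paragraph gives
\[
P_{R_i,C_{i+1}}=\Pr(x_L=1,\,x_R=1\mid x\geqslant 2)=\frac{\Pr(x_L=1)\,\Pr(x_R=1)}{\Pr(x\geqslant 2)}=\frac{G_{i+1}^2\,e^{-2G_{i+1}}}{1-(1+G_i)e^{-G_i}}=\frac{\frac{G_i^2}{4}\,e^{-G_i}}{1-(1+G_i)e^{-G_i}},
\]
where I used $\Pr(x\geqslant 2)=1-e^{-G_i}-G_i e^{-G_i}$ for $x\sim\mathrm{Poisson}(G_i)$ and substituted $G_{i+1}=G_i/2$. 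Because capture and collision are the only outgoing transitions from $(R,i)$, $P_{R_i,L_{i+1}}=1-P_{R_i,C_{i+1}}$, which collapses to $\dfrac{1-\bigl(1+G_i+\frac{G_i^2}{4}\bigr)e^{-G_i}}{1-(1+G_i)e^{-G_i}}$, matching the claimed expression.

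The Poisson bookkeeping in the second step is routine; I expect the first step to be the only delicate point, namely justifying that the label $(R,i)$ encodes precisely the information ``the current allocation interval contains $\geqslant 2$ arrivals of a Poisson($G_i$) count, independent of all waiting intervals''. This is exactly the kind of claim that underlies the assertion in the text that the chain of Figure~\ref{fig:dtmc_pcfcfs} is genuinely Markov, and it is handled by the same reasoning used for $(L,i)$ in Lemma~\ref{lem:outgoing_probabilities_Li}; once it is granted, the two transition probabilities follow immediately. (A symmetric remark: the companion state $(R',i)$, reached after a collision followed by one or more idles, would be treated the same way but with a different conditioning event, and is not needed for this lemma.)
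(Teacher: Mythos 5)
Your proposal is correct and follows essentially the same route as the paper: you identify that $(R,i)$ for $i\geqslant 1$ is reached only via a success in $(L,i)$, which forces the right subinterval to contain at least two packets of a Poisson($G_i$) count (independent of the rest of the history), and then compute the capture probability by conditioning on that event and taking the complement for the collision probability. The paper's proof does exactly this, with the same use of independence to discard the $x_{YL}=1$ part of the conditioning.
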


\begin{proof}
  Refer to Figure \ref{fig:dtmc_pcfcfs}.  For $i \geqslant 1$, $(R,i)$
  is entered only via a success in $(L,i)$.  Recall that $(L,i)$ was
  entered only via a collision from a previous state. We use the
  notation introduced in the proof of Lemma
  \ref{lem:outgoing_probabilities_Li}. Define the event
  \begin{eqnarray}
    \mathcal S_{YL} 
    &:=& \mathcal C_Y \cap \{x_{YL}=1\}, \nonumber \\
    &=& \{x_{YL}+x_{YR}\geqslant 2\}\cap \{x_{YL}=x_{YR}=1\}^c \cap \{x_{YL}=1\}, \nonumber \\
    &=& \{x_{YR}\geqslant 1\} \cap \{x_{YR}=1\}^c \cap \{x_{YL}=1\}, \nonumber \\
    \therefore \mathcal S_{YL}
    &=& \{x_{YR}\geqslant 2\} \cap \{x_{YL}=1\}.
  \end{eqnarray}

  Let $x_{YRL}$ and $x_{YRR}$ denote the number of packets in $YRL$
  and $YRR$ respectively.  $x_{YRL}$ and $x_{YRR}$ are independent
  Poisson r.v.s of mean $G_{i+1}$ each. Since $x_{YR}\geqslant 2$, a
  success or an idle can never occur in state $(R,i)$. Note that
  $x_{YR}=x_{YRL}+x_{YRR}$.  The probability of capture in state
  $(R,i)$ is the probability that $x_{YRL}=1$ and $x_{YRR}=1$
  conditional on $\mathcal S_{YL}$, i.e.,
  \begin{eqnarray}
    P_{R_i,C_{i+1}}
    &=& \Pr(x_{YRL}=1,x_{YRR}=1|x_{YR}\geqslant 2,x_{YL}=1), \nonumber \\
    &=& \Pr(x_{YRL}=1,x_{YRR}=1|x_{YR}\geqslant 2), \nonumber \\
    &=& \frac{\Pr(x_{YR}\geqslant 2|x_{YRL}=1,x_{YRR}=1)\Pr(x_{YRL}=1,x_{YRR}=1)}{\Pr(x_{YR}\geqslant 2)}, \nonumber \\
    &=& \frac{\Pr(x_{YRL}+x_{YRR}\geqslant 2|x_{YRL}=1,x_{YRR}=1)\Pr(x_{YRL}=1,x_{YRR}=1)}{\Pr(x_{YR}\geqslant 2)}, \nonumber \\
    &=& \frac{1.\Pr(x_{YRL}=1)\Pr(x_{YRR}=1)}{\Pr(x_{YR}\geqslant 2)}, \nonumber \\
    &=& \frac{G_{i+1}^2e^{-2G_{i+1}}}{1-e^{-G_i}-G_ie^{-G_i}}, \nonumber \\
    \therefore 
      P_{R_i,C_{i+1}} &=& \frac{\frac{G_i^2}{4}e^{-G_i}}{1-(1+G_i)e^{-G_i}}.
          \label{eq:prob_Ri_Ciplus1}
  \end{eqnarray}

  From (\ref{eq:prob_Ri_Ciplus1}), we obtain
  \begin{eqnarray}
    P_{R_i,L_{i+1}}
    &=& 1 - P_{R_i,C_{i+1}}, \nonumber \\
    &=& 1 - \frac{\frac{G_i^2}{4}e^{-G_i}}{1-(1+G_i)e^{-G_i}}, \nonumber \\
    \therefore P_{R_i,L_{i+1}}
    &=& \frac{1-\big(1+G_i+\frac{G_i^2}{4}\big)e^{-G_i}}{1-(1+G_i)e^{-G_i}}.
  \end{eqnarray}
\end{proof}

\begin{lemma}
The outgoing transition probabilities from $(L',i)$ are given by
\begin{eqnarray}
P_{L_i',R_i'} 
 &=& \frac{(1-e^{-G_i})G_ie^{-G_i}}{1-(1+G_{i-1})e^{-G_{i-1}}}
  \;\; \forall \;\; i \geqslant 2, \label{eq:probability_Liprime_Riprime} \\
P_{L_i',L_{i+1}'}
 &=& \frac{(1-e^{-G_i}-G_ie^{-G_i})e^{-G_i}}{1-(1+G_{i-1})e^{-G_{i-1}}} \;\; \forall \;\; i \geqslant 2, \label{eq:probability_Liprime_Liplus1prime} \\
P_{L_i',C_{i+1}}
 &=& \frac{\frac{G_i^2}{4}e^{-G_i}}{1-(1+G_{i-1})e^{-G_{i-1}}} 
  \;\; \forall \;\; i \geqslant 2, \label{eq:probability_Liprime_Ciplus1} \\
P_{L_i',L_{i+1}}
 &=&\frac{1-\big(1+G_i+\frac{G_i^2}{4}\big)e^{-G_i}}{1-(1+G_{i-1})e^{-G_{i-1}}}
  \;\; \forall \;\; i \geqslant 2. \label{eq:probability_Liprime_Liplus1}
\end{eqnarray}
\label{lem:outgoing_probabilities_Liprime}
\end{lemma}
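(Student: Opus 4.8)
The plan is to mirror the arguments already used for Lemmas~\ref{lem:outgoing_probabilities_Li} and~\ref{lem:outgoing_probabilities_Ri}: identify the event on which state $(L',i)$ is conditioned, re-express the a priori packet counts on the relevant subintervals as independent Poisson random variables, read off the success/idle/capture probabilities as conditional probabilities, and obtain $P_{L_i',L_{i+1}}$ by complementation. The one genuinely new ingredient is the conditioning event for $(L',i)$, which I would pin down first.

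First I would determine this conditioning event. By inspection of Algorithm~\ref{algo:pcfcfs} and Figure~\ref{fig:dtmc_pcfcfs}, state $(L',i)$ with $i \geqslant 2$ is entered only via an idle in $(L,i-1)$ or in $(L',i-1)$, i.e.\ through a chain of one or more idles that was triggered by a collision. I would argue by induction along this chain that the current allocation interval $A$ (of size $2^{-i}\phi_0$) is the left half of a parent interval $P$ of size $2^{-(i-1)}\phi_0$, and that, conditional on reaching $(L',i)$, the only available information is $x_P \geqslant 2$, where $x_P$ is the number of packets in $P$. The base case ($i=2$) is that a collision in the original interval puts us in $(L,1)$ with conditioning $\mathcal{C}$; an idle there forces $x_L=0$, so $\mathcal{C}$ collapses to $\{x_R \geqslant 2\}$, and the new allocation interval is the left half of $R$. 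The inductive step is identical: an idle in the left half of a ``$\geqslant 2$'' interval forces all of its packets into the corresponding right half, which is then split, so the new interval is again the left half of a ``$\geqslant 2$'' parent. In particular, the residual ``not $(1,1)$'' clause that appears in the conditioning of $(L,i)$ never re-enters here, which is why the common denominator is $\Pr(x_P \geqslant 2) = 1-(1+G_{i-1})e^{-G_{i-1}}$ rather than $1-(1+G_{i-1}+\frac{G_{i-1}^2}{4})e^{-G_{i-1}}$. Since $P$ is split into $A$ and its corresponding right subinterval $A'$, the counts $x_A$ and $x_{A'}$ are a priori independent Poisson of mean $G_i$ each, with $x_A+x_{A'}=x_P$ Poisson of mean $G_{i-1}=2G_i$ by~(\ref{eq:expected_half}).

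Next I would classify the feedback in the slot transmitted from $(L',i)$. Since the previous feedback was an idle (not a capture), Phase 2a sends the packets in the left half of $A$ at power $P_2$ and those in the right half at $P_1$. Using~(\ref{eq:nominal_power}) and~(\ref{eq:higher_power}), a capture occurs if and only if $A$ holds exactly two packets, one in each half: the lone $P_2$-packet then has SINR exactly $\gamma_c$ and the lone $P_1$-packet has SINR below $\gamma_c$, whereas two packets in the same half or three or more packets in $A$ always produce a collision, and a single packet (in either half) is always a success. So with $x_{A_L},x_{A_R}$ the counts in the two halves of $A$ (independent Poisson of mean $G_i/2$), the outcomes are: idle iff $x_A=0$ (to $(L',i+1)$), success iff $x_A=1$ (to $(R',i)$), capture iff $x_{A_L}=x_{A_R}=1$ (to $(C,i+1)$), collision otherwise (to $(L,i+1)$). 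Conditioning on $\{x_A+x_{A'}\geqslant 2\}$ and using independence, I get $\Pr(x_A=0,\,x_{A'}\geqslant2)=e^{-G_i}(1-e^{-G_i}-G_ie^{-G_i})$, $\Pr(x_A=1,\,x_{A'}\geqslant1)=G_ie^{-G_i}(1-e^{-G_i})$, and $\Pr(x_{A_L}=x_{A_R}=1)=\frac{G_i^2}{4}e^{-G_i}$ (the event $\{x_{A_L}=x_{A_R}=1\}$ already implies $x_P\geqslant 2$); dividing by $1-(1+G_{i-1})e^{-G_{i-1}}$ yields~(\ref{eq:probability_Liprime_Riprime})--(\ref{eq:probability_Liprime_Ciplus1}), and~(\ref{eq:probability_Liprime_Liplus1}) follows by subtracting these three from $1$ and simplifying with $G_{i-1}=2G_i$. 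The main obstacle is thus entirely in the first step — justifying that the conditioning at $(L',i)$ reduces to the clean event $\{x_P\geqslant 2\}$ on a parent whose internal structure has not yet been examined — after which the computation is the same routine Poisson bookkeeping as in the two preceding lemmas.
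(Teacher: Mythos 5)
Your proposal is correct and follows essentially the same route as the paper: it conditions on the parent ``right'' subinterval containing at least two packets (so the denominator is $\Pr(x_Z\geqslant 2)=1-(1+G_{i-1})e^{-G_{i-1}}$), reads off idle/success/capture as $x_A=0$, $x_A=1$, and one packet in each half of $A$ respectively, and gets $P_{L_i',L_{i+1}}$ by complementation. The extra care you take in justifying the conditioning event by induction along the idle chain, and in verifying the capture/collision classification from the two power levels, only makes explicit what the paper asserts directly.
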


\begin{proof}
  Refer to Figure \ref{fig:dtmc_pcfcfs}.  For $i=2$, $(L',i)$ is
  entered only by an idle in $(L,i-1)$.  For $i \geqslant 3$, state
  $(L',i)$ is entered by an idle in $(L',i-1)$ or an idle in
  $(L,i-1)$.  In every case, a residual right subinterval, say $Z$, is
  split into $ZL$ and $ZR$, and $ZL$ becomes the new allocation
  interval. Note that $(L',i)$ can be entered if and only if there is
  a collision (in some time slot) followed by one or more idles.
  Therefore, $Z$ must contain at least two packets.  Let $x_{ZL}$ and
  $x_{ZR}$ denote the number of packets in $ZL$ and $ZR$ respectively.
  A priori, $x_{ZL}$ and $x_{ZR}$ are independent Poisson r.v.s of
  mean $G_i$ each.  Let $x_Z$ denote the number of packets in $Z$.
  Thus $x_Z=x_{ZL}+x_{ZR}$, $x_Z$ is a Poisson r.v. of mean $G_{i-1}$
  and $x_Z \geqslant 2$.

  The probability of success in $(L',i)$ is the probability that
  $x_{ZL}=1$ conditional on $x_Z \geqslant 2$, i.e.,
  \begin{eqnarray}
    P_{L_i',R_i'}
    &=& \Pr(x_{ZL}=1|x_Z\geqslant 2), \nonumber \\
    &=& \frac{\Pr(x_Z\geqslant 2|x_{ZL}=1)\Pr(x_{ZL}=1)}{\Pr(x_Z\geqslant 2)}, \nonumber \\
    &=& \frac{\Pr(x_{ZL}+x_{ZR}\geqslant 2|x_{ZL}=1)\Pr(x_{ZL}=1)}{\Pr(x_Z\geqslant 2)}, \nonumber \\
    &=& \frac{\Pr(x_{ZR}\geqslant 1)\Pr(x_{ZL}=1)}{\Pr(x_Z\geqslant 2)}, \nonumber \\
    \therefore P_{L_i',R_i'}
    &=& \frac{(1-e^{-G_i})G_ie^{-G_i}}{1-(1+G_{i-1})e^{-G_{i-1}}} .
          \label{eq:prob_Liprime_Riprime}
  \end{eqnarray}

  The probability of idle in $(L',i)$ is the probability that
  $x_{ZL}=0$ conditional on $x_Z \geqslant 2$, i.e.,
  \begin{eqnarray}
    P_{L_i',L_{i+1}'}
    &=& \Pr(x_{ZL}=0|x_Z\geqslant 2), \nonumber \\
    &=& \frac{\Pr(x_Z\geqslant 2|x_{ZL}=0)\Pr(x_{ZL}=0)}{\Pr(x_Z\geqslant 2)}, \nonumber \\
    &=& \frac{\Pr(x_{ZL}+x_{ZR}\geqslant 2|x_{ZL}=0)\Pr(x_{ZL}=0)}{\Pr(x_Z\geqslant 2)}, \nonumber \\
    &=& \frac{\Pr(x_{ZR}\geqslant 2)\Pr(x_{ZL}=0)}{\Pr(x_Z\geqslant 2)}, \nonumber \\
    \therefore P_{L_i',L_{i+1}'}
    &=& \frac{(1-e^{-G_i}-G_ie^{-G_i})e^{-G_i}}{1-(1+G_{i-1})e^{-G_{i-1}}}.
          \label{eq:prob_Liprime_Liplus1prime}
  \end{eqnarray}

  Let $x_{ZLL}$ and $x_{ZLR}$ denote the number of packets in $ZLL$
  and $ZLR$ respectively. A priori,$x_{ZLL}$ and $x_{ZLR}$ are
  independent Poisson r.v.s of mean $G_{i+1}$ each.  The probability
  of capture in $(L',i)$ is the probability that $x_{ZLL}=1$ and
  $x_{ZLR}=1$ conditional on $x_Z \geqslant 2$, i.e.,

  \begin{eqnarray}
    P_{L_i',C_{i+1}}
    &=& \Pr(x_{ZLL}=1,x_{ZLR}=1|x_Z\geqslant 2), \nonumber \\
    &=& \frac{\Pr(x_Z\geqslant 2|x_{ZLL}=1,x_{ZLR}=1)\Pr(x_{ZLL}=1,x_{ZLR}=1)}{\Pr(x_Z\geqslant 2)}, \nonumber \\
    &=& \frac{1.\Pr(x_{ZLL}=1)\Pr(x_{ZLR}=1)}{\Pr(x_Z\geqslant 2)}, \nonumber \\
    &=& \frac{G_{i+1}^2e^{-2G_{i+1}}}{1-e^{-G_{i-1}}-G_{i-1}e^{-G_{i-1}}}, \nonumber \\
    \therefore P_{L_i',C_{i+1}}
    &=& \frac{\frac{G_i^2}{4}e^{-G_i}}{1-(1+G_{i-1})e^{-G_{i-1}}}.
          \label{eq:prob_Liprime_Ciplus1}
  \end{eqnarray}

  From (\ref{eq:prob_Liprime_Riprime}),
  (\ref{eq:prob_Liprime_Liplus1prime}) and
  (\ref{eq:prob_Liprime_Ciplus1}), we obtain
  \begin{eqnarray}
    P_{L_i',L_{i+1}}
    &=& 1 - P_{L_i',R_i'} - P_{L_i',L_{i+1}'} - P_{L_i',C_{i+1}'}, \\
    \therefore P_{L_i',L_{i+1}}
    &=& \frac{1-\big(1+G_i+\frac{G_i^2}{4}\big)e^{-G_i}}{1-(1+G_{i-1})e^{-G_{i-1}}}.
  \end{eqnarray}
\end{proof}

\begin{lemma}
The outgoing transition probabilities from $(R',i)$ are given by
\begin{eqnarray}
P_{R_i',R_0} 
 &=& \frac{G_ie^{-G_i}}{1-e^{-G_i}} 
  \;\; \forall \;\; i \geqslant 2, \label{eq:probability_Riprime_R0} \\
P_{R_i',C_{i+1}}
 &=& \frac{\frac{G_i^2}{4}e^{-G_i}}{1-e^{-G_i}}
  \;\; \forall \;\; i \geqslant 2, \label{eq:probability_Riprime_Ciplus1} \\
P_{R_i',L_{i+1}}
 &=& \frac{1-\big(1+G_i+\frac{G_i^2}{4}\big)e^{-G_i}}{1-e^{-G_i}}
  \;\; \forall \;\; i \geqslant 2. \label{eq:probability_Riprime_Liplus1}
\end{eqnarray}
\label{lem:outgoing_probabilities_Riprime}
\end{lemma}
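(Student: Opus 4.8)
The plan is to mimic, almost verbatim, the arguments used in the proofs of Lemmas \ref{lem:outgoing_probabilities_Ri} and \ref{lem:outgoing_probabilities_Liprime}, the only new ingredient being a careful identification of the conditioning event that records how state $(R',i)$ is reached. Referring to Figure \ref{fig:dtmc_pcfcfs}, for $i \geqslant 2$ the sole way to enter $(R',i)$ is via a success in $(L',i)$. Using the notation of the proof of Lemma \ref{lem:outgoing_probabilities_Liprime}, state $(L',i)$ corresponds to the left half $ZL$ of a residual right subinterval $Z$ that, because $(L',i)$ is only reached after a collision followed by one or more idles, satisfies $x_Z \geqslant 2$; a success in $(L',i)$ is the event $x_{ZL}=1$, after which the algorithm makes the corresponding right subinterval $ZR$ the allocation interval for slot at state $(R',i)$. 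Hence the history leading to $(R',i)$ is $\{x_Z \geqslant 2\} \cap \{x_{ZL}=1\}$, which, since $x_Z = x_{ZL}+x_{ZR}$, is precisely $\{x_{ZL}=1\} \cap \{x_{ZR}\geqslant 1\}$. As $x_{ZL}$ and $x_{ZR}$ are a priori independent Poisson random variables of mean $G_i$ each, conditional on this history the number of packets $x_{ZR}$ in the allocation interval of $(R',i)$ is Poisson of mean $G_i$ restricted to $\{x_{ZR}\geqslant 1\}$, and the independent-Poisson structure on any further subdivision of $ZR$ is preserved.

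With this, the three outgoing probabilities are short computations. Since $x_{ZR}\geqslant 1$, an idle cannot occur in $(R',i)$. A success is the event $x_{ZR}=1$, so $P_{R_i',R_0} = \Pr(x_{ZR}=1 \mid x_{ZR}\geqslant 1) = \frac{G_i e^{-G_i}}{1-e^{-G_i}}$, which is (\ref{eq:probability_Riprime_R0}). For a capture, split $ZR$ into $ZRL$ and $ZRR$, with $x_{ZRL}$ and $x_{ZRR}$ independent Poisson of mean $G_{i+1}=G_i/2$; a capture in $(R',i)$ is $\{x_{ZRL}=1,\,x_{ZRR}=1\}$, which forces $x_{ZR}=2\geqslant 1$, so $P_{R_i',C_{i+1}} = \frac{\Pr(x_{ZRL}=1)\Pr(x_{ZRR}=1)}{\Pr(x_{ZR}\geqslant 1)} = \frac{G_{i+1}^2 e^{-2G_{i+1}}}{1-e^{-G_i}} = \frac{\frac{G_i^2}{4}e^{-G_i}}{1-e^{-G_i}}$, which is (\ref{eq:probability_Riprime_Ciplus1}). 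Finally, a collision is the only remaining outcome, so $P_{R_i',L_{i+1}} = 1 - P_{R_i',R_0} - P_{R_i',C_{i+1}}$; placing the terms over the common denominator $1-e^{-G_i}$ and collecting gives $\frac{1-(1+G_i+\frac{G_i^2}{4})e^{-G_i}}{1-e^{-G_i}}$, which is (\ref{eq:probability_Riprime_Liplus1}).

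The main obstacle is the first paragraph: correctly arguing that entering $(R',i)$ reduces the conditioning event to exactly $\{x_{ZR}\geqslant 1\}$ — in contrast with the $(R,i)$ case of Lemma \ref{lem:outgoing_probabilities_Ri}, where the history forced $x_{YR}\geqslant 2$ — and that, after this conditioning, the arrivals in $ZR$ still form an independent Poisson field so that any subsequent halving inherits independent Poisson counts of mean $G_{i+1}$. Once that structural point is settled, nothing beyond the memorylessness and independent-increments properties of the Poisson process (already invoked repeatedly in this chapter) and a complementation is required.
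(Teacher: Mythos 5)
Your proposal is correct and follows essentially the same route as the paper's own proof: entry into $(R',i)$ is reduced to the conditioning event $\{x_{ZR}\geqslant 1\}$ via a success in $(L',i)$ from a residual right subinterval $Z$ with $x_Z\geqslant 2$, after which the success and capture probabilities are computed exactly as you do and the collision probability is obtained by complementation. No gaps; the structural point you flag as the main obstacle is precisely the step the paper makes, and the algebra (including $G_{i+1}^2e^{-2G_{i+1}}=\tfrac{G_i^2}{4}e^{-G_i}$) checks out.
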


\begin{proof}
  Refer to Figure \ref{fig:dtmc_pcfcfs}.  For $i \geqslant 2$, state
  $(R',i)$ is entered if and only if a success occurs in state
  $(L',i)$. When $(L',i)$ was entered, a residual right subinterval
  $Z$ was split into $ZL$ and $ZR$, and $ZL$ became the new allocation
  interval. Recall that $x_Z\geqslant 2$, since $(L',i)$ can only be
  entered after a collision followed by one or more idles. A success
  in $(L',i)$ implies $x_{ZL}=1$.  Hence, $(R',i)$ is entered if and
  only if both these events occurs, i.e., $x_Z\geqslant 2$ and
  $x_{ZL}=1$. Therefore, $(R',i)$ can be entered if and only if
  $x_{ZR}\geqslant 1$. Note that there can never be an idle from
  $(R',i)$.

  The probability of success in $(R',i)$ is the probability that
  $x_{ZR}=1$ conditional on $x_{ZR}\geqslant 1$, i.e.,
  \begin{eqnarray}
    P_{R_i',R_0}
    &=& \Pr(x_{ZR}=1|x_{ZR}\geqslant 1), \nonumber \\
    &=& \frac{\Pr(x_{ZR}\geqslant 1|x_{ZR}=1)\Pr(x_{ZR}=1)}{\Pr(x_{ZR}\geqslant 1)}, \nonumber \\
    \therefore P_{R_i',R_0}
    &=& \frac{G_ie^{-G_i}}{1-e^{-G_i}}. \label{eq:prob_Riprime_R0}
  \end{eqnarray}

  Let $x_{ZRL}$ and $x_{ZRR}$ denote the number of packets in $ZRL$
  and $ZRR$ respectively. Note that $x_{ZR}=x_{ZRL}+x_{ZRR}$.
  $x_{ZRL}$ and $x_{ZRR}$ are independent Poisson r.v.s of mean
  $G_{i+1}$ each. The probability of capture in state $(R',i)$ is the
  probability that $x_{ZRL}=1$ and $x_{ZRR}=1$ conditional on
  $x_{ZR}\geqslant 1$, i.e.,
  \begin{eqnarray}
    P_{R_i',C_{i+1}}
    &=& \Pr(x_{ZRL}=1,x_{ZRR}=1|x_{ZR}\geqslant 1), \nonumber \\
    &=& \frac{\Pr(x_{ZR}\geqslant 1|x_{ZRL}=1,x_{ZRR}=1)\Pr(x_{ZRL}=1,x_{ZRR}=1)}{\Pr(x_{ZR}\geqslant 1)}, \nonumber \\
    &=& \frac{1.\Pr(x_{ZRL}=1)\Pr(x_{ZRR}=1)}{\Pr(x_{ZR}\geqslant 1)}, \nonumber \\
    &=& \frac{G_{i+1}^2e^{-2G_{i+1}}}{1-e^{-G_i}}, \nonumber \\
    \therefore P_{R_i',C_{i+1}}
    &=& \frac{\frac{G_i^2}{4}e^{-G_i}}{1-e^{-G_i}}.
          \label{eq:prob_Riprime_Ciplus1}
  \end{eqnarray}

  From (\ref{eq:prob_Riprime_R0}) and (\ref{eq:prob_Riprime_Ciplus1}),
  we obtain
  \begin{eqnarray}
    P_{R_i',L_{i+1}}
    &=& 1 - P_{R_i',R_0} - P_{R_i',C_{i+1}}, \nonumber \\
    \therefore P_{R_i',L_{i+1}}
    &=& \frac{1-\big(1+G_i+\frac{G_i^2}{4}\big)e^{-G_i}}{1-e^{-G_i}}.
  \end{eqnarray}
\end{proof}

In summary, Figure \ref{fig:dtmc_pcfcfs} is a DTMC and the transition
probabilities are given by (\ref{eq:probability_R0_R0}),
(\ref{eq:probability_R0_C1}), (\ref{eq:probability_R0_L1}) and
(\ref{eq:probability_Ci_R0}), and Lemmas
\ref{lem:outgoing_probabilities_Li},
\ref{lem:outgoing_probabilities_Ri},
\ref{lem:outgoing_probabilities_Liprime} and
\ref{lem:outgoing_probabilities_Riprime}.

We now analyze the DTMC in Figure \ref{fig:dtmc_pcfcfs}. Observe that
no state can be entered more than once before the return to $(R,0)$.
Let $Q_{X_i}$ denote the probability that state $(X,i)$ is entered
before returning to $(R,0)$, where $X \in \{L,L',R,R',C\}$ and $i \in
{\mathbb Z}^+$.  In other words, $Q_{X_i}$ denotes the probability of
hitting $(X,i)$ in a CRP given that we start from $(R,0)$. Note that
$Q_{C_1}=P_{R_0,C_1}$ and $Q_{L_1}=P_{R_0,L_1}$. The probabilities
$Q_{X_i}$ can be calculated iteratively from the initial state $(R,0)$
as follows:
\begin{eqnarray}
Q_{C_1} &=& \frac{G_0^2}{4}e^{-G_0}, \\
Q_{L_1} &=& 1-\Big(1+G_0+\frac{G_0^2}{4}\Big)e^{-G_0}, \\
Q_{C_2} &=& Q_{L_1}P_{L_1,C_2} + Q_{R_1}P_{R_1,C_2}, \\
Q_{L_2'} &=& Q_{L_1}P_{L_1,L_2'}, \\
Q_{L_2} &=& Q_{L_1}P_{L_1,L_2} + Q_{R_1}P_{R_1,L_2}, \\
Q_{L_i} &=& Q_{L_{i-1}'}P_{L_{i-1}',L_i} + Q_{L_{i-1}}P_{L_{i-1},L_i}
  + Q_{R_{i-1}}P_{R_{i-1},L_i} \nonumber \\
      && {} + Q_{R_{i-1}'}P_{R_{i-1}',R_i} \;\; \forall \;\; i \geqslant 3, \\
Q_{L_i'} &=& Q_{L_{i-1}'}P_{L_{i-1}',L_i'} + Q_{L_{i-1}}P_{L_{i-1},L_i'}
  \;\; \forall \;\; i \geqslant 3, \\
Q_{R_i} &=& Q_{L_i}P_{L_i,R_i} \;\; \forall \;\; i \geqslant 1, \\
Q_{R_i'} &=& Q_{L_i'}P_{L_i',R_i'} \;\; \forall \;\; i \geqslant 2, \\
Q_{C_i} &=& Q_{L_{i-1}'}P_{L_{i-1}',C_i} + Q_{L_i}P_{L_i,C_i} 
  + Q_{R_{i-1}}P_{R_{i-1},C_i} + Q_{R_{i-1}'}P_{R_{i-1}',C_i}
    \;\; \forall \;\; i \geqslant 3.
\end{eqnarray}

Let random variable $K$ denote the number of slots in a CRP. Thus, $K$
equals the number of states visited in the Markov chain, including the
initial state $(R,0)$, before the return to $(R,0)$. Thus
\begin{eqnarray}
E[K] &=& 1 + \sum_{i=1}^\infty (Q_{L_i}+Q_{L_i'}+Q_{R_i}+Q_{R_i'}+Q_{C_i}),
\label{eq:expected_slots}
\end{eqnarray}
where we assume $Q_{L_1'}=Q_{R_1'}=0$.

We evaluate the change in $T(k)$ from one CRP to the next, i.e., we
evaluate the difference in left endpoints of initial allocation
intervals of successive CRPs. For the assumed initial interval of size
$\phi_0$, this change is at most $\phi_0$. However, if left allocation
intervals have collisions or captures (e.g., $RL$ in Figure
\ref{fig:pcfcfs_example1}), then the corresponding right allocation
intervals (e.g., $RR$ in Figure \ref{fig:pcfcfs_example1}) are
returned to the waiting interval, and the change is less than
$\phi_0$. Let random variable $F$ denote the fraction of $\phi_0$
returned in this manner over a CRP, so that $\phi_0(1-F)$ is the
change in $T(k)$. We distinguish between two cases:
\begin{enumerate}
\item If a left allocation interval of type $(L,i)$ has a collision or
  a capture, then the corresponding right allocation interval $(R,i)$
  is returned to the waiting interval. Let $U_{L_i}$ denote the
  probability that $(L,i)$ has a collision or a capture. Hence,
  $U_{L_i}$ denotes the probability that $(L,i)$ has two or more
  packets. Thus, $U_{L_i}=P_{L_i,L_{i+1}}+P_{L_i,C_{i+1}}$. Using
  (\ref{eq:probability_Li_Ciplus1}) and
  (\ref{eq:probability_Li_Liplus1}), we obtain
  \begin{eqnarray}
    U_{L_i}
    &=& \frac{1-(1+G_i)e^{-G_i}}{1-\Big(1+G_{i-1}+\frac{G_{i-1}^2}{4}\Big)e^{-G_{i-1}}}  \;\; \forall \;\; i \geqslant 1.
  \end{eqnarray}

\item If a left allocation interval of type $(L',i)$ has a collision
  or a capture, then the corresponding right allocation interval
  $(R',i)$ is returned to the waiting interval. Let $U_{L_i'}$ denote
  the probability that $(L',i)$ has a collision or a capture. Hence,
  $U_{L_i'}$ denotes the probability that $(L',i)$ has two or more
  packets. Thus, $U_{L_i'}=P_{L_i',L_{i+1}}+P_{L_i',C_{i+1}}$. Using
  (\ref{eq:probability_Liprime_Ciplus1}) and
  (\ref{eq:probability_Liprime_Liplus1}), we obtain
  \begin{eqnarray}
    U_{L_i'}
    &=&\frac{1-\big(1+G_i\big)e^{-G_i}}{1-(1+G_{i-1})e^{-G_{i-1}}}
    \;\; \forall \;\; i \geqslant 2.
  \end{eqnarray}
\end{enumerate}

In either case, the fraction of the original allocation interval
returned on such a collision or a capture is $2^{-i}$. Therefore, the
expected value of $F$ is
\begin{eqnarray}
E[F] &=& \sum_{i=1}^{\infty} (Q_{L_i}U_{L_i}+Q_{L_i'}U_{L_i'})2^{-i},
\label{eq:expected_fraction}
\end{eqnarray}
where we assume $U_{L_1'}=0$.

From (\ref{eq:expected_packets}), (\ref{eq:expected_slots}) and
(\ref{eq:expected_fraction}), we observe that $E[K]$ and $E[F]$ are
functions only of the product $\lambda \phi_0$.  Note that as $i
\rightarrow \infty$, $G_i = 2^{-i}\lambda\phi_0 \rightarrow 0$.  Using
the Taylor series expansion for $e^x$ or L'H\^{o}pital's Rule, we can
easily prove that:
\begin{enumerate}

\item
  \begin{eqnarray}
    \lim_{i \rightarrow \infty } P_{L_i',R_i'} &=& \frac{1}{2}, 
       \label{eq:limit_Liprime_Riprime} \\
    \lim_{i \rightarrow \infty } P_{L_i',L_{i+1}'} &=& \frac{1}{4}, 
      \label{eq:limit_Liprime_Liplus1prime} \\
    \lim_{i \rightarrow \infty } P_{L_i',C_{i+1}} &=& \frac{1}{8}, 
      \label{eq:limit_Liprime_Ciplus1} \\
    \lim_{i \rightarrow \infty } P_{L_i',L_{i+1}} &=& \frac{1}{8},
      \label{eq:limit_Liprime_Liplus1}
  \end{eqnarray}

\item
  \begin{eqnarray}
    \lim_{i \rightarrow \infty } P_{R_i',R_0} &=& 1, 
      \label{eq:limit_Riprime_R0} \\
    \lim_{i \rightarrow \infty } P_{R_i',C_{i+1}} &=& 0, \\
    \lim_{i \rightarrow \infty } P_{R_i',L_{i+1}} &=& 0,
  \end{eqnarray}

\item
  \begin{eqnarray}
    \lim_{i \rightarrow \infty } P_{L_i,R_i} &=& 0, \\
    \lim_{i \rightarrow \infty } P_{L_i,L_{i+1}'} &=& \frac{1}{2}, \\
    \lim_{i \rightarrow \infty } P_{L_i,C_{i+1}} &=& \frac{1}{4}, \\
    \lim_{i \rightarrow \infty } P_{L_i,L_{i+1}} &=& \frac{1}{4},
  \end{eqnarray}

\item
  \begin{eqnarray}
    \lim_{i \rightarrow \infty } P_{R_i,C_{i+1}} &=& \frac{1}{2}, \\
    \lim_{i \rightarrow \infty } P_{R_i,L_{i+1}} &=& \frac{1}{2}.
      \label{eq:limit_Ri_Liplus1}
  \end{eqnarray}

\end{enumerate}
The proofs of these results are given in Appendix
\ref{ap:limiting_probabilities}.  Hence, $Q_{L_i}$, $Q_{L_i'}$,
$Q_{R_i'}$ and $Q_{C_i}$ tend to zero with increasing $i$ as $2^{-i}$,
while $Q_{R_i}$ tends to zero with increasing $i$ as $4^{-i}$. Thus,
$E[K]$ and $E[F]$ can be easily evaluated numerically as functions of
$\lambda\phi_0$.

Define the time backlog to be the difference between the current time
and the left endpoint of the allocation interval, i.e., $k-T(k)$.
Note that all packets that arrived in the interval $T(k),k$ have not
yet been successfully transmitted, i.e., they are backlogged.
Moreover, we define the drift $D$ to be the expected change in time
backlog, $k-T(k)$, over a CRP, assuming an initial allocation interval
of $\phi_0$. Thus, $D$ is the expected number of slots in a CRP less
the expected change in $T(k)$, and is given by
\begin{eqnarray}
D &=& E[K] - \phi_0(1-E[F]).
\end{eqnarray}
The drift is negative if $E[K]<\phi_0(1-E[F])$. Equivalently, the drift
is negative if
\begin{eqnarray}
\lambda &<& \frac{\lambda\phi_0(1-E[F])}{E[K]} =: \zeta.
\label{ineq:negative_drift}
\end{eqnarray}

\begin{figure}[thbp]
\centering
\includegraphics[width=5.5in]{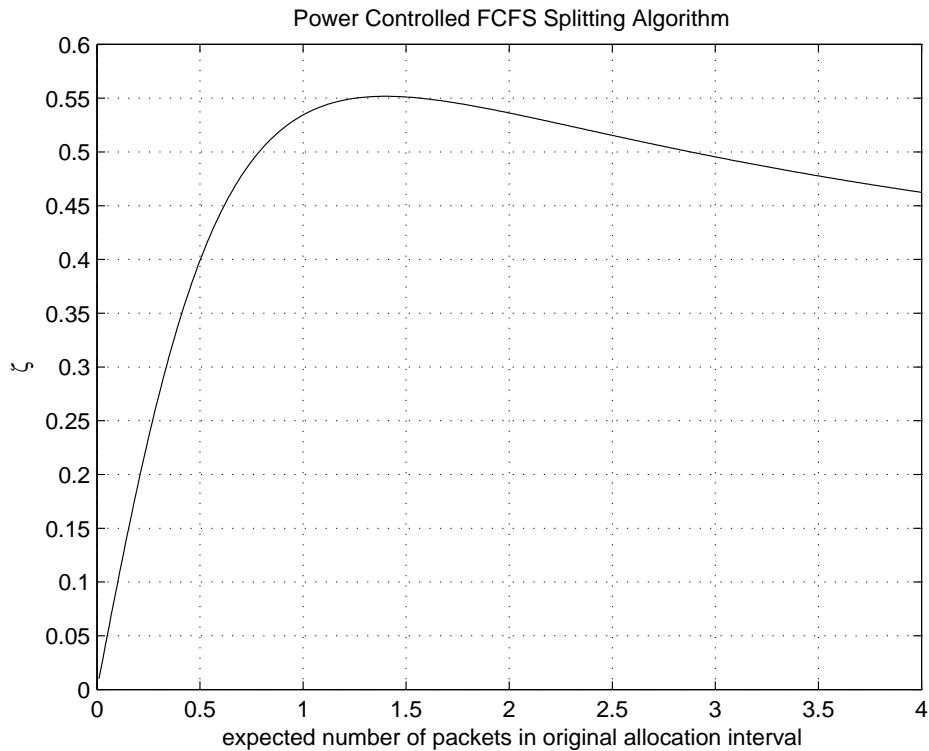}
\caption{Plot of $\zeta$ versus $\lambda\phi_0$.}
\label{fig:zeta_plot}
\end{figure}

The right hand side of (\ref{ineq:negative_drift}), $\zeta$, is a
function of $\lambda\phi_0$ and is plotted in Figure
\ref{fig:zeta_plot}.  We observe that $\zeta$ takes its maximum value
at $\lambda\phi_0=1.4$.  More precisely, $\zeta$ has a numerically
evaluated maximum of $0.5518$ at $\lambda\phi_0=1.4$.  If $\phi_0$ is
chosen to be $\frac{1.4}{0.5518}=2.54$, then
(\ref{ineq:negative_drift}) is satisfied for all $\lambda < 0.5518$.
Thus, the expected time backlog decreases whenever it is initially
larger than $\phi_0$, and we infer that the algorithm is stable for
$\lambda < 0.5518$. We have therefore proved the following result.

\begin{proposition}
The maximum stable throughput of the PCFCFS algorithm is 0.5518.
\end{proposition}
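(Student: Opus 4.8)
The plan is to prove stability by a drift (Foster--Lyapunov) argument applied to the time backlog $k - T(k)$ sampled at the boundaries of successive Collision Resolution Periods, and to prove that $0.5518$ is the exact supremum by showing that no choice of the design parameter $\phi_0$ can do better. First I would let $B_n$ denote the time backlog at the start of the $n$-th CRP. When $B_n \geqslant \phi_0$ the CRP begins with a full allocation interval of length $\phi_0$, so the DTMC of Figure~\ref{fig:dtmc_pcfcfs} and the transition probabilities of Lemmas~\ref{lem:outgoing_probabilities_Li}--\ref{lem:outgoing_probabilities_Riprime} apply verbatim, and the conditional mean increment is exactly $D = E[K] - \phi_0(1 - E[F])$, with $E[K]$ and $E[F]$ given by~(\ref{eq:expected_slots}) and~(\ref{eq:expected_fraction}) and depending on $\lambda\phi_0$ alone. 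A preliminary step is to confirm $E[K], E[F] < \infty$: the limiting relations~(\ref{eq:limit_Liprime_Riprime})--(\ref{eq:limit_Ri_Liplus1}) force $Q_{L_i}, Q_{L_i'}, Q_{R_i'}, Q_{C_i}$ to decay like $2^{-i}$ and $Q_{R_i}$ like $4^{-i}$, so both series converge geometrically; the same geometric decay (each additional split needs a further collision or idle feedback) bounds the conditional mean of $B_{n+1} - B_n$ uniformly in $B_n$, which is the integrability hypothesis the drift theorem requires.

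Next I would carry out the optimization summarized in the excerpt: $D < 0$ is equivalent to $\lambda < \zeta(\lambda\phi_0)$ with $\zeta(y) = y\,(1 - E[F](y))/E[K](y)$ as in~(\ref{ineq:negative_drift}), and a numerical evaluation of $\zeta$ (Figure~\ref{fig:zeta_plot}) shows $\zeta$ is unimodal with a unique maximum $\zeta^\ast = 0.5518$ attained at $y^\ast = 1.4$. Hence for any arrival rate $\lambda < 0.5518$, taking $\phi_0 = 1.4/\lambda$ (so that $\phi_0 > 2.54$ and $\lambda\phi_0 = 1.4$) yields $\zeta(\lambda\phi_0) = 0.5518 > \lambda$, so $D$ is a strictly negative constant whenever $B_n \geqslant \phi_0$. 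Applying the standard positive-recurrence criterion for a Markov chain whose one-step increments have uniformly bounded conditional mean and whose drift is bounded above by a negative constant outside a finite set (the argument used for the FCFS splitting algorithm in~\cite{bertsekas_gallager__data_networks}, with the finite set taken to be $\{B_n < \phi_0\}$), I conclude that $\{B_n\}$ is positive recurrent with finite stationary mean. Since a packet's delay is at most the time backlog at the CRP that serves it plus the length of that CRP --- both with finite mean --- the expected packet delay $\mathcal D$ is finite, so the algorithm is stable for every $\lambda < 0.5518$.

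For the matching converse I would observe that if $\lambda \geqslant 0.5518$ then $\lambda \geqslant \zeta^\ast \geqslant \zeta(\lambda\phi_0)$ for \emph{every} $\phi_0$, so $D \geqslant 0$ regardless of tuning; the time backlog then has nonnegative drift, grows without bound, and $\mathcal D = \infty$. Thus no stable operating point exists at or above $0.5518$, and combining the two directions gives $\sup\{\lambda : \mathcal D < \infty\} = 0.5518$, the asserted maximum stable throughput.

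The main obstacle is the rigorous version of the stability step rather than any new computation: one must verify that the conditional increments of $\{B_n\}$ are integrable uniformly over all starting states --- which I would reduce to the geometric tail of the CRP length $K$ and of the number of splits, both already implicit in the limiting probabilities --- and one must carefully relate the finite expected backlog to a finite expected \emph{per-packet} delay by tracking which CRP is responsible for each packet. Everything else is bookkeeping with the transition probabilities of the Lemmas or the one-dimensional numerical maximization of $\zeta$ depicted in Figure~\ref{fig:zeta_plot}.
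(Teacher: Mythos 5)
Your proposal is correct and follows essentially the same route as the paper: the same DTMC over CRP states, the same transition probabilities and hitting probabilities yielding $E[K]$ and $E[F]$ as functions of $\lambda\phi_0$, the same drift $D = E[K]-\phi_0(1-E[F])$, and the same numerical maximization of $\zeta$ giving $0.5518$ at $\lambda\phi_0=1.4$. The only additions are packaging the paper omits --- the explicit Foster--Lyapunov/integrability checks, the converse for $\lambda \geqslant 0.5518$, and the backlog-to-delay step --- plus the cosmetic difference that you tune $\phi_0 = 1.4/\lambda$ per arrival rate whereas the paper fixes $\phi_0 = 2.54$ once.
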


\section{Numerical Results}
\label{sec:results_pcfcfs}

In our numerical experiments, we use values of system parameters that
are commonly encountered in wireless networks
\cite{kim_lim_hou__improving_spatial}.  We compare the performance of
the following algorithms:
\begin{enumerate}
\item 
  FCFS with uniform power $P_1$,
\item
  PCFCFS.
\end{enumerate}

For each algorithm, the value of the initial allocation interval is
chosen so as to achieve maximum stable throughput.  For FCFS, maximum
stable throughput occurs when its initial allocation interval,
$\alpha_0=2.6$ \cite{bertsekas_gallager__data_networks}.  From Section
\ref{sec:throughput_pcfcfs}, the maximum throughput of PCFCFS occurs
at $\phi_0=2.54$.  Let $n_{suc}$ denote the number of successful
packets in $[0,\tau)$ and $d_i$ denote the departure time of $i^{th}$
packet.

For a given set of system parameters, we compute the following
performance metrics:
\begin{eqnarray}
\mbox{Throughput} &=& \frac{n_{suc}}{\tau}, \\
\mbox{Average Delay} &=& \frac{\sum_{i=1}^{n_{suc}}(d_i-a_i)}{n_{suc}}, \\
\mbox{Average Power} &=& 
 \frac{\sum_{i=1}^{n_{suc}}\sum_{k=\lceil a_i\rceil}^{d_i}P_i(k)}{n_{suc}}.
\end{eqnarray}
Keeping all other parameters fixed, we observe the effect of increasing the
arrival rate on the throughput, average delay and average power.

\begin{table}[tbhp]
\centering
\begin{tabular}{|l|l|l|} \hline
  Parameter & Symbol & Value \\ \hline
  communication threshold & $\gamma_c$ & 7 dB \\ \hline
  noise power spectral density & $N_0$ & -90 dBm \\ \hline
  path loss exponent & $\beta$ & 4 \\ \hline
  transmitter-receiver distance & $D$ & 100 m \\ \hline
  initial allocation interval of FCFS & $\alpha_0$ & 2.6 s \\ \hline
  initial allocation interval of PCFCFS & $\phi_0$ & 2.54 s \\ \hline
  algorithm operation time & $\tau$ & $3 \times 10^5$ s \\ \hline
\end{tabular}
\caption{System parameters for performance evaluation of PCFCFS and FCFS
algorithms.}
\label{tab:system_parameters_pcfcfs}
\end{table}

\begin{figure}[thbp]
\centering
\includegraphics[width=5in]{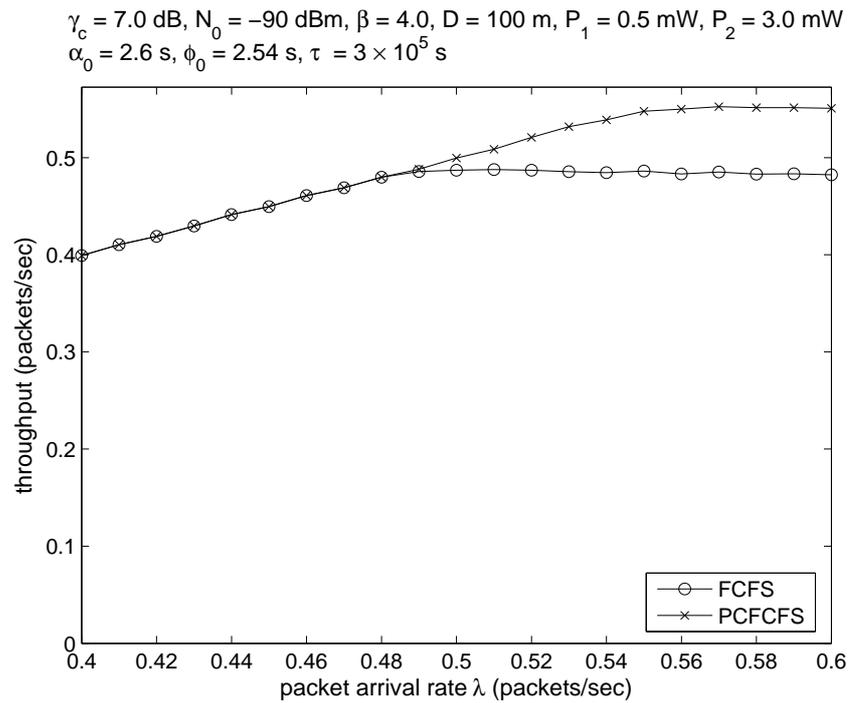}
\caption{Throughput versus arrival rate for PCFCFS and FCFS algorithms.}
\label{fig:throughput_pcfcfs}
\end{figure}

\begin{figure}[thbp]
\centering
\includegraphics[width=5in]{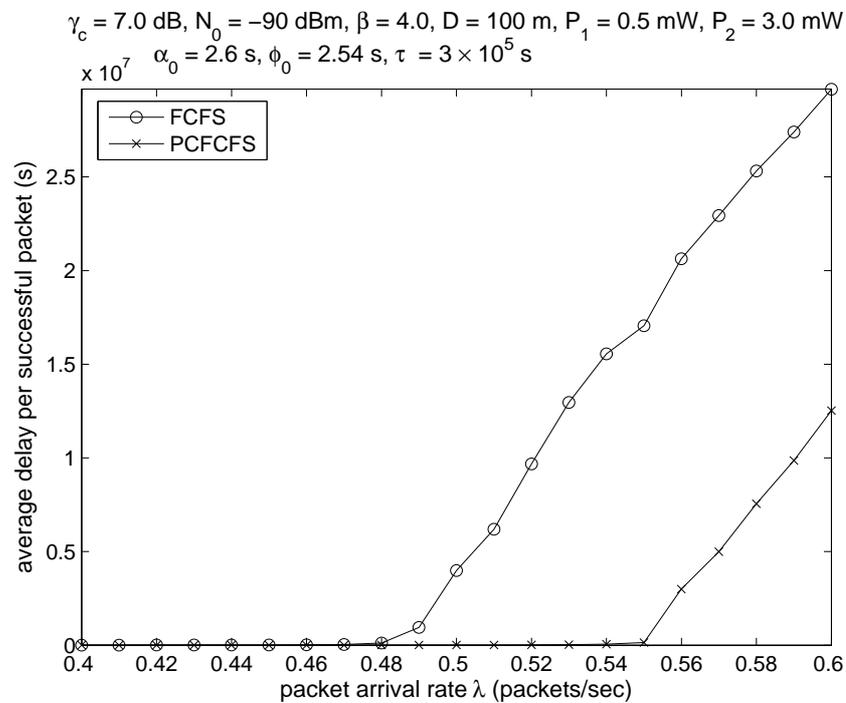}
\caption{Average delay versus arrival rate for PCFCFS and FCFS algorithms.}
\label{fig:delay_pcfcfs}
\end{figure}

\begin{figure}[thbp]
\centering
\includegraphics[width=5in]{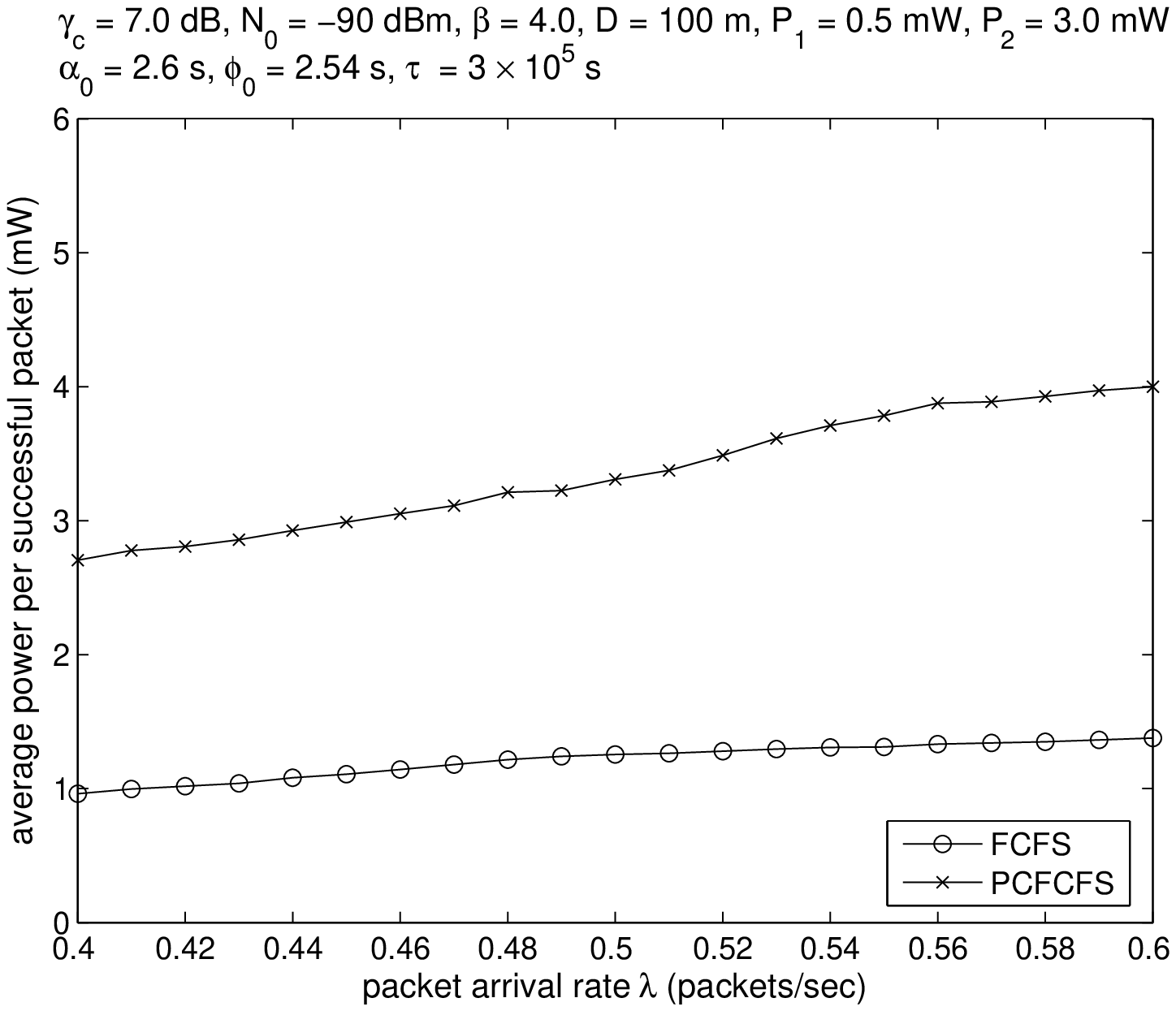}
\caption{Average power versus arrival rate for PCFCFS and FCFS algorithms.}
\label{fig:power_pcfcfs}
\end{figure}

The system parameters for our numerical experiments are shown in Table
\ref{tab:system_parameters_pcfcfs}.  From (\ref{eq:nominal_power}) and
(\ref{eq:higher_power}), we obtain $P_1=0.50$ mW and $P_2=3.01$ mW. We
vary the arrival rate $\lambda$ from $0.40$ to $0.60$ packets/s in
steps of $0.01$. Figure \ref{fig:throughput_pcfcfs} plots the
throughput versus arrival rate for the PCFCFS and FCFS algorithms.
Figure \ref{fig:delay_pcfcfs} plots the average delay per successful
packet versus arrival rate for both the algorithms.  Finally, Figure
\ref{fig:power_pcfcfs} plots the average power per successful packet
versus arrival rate for both the algorithms.

For arrival rates exceeding $0.56$, the throughput of PCFCFS is less
than the arrival rate (Figure \ref{fig:throughput_pcfcfs}) and the
average delay of PCFCFS increases rapidly (Figure
\ref{fig:delay_pcfcfs}), which leads to a substantial increase in the
number of backlogged packets and system instability.  Hence, the
maximum stable throughput of PCFCFS is between $0.55$ and $0.56$.
Thus, Figures \ref{fig:throughput_pcfcfs} and \ref{fig:delay_pcfcfs}
corroborate our result that the maximum stable throughput of PCFCFS is
$0.5518$ (see Section \ref{sec:throughput_pcfcfs}).

For both PCFCFS and FCFS, the departure rate (throughput) equals the
arrival rate for all arrival rates up to 0.487 (Figure
\ref{fig:throughput_pcfcfs}). Hence, both these algorithms are stable
for arrival rates below 0.487. For arrival rates exceeding 0.487, the
departure rate of FCFS is strictly lower than its arrival rate,
leading to packet backlog and system instability. On the other hand,
for PCFCFS, the departure rate still equals its arrival rate for
arrival rates between 0.487 and 0.5518.  In other words, the PCFCFS
algorithm is stable for a higher range of arrival rates compared to
FCFS algorithm. However, the PCFCFS algorithm becomes unstable for
arrival rates exceeding 0.5518.

The PCFCFS algorithm achieves higher throughput and lower average
delay than the FCFS algorithm, albeit at the cost of expending higher
average power.  For example, at $\lambda=0.55$, PCFCFS achieves
$13.3\%$ higher throughput and $96.7\%$ lower average delay than FCFS,
at the cost of $170\%$ higher power.

\section{Conclusions}
\label{sec:conclusions_pcfcfs}

In this chapter, we have considered random access in wireless networks
under the physical interference model.  By recognizing that the
receiver can successfully decode the strongest packet in presence of
multiple transmissions, we have proposed PCFCFS, a splitting algorithm
that modulates transmission powers of users based on observed channel
feedback.  PCFCFS achieves higher throughput and substantially lower
delay than those of the well known FCFS algorithm with uniform
transmission power. We show that the maximum stable throughput of
PCFCFS is 0.5518.  PCFCFS can be implemented in those scenarios where
users are willing to trade some power for a substantial gain in
throughput.  Moreover, if users can estimate the arrival rate of
packets, then they can employ FCFS algorithm for arrival rates up to
0.4871 and PCFCFS algorithm for higher arrival rates, thus leading to
further reduction in average transmission power.

\clearpage{\pagestyle{empty}\cleardoublepage}

\chapter{Flow Control: An Information Theory Viewpoint}
\label{ch:flow_control}

This thesis has so far explored various aspects of link scheduling in
wireless networks. An equally interesting problem is to analyze flow
control.  We formulate the problem of controlling the rate of packets
at the ingress of a packet network (possibly a wireless link) so as to
maximize the mutual information between a source and a destination.
We discuss various nuances of the problem and describe related work.
We then derive the maximum entropy of a packet level flow that
conforms to linearly bounded traffic constraints, by taking into
account the covert information present in the randomness of packet
lengths. Our results provide insights towards the design of flow
control mechanisms employed by an Internet Service Provider (ISP).

The rest of the chapter is organized as follows.  In Section
\ref{sec:general_system_model}, we define the problem of information
theoretic analysis of flow control in a packet network.  In Section
\ref{sec:gtbr}, we introduce a Generalized Token Bucket Regulator
(GTBR) as our flow control mechanism.  The concepts of flow entropy
and information utility are defined in Section
\ref{sec:information_utility}. We formulate the problem of determining
the GTBR with maximum information utility in Section
\ref{sec:problem_formulation_gtbr}. In Section \ref{sec:results_gtbr},
we derive a necessary condition for the optimal GTBR and compute its
parameters. We explain the results from an information theoretic
viewpoint in Section \ref{sec:info_theoretic_interpret} and discuss
the implications of our work in Section
\ref{sec:discussion_info_theoretic}.

\section{System Model}
\label{sec:general_system_model}

\begin{figure}[thbp]
  \centering
  \includegraphics[width=6in]{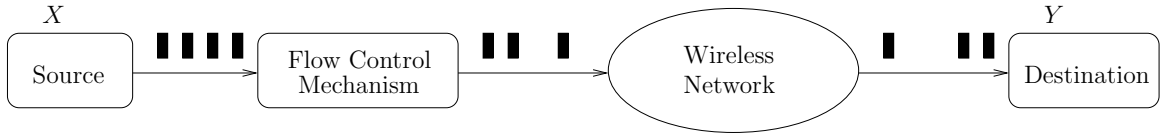}
  \caption{Flow control of a source's packets over a packet network.}
  \label{fig:flow_control}
\end{figure}

Our system model is shown in Figure \ref{fig:flow_control}, wherein a
source sends packets to a destination over a packet-switched network
(possibly a wireless network).  The packets transmitted by the source
are regulated (policed) by a flow control mechanism at the ingress of
the network. We are interested in the packet probability distribution
that maximizes the mutual information between the source and the
destination. In other words, given the description of the flow control
mechanism and the stochastic characterization of the packet network,
we seek the maximum amount of information (in the Shannon sense) that
can be transmitted from the source to the destination.

The problem can be stated as:
\begin{eqnarray}
  \max_{\mathbf p_X} I(X;Y),
  \label{eq:mutual_info}
\end{eqnarray}
where
\begin{eqnarray*}
X &=& 
\mbox{random variable representing randomness in packet contents, lengths}\\ 
&& \mbox{and timings at the source}, \\
y &=& 
\mbox{random variable representing randomness in packet contents, lengths}\\ 
&& \mbox{and timings at the destination}, \\
\mathbf p_X 
 &=& \mbox{probability distribution of $X$}.
\end{eqnarray*}
(\ref{eq:mutual_info}) can be simplified to:
\begin{eqnarray}
  \max_{\mathbf p_X} \big(H(X)-H(X|Y)\big).
  \label{eq:entropy_diff}
\end{eqnarray}
Thus, to maximize the information transfer from the source to the
destination, we not only have to characterize the entropy of the
source's packets $H(X)$, but also the conditional entropy of the
source's packets given the packets received at the destination
$H(X|Y)$.

We state the following remarks about our problem formulation:
\begin{enumerate}

\item It is well-known that, in a packet-switched network, information
  can be transmitted not only by the contents, but also by the lengths
  and timings of packets.  \cite{gallager__basic_limits} is perhaps
  the first work to recognize this fact. Information transmitted by
  the lengths and timings of packets is referred to as {\em covert
    information} or {\em side information}. The channel that is used
  to convey covert information is called {\em covert channel}. Covert
  channels have been investigated in \cite{gallager__basic_limits},
  \cite{anantharam_verdu__bits_through},
  \cite{shah_karandikar__information_utility}.

\item By flow control, we mean a rate control mechanism that regulates
  the packets transmitted by a source (subscriber) at the ingress of a
  network. Note that we do not consider end-to-end flow control
  mechanisms such as Transmission Control Protocol (TCP).  For
  simplicity, we consider a flow control mechanism that is described
  by a linearly bounded service curve\footnote{Consider a flow through
    a system $\mathcal S$ with input and output functions $A(t)$ and
    $B(t)$ respectively. $\mathcal S$ offers to the flow a service
    curve $\vartheta(t)$ if and only if $\vartheta(t)$ is a wide sense
    increasing function, with $\vartheta(0)=0$, and $B(t) \geqslant
    \inf_{s \leqslant t}\{A(s)+\vartheta(t-s)\}$ for all $t \geqslant
    0$.}  \cite{boudec_thiran__network_calculus}.

\item In the packet network, packets can be received incorrectly at
  the destination due to fluctuations in the channel, like that in a
  wireless channel.  We assume the existence of link layer mechanisms
  such as Forward Error Correction (FEC) which ensure that all packets
  are correctly received at the destination.

\end{enumerate}

The packet network shown in Figure \ref{fig:flow_control} only
guarantees that the contents and lengths of the packets transmitted by
the source are the same as those at the destination.  However, the
network can arbitrarily vary the timings between packets.
Equivalently, the network can highly distort the covert timing
information carried by the packets.

Taking a cue from this, we only take into account information that is
carried by the contents and lengths of the packets. Consequently, the
probability distribution of packet contents and lengths at the
destination is the same as that at the source. Hence, $H(X|Y)=0$
and (\ref{eq:entropy_diff}) simplifies to
\begin{eqnarray}
  \max_{\mathbf p_X} H(X).
  \label{eq:source_H}
\end{eqnarray}
In other words, we seek the probability distribution of packet
contents and lengths that maximize the source entropy $H(X)$.

Typically, the entity that owns a network, say an Internet Service
Provider (ISP), implements certain mechanisms to ensure that packets
transmitted by a subscriber are not lost in the network.  However, to
allocate network resources efficiently and guarantee zero loss of
packets, the entity also mandates that the aggregate traffic of a
subscriber be upper bounded by an envelope or a service curve. For
example, the entity can mandate that the aggregate traffic of the
subscriber be {\em linearly bounded.}  A linearly bounded service
curve can be implemented by a class of regulators known as token
bucket regulators.

\begin{figure}[thbp]
  \centering
  \includegraphics[width=6in]{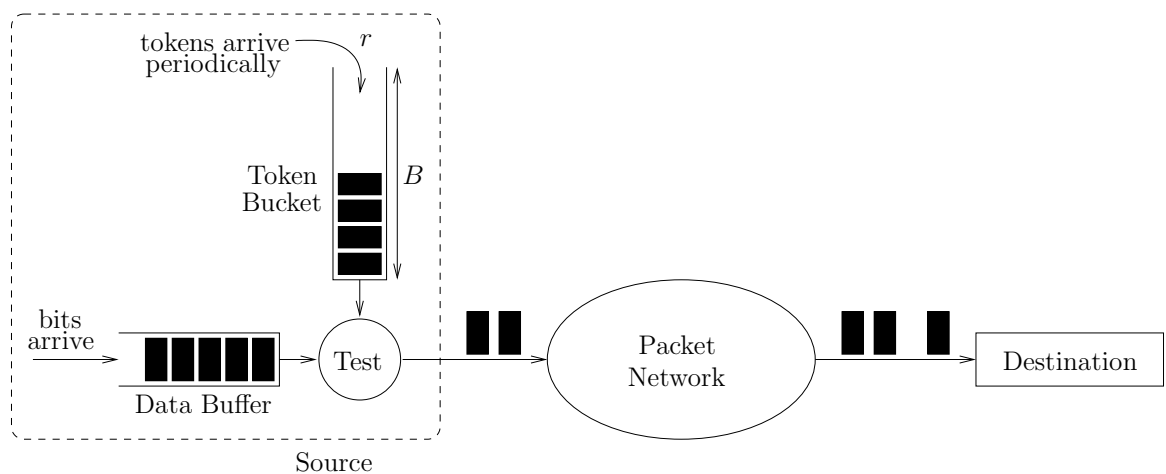}
  \caption{Token bucket regulation of a source's packets over a packet
    network.}
  \label{fig:sys_mod_tbr}
\end{figure}

The system model that we analyze incorporates a Token Bucket Regulator
(TBR) and is shown in Figure \ref{fig:sys_mod_tbr}.  A source
transmits packets to a destination over a network, where every packet
consists of an integer number of bits.  The packets transmitted by the
source are regulated by a TBR or leaky bucket regulator
\cite{keshav__engineering_approach}. Intuitively, the regulator
collects tokens in a bucket of depth $B$, which fills up at a certain
rate $r$.  Each token corresponds to the permission to transmit one
bit into the network. The packets to be transmitted by the source
accumulate in its data buffer over time. If there is a packet of
length $n$ bits in the data buffer at a given time, then it can be
sent into the network only if $n \leqslant B+r$. If the packet is
transmitted, then $n$ tokens are depleted from the token bucket.

A TBR can be used to smoothen the bursty nature of a subscriber's
traffic. We assume that the network is owned by an ISP.  From a
Quality of Service (QoS) perspective, a TBR can be considered to be a
part of the Service Level Agreement (SLA) between a subscriber and an
ISP.  The SLA mandates that the ISP should provide end-to-end loss and
delay guarantees to a subscriber's packets, provided the traffic
profile of the subscriber adheres to certain TBR constraints.
Specifically, the onus of the ISP is to ensure that every packet of a
conforming source successfully reaches its destination within a
certain permissible delay.

The Standard Token Bucket Regulator (STBR), as defined by the Internet
Engineering Task Force (IETF) and shown in Figure
\ref{fig:sys_mod_tbr}, enforces linear-boundedness on the flow.  An
STBR is characterized by its token increment rate $r$ and bucket depth
$B$.  We will be more general and consider a TBR in which the token
increment rate and bucket depth (maximum burst size) can vary from
slot to slot. Such a TBR, which we define as a Generalized Token
Bucket Regulator (GTBR), can be used to regulate Variable Bit Rate
(VBR) traffic\footnote{For example, a pre-recorded video stream.} from
a source \cite{shah_karandikar__optimal_packet}.  The continuous-time
analogue of a GTBR is the time-varying leaky bucket shaper
\cite{giordano_boudec__class_time} in which the token rate and bucket
depth parameters can change at specified time instants.

The idea is to develop the notion of information utility of a GTBR.
Specifically, we derive the maximum information that a GTBR-conforming
traffic flow can convey in a finite time interval, by taking into
account the additional information present in the randomness of packet
lengths.  These aspects are further elucidated in subsequent sections.

\section{Generalized Token Bucket Regulator}
\label{sec:gtbr}

In this section, we mathematically describe our system model and
define a GTBR.  We also explain the differences between our system
model and those considered in existing literature.

\begin{figure}[thbp]
\centering
\includegraphics[width=3in]{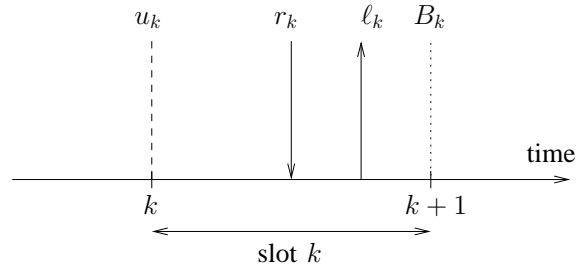}
\caption{Relative time instants of  parameters defined in (\ref{eq:gtbr_notation}).}
\label{fig:relative_instants}
\end{figure}

Consider a system in which time is divided into slots and a source
which has to complete its data transmission within $S$ slots. In our
discrete-time model, we will evaluate the system at time instants
$0,1,\ldots,S-1,S$. Slot $k$ is defined to be the time interval
$[k,k+1)$, i.e., data transmission commences with slot 0 and
terminates with slot $(S-1)$. 

The traffic from the source is regulated by a GTBR.  We define:
\begin{eqnarray}
r_k    &=& \mbox{token increment for slot $k$}, \nonumber \\
B_k    &=& \mbox{bucket depth for slot $(k+1)$}, \nonumber \\
\ell_k &=& \mbox{length of packet (in bits) transmitted in slot $k$}, 
            \nonumber \\
u_k    &=& \mbox{residual  tokens at start of slot $k$}.
\label{eq:gtbr_notation}
\end{eqnarray}
$r_k$, $B_k$, $\ell_k$ and $u_k$, whose relative time instants are
shown in Figure \ref{fig:relative_instants}, are all non-negative
integers.  Let ${\mathbf r} := (r_0,r_1,\ldots,r_{S-1})$ denote the
token increment sequence and ${\mathbf B} := (B_0,B_1,\ldots,B_{S-2})$
denote the bucket depth sequence.  The system starts with zero tokens.
So, $u_0=0$.  A GTBR with the above parameters is denoted as $\mathcal
R_g(S,\mathbf r,\mathbf B)$.

The constraints imposed by $\mathcal R_g(S,\mathbf r,\mathbf B)$ on
the packet lengths is
\begin{eqnarray}
\ell_i \leqslant u_i + r_i \; \; \forall \;\; i= 0,1,\ldots,S-1.
\label{eq:conforming_packet_lengths}
\end{eqnarray}
If (\ref{eq:conforming_packet_lengths}) is satisfied, then $\mathbf
\ell=(\ell_0,\ell_1,\ldots,\ell_{S-1})$ is a conforming packet length
vector and the number of residual tokens will evolve according to
\begin{eqnarray}
u_0 &=& 0, \nonumber \\
u_{i+1} &=& \min(u_i + r_i - \ell_i, B_i) \; \; \forall \; \;
    i= 0,1,\ldots,S-2, \nonumber \\
u_S &=& u_{S-1}+r_{S-1}-\ell_{S-1}.
\label{eq:token_evolution}
\end{eqnarray}
(\ref{eq:token_evolution}) is referred to as the token evolution
equation.

Note that if $r_i=r$ $\forall$ $i=0,1,\ldots,S-1$ and $B_i=B$
$\forall$ $i=0,1,\ldots,S-2$, then the GTBR $\mathcal R_g(S,\mathbf
r,\mathbf B)$ degenerates to the STBR $\mathcal R_s(S,r,B)$.

We should point out that our system model is similar to that of
\cite{chang_chao_thomas__fundamental_limits}. However, unlike
\cite{chang_chao_thomas__fundamental_limits}, our traffic regulator
is a deterministic mapping of an input sequence to an output sequence.
Also, the rate of our regulator is defined by the average token
increment rate and not by the peak rate.

The system model encompasses that of
\cite{shah_karandikar__information_utility}, wherein the authors have
derived the information utility of an STBR and suggested a pricing
viewpoint for its application. Our interest, however, is more
theoretical.  Specifically, we consider an STBR as a special case of a
GTBR and describe a framework for their information-theoretic
comparison. The main objective is to investigate whether a GTBR can
achieve higher flow entropy than an STBR and explain the properties of
entropy-maximizing GTBRs. These aspects are addressed in the following
sections.

\section{Notion of Information Utility}
\label{sec:information_utility}

In this section, we introduce the concept of information utility of a
GTBR. We derive the entropy of a flow that is regulated by a GTBR by
considering the information present in the contents and lengths of the
packets. We formulate the problem of computing the maximum flow
entropy and subsequently describe a technique to compute the
information utility of the GTBR.

Consider a source which has a large amount of data to send and whose
traffic is regulated by a GTBR. We seek to maximize the information
that the source can convey to the destination in the given time
interval or the entropy present in the source traffic flow in an
information-theoretic sense.  For a given transmission interval $S$,
token increment sequence $\mathbf r$ and bucket depth sequence
$\mathbf B$, the maximum entropy achievable by any flow which is
constrained by the GTBR $\mathcal R_g(S,\mathbf r,\mathbf B)$ is
termed as the {\em information utility} of the GTBR $\mathcal
R_g(\cdot)$.

The source can send information to its destination via two channels:
\begin{enumerate}
\renewcommand{\theenumi}{\roman{enumi}}
\item Overt channel: The contents of each packet.  Let $\ell_i$ denote
  the length of a packet in bits.  The value of each bit is $0$ or $1$
  with equal probability and is independent of the values taken by the
  preceding and succeeding bits.  Thus, this packet contributes
  $\ell_i$ bits of information.
  \label{it:overt}

\item Covert channel: We consider the length of a packet as an event
  and associate a probability with it.  Thus, side information is
  transmitted by the randomness in the packet lengths.
  \label{it:covert}
\end{enumerate}
The joint entropy of \ref{it:overt}. and \ref{it:covert}. is the sum of
their entropies.

During any slot $k$, the only method by which past transmissions can
constrain the rest of the flow is by the residual number of tokens
$u_k$.  So, $u_k$ captures the state of the system.  The key
observation is that the future entropy depends only on the token
bucket level $u_k$ in slot $k$.  Hence, entropy is a function of
system state $u_k$ and is denoted by $H_k(u_k)$.

During slot $S$, the source signals the termination of the current
flow by transmitting a special string of bits (flag).  The information
transmitted by this fixed sequence of bits is zero. Thus
\begin{eqnarray}
H_S(u_S) &=& 0 \; \; \forall \; u_S.
\label{eq:entropy_slot_N}
\end{eqnarray}

For a given state $u_k$ of the system, if a packet of length $\ell_k$
bits is transmitted with probability $p_{\ell_k}(u_k)$, then
\begin{enumerate}
\item The overt information transmitted is $\ell_k$ bits,

\item As the event occurs with probability $p_{\ell_k}(u_k)$, the
  covert information transmitted is $(-\log_2 p_{\ell_k}(u_k))$ bits,

\item Since $\ell_k$ is random, $u_{k+1}$ is also random (from
  (\ref{eq:token_evolution})).  Thus, $H_{k+1}(u_{k+1})$ is also a
  random variable.
\end{enumerate}
Adding all of the above and averaging it over all conforming packet
lengths, we obtain the entropy in the current slot (stage)
\begin{eqnarray}
H_k(u_k) &=& \sum_{\ell_k=0}^{u_k+r_k} p_{\ell_k}(u_k) \Big( \ell_k 
- \log_2\big(p_{\ell_k}(u_k)\big) 
+  H_{k+1}\big(\min(u_k+r_k-\ell_k,B_k)\big) \Big) 
  \nonumber \\
&& \; \; \forall \; k=0,\ldots,S-1.
\label{eq:flow_entropy}
\end{eqnarray}
The equation above, which will be referred to as the {\em flow entropy
  equation}, intuitively states that the flow entropy of the current
state is given by the sum of the entropy of the packet contents, the
entropy of the packet lengths and the flow entropy of possible future
states in the next slot. Note that (\ref{eq:flow_entropy}) is similar
to the backward recursion equation from dynamic programming
\cite{bertsekas__dynamic_programming}.  Finally, the packet length
probabilities must satisfy
\begin{eqnarray}
\sum_{\ell_k=0}^{u_k+r_k} p_{\ell_k}(u_k) 
&=& 1 \; \; \forall \; k=0,\ldots,S-1.
\label{eq:probability_sum_unity}
\end{eqnarray}
Let ${\mathbf p}_k(u_k) =
\left(p_0(u_k),p_1(u_k),\cdots,p_{u_k+r_k}(u_k)\right)$ denote the
vector of packet length probabilities for slot $k$ with $u_k$ residual
tokens.  The dependence of $p_{\ell_k}$ and ${\mathbf p}_k$ on $u_k$
is assumed to be understood and is not always stated explicitly. So,
${\mathbf p}_k = \left(p_0,p_1,\cdots,p_{u_k+r_k}\right)$.

Our objective is to determine the sequence of probability mass
functions $({\mathbf p}_{S-1}^*$, ${\mathbf p}_{S-2}^*$, $\cdots$,
${\mathbf p}_0^*)$ which maximizes the flow entropy $H_0(0)$ for a
given GTBR $\mathcal R_g(S,\mathbf r,\mathbf B)$.  From
(\ref{eq:entropy_slot_N})
\begin{eqnarray}
H_S^*(u_S) &=& 0.
\label{eq:optimal_entropy_N}
\end{eqnarray}
From (\ref{eq:flow_entropy})
\begin{eqnarray}
H_k(u_k) 
&=& \sum_{\ell_k=0}^{u_k+r_k} p_{\ell_k} \Big( \ell_k - \log_2(p_{\ell_k}) 
  + H_{k+1}^*\big(\min(u_k+r_k-\ell_k,B_k)\big) \Big) \nonumber \\
&& \; \; \forall \; k=0,1, \ldots S-1.
\label{eq:intermediate_Hk} 
\end{eqnarray}
Given $H_{k+1}^*(u_{k+1})$ $\forall$ $u_{k+1}$, there exists an
optimum probability vector ${\mathbf p}_k^* =
(p_0^*,p_1^*,\ldots,p_{u_k+r_k}^*)$ which maximizes the flow entropy
$H_k(u_k)$, i.e.,
\begin{eqnarray}
H_k^*(u_k) 
&=& \sum_{\ell_k=0}^{u_k+r_k} p_{\ell_k}^* \Big( \ell_k - \log_2(p_{\ell_k}^*)
  + H_{k+1}^*\big(\min(u_k+r_k-\ell_k,B_k)\big) \Big) \nonumber \\
&& \; \; \forall \; k=0,1,\ldots, S-1.
\label{eq:optimal_entropy_k}
\end{eqnarray}

Thus, the problem of computing the entire sequence of probability
vectors $(\mathbf p_{S-1}^*$, $\mathbf p_{S-2}^*$, $\cdots$, $\mathbf
p_0^*)$ decouples into a sequence of subproblems.  The subproblem for
slot $k$ is: Given the function $H_{k+1}^*(u_{k+1}) \; \forall \;
u_{k+1}$, determine the probability vector ${\mathbf p}_k =
(p_0,p_1,\ldots,p_{u_k+r_k})$ so as to
\begin{eqnarray}
\mbox{maximize} & & \sum_{\ell_k=0}^{u_k+r_k} p_{\ell_k} \Big( \ell_k - 
  \log_2(p_{\ell_k}) 
  + H_{k+1}^*\big(\min(u_k+r_k-\ell_k,B_k)\big) \Big), \nonumber \\
\mbox{subject to} & & \sum_{\ell_k=0}^{u_k+r_k} p_{\ell_k}  = 1.
\label{eq:optimize_stagewise}
\end{eqnarray}

(\ref{eq:optimize_stagewise}) is an equality-constrained optimization
problem and can be solved using the technique of Lagrange multipliers
\cite{bertsekas__nonlinear_programming}.  Define the Lagrangian
\begin{eqnarray}
{\mathcal L}({\mathbf p}_k,\lambda_k) 
 &=& \sum_{\ell_k=0}^{u_k+r_k} p_{\ell_k} 
\Big( \ell_k - \log_2(p_{\ell_k}) + H_{k+1}^*\big(\min(u_k+r_k-\ell_k,B_k)\big)
 \Big) \nonumber \\
&& {} + \lambda_k \bigg( \sum_{\ell_k=0}^{u_k+r_k} p_{\ell_k} - 1 \bigg).
\end{eqnarray}
At the optimal point $({\mathbf p}_k^*,\lambda_k^*)$, we must have
\begin{eqnarray}
\left. \frac{\partial{\mathcal L}}{\partial p_{\ell_k}} \right| 
\begin{array}{c} \\ ({\mathbf p}_k^*,\lambda_k^*) \end{array}
&=& 0 \; \; \; \;
\forall \; \; 0 \leqslant \ell_k \leqslant u_k+r_k,
\label{eq:derivative_probability} \\
\left. \frac{\partial{\mathcal L}}{\partial \lambda_k} \right| 
\begin{array}{c} \\ ({\mathbf p}_k^*,\lambda_k^*) \end{array} 
&=& 0.
\label{eq:derivative_lagrange}
\end{eqnarray}
Solving (\ref{eq:derivative_lagrange}) yields
\begin{eqnarray}
\sum_{\ell_k=0}^{u_k+r_k} p_{\ell_k}^*(u_k) &=& 1,
\label{eq:optimum_probabilities_sum}
\end{eqnarray}
which is (\ref{eq:probability_sum_unity}) for the case of optimal
probabilities.  Solving (\ref{eq:derivative_probability}), we obtain
\begin{eqnarray}
p_{\ell_k}^*(u_k) 
 &=& 2^{\ell_k - \log_2 e + H_{k+1}^*(\min(u_k+r_k-\ell_k,B_k))
+\lambda_k^*(u_k)}
\; \; \forall \; \; 0 \leqslant \ell_k \leqslant u_k + r_k.
\label{eq:optimum_probability}
\end{eqnarray}
From (\ref{eq:optimum_probabilities_sum}) and
(\ref{eq:optimum_probability}), the optimal Lagrange multiplier is
given by
\begin{eqnarray}
\lambda_k^*(u_k) = \log_2 e - \log_2 \bigg( \sum_{\ell_k=0}^{u_k+r_k} 
2^{\ell_k + H_{k+1}^*(\min(u_k+r_k-\ell_k,B_k))} \bigg).
\label{eq:optimum_lagrange}
\end{eqnarray}
From (\ref{eq:optimum_probability}) and (\ref{eq:optimum_lagrange}),
the optimum packet length probability is given by
\begin{eqnarray}
p_{\ell_k}^*(u_k) &=& 
\frac{2^{\ell_k + H_{k+1}^*(\min(u_k+r_k-\ell_k,B_k))}}
{\sum_{\alpha_k=0}^{u_k+r_k} 
2^{\alpha_k + H_{k+1}^*(\min(u_k+r_k-\alpha_k,B_k))}}.
\label{eq:optimum_probability_ratio}
\end{eqnarray}
From (\ref{eq:optimal_entropy_k}) and
(\ref{eq:optimum_probability_ratio}), we finally obtain
\begin{eqnarray}
H_k^*(u_k) 
 &=& \log_2 \bigg( \sum_{\ell_k=0}^{u_k+r_k} 2^{\ell_k + 
      H_{k+1}^*(\min(u_k+r_k-\ell_k,B_k))} \bigg). 
\label{eq:optimal_flow_entropy}
\end{eqnarray}
(\ref{eq:optimal_flow_entropy}) will be referred to as the {\em
  optimal flow entropy equation}.

The {\em information utility} of the GTBR $\mathcal R_g(S,\mathbf
r,\mathbf B)$ is defined to be $H_0^*(0)$, the maximum flow entropy.
$H_0^*(0)$ is computed by starting with $H_S^*(u_S)=0$, and using
(\ref{eq:optimal_flow_entropy}) to compute the optimal flow entropy
$H_k^*(u_k)$ for all $u_k$ and then proceeding backward recursively
for $k=S-1,S-2,\ldots,0$.

\section{Problem Formulation}
\label{sec:problem_formulation_gtbr}

Having developed a method to compute the information utility of a GTBR
in Section \ref{sec:information_utility}, we seek answers to the
following questions:
\begin{enumerate}
\renewcommand{\theenumi}{\alph{enumi}}
\item Can a GTBR achieve higher information utility than that of an
  STBR?

\item If yes, what is the increase in information utility?
\end{enumerate}

For the information-theoretic comparison of a
GTBR ${\mathcal R}_g(S,{\mathbf r},{\mathbf B})$ and an STBR
${\mathcal R}_s(S',r,B)$, we impose the following conditions:
\begin{enumerate}

\item $\mathcal R_g(\cdot)$ and $\mathcal R_s(\cdot)$ must operate
  over the same number of slots, i.e.,
  \begin{eqnarray}
    S &=& S'.
    \label{eq:equal_slots}
  \end{eqnarray}
  \label{cond:equal_slots}

\item The aggregate tokens of $\mathcal R_g(\cdot)$ and $\mathcal
  R_s(\cdot)$ must be equal, i.e.,
  \begin{eqnarray}
    \sum_{i=0}^{S-1} r_i &=& Sr.
    \label{eq:equal_tokens}
  \end{eqnarray}
  \label{cond:aggregate_tokens}

\item The aggregate bucket depth of $\mathcal R_g(\cdot)$ must not
  exceed that of $\mathcal R_s(\cdot)$\footnote{Equality is present in
    (\ref{eq:equal_tokens}) because every additional token directly
    translates to the permission to transmit one more bit, leading to
    increase in information utility.  As this is not necessarily true
    for bucket depth, we permit inequality in
    (\ref{ineq:bucket_depth}).}, i.e.,
  \begin{eqnarray}
    \sum_{i=0}^{S-2} {B_i} &\leqslant& (S-1)B.
    \label{ineq:bucket_depth}
  \end{eqnarray}
  \label{cond:aggregate_bursts}

\item The bucket depth of $\mathcal R_s(\cdot)$ cannot be very high
  compared to its token increment rate. To quantify this, we
  mandate\footnote{This assumption is practically justifiable. For
    example, in \cite{giordano_boudec__class_time}, the authors use
    $r=6$ Mbps and $B=12$ Mbps for their simulations.}
  \begin{eqnarray}
    2r \leqslant B \leqslant 5r.
    \label{restrictions_bucket_depth}
  \end{eqnarray}
  \label{cond:moderate_bursts}

\item The token increment rate of $\mathcal R_g(\cdot)$ in every slot
  must not exceed the bucket depth of $\mathcal R_s(\cdot)$, i.e.,
  \begin{eqnarray}
    r_i &\leqslant& B.
    \label{eq:restriction_tokens}
  \end{eqnarray}
  \label{cond:moderate_tokens}

\end{enumerate}
If Conditions \ref{cond:equal_slots}, \ref{cond:aggregate_tokens},
\ref{cond:aggregate_bursts}, \ref{cond:moderate_bursts} and
\ref{cond:moderate_tokens} are satisfied, then GTBR $\mathcal
R_g(\cdot)$ and STBR $\mathcal R_s(\cdot)$ are said to be {\em
  comparable} to each other.

The optimal GTBR problem is formally stated as:\\
Given an STBR $\mathcal R_s(S,r,B)$, determine the token increment
sequence $\mathbf r$ and bucket depth sequence $\mathbf B$ of a
comparable GTBR $\mathcal R_g(S,\mathbf r,\mathbf B)$ so as to
\begin{eqnarray}
& \mbox{maximize} & H_0^*(0), \nonumber \\
\mbox{subject to} & \sum_{i=0}^{S-1} r_i &= Sr, \label{eq:token_sum} \\
& \sum_{i=0}^{S-2} B_i &\leqslant (S-1)B. \label{ineq:burst_sizes}
\end{eqnarray}
Note that we are maximizing a real-valued function
over two finite sequences of non-negative integers. \\

\section{Results}
\label{sec:results_gtbr}

In this section, we derive a necessary condition for the optimal GTBR
in terms of aggregate bucket depth. We also compute the parameters of
the optimal GTBR for some representative cases.

\subsection{Analytical Result}

\begin{proposition}
  For an optimal GTBR, equality must hold in (\ref{ineq:burst_sizes}),
  except when $S$ is small.  In other words, if $\mathbf B^*$ is the
  bucket depth sequence of an optimal GTBR, it must satisfy
  \begin{eqnarray}
    \sum_{i=0}^{S-2} B_i^* &=& (S-1)B.
    \label{eq:burst_sizes}
  \end{eqnarray}
  \label{prop:equality_burst_sizes}
\end{proposition}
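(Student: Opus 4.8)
The plan is to exploit a monotonicity property of the optimal flow entropy and then argue that any slack in the bucket–depth budget can be consumed without hurting — and generically while strictly improving — the information utility. First I would establish, by backward induction on the slot index from $k=S$ down to $k=0$, that the optimal flow entropy $H_k^*(\cdot)$ is non–decreasing in its argument: $u \le u'$ implies $H_k^*(u) \le H_k^*(u')$. The base case $H_S^*\equiv 0$ is (\ref{eq:entropy_slot_N}). For the inductive step, observe from the optimal flow entropy equation (\ref{eq:optimal_flow_entropy}) that increasing $u_k$ both enlarges the summation range (adjoining strictly positive terms $2^{\ell_k+\cdots}$) and, for each fixed $\ell_k$, weakly increases the argument $\min(u_k+r_k-\ell_k,B_k)$ of $H_{k+1}^*$, which is weakly increasing by the induction hypothesis; hence the sum, and therefore its $\log_2$, weakly increases.

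Next I would prove the key monotonicity–in–$\mathbf B$ lemma: $H_0^*(0)$ is non–decreasing in each component $B_m$ of the bucket–depth sequence, with the other components held fixed. Increasing $B_m$ leaves $H_{m+1}^*,\ldots,H_S^*$ unchanged, while in (\ref{eq:optimal_flow_entropy}) for slot $m$ the quantity $\min(u_m+r_m-\ell_m,B_m)$ is weakly increasing in $B_m$ and $H_{m+1}^*$ is weakly increasing in its argument by the first lemma; so $H_m^*(u_m)$ weakly increases for every $u_m$. Feeding this back into (\ref{eq:optimal_flow_entropy}) for $k=m-1,m-2,\ldots,0$ and invoking the first lemma again, a straightforward downward induction gives that $H_k^*(u_k)$ weakly increases for all $k\le m$, in particular $H_0^*(0)$.

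With these lemmas in hand, the proposition follows. If $\mathbf B^*$ were optimal with strict inequality in (\ref{ineq:burst_sizes}), then, since the $B_i$ are integers, we would have $\sum_i B_i^* \le (S-1)B-1$, so replacing $B_0^*$ by $B_0^*+1$ keeps the GTBR comparable to $\mathcal R_s(\cdot)$ — this perturbation does not interact with (\ref{eq:token_sum}), (\ref{restrictions_bucket_depth}) or (\ref{eq:restriction_tokens}) — and, by the second lemma, does not decrease $H_0^*(0)$; iterating yields an optimal GTBR satisfying (\ref{eq:burst_sizes}). To upgrade this to the stated necessity, I would show the gain is in fact strict whenever $S$ is not small: from the explicit optimal pmf (\ref{eq:optimum_probability_ratio}) every optimal packet–length probability is strictly positive, so every state reachable under the token–evolution recursion (\ref{eq:token_evolution}) is visited with positive probability; it then suffices to exhibit one slot $m$ and one reachable $u_m$ for which the $\min$ in (\ref{eq:optimal_flow_entropy}) is active (that is, $u_m+r_m>B_m$) and $H_{m+1}^*$ is strictly increasing at the relevant argument — the strict gain there propagates through the positively–weighted sums back to $H_0^*(0)$.

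The main obstacle will be precisely this last step. When $S$ is large the budget $(S-1)B$ must be apportioned so thinly (on average $B$ per slot, with $2r\le B\le 5r$ by (\ref{restrictions_bucket_depth}) and $r_m\le B$ by (\ref{eq:restriction_tokens})) that some bucket is forced to satisfy $B_m < B_{m-1}+r_m$, making it genuinely binding and the perturbation strictly beneficial; when $S$ is small the budget can be generous enough that every bucket is made non–binding ($B_m\ge B_{m-1}+r_m$ for all $m$, starting from $u_0=0$), in which case strict inequality in (\ref{ineq:burst_sizes}) can be optimal — exactly the exception flagged in the statement. Pinning down the threshold on $S$ below which this degeneracy occurs, and verifying that above it a binding bucket together with a strictly increasing $H_{m+1}^*$ always coexist, is the delicate quantitative part of the argument; I would handle it by tracking the maximal reachable token level $u_k\le B_{k-1}$ and the positivity of the increments $H_{k+1}^*(u+1)-H_{k+1}^*(u)$, which follows from the strict convexity-type structure of (\ref{eq:optimal_flow_entropy}).
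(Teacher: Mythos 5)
Your proposal is correct and follows essentially the same route as the paper's proof: a backward-induction monotonicity lemma for the optimal value function (the paper works with $g_k(u)=2^{H_k^*(u)}$ rather than $H_k^*$ directly), a unit increase of one bucket depth, strict improvement at a reachable state where the $\min$ actually clips, propagation of that gain through the positively-weighted recursion back to $H_0^*(0)$, and a growth-rate argument to rule out the no-binding case except for small $S$ (if no bucket ever clips then $B_i \geqslant r_0+\cdots+r_i$, so $\sum_i B_i$ grows like $S^2 r$, contradicting the budget $(S-1)B \leqslant 5(S-1)r$). The one slip is that $B_m < B_{m-1}+r_m$ is not by itself the bindingness criterion — what you need is $\mu_m + r_m > B_m$ for the maximal reachable token level $\mu_m$ — but the cumulative-token tracking you describe in your last sentence is exactly how the paper closes that gap.
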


\begin{proof}
  We prove by contradiction.  Define $g_k(u)=2^{H_k^*(u)}$.  Since
  $H_k^*(u) \geqslant 0$, $g_k(u) \geqslant 1$.  From
  (\ref{eq:optimal_flow_entropy})
  \begin{eqnarray}
    g_k(u) &=& 
      \sum_{\ell=0}^{u+r_k} 2^\ell g_{k+1}\big(\min(u+r_k-\ell,B_k)\big).
    \label{eq:alphabet_size}
  \end{eqnarray}
  $g_{S-1}(u)=2^{u+r_{S-1}+1}-1$ is an increasing sequence in $u$.
  Using (\ref{eq:alphabet_size}), we can show that $g_k(u)$ is an
  increasing sequence in $u$ $\forall$ $k=0,\ldots,S-1$.  Let
  $\mu_i=$ maximum number of tokens possible in slot $i$. Thus
  \begin{eqnarray}
    \mu_0 &=& 0,\\
    \mu_i &=& \min(\mu_{i-1}+r_{i-1},B_{i-1}) \; \; \forall \; i=1,\ldots,S-1.
  \end{eqnarray}
  If $u_i \leqslant \mu_i$, then we say that state $u_i$ is {\em
    reachable} in slot $i$, otherwise it is {\em unreachable.}

  Let $\mathcal R_g(S,\mathbf r,\mathbf B)$ be an optimal GTBR, for
  which equality does not hold in (\ref{ineq:burst_sizes}). Then
  $\sum_{i=0}^{S-2} B_i \leqslant (S-1)B-1$. Consider another GTBR
  $\mathcal R_g'(S,\mathbf r',\mathbf B')$ with $\mathbf r'=\mathbf r$
  and $\mathbf B'=(B_0,\ldots,B_{k-1},B_k+1,B_{k+1},\ldots,B_{S-2})$
  for some $k$.  Let ${H_k'}^*(u)$ denote the optimal flow entropy of
  $\mathcal R_g'(\cdot)$ in slot $k$ with $u$ residual tokens. Define
  $g_k'(u)=2^{{H_k'}^*(u)}$.  From (\ref{eq:optimal_flow_entropy})
  \begin{eqnarray}
    g_k'(u) &=& 
      \sum_{\ell=0}^{u+r_k} 2^\ell g_{k+1}'\big(\min(u+r_k'-\ell,B_k')\big).
    \label{eq:alphabet_size_prime}
  \end{eqnarray}
  $\mathbf B'$ satisfies (\ref{ineq:burst_sizes}).  $g_i'(u)=g_i(u)$
  $\forall$ $i=k+1,\ldots,S$ and $\forall$ $u$.  Since
  $\min(u+r_k-\ell,B_k+1) \geqslant \min(u+r_k-\ell,B_k)$, it follows
  that $g_k(\min(u+r_k-\ell,B_k+1)) \geqslant
  g_k(\min(u+r_k-\ell,B_k)) \geqslant 1$.  If we determine a reachable
  state $u$ such that $g_k'(u) > g_k(u)$, then $g_0'(0) > g_0(0)$,
  since the flow entropy in slot $0$ is computed slot-by-slot as a
  linear sum of future possible flow entropies with positive weights.
  Thus, the problem now reduces to determining a slot $k$ and a
  reachable state $u$ such that $g_k'(u) > g_k(u)$.  One of the
  following must hold:
  \begin{enumerate}
  \item There exists an $i$ $\in$ $\{1,\ldots,S-1\}$ such that
    $\mu_i=B_{i-1}<\mu_{i-1}+r_{i-1}$, or \label{exists_stage}

  \item There is no $i$ such that $\mu_i=B_{i-1}<\mu_{i-1}+r_{i-1}$.
    \label{non_existent_stage}
  \end{enumerate}

  Case \ref{exists_stage}: Consider the smallest $i$ such that
  $\mu_i=B_{i-1}<\mu_{i-1}+r_{i-1}$. Substituting $k=i-1$ in
  (\ref{eq:alphabet_size}), we obtain
  \begin{eqnarray}
    g_{i-1}(u) 
    &=& \sum_{\ell=0}^{u+r_{i-1}} 2^\ell 
         g_i\big(\min(B_{i-1},u+r_{i-1}-\ell)\big),
      \nonumber \\
    \therefore g_{i-1}(u)
    &=& \sum_{\ell=0}^{u+r_{i-1}-B_{i-1}-1} 2^\ell g_i(B_{i-1}) 
    + \sum_{\ell=u+r_{i-1}-B_{i-1}}^{u+r_{i-1}} 2^\ell g_i(u+r_{i-1}-\ell). 
    \label{g_iminus1}
  \end{eqnarray}
  Substituting $k=i-1$ in (\ref{eq:alphabet_size_prime}), we obtain
  \begin{eqnarray}
    g_{i-1}'(u) 
    &=& \sum_{\ell=0}^{u+r_{i-1}} 2^\ell 
    g_i\big(\min(B_{i-1}+1,u+r_{i-1}-\ell)\big),
    \nonumber \\
    \therefore g_{i-1}'(u)
    &=& \sum_{\ell=0}^{u+r_{i-1}-B_{i-1}-1} 2^\ell g_i(B_{i-1}+1) 
    + \sum_{\ell=u+r_{i-1}-B_{i-1}}^{u+r_{i-1}} 2^\ell g_i(u+r_{i-1}-\ell).
    \label{gprime_iminus1}
  \end{eqnarray}
  (\ref{g_iminus1}) and (\ref{gprime_iminus1}) hold only if
  \begin{eqnarray}
    u+r_{i-1}-B_{i-1}-1 \geqslant 0.
    \label{ineq:upper_index_nonnegative}
  \end{eqnarray}
  $u=\mu_{i-1}$ is a state which is reachable in the original system
  as well as in the primed system and satisfies
  (\ref{ineq:upper_index_nonnegative}).  Since $g_i(u)$ is an
  increasing sequence in $u$, (\ref{g_iminus1}) and
  (\ref{gprime_iminus1}) imply $g_{i-1}'(\mu_{i-1}) >
  g_{i-1}(\mu_{i-1})$. Consequently, $g_0'(0)>g_0(0)$.

  Case \ref{non_existent_stage}: If no such $i$ exists, then we must
  have $B_i \geqslant r_0 + \cdots + r_i$ $\forall$ $i=0,\ldots,S-2$.
  Adding these $(S-1)$ inequalities and using $r_i \leqslant B$ (from
  (\ref{eq:restriction_tokens})),
  \begin{eqnarray}
    \sum_{i=0}^{S-2}B_i 
    &\geqslant& (Sr-r_{S-1}) + (Sr-r_{S-1}-r_{S-2}) 
    + (Sr-r_{S-1}-r_{S-2}-r_{S-3}) + \cdots, \nonumber \\
    &\geqslant& (Sr-B)+(Sr-2B)+(Sr-3B) + \cdots, \label{ineq:second_last} \\
    &=& S(S-1)r - \alpha B. \label{eq:bound_aggregate_bucket}
  \end{eqnarray}
  We cannot have $r_i = B$ $\forall$ $i$ (from
  (\ref{restrictions_bucket_depth}), (\ref{eq:restriction_tokens}) and
  (\ref{eq:token_sum})).  Thus, $\alpha$ cannot be of the order of
  $S^2$.  Thus, the lower bound on $\sum_{i=0}^{S-2} B_i$ given by
  (\ref{ineq:second_last}) and (\ref{eq:bound_aggregate_bucket}) is a
  loose lower bound.  From (\ref{restrictions_bucket_depth}),
  (\ref{ineq:burst_sizes}) and (\ref{eq:bound_aggregate_bucket}),
  $\sum_{i=0}^{S-2} B_i$ grows as $S^2$ and is upper-bounded by
  $5(S-1)r$, which is impossible (except when $S$ is small).  So, we
  discard Case \ref{non_existent_stage}.

  From the result of Case \ref{exists_stage}, ${H_0'}^*(0) >
  H_0^*(0)$.  So, our assumption that $\mathcal R_g(\cdot)$ is an
  optimal GTBR is incorrect. Therefore, equality must hold in
  (\ref{ineq:burst_sizes}) for every optimal GTBR.
\end{proof}

\subsection{Numerical Results}

\begin{table}[hbtp]
\small
\begin{tabular}{|c|c|c|c|c|c|} \hline
STBR & optimal token increment & optimal bucket depth & 
  \multicolumn{3}{c|}{information utility} \\ \cline{4-6}
parameters &  sequence of GTBR & sequence of GTBR & $H_s$ & $H_g^*$ & 
  percentage \\
($S$,$r$,$B$) & $\mathbf r^*$ & $\mathbf B^*$ & (bits) & (bits) & 
  increase \\ \hline
(4,3,6) & (6 3 3 0) & (6 6 6) & 20.04  & 20.92  & 4.4\% \\ \hline
(4,3,7) & (6 4 2 0) & (6 8 7) & 20.08  & 21.16  & 5.4\% \\ \hline
(4,3,8) & (7 3 2 0) & (7 9 8) &   &   &  \\
 & (8 3 1 0) & (8 9 7) & 20.10  & 21.32  & 6.1\% \\ \hline
(4,3,9) & (8 3 1 0) & (8 10 9) &   &   &  \\
 & (9 2 1 0) & (9 10 8) & 20.10  & 21.44  & 6.7\% \\ \hline
(4,3,10) & (9 3 0 0) & (9 12 9) & 20.10  & 21.51  & 7.0\% \\ \hline
(4,3,11) & (10 2 0 0) & (10 12 11) &   &   &  \\
 & (11 1 0 0) & (11 12 10) & 20.10  & 21.54  & 7.2\% \\ \hline
(4,3,12) & (12 0 0 0) & (12 12 12) & 20.10  & 21.56   & 7.2\% \\ \hline
(4,3,13) & (12 0 0 0) & (13 13 13) & 20.10  & 21.56  & 7.2\% \\ \hline
(4,4,8) & (8 4 4 0) & (8 8 8) & 25.08  & 26.04  & 3.8\% \\ \hline
(4,4,9) & (8 5 3 0) & (8 10 9) &   &   &  \\
 & (9 4 3 0) & (9 10 8) & 25.11  & 26.24  & 4.5\% \\ \hline
(4,4,10) & (9 5 2 0) & (9 12 9) & 25.13  & 26.39  & 5.0\% \\ \hline
(4,4,12) & (11 4 1 0) & (11 14 11) & 25.14  & 26.59  & 5.8\% \\ \hline
(4,4,16) & (16 0 0 0) & (16 16 16) & 25.14  & 26.70  & 6.2\% \\ \hline
(4,5,10) & (10 5 5 0) & (10 10 10) & 29.91  & 30.92  & 3.4\% \\ \hline
(4,5,12) & (11 6 3 0) & (11 14 11) & 29.96  & 31.24  & 4.3\% \\ \hline
(4,6,12) & (11 7 6 0) & (11 13 12) &  &  &  \\
 & (12 7 5 0) & (12 13 11) & 34.60  & 35.66  & 3.1\% \\ \hline
(5,3,6) & (6 3 3 3 0) & (6 6 6 6) & 25.68  & 26.57  & 3.5\% \\ \hline
(5,3,9) & (8 3 3 1 0) & (8 10 10 8) & 25.88  & 27.33  & 5.6\% \\ \hline
(5,3,12) & (11 2 2 0 0) & (11 13 13 11) & 25.90  & 27.59  & 6.5\% \\ \hline
(5,3,15) & (15 0 0 0 0) & (15 15 15 15) & 25.90  & 27.64  & 6.7\% \\ \hline
(6,2,4) & (4 2 2 2 2 0) & (4 4 4 4 4) & 23.00  & 23.77  & 3.4\% \\ \hline
(6,3,6) & (6 3 3 3 3 0) & (6 6 6 6 6) & 31.33  & 32.23  & 2.9\% \\ \hline
\end{tabular}
\caption{Entropy-maximizing GTBR for given data transmission time,
token rate and bucket depth of a comparable STBR.}
\label{tab:optimal_gtbr}
\end{table}

\begin{figure}[thbp]
\centering
\includegraphics[width=5in]{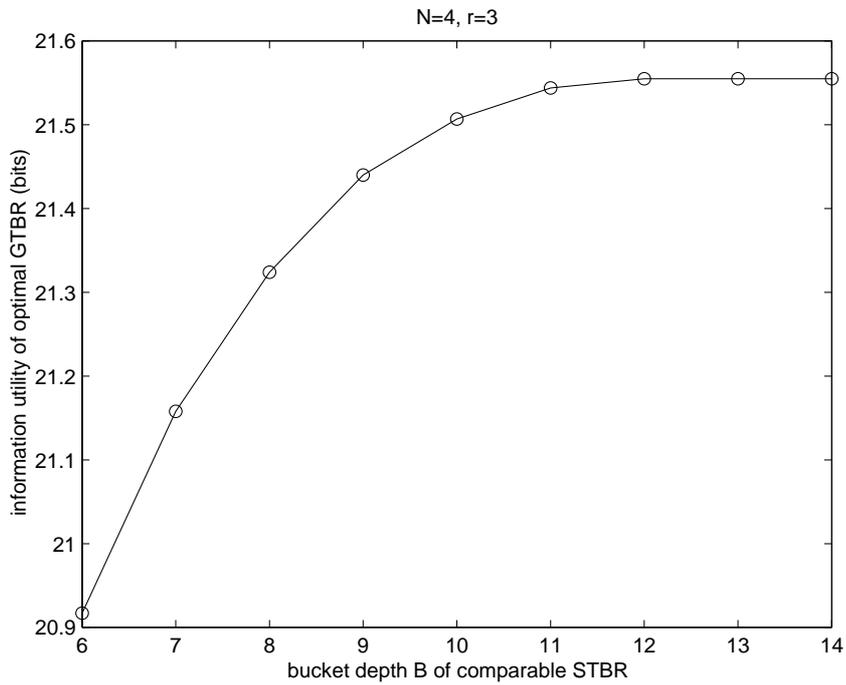}
\caption{Information utility of GTBR vs. bucket depth of comparable STBR.}
\label{fig:flow_entropy_vs_B}
\end{figure}

\begin{figure}[thbp]
\centering
\includegraphics[width=5in]{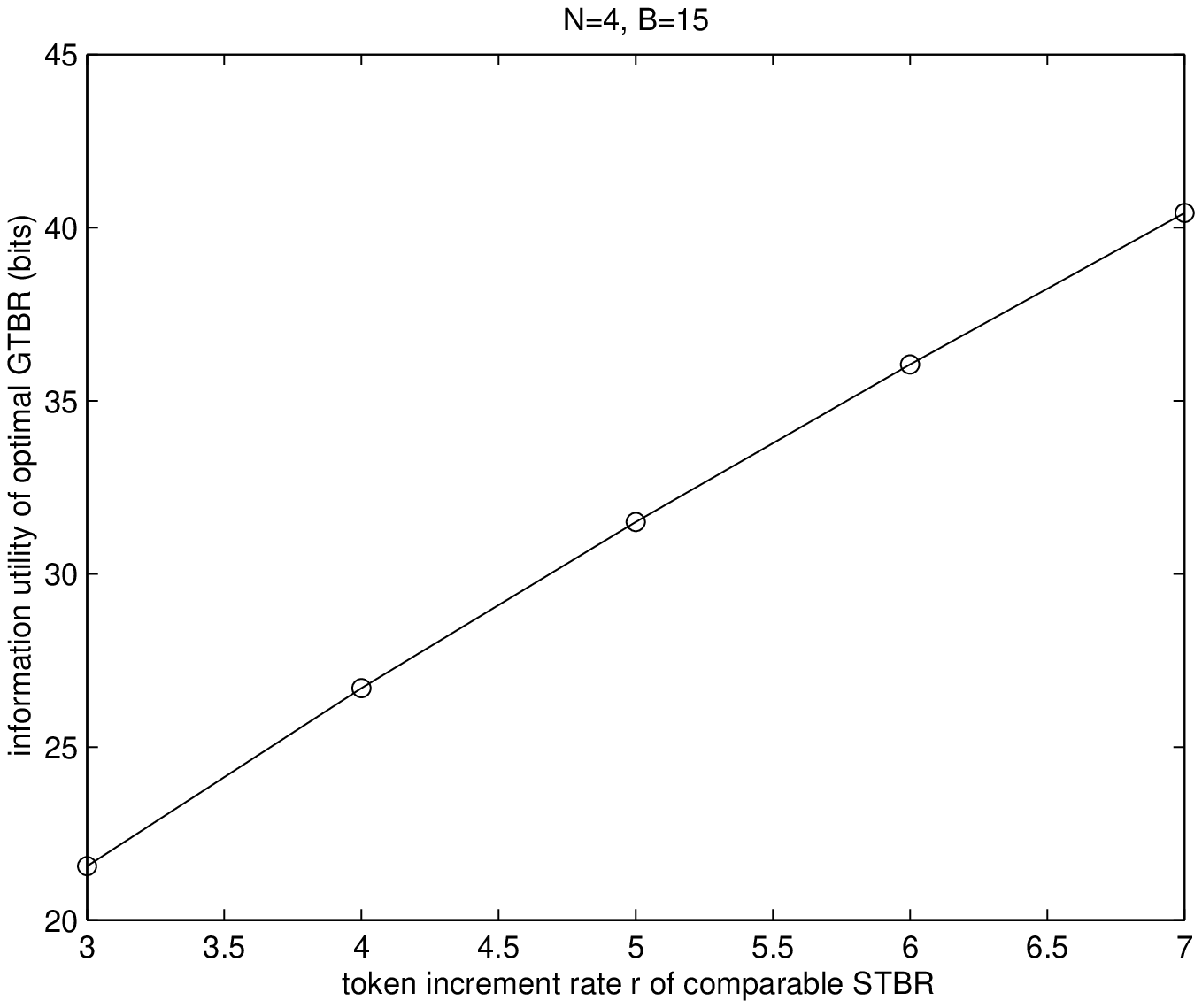}
\caption{Information utility of GTBR vs. token increment rate of comparable STBR.}
\label{fig:flow_entropy_vs_r}
\end{figure}

For a given data transmission time $S$, token increment sequence
$\mathbf r$ and bucket depth sequence $\mathbf B$, we determine the
optimal GTBR by exhaustive search over the reduced search space
obtained from Proposition \ref{prop:equality_burst_sizes}.  Our
computation results are shown in Table \ref{tab:optimal_gtbr}. $H_s$
and $H_g^*$ denote the information utility of the STBR $\mathcal
R_s(S,r,B)$ and the optimal GTBR $\mathcal R_g(S,\mathbf r^*,\mathbf
B^*)$ respectively.  We also observe the variation in information
utility of the optimal GTBR with important parameters of the
comparable STBR, namely its bucket depth $B$ and token increment rate
$r$.  For a data transmission time of 4 slots and token increment rate
of 3 bits, Figure \ref{fig:flow_entropy_vs_B} shows the variation of
information utility of the GTBR versus the bucket depth of the
comparable STBR.  For a data transmission time of 4 slots and bucket
depth of 15 bits, Figure \ref{fig:flow_entropy_vs_r} shows the
variation of information utility of the GTBR versus the token
increment rate of the comparable STBR.

Based on our computations, we draw the following inferences:
\begin{enumerate}

\item A generalized token bucket regulator can achieve {\it higher}
  information utility than that of a standard token bucket regulator.
  The increase in information utility is significant (up to 7.2\%),
  especially for higher values of $B$.

\item The optimal bucket depth sequence $\mathbf B^*$ is
  uniform\footnote{$B_0^* = B_1^* = \cdots = B_{S-2}^*$.} or
  near-uniform (the standard deviation is very small compared to the
  mean).

\item The optimal token increment sequence $\mathbf r^*$ is a
  decreasing sequence and is not uniform.

\item For a fixed data transmission time $S$ and token increment rate
  $r$:
  \begin{enumerate}
  \item If $B=2r$, ${\mathbf B^*}$ is always uniform and ${\mathbf
      r^*}$ is uniform except for the terminal values.

  \item As $B$ increases from $2r$ to $\min(5,S)r$, the variance of
    ${\mathbf r^*}$ increases rapidly with a concentration of tokens
    in first few stages, the variance of ${\mathbf B^*}$ increases
    slowly, while $H_g^*$ initially increases and then saturates at
    some final value.  $H_g^*$ is an increasing and concave
    sequence\footnote{The sequence of first-order differences
      $(B_1^*-B_0^*,B_2^*-B_1^*,\cdots,B_{S-2}^*-B_{S-3}^*)$ is a
      decreasing and non-negative sequence.} in $B$ (see Figure
    \ref{fig:flow_entropy_vs_B}).
    \label{obs:concave_B}
  \end{enumerate}

\item For a fixed data transmission time $S$ and bucket depth $B$,
  $H_g^*$ an increasing, highly linear and slightly concave sequence
  in $r$ (see Figure \ref{fig:flow_entropy_vs_r}).
  \label{obs:linear_r}

\end{enumerate}

\section{Information-Theoretic Interpretation}
\label{sec:info_theoretic_interpret}

In this section, we provide explanations for empirical results in
Section \ref{sec:results_gtbr}. The explanations are intuitive and
rely on basic results from information theory.

Consider a system with $n$ states, where $p_i$ denotes the probability
of state $i$ and $\sum_{i=1}^n p_i =1$.  From classical information
theory, system entropy $H$ increases with decreasing Kullback-Leibler
distance between the given probability mass function (pmf) and the
uniform pmf \cite{cover_thomas__elements_information}.  $H$ is
maximized only if $p_1=\cdots=p_n=\frac{1}{n}$. Also, $H^*$ increases
with $n$.  Analogously, a GTBR can achieve higher information utility
than that of an STBR because the pmfs of the packet lengths at each
stage are more uniform and have a larger support.  Recall that, for
given $\mathbf r$ and $\mathbf B$, information utility is computed
recursively using (\ref{eq:token_evolution}) and
(\ref{eq:optimal_flow_entropy}).

We argue that the optimal bucket depth sequence $\mathbf B^*$ must be
uniform or near-uniform for maximum information utility. If $\mathbf
B^*$ is neither uniform nor near uniform, then $B_j = \min_i B_i$ is
much smaller than $B$. This restricts the range of values taken by
$u_{j+1}$ and $\ell_{j+1}$ (from (\ref{eq:conforming_packet_lengths})
and (\ref{eq:token_evolution})). The support of packet length pmfs at
stage $j+1$ is reduced, leading to lower flow entropy at stage $j+1$
and consequently lower information utility.  Thus, ${\mathbf B^*}$
must be uniform or near-uniform to maximize the minimum support of
packet length pmfs {\it at each stage}.  In Table
\ref{tab:optimal_gtbr}, the observation that $\min_{i}B_i^* = B-1$ or
$\min_{i}B_i^* = B$ throughout corroborates our claim that $\mathbf
B^*$ is near-uniform.

We argue that for maximum information utility, the optimal token
increment sequence $\mathbf r^*$ must be a decreasing sequence,
subject to $r_i \leqslant B_i$ for every $i$.  If $r_i > B_i$ for any
$i$, then a packet of length zero cannot be transmitted in slot $i$
(from (\ref{eq:token_evolution})) and will have zero probability. This
decreases the support of the packet length pmfs in slot $i$ and leads
to lower information utility. More importantly, from
(\ref{eq:flow_entropy}),
\begin{eqnarray}
  H_0^*(0) 
  &=& \sum_{\ell_0=0}^{r_0} p_{\ell_0}^*(0) \Big( \ell_0 - 
  \log_2\big(p_{\ell_0}^*(0)\big) + H_1^*\big(\min(r_0-\ell_0,B_0)\big) \Big).
\end{eqnarray}
The major contribution to information utility $H_0^*(0)$ is from the
support of the packet lengths $[0,r_0]$ and the pmf of the packet
lengths $(\mathbf p_0^*(0))$, while the contribution from
$H_1^*(\cdot)$ is relatively smaller.  So, to maximize $H_0^*(0)$,
$r_0$ should be allowed to take its maximum possible value, subject to
$r_0 \leqslant B_0$, and the pmf of the packet lengths should be close
to the uniform pmf.  The observation that $r_0=B_0$ consistently in
Table \ref{tab:optimal_gtbr} corroborates this.  Also, a high value of
$r_0$ leads to larger supports of packet length pmfs at intermediate
and later stages. Similarly, the first few elements of $\mathbf r^*$
tend to take large values till the aggregate tokens are exhausted.
However, their contribution to $H_0^*(0)$ is not as pronounced and
equality may not hold in $r_i \leqslant B_i$. Thus, $\mathbf r^*$ must
be a decreasing sequence and the first few elements of $\mathbf r^*$
tend to take their maximum possible values, subject to $r_i \leqslant
B_i$, to achieve uniformity and larger supports of packet length pmfs
{\it at intermediate and later stages}.

This ``greedy'' nature of $\mathbf r^*$ is evident when $S$ and $r$
are kept constant and $B$ increases (Result \ref{obs:concave_B}). A
similar argument is applicable when $S$ and $B$ are kept constant and
$r$ increases (Result \ref{obs:linear_r}).  The only difference is
that a unit increase in $r$ will necessarily increase $H_g^*$ by at
least $S$ bits ($S$ bits are contributed by the packet contents alone,
which also explains the dominant linear variation in Figure
\ref{fig:flow_entropy_vs_r}), while a unit increase in $B$ will
increase $H_g^*$ only by an amount equal to the difference in covert
information.  The increase in covert information is positive only if
the optimal token increment and bucket depth sequences (${\mathbf
  r^*}$ and ${\mathbf B^*}$) result in larger support and more
uniformity for the packet length pmfs.  Indeed, when $B$ increases
beyond the maximum number of tokens possible at any stage
($\max_i\{\mu_i\}$), clamping the residual number of tokens at every
stage becomes ineffective and the system behaves as if bucket depth
constraints were not imposed at all (Figure
\ref{fig:flow_entropy_vs_B}).

\section{Discussion}
\label{sec:discussion_info_theoretic}

In this chapter, we have studied linearly bounded flows over a packet
network.  We considered a source whose traffic is regulated by a
generalized token bucket regulator and which seeks to maximize the
entropy of the resulting flow.  Recognizing that the randomness in
packet lengths acts as a covert channel in the network, the source can
achieve maximum entropy by sizing its packets appropriately.  We have
formulated the problem of computing the GTBR with maximum information
utility in terms of constrained token increment and bucket depth
sequences. A GTBR can achieve higher information utility than that of
a standard IETF token bucket regulator.  Finally, we have
information-theoretically interpreted the observation that an
entropy-maximizing GTBR always has a near-uniform bucket depth
sequence and a decreasing token increment sequence.

\clearpage{\pagestyle{empty}\cleardoublepage}

\chapter{Conclusions}
\label{ch:conclusions}

The recent revolution in wireless communications has motivated
researchers and engineers alike to design ever better wireless
networks that deliver high data rates to users. The {\em joint design}
of physical and MAC layers is the key to breaking the ``bandwidth
bottleneck'' of wireless networks, which has been the primary
inspiration for this thesis.

This thesis has focused on link scheduling in wireless mesh networks
by taking into account physical layer characteristics.  The assumption
made throughout this thesis is that a packet is received successfully
only if the SINR at the receiver exceeds a certain threshold, termed
as communication threshold.  The thesis has also discussed the
complementary problem of flow control.

The first part of this thesis has considered link scheduling in STDMA
wireless networks. The network is modeled by a finite set of
store-and-forward nodes that communicate over a wireless channel
characterized by propagation path loss. We have considered two nuances
of the scheduling problem: point to point link scheduling wherein a
transmitted packet is intended for a single neighbor only, and point
to multipoint link scheduling wherein a transmitted packet is intended
for all neighbors in the vicinity.

Specifically, in Chapter \ref{ch:framework_link}, we have introduced
the system model of an STDMA wireless network.  We have discussed two
prevalent models for specifying the criteria for successful packet
reception: the protocol interference model which mandates a ``silence
zone'' around a receiver and is better suited to represent WLANs, and
the physical interference model which mandates that the SINR at a
receiver be no less than the communication threshold and is more
appropriate to represent mesh networks.  We have described the
equivalence between a link schedule and the coloring of edges of a
certain graph representation of the network, termed as communication
graph.  We have argued that STDMA link scheduling algorithms can be
broadly categorized into three classes: those based entirely on a
communication graph representation of the network, those based on
communication graph and SINR threshold conditions and those based
entirely on an SINR graph representation of the network.  We have
reviewed representative research papers from each of these classes.
We have described limitations of algorithms that are based only on the
communication graph. Subsequently, we have introduced spatial reuse as
a performance metric that corresponds to aggregate network throughput.

Next, in Chapter \ref{ch:comm_graph}, we have critically examined
ArboricalLinkSchedule, a point to point link scheduling algorithm
proposed in \cite{ramanathan_lloyd__scheduling_algorithms}.  While
this is one of the earlier works on link scheduling with nice
theoretical properties, it could yield a higher schedule length in
practice. Specifically, the methodology employed by
ArboricalLinkSchedule is to represent the network by a communication
graph, partition the graph into minimum number of subgraphs and color
each subgraph in a greedy manner.  We have modified the algorithm to
reuse colors while coloring successive subgraphs of the communication
graph.  We have shown that the modified algorithm yields lower
schedule length in practice, albeit at a cost of slightly higher
running time complexity. Subsequently, we have proposed the
ConflictFreeLinkSchedule algorithm that not only utilizes the
communication graph, but also verifies SINR threshold conditions at
receivers. We have demonstrated that the proposed algorithm achieves
higher spatial reuse than existing algorithms, even under fading and
shadowing channel conditions. We have argued that the running time
complexity of the proposed algorithm is only marginally higher than
those of existing algorithms.

Taking a step ahead, in Chapter \ref{ch:sinr_graph}, we have provided
a somewhat different perspective on point to point link scheduling.
For an STDMA network, we recognize that interferences between pairs of
links can be embedded into edge weights and normalized noise powers at
receivers of links can be embedded into vertex weights of a certain
graph representation of the network, termed as SINR graph.  We have
then proposed SINRGraphLinkSchedule, a novel link scheduling algorithm
that is based on the SINR graph. We have proved the correctness of the
algorithm and shown that it has polynomial running time complexity. We
have demonstrated that the proposed algorithm achieves high spatial
reuse compared to algorithms which utilize a communication graph model
of the network, including ConflictFreeLinkSchedule algorithm.

In Chapter \ref{ch:broadcastschedule}, we have considered point to
multipoint link scheduling and generalized the definition of spatial
reuse for this scenario. We have proposed a scheduling algorithm based
on a communication graph representation of the network and
``neighbor-average'' SINR threshold conditions. Moreover, we have
demonstrated that the proposed algorithm achieves higher spatial reuse
than existing algorithms, without any increase in running time
complexity.

Overall, we have observed the tradeoff between accuracy of the network
representation, spatial reuse and algorithm running time complexity in
our successive results. For a more accurate network representation,
higher spatial reuse is achieved, but at a cost of higher running time
complexity. For example, since the SINR graph representation of an
STDMA network is more accurate than the communication graph
representation, SINRGraphLinkSchedule achieves higher spatial reuse
than that of ConflictFreeLinkSchedule, but at a cost of increased
running time complexity.

A summary of existing and proposed link scheduling algorithms
investigated in the first part of the thesis is provided in Table
\ref{tab:summary_algos}.

\begin{table}[hbtp]
\footnotesize
\begin{tabular}{|c|c|c|c|} \hline
Type of link  & Wireless network  & Existing  & Proposed  \\ 
scheduling & model & algorithms & algorithm \\ \hline
 & communication graph 
 & ArboricalLinkSchedule \cite{ramanathan_lloyd__scheduling_algorithms} 
 & ALSReuseColors \\ 
 & & & (Chapter \ref{ch:comm_graph}) \\ \cline{2-4}
Point & communication graph 
 & GreedyPhysical \cite{brar_blough_santi__computationally_efficient} &  \\ 
 to  & and  & TGSA \cite{behzad_rubin__performance_graph} 
 & ConflictFreeLinkSchedule \\
point & SINR conditions & & (Chapter \ref{ch:comm_graph}) \\ \cline{2-4}
 & SINR graph & & SINRGraphLinkSchedule \\ 
 & & & (Chapter \ref{ch:sinr_graph}) \\ \hline
Point  & communication graph 
 & BroadcastSchedule \cite{ramanathan_lloyd__scheduling_algorithms}
 & \\ \cline{2-4}
to & communication graph & & \\
multipoint & and & & MaxAverageSINRSchedule \\
 & SINR conditions & & (Chapter \ref{ch:broadcastschedule}) \\ \hline
\end{tabular}
\caption{Link scheduling algorithms investigated in Chapters
\ref{ch:comm_graph}, \ref{ch:sinr_graph} and \ref{ch:broadcastschedule}.}
\label{tab:summary_algos}
\end{table}

The second part of this thesis has considered link scheduling in
random access wireless networks.  Specifically, it has focused on
random access algorithms for wireless networks that take into account
channel effects and SINR conditions at the receiver.

In Chapter \ref{ch:review_random}, we have reviewed representative
research papers on such random access techniques. We have also
motivated the use of variable transmission power in random access
wireless networks.

Subsequently, in Chapter \ref{ch:powercontrolled}, we have
investigated a random access scenario wherein multiple transmitters
(users) attempt to communicate with a single receiver over a wireless
channel characterized by propagation path loss. We have assumed that
the receiver is capable of power based capture and proposed an
interval splitting algorithm that varies transmission powers of users
based on their arrival times and quaternary channel feedback. We have
modeled the algorithm dynamics by a Discrete Time Markov Chain and
consequently shown that its maximum stable throughput is 0.5518.  We
have demonstrated that the proposed algorithm has higher throughput
and lower delay than the FCFS interval splitting algorithm with
uniform transmission power.

The third and final part of this thesis has considered
information-theoretic analysis of flow control in packet networks. We
have defined the problem of maximizing the information carried by
packets from a source to a destination, subject to a flow control
mechanism at the ingress of the network. We have considered a linearly
bounded flow and focused on the information carried by the randomness
in packet contents and lengths.  Consequently, we have formulated the
problem of maximizing the entropy of a packet level flow that is
shaped by a generalized token bucket regulator.  We have demonstrated
that the optimal regulator has a decreasing token increment sequence
and a near-uniform bucket depth sequence.  Finally, we have provided
information theoretic interpretations for these observations.

To sum it up, in this thesis, we have investigated both fixed and
random access flavors of link scheduling problems in wireless networks
from a physical layer viewpoint.  Finally, we have discussed a flow
control problem in packet networks.

Various avenues for further research have emerged from our
investigations. We outline some possible directions for future work.

\begin{enumerate}

\item It would be interesting to derive approximation bounds of
  ConflictFreeLinkSchedule and SINRGraphLinkSchedule algorithms under
  reasonable assumptions on node deployment and interference regions.
  The assumptions and approximation techniques employed in
  \cite{brar_blough_santi__computationally_efficient} may provide some
  pointers in this direction.

\item Though distributed link scheduling algorithms for STDMA wireless
  networks under the protocol interference model have been proposed in
  \cite{salonidis_tassiulas__distributed_dynamic},
  \cite{djukic_valaee__distributed_link}, the design of distributed
  link scheduling algorithms under the physical interference model
  remains a challenging problem.

\item Various generalizations of the PCFCFS algorithm are worth
  investigating.  For example:
  \begin{enumerate}
  \item Design a variable power splitting algorithm under the
    assumption that users are at unequal distances from the receiver
    and can adjust their minimum transmission powers accordingly.
  \item Design a splitting algorithm for the case when the receiver is
    capable of decoding more than one packet correctly (as in wideband
    systems) and the users can employ $n$ transmission power levels,
    where $n>2$.
  \item Analyze the throughput improvement in CSMA/CA based WLANs when
    power control is employed in conjunction with binary exponential
    backoff. The work done in \cite{colbourn_cui__carrier_sense} can
    be a useful starting point.
  \end{enumerate}

\item A challenging task would be to analyze the expected delay of the
  PCFCFS algorithm.  A useful starting would be
  \cite{huang_berger__delay_analysis_1},
  \cite{huang_berger__delay_analysis_2}, which have employed
  techniques to obtain upper and lowers bounds on the expected delay
  of the FCFS algorithm.

\item Our results in Chapter \ref{ch:flow_control} show the existence
  of upper bounds on the entropy of regulated flows. It would be
  interesting to construct source codes which come close to this
  bound. Furthermore, it would be insightful to develop a
  rate-distortion framework for a generalized token bucket regulator,
  perhaps using the techniques employed in
  \cite{chang_chao_thomas__fundamental_limits}.

\end{enumerate}

\clearpage{\pagestyle{empty}\cleardoublepage}

\renewcommand{\chaptermark}[1] {\markboth{{\appendixname\ \thechapter.\ #1}}{}}
\renewcommand{\sectionmark}[1] {\markright{{\thesection.\ #1}}}
\lhead[\small \thepage]{\small \rightmark}
\rhead[\small \leftmark]{\small \thepage}
\lhead[\small \thepage]{\small \rightmark}
\rhead[\small \leftmark]{\small \thepage}

\begin{appendix}
\chapter{Proofs  of Limiting Transition Probabilities}
\label{ap:limiting_probabilities}

According to L'H\^{o}pital's Rule, if $\lim_{x \rightarrow c} f(x)$
and $\lim_{x \rightarrow c} g(x)$ are both zero or are both $\pm
\infty$ and, if $\lim_{x \rightarrow c} \frac{f(x)}{g(x)}$ has a
finite value or if the limit is $\pm \infty$, then
\begin{eqnarray}
\lim_{x \rightarrow c} \frac{f(x)}{g(x)} 
  &=& \lim_{x \rightarrow c} \frac{f'(x)}{g'(x)}.
\end{eqnarray}
We will employ L'H\^{o}pital's Rule to prove
(\ref{eq:limit_Liprime_Riprime}) - (\ref{eq:limit_Ri_Liplus1})

In this appendix, we will only provide the proofs of
(\ref{eq:limit_Liprime_Riprime}),
(\ref{eq:limit_Liprime_Liplus1prime}),
(\ref{eq:limit_Liprime_Ciplus1}) and (\ref{eq:limit_Liprime_Liplus1}).
The proofs of (\ref{eq:limit_Riprime_R0}) -
(\ref{eq:limit_Ri_Liplus1}) are similar to those of
(\ref{eq:limit_Liprime_Riprime}) - (\ref{eq:limit_Liprime_Liplus1})
and are omitted.

\section{Proof of (\ref{eq:limit_Liprime_Riprime})}

\begin{proof}
In (\ref{eq:probability_Liprime_Riprime}), substitute $G_i=x$.
From (\ref{eq:expected_half}), $G_{i-1}=2G_i=2x$.
As $i \rightarrow \infty$, $G_i=2^{-i}\lambda\phi_0 \rightarrow 0$.
Thus, using L'H\^{o}pital's Rule successively, we obtain

\begin{eqnarray*}
\lim_{i \rightarrow \infty } P_{L_i',R_i'}
  &=& \lim_{x \rightarrow 0} \frac{(1-e^{-x})xe^{-x}}{1-(1+2x)e^{-2x}}, \\
  &=& \lim_{x \rightarrow 0} \frac{\frac{d}{dx}(xe^{-x}-xe^{-2x})}{\frac{d}{dx}(1-e^{-2x}-2xe^{-2x})}, \\
  &=& \lim_{x \rightarrow 0} \frac{e^{-x}+xe^{-x}-e^{-2x}}{4xe^{-2x}}, \\
  &=& \lim_{x \rightarrow 0} \frac{\frac{d}{dx}(e^{-x}+xe^{-x}-e^{-2x})}{\frac{d}{dx}(4xe^{-2x})}, \\
  &=& \lim_{x \rightarrow 0} \frac{2e^{-2x}-xe^{-x}}{4e^{-2x}-8xe^{-2x}}, \\
\therefore \lim_{i \rightarrow \infty } P_{L_i',R_i'} &=& \frac{1}{2}.
\end{eqnarray*}
\end{proof}

\section{Proof of (\ref{eq:limit_Liprime_Liplus1prime})}

\begin{proof}
In (\ref{eq:probability_Liprime_Liplus1prime}), substitute $G_i=x$.
Thus, using $G_{i-1}=2x$ and applying L'H\^{o}pital's Rule
successively, we obtain

\begin{eqnarray*}
\lim_{i \rightarrow \infty } P_{L_i',L_{i+1}'} 
  &=& \lim_{x \rightarrow 0} \frac{(1-e^{-x}-xe^{-x})e^{-x}}{1-(1+2x)e^{-2x}},\\
  &=& \lim_{x \rightarrow 0} \frac{\frac{d}{dx} (e^{-x}-e^{-2x}-xe^{-2x})}{\frac{d}{dx}(1-e^{-2x}-2xe^{-2x})}, \\
  &=& \lim_{x \rightarrow 0} \frac{e^{-2x}+2xe^{-2x}-e^{-x}}{4xe^{-2x}}, \\
  &=& \lim_{x \rightarrow 0} \frac{\frac{d}{dx}(e^{-2x}+2xe^{-2x}-e^{-x})}{\frac{d}{dx}(4xe^{-2x})}, \\
  &=& \lim_{x \rightarrow 0} \frac{e^{-x}-4xe^{-2x}}{4e^{-2x}-8xe^{-2x}}, \\
\therefore \lim_{i \rightarrow \infty } P_{L_i',L_{i+1}'} &=& \frac{1}{4}.
\end{eqnarray*}
\end{proof}

\section{Proof of (\ref{eq:limit_Liprime_Ciplus1})}

\begin{proof}
In (\ref{eq:probability_Liprime_Ciplus1}), substitute $G_i=x$.
Thus, using $G_{i-1}=2x$ and applying L'H\^{o}pital's Rule
successively, we obtain

\begin{eqnarray*}
\lim_{i \rightarrow \infty } P_{L_i',C_{i+1}} 
  &=& \lim_{x \rightarrow 0} \frac{\frac{x^2}{4}e^{-x}}{1-(1+2x)e^{-2x}}, \\
  &=& \lim_{x \rightarrow 0} \frac{\frac{d}{dx}(\frac{x^2}{4}e^{-x})}{\frac{d}{dx}(1-e^{-2x}-2xe^{-2x})}, \\
  &=& \lim_{x \rightarrow 0} \frac{\frac{1}{2}e^{-x}-\frac{x}{4}e^{-x}}{4e^{-2x}}, \\
\therefore \lim_{i \rightarrow \infty } P_{L_i',C_{i+1}} &=& \frac{1}{8}.
\end{eqnarray*}
\end{proof}

\section{Proof of (\ref{eq:limit_Liprime_Liplus1})}

\begin{proof}
In (\ref{eq:probability_Liprime_Liplus1}), substitute $G_i=x$.
Thus, using $G_{i-1}=2x$ and applying L'H\^{o}pital's Rule
successively, we obtain

\begin{eqnarray*}
\lim_{i \rightarrow \infty } P_{L_i',L_{i+1}}
  &=& \lim_{x \rightarrow 0} \frac{1-(1+x+\frac{x^2}{4})e^{-x}}{1-(1+2x)e^{-2x}}, \\
  &=& \lim_{x \rightarrow 0} \frac{\frac{d}{dx}(1-e^{-x}-xe^{-x}-\frac{x^2}{4}e^{-x})}{\frac{d}{dx}(1-e^{-2x}-2xe^{-2x})}, \\
  &=& \lim_{x \rightarrow 0} \frac{\frac{1}{2}-\frac{x}{2}}{4e^{-x}}, \\
\therefore \lim_{i \rightarrow \infty } P_{L_i',L_{i+1}} &=& \frac{1}{8}.
\end{eqnarray*}
\end{proof}

\clearpage{\pagestyle{empty}\cleardoublepage}
\end{appendix}

\bibliographystyle{ieeetr}
\bibliography{IEEEfullform,thesis}

\chapter*{Publications from the Thesis}

\begin{enumerate}

\item A.D. Gore and A. Karandikar, ``Entropy-Optimal Generalized Token
  Bucket Regulator,'' in Proc. $12^{th}$ National Conference on
  Communications - NCC 2006, Indian Institute of Technology - Delhi,
  Joint Telematics Group of IITs/IISc, pp. 344-348, Jan. 2006.

\item A.D. Gore and A. Karandikar, ``On High Spatial Reuse Broadcast
  Scheduling in STDMA Wireless Ad Hoc Networks,'' in Proc. $13^{th}$
  National Conference on Communications - NCC 2007, Indian Institute
  of Technology - Kanpur, Joint Telematics Group of IITs/IISc, pp.
  74-78, Jan. 2007.

\item A.D. Gore and A. Karandikar, ``Power Controlled FCFS Splitting
  for Wireless Networks,'' in Proc. IEEE MILCOM, Orlando, Florida,
  USA, Oct.  2007.

\item A.D. Gore, S. Jagabathula and A. Karandikar, ``On High Spatial
  Reuse Link Scheduling in STDMA Wireless Ad Hoc Networks,'' in Proc.
  $50^{th}$ IEEE GLOBECOM, Washington, DC, USA, Nov. 2007.

\item N.P. Kumar, A.D. Gore and A. Karandikar, ``Link Scheduling in
  STDMA Wireless Networks: A Line Graph Approach,'' in Proc.
  $14^{th}$ National Conference on Communications - NCC 2008, Indian
  Institute of Technology - Bombay, Joint Telematics Group of
  IITs/IISc, pp. 108-111, Feb. 2008.

\end{enumerate}

\end{document}